\newcommand\independent{\protect\mathpalette{\protect\independenT}{\perp}}
\def\independenT#1#2{\mathrel{\rlap{$#1#2$}\mkern2mu{#1#2}}}
\DeclareMathOperator*{\argmin}{\arg\!\min}
\DeclareMathOperator*{\argsup}{\arg\!\sup}
\DeclareMathOperator*{\arginf}{\arg\!\inf}
\titleformat*{\section}{\large\bfseries}
\titleformat*{\subsection}{\large\bfseries}
\newcounter{parentnumber}
\newtheorem{theorem}{Theorem}
\newtheorem{assumption}[theorem]{Assumption}
\newtheorem{corollary}[theorem]{Corollary}
\newtheorem{proposition}[theorem]{Proposition}
\newenvironment{proof}[1][Proof]{\noindent \textbf{#1.} }{\  \rule{0.5em}{0.5em}}
\providecommand{\U}[1]{\protect\rule{.1in}{.1in}}
\begin{document}
	\setstretch{1}
	\title{{\LARGE Sharp Bounds for the Marginal Treatment Effect with Sample Selection\thanks{I would like to thank Joseph Altonji, Nathan Barker, Ivan Canay, Xuan Chen, John Finlay, Carlos A. Flores, John Eric Humphries, Yuichi Kitamura, Marianne Köhli, Helena Laneuville, Jaewon Lee, Yusuke Narita, Pedro Sant'anna, Masayuki Sawada, Azeem Shaikh, Edward Vytlacil, Stephanie Weber, Siuyuat Wong and seminar participants at Yale University for helpful suggestions.}}}
	
	\author{ \begin{tabular}{c} 
	\Large{Vitor Possebom}\thanks{\emph{Email:} vitoraugusto.possebom@yale.edu} \\
	Yale University
	\end{tabular}	
	}
	\date{~}
	
	\maketitle

	\newsavebox{\tablebox} \newlength{\tableboxwidth}
	

\begin{center}
	
First Draft: October 2018

This Draft: April 2019


%
%
\href{https://sites.google.com/site/vitorapossebom/research}{Please click here for the most recent version}

\

\

		\large{\textbf{Abstract}}
	\end{center}
	
I analyze treatment effects in situations when agents endogenously select into the treatment group and into the observed sample. As a theoretical contribution, I propose pointwise sharp bounds for the marginal treatment effect (MTE) of interest within the always-observed subpopulation under monotonicity assumptions. Moreover, I impose an extra mean dominance assumption to tighten the previous bounds. I further discuss how to identify those bounds when the support of the propensity score is either continuous or discrete. Using these results, I estimate bounds for the MTE of the Job Corps Training Program on hourly wages for the always-employed subpopulation and find that it is decreasing in the likelihood of attending the program within the Non-Hispanic group. For example, the Average Treatment Effect on the Treated is between \$.33 and \$.99 while the Average Treatment Effect on the Untreated is between \$.71 and \$3.00.
	
	\ 
	
	\textbf{Keywords:} Marginal Treatment Effect, Sample Selection, Partial Identification, Principal Stratification, Program Evaluation, Training Programs.
	
	\ 
	
	\textbf{JEL Codes:} C31, C35, C36, J38
	
	\newpage
	
	\doublespacing
	
\section{Introduction}\label{introduction}

In the applied treatment effects literature, there are many problems that face two identification challenges: endogenous selection into treatment and endogenous sample selection. For instance, in Labor Economics, if a researcher wants to evaluate the effect of a job training program on wages, she has to understand why agents choose to enroll in the program and why agents select into her sample by being employed. In this situation, she may combine information on hourly labor earnings (the observable outcome) and employment (sample selection status) to uncover the effect on hourly wages (the outcome of interest). Similar problems appear in Labor Economics when analyzing the college wage premium and scarring effects. In the Health Sciences, a researcher faces the same identification challenges when analyzing the effect of a drug on a health quality index when the drug may save a patient's life. Moreover, in randomized control trials, researchers are concerned with non-compliance and differential attrition rates between treated and control groups. This double selection problem is also present when analyzing the effect of an educational intervention on short- and long-term outcomes and the effect of procedural laws on litigation outcomes.\footnote{Training programs are studied by \cite{Heckman1999a}, \cite{Lee2009} and \cite{Chen2015}. The college wage premium is analyzed by \cite{Altonji1993}, \cite{Card1999} and \cite{Carneiro2011}. Scarring effects are discussed by \cite{Heckman1980}, \cite{Farber1993} and \cite{Jacobson1993}. Some education interventions are studied by \cite{Krueger2001}, \cite{Angrist2006}, \cite{Angrist2009}, \cite{Chetty2011} and \cite{Dobbie2015}. Medical treatments are analyzed by \cite{CASS1984}, \cite{Sexton1984} and \cite{Health2004}. Litigation outcomes are discussed by \cite{Helland2017}. RCT with attrition are illustrated by \cite{DeMel2013} and \cite{Angelucci2015}.}

To simultaneously address both idetification challenges, I propose a Generalized Roy Model (\cite{Heckman1999}) with sample selection in which there is one outcome of interest that is observed only if the individual self-selects into the sample. Under a monotonicity assumption on the sample selection indicator, I decompose the Marginal Treatment Response (MTR) function for the potential observable outcome when treated as a weighted average of (i) the MTR on the outcome of interest for the subpopulation who is always observed and (ii) the Marginal Treatment Effect (MTE) on the observable outcome for the subpopulation who is observed only when treated. Under a bounded (in one direction) support condition, this decomposition is useful because it allows me to propose pointwise sharp bounds for the MTE on the outcome of interest within the always-observed subpopulation ($MTE^{OO}$) as a function of (i) the MTR functions on the observable outcome, (ii) the maximum and (or) minimum of the support of the potential outcome, and (iii) the proportions of always-observed individuals and observed-only-when-treated individuals. I also show that it is impossible to construct bounds without extra assumptions when the support of the potential outcome is the entire real line. After that, I impose an extra mean dominance assumption that compares the always-observed population against the observed-only-when-treated population, tightening the previous bounds. Moreover, under this new assumption, I show that those tighter bounds are also pointwise sharp and derive an informative lower bound even when the support of the potential outcome is the entire real line.

I then proceed to show that those bounds are well-identified. When the support of the propensity score is an interval, the relevant objects are point identified by applying the local instrumental variable approach (LIV, see \cite{Heckman1999}) to the expectations of the observable outcome and of the selection indicator conditional on the propensity score and the treatment status. However, in many empirical applications, the support of the propensity score is a finite set. In such a context, I can identify bounds for the $MTE^{OO}$ of interest by adapting the nonparametric bounds proposed by \cite{Mogstad2017} or the flexible parametric approach suggested by \cite{Brinch2017} to encompass a sample selection problem. When using the nonparametric approach, the bounds for the $MTE^{OO}$ of interest are simply an outer set that contains the true $MTE^{OO}$, i.e., they are not pointwise sharp anymore.

Partial identification of the $MTE^{OO}$ of interest is useful for two reasons. First and most importantly, bounds for the $MTE^{OO}$ can be used to shed light on the heterogeneity of treatment effects, allowing the researcher to understand who would benefit and who would lose from a specific treatment, as recently illustrated by \cite{Cornelissen2018} and \cite{Bhuller2019}. This knowledge can be used to optimally design policies that incentivize to agents to take a treatment. Second, bounds for the $MTE^{OO}$ can be used to construct bounds for any treatment effect parameter that is written as a weighted integral of the $MTE^{OO}$. For instance, by taking a weighted average of the pointwise sharp bounds for the $MTE^{OO}$, one can bound the average treatment effect (ATE), the average treatment effect on the treated (ATT), any local average treatment effect (LATE, \cite{Imbens1994}) and any policy-relevant treatment effect (PRTE, \cite{Heckman2001a}) within the always-observed subpopulation. Although such bounds may not be sharp for any specific parameter, they are a flexible and easy-to-apply tool for many empirical problems that depend on a varied set of treatment effects.\footnote{As a consequence of this trade-off between flexibility and sharpness, I recommend the use of a specialized tool if the parameter of interest already has specific bounds (e.g., the ITT by \cite{Lee2009} and the LATE by \cite{Chen2015}).}


Finally, I illustrate the usefulness of the proposed bounds for the $MTE^{OO}$ of interest by analyzing the effect of the Job Corps Training Program (JCTP) on hourly wages within the Non-Hispanic always-employed subpopulation. My framework is ideal to analyze this important experiment because it simultaneously addresses the imperfect compliance issue (self-selection into treatment) by focusing on the MTE and the endogenous employment decision (sample selection) by using a partial identification strategy. Although my $MTE^{OO}$ bounds are uninformative when using only the monotonicity assumption, they are tight and positive under a mean dominance assumption, illustrating the identification power of extra assumptions in a context of partial identification. Most interestingly, I find that the bounds of the $MTE^{OO}$ on hourly wages are decreasing in the likelihood of attending the program, implying that the agents who would benefit the most from the JCTP are the least likely to attend it. As a consequence of this result, my estimates suggest that ATU is greater than the ATT for the always-employed subpopulation. Moreover, my bounds for the $LATE^{OO}$ are in line with the estimates of \cite{Chen2015} and the effect of the JCTP on employment is positive for every agent according to the test proposed by \cite{Machado2018}. Finally, as a by-product of my estimation strategy, I also find that the MTE on employment and hourly labor earnings are decreasing in the likelihood of attending the JCTP, a result that is in line with the estimated upper bounds of \cite{Chen2017}.

I make contributions to three branches of literature: identification of treatment effects using an instrument, identification of treatment effects with sample selection, and the effect of job training programs. They are all vast and only briefly summarized here.

In the literature about treatment effects with an instrument, \cite{Imbens1994} show that we can identify the LATE. \cite{Heckman1999}, \cite{Heckman2005} and \cite{Heckman2006} define the MTE and explain how to compute any treatment effect as a weighted average of the MTE. However, if the support of the propensity score is not the unit interval, then it is not possible to non-parametrically identify some common treatment effects, such as the ATE, the ATT and the ATU. A parametric solution to this problem is given by \cite{Brinch2017}, who identify a flexible polynomial function for the MTE whose degree is defined by the cardinality of the propensity score support, while a nonparametric solution is given by \cite{Mogstad2017}, who use the information contained on IV-like estimands to construct non-parametrically worst- and best- case bounds for policy-relevant treatment effects.\footnote{Other important contributions are made by \cite{Manski1990}, \cite{Manski1997}, \cite{Manski2000}, \cite{Heckman2001}, \cite{Bhattacharya2008}, \cite{Chesher2010}, \cite{Chiburis2010}, \cite{Shaikh2011}, \cite{Bhattacharya2012}, \cite{Cornelissen2016}, \cite{Chen2017}, \cite{Huber2017}, \cite{Kowalski2018}, \cite{Mourifie2018} and \cite{Zhou2019}.}

I contribute to this literature by extending the non-parametric approach by \cite{Mogstad2017} and the flexible parametric approach by \cite{Brinch2017} to encompass a sample selection problem. By doing so, I can partially identify the MTE function on the outcome of interest, which, in my framework, is different from the observable outcome.

In the literature about identification of treatment effects with sample selection, the control function approach (\cite{Heckman1979}, \cite{Ahn1993} and \cite{Das2003}) and the use of auxiliary data (\cite{Chen2008}) are two classical solutions to this problem. Another approach is to partially identify the parameter of interest by imposing weak monotonicity assumptions. For example, in a seminal paper, \cite{Lee2009} imposes that sample selection is monotone on treatment assignment to sharply bound the ITT for the subpopulation of always-observed individuals ($ITT^{OO}$).\footnote{Other relevant contributions are made by \cite{Frangakis2002}, \cite{Blundell2007}, \cite{Imai2008}, \cite{Lechner2010}, \cite{Blanco2013}, \cite{Mealli2013}, \cite{Behaghel2015} and \cite{Huber2015}.}

In the intersection of both literatures, a few authors address the problem of sample selection and endogenous treatment simultaneously. By using two instrumental variables, \cite{Fricke2015} and \cite{Lee2016} identify different treatment effects. However, since finding a credible instrument for sample selection is challenging in some cases, it is worth developing alternative tools that do not require more than an instrument for selection into treatment. \cite{Frolich2014} point identify the LATE by assuming that there is no contemporaneous relationship between the potential outcomes and the sample selection problem. \cite{Chen2015} derive bounds for Average Treatment Effect within the always-observed compliers ($LATE^{OO}$) by combining one instrument with a double exclusion restriction with monotonicity assumptions on the sample selection and the selection into treatment problems.\footnote{Other important contributions are made by \cite{Huber2014}, \cite{Steinmayr2014}, \cite{Blanco2017} and \cite{Kedagni2018}.}

I contribute to this literature by partially identifying the MTE on the always-observed subsample allowing for a contemporaneous relationship between the potential outcomes and the sample selection problem, and using only one (discrete) instrument combined with a monotonicity assumption. Deriving bounds for the $MTE^{OO}$ is theoretically important because it can unify, in one framework, the bounds for different treatment effects with sample selection. It is also empirically relevant because it allows us to partially identify any treatment effect on the outcome of interest in many empirical problems. For instance, when analyzing the effect of a job training program on wages, it is useful to compare the ATT with the ATU in order to understand whether the workers who would benefit the most from such a policy are actually the ones who receive training.

In the literature about job training programs, \cite{Heckman1999a} wrote an influential survey paper. In particular, many papers were written about the effects of the Job Corps Training Program (JCTP) after a randomized experiment funded by the U.S. Department of Labor in 1995.\footnote{For example, significant contributions are made by \cite{Schochet2001}, \cite{Schochet2008}, \cite{Flores-Lagunes2010}, \cite{Flores2012}, \cite{Blanco2013}, \cite{Blanco2013a}, \cite{Blanco2017} and \cite{Chen2017}.} Finally, my work is closer to the research done by \cite{Lee2009} and \cite{Chen2015}, who analyze the effect of the Job Corps Training Program on wages by focusing, respectively, on the ITT and the LATE parameters within the always-observed subpopulation. \cite{Lee2009} rules out a zero effect after accounting for the loss in labor market experience generated by the extra education acquired by Job Corps participants. \cite{Chen2015} find that the $LATE^{OO}$ on hourly wages four years after randomization is between 5.7\% and 13.9\% for the entire population and between 7.7\% and 17.5\% for the non-Hispanic population under monotonicity and mean dominance assumptions.

I contribute to this literature by analyzing the MTE on hourly wages within the Non-Hispanic group and formally testing whether this training program has a monotone effect on employment by implementing the test proposed by \cite{Machado2018}.


This paper proceeds as follows: Section \ref{model} details the Generalized Roy Model with sample selection; Section \ref{bounds} explains how to derive bounds for the $MTE^{OO}$ of interest; Sections \ref{interval} and \ref{discrete} discuss identification of the $MTE^{OO}$ bounds when the support of the propensity score is continuous or discrete; and Section \ref{application} analyzes the effect of the Job Corps Training Program on hourly wages. Finally, Section \ref{furtherwork} concludes.


\section{Framework}\label{model}

I begin with the classical potential outcome framework by \cite{Rubin1974} and modify it to include a sample selection problem. Let $Z$ be an instrumental variable whose support is given by $\mathcal{Z}$, $X$ be a vector of covariates whose support is given by $\mathcal{X}$, $W \coloneqq \left(X, Z\right)$ be a vector that combines the covariates and the instrument whose support is given by $\mathcal{W} \coloneqq \mathcal{X} \times \mathcal{Z}$, $D$ be a treatment status indicator, $Y_{0}^{*}$ be the potential outcome of interest when the person is not treated, and $Y_{1}^{*}$ be the potential outcome of interest when the person is treated. The outcome variable of interest (e.g., wages) is $Y^{*} \coloneqq D \cdot Y_{1}^{*} + \left(1 - D\right) \cdot Y_{0}^{*}$. Moreover, let $S_{1}$ and $S_{0}$ be potential sample selection indicators when treated and when not treated, and define $S \coloneqq D \cdot S_{1} + \left(1 - D\right) \cdot S_{0}$ as the sample selection indicator (e.g., employment status). Define $Y \coloneqq S \cdot Y^{*}$ as the observable outcome (e.g., labor earnings). I also define $Y_{1} \coloneqq S_{1} \cdot Y_{1}^{*}$ and $Y_{0} \coloneqq S_{0} \cdot Y_{0}^{*}$ as the potential observable outcomes. Observe that, following \cite{Lee2009} and \cite{Chen2015}, my notation implicitly imposes two exclusion restrictions: Z has no direct impact on the potential outcome of interest nor on the sample selection indicator. The second exclusion restriction requires attention in empirical applications. On the one hand, it may be a strong assumption in randomized control trials if sample selection is due to attrition and initial assignment has an effect on the subject's willingness to contact the researchers. On the other hand, it may be a reasonable assumption in many labor market applications, such as the evaluation of a job training program. For instance, in my empirical section, it is plausible that the initial random assignment to the Job Corps Training Program (JCTP) has no impact on future employment status.

I model sample selection and selection into treatment using the Generalized Roy Model \citep{Heckman1999}. Let $U$ and $V$ be random variables, and $P:\mathcal{W} \rightarrow \mathbb{R}$ and $Q:\left\lbrace 0, 1 \right\rbrace \times \mathcal{X} \rightarrow \mathbb{R}$ be unknown functions. I assume that:
\begin{equation}
\label{treatment}
D \coloneqq \mathbf{1}\left\lbrace P\left(W\right) \geq U \right\rbrace
\end{equation} 
and
\begin{equation}
\label{selection}
S \coloneqq \mathbf{1}\left\lbrace Q\left(D, X\right) \geq V \right\rbrace.
\end{equation}
As \cite{Vytlacil2002} shows, equations \eqref{treatment} and \eqref{selection} are equivalent to assuming monotonicity conditions on the selection-into-treatment problem (\cite{Imbens1994}) and on the sample selection problem (\cite{Lee2009}). I stress that both monotonicity assumptions are testable using the tools developed by \cite{Machado2018}. Note also that, given equation \eqref{selection}, $S_{0} = \mathbf{1}\left\lbrace Q\left(0, X\right) \geq V \right\rbrace$ and $S_{1} = \mathbf{1}\left\lbrace Q\left(1, X\right) \geq V \right\rbrace$.

The random variables $U$ and $V$ are jointly continuously distributed conditional on $X$ with density $f_{U,V \left\vert X \right.}:\mathbb{R}^{2} \times \mathcal{X} \rightarrow \mathbb{R}$ and cumulative distribution function $F_{U,V \left\vert X \right.}:\mathbb{R}^{2} \times \mathcal{X} \rightarrow \mathbb{R}$. As has been shown in the literature, equations \eqref{treatment} and \eqref{selection} can be rewritten as
\begin{align*}
D & = \mathbf{1}\left\lbrace F_{U \left\vert X \right.} \left(P\left(W\right) \left\vert X \right. \right) \geq F_{U \left\vert X \right.} \left(U \left\vert X \right.\right) \right\rbrace = \mathbf{1}\left\lbrace \tilde{P}\left(W\right) \geq \tilde{U} \right\rbrace \\
S & = \mathbf{1}\left\lbrace F_{V \left\vert X \right.} \left(Q\left(D, X\right) \left\vert X \right.\right) \geq F_{V \left\vert X \right.} \left(V \left\vert X \right.\right)  \right\rbrace = \mathbf{1}\left\lbrace \tilde{Q}\left(D, X\right) \geq \tilde{V} \right\rbrace
\end{align*}
where $\tilde{P}\left(W\right) \coloneqq F_{U \left\vert X \right.} \left(P\left(W\right) \left\vert X \right. \right)$, $\tilde{U} \coloneqq F_{U \left\vert X \right.} \left(U \left\vert X \right.\right)$, $\tilde{Q}\left(D, X\right) \coloneqq F_{V \left\vert X \right.} \left(Q\left(D, X\right) \left\vert X \right. \right)$, and $\tilde{V} \coloneqq F_{V \left\vert X \right.} \left(V \left\vert X \right.\right)$. Consequently, the marginal distributions of $\tilde{U}$ and $\tilde{V}$ conditional on $X$ follow the standard uniform distribution. Since this is merely a normalization, I drop the tilde and mantain throughout the paper the normalization that $\left(P\left(w\right), Q\left(d, x\right)\right) \in \left[0, 1\right]^{2}$ for any $\left(x, z, d\right) \in \mathcal{W} \times \left\lbrace 0, 1 \right\rbrace$ and the marginal distributions of $U$ and $V$ conditional on $X$ follow the standard uniform distribution, even though their joint distribution allows for any kind of dependency between those two variables. As a consequence of such normalization, $P\left(w\right)$ represents the propensity score and is equal to $\mathbb{P}\left[\left. D = 1 \right\vert W = w\right]$, while $Q\left(d, x\right)$ is equal to $\mathbb{P}\left[\left. S_{d} = 1 \right\vert X = x\right]$.

Moreover, I assume that:
\begin{assumption}\label{ind}
	The instrument $Z$ is independent of all latent variables given the covariates $X$, i.e., $Z \independent \left(U, V, Y_{0}^{*}, Y_{1}^{*} \right) \left\vert X \right.$.
\end{assumption}

\begin{assumption}\label{propensityscore}
	The distribution of $P\left(W\right)$ given $X$ is nondegenerate.
\end{assumption}

\begin{assumption}\label{finite}
	The first and second population moments of the  potential outcomes of interest are finite, i.e., $\mathbb{E}\left[\left\vert Y_{d}^{*} \right\vert\right] < + \infty$ and $\mathbb{E}\left[\left( Y_{d}^{*} \right)^{2}\right] < + \infty$ for any $d \in \left\lbrace 0, 1 \right\rbrace$.
\end{assumption}

\begin{assumption}\label{positive}
	Both treatment groups exist for any value of X, i.e., $0 < \mathbb{P}\left[D = 1 \left\vert X \right.\right] < 1$.
\end{assumption}

\begin{assumption}\label{invariantX}
	The covariates $X$ are invariant to counterfactual manipulations, i.e., $X_{0} = X_{1} = X$, where $X_{0}$ and $X_{1}$ are the counterfactual values of $X$ that would be observed when the person is, respectively, not treated or treated.
\end{assumption}

\begin{assumption}\label{support}
	The potential outcomes $Y_{0}^{*}$ and $Y_{1}^{*}$ have the same support, i.e., $\mathcal{Y}^{*} \coloneqq \mathcal{Y}_{0}^{*} = \mathcal{Y}_{1}^{*}$, where $\mathcal{Y}_{0}^{*} \subseteq \mathbb{R}$ is the support of $Y_{0}^{*}$ and $\mathcal{Y}_{1}^{*} \subseteq \mathbb{R}$ is the support of $Y_{1}^{*}$.
\end{assumption}

\begin{assumption}\label{bounded}
	Define $\underline{y}^{*} \coloneqq \inf \left\lbrace y \in \mathcal{Y}^{*} \right\rbrace \in \mathbb{R} \cup \left\lbrace -\infty \right\rbrace$ and $\overline{y}^{*} \coloneqq \sup \left\lbrace y \in \mathcal{Y}^{*} \right\rbrace \in \mathbb{R} \cup \left\lbrace \infty \right\rbrace$. I assume that $\underline{y}^{*}$ and $\overline{y}^{*}$ are known, and that
	\begin{enumerate}
		\item $\underline{y}^{*} > - \infty$, $\overline{y}^{*} = \infty$ and $\mathcal{Y}^{*}$ is an interval, or
		
		\item $\underline{y}^{*} = - \infty$, $\overline{y}^{*} < \infty$ and $\mathcal{Y}^{*}$ is an interval, or
		
		\item $\underline{y}^{*} > - \infty$, $\overline{y}^{*} < \infty$ and
		\begin{enumerate}
			\item $\mathcal{Y}^{*}$ is an interval or
			
			\item $\underline{y}^{*} \in \mathcal{Y}^{*}$ and $\overline{y}^{*} \in \mathcal{Y}^{*}$.
		\end{enumerate}
	\end{enumerate}
\end{assumption}

Assumption \ref{bounded} is fairly general. Case 1 covers continuous random variables whose support is convex and bounded below (e.g., wages), while Case 3.a covers continuous variables with bounded convex support (e.g., test scores). Case 3.b encompasses not only binary variables, but also any discrete variable whose support is finite (e.g., years of education). It also includes mixed random variables whose support is not an interval but achieves its maximum and minimum. Case 2 is included for theoretical complementness. Furthermore, Proposition \ref{partialnecessary} shows that assumption \ref{bounded} is partially necessary to the existence of bounds for the $MTE^{OO}$ of interest in the sense that, if $\underline{y}^{*} = -\infty$ and $\overline{y}^{*} = + \infty$, then it is impossible to bound the marginal treatment effect on the outcome of interest within the always-observed subpopulation without any extra assumptions.

\begin{assumption}\label{increasing_sample_selection}
	Treatment has a positive effect on the sample selection indicator for all individuals, i.e., $Q\left(1, x\right) > Q\left(0, x\right) > 0$ for any $x \in \mathcal{X}$.
\end{assumption}

Assumption \ref{increasing_sample_selection} goes beyond the monotonicity condition implicitly imposed by equation \eqref{selection} by assuming that the direction of the effect of treatment on the sample selection indicator is known and positive, i.e., $Q\left(1, x\right) \geq Q\left(0, x\right)$ for any $x \in \mathcal{X}$. In this sense, it is a standard assumption in the literature.\footnote{\cite{Lee2009} and \cite{Chen2015} write it in an equivalent way as $S_{1} \geq S_{0}$, while \cite{Manski1997} and \cite{Manski2000} call it the ``monotone treatment response'' assumption.} Most importantly, it is also a testable assumption using the tools developed by \cite{Machado2018}, because, under monotone sample selection (equation \eqref{selection}), identification of the sign of the ATE on the selection indicator provides a test for Assumption \ref{increasing_sample_selection}. However, Assumption \ref{increasing_sample_selection} is slightly stronger than what is usually imposed in the literature, because it additionally imposes $Q\left(0, x\right) > 0$ and $Q\left(1, x\right) > Q\left(0, x\right)$ for any $x \in \mathcal{X}$. While the first inequality implies that there is a subpopulation who is always observed, allowing me to properly define my target parameter (the marginal treatment effect on the outcome of interest within the always-observed population, $MTE^{OO}$), the second inequality implies that there is a subpopulation who is observed only when treated, making the problem theoretically interesting by eliminating trivial cases of point identification of the $MTE^{OO}$ as discussed in Proposition \ref{boundsY1Proposition}. Finally, I emphasize that all my results can be stated and derived with some straightforward changes if I impose $Q\left(0, x\right) > Q\left(1, x\right) > 0$ for any $x \in \mathcal{X}$ instead of Assumption \ref{increasing_sample_selection}, as is done in Appendix \ref{decreasing_sample_selection}. I also discuss, in Appendix \ref{agnostic}, an agnostic approach to monotonicity in the sample selection problem (equation \eqref{selection}) and show, in Appendix \ref{nomonotonicity}, that bounds derived with non-monotone sample selection are uninformative (i.e., equal to $\left(\underline{y}^{*} - \overline{y}^{*}, \overline{y}^{*} - \underline{y}^{*}\right)$) under mild regularity conditions.

In my empirical application, Assumption \ref{increasing_sample_selection} imposes that the JCTP has a positive effect on employment for all individuals, which is plausible given the objectives and services provided by this training program. As discussed by \cite{Chen2015}, the two potential threats against it --- the ``lock-in'' effect (\cite{Ours2004}) and an increase in the reservation wage of treated individuals --- are likely to become less relevant in the long run, justifying my focus on the hourly wage after 208 weeks from randomization. Most importantly, this assumption is formally tested by the method developed by \cite{Machado2018} and I reject, at the 1\%-significance level, the null hypothesis that Assumption \ref{increasing_sample_selection} is invalid within the Non-Hispanic group.

Finally, in partial identification contexts, extra assumptions may have a lot of identification power. In the specific case of identifying treatment effects with sample selection, it is common to use mean or stochastic dominance assumptions to tighten the bounds for the parameter of interest (\cite{Imai2008}, \cite{Blanco2013}, \cite{Huber2015} and \cite{Huber2017}) and justify them based on the intuitive argument that some population sub-groups have more favorable underlying characteristics than others. In particular, I discuss the identifying power of the following mean dominance assumption\footnote{In appendix \ref{MeanDominance}, I derive bounds for the MTE of interest when the above inequality holds in the other direction.}:
\begin{assumption}\label{meandominanceG}
	The potential outcome when treated within the always-observed subpopulation is greater than or equal to the same parameter within the observed-only-when-treated subpopulation:
	\begin{equation*}
	\mathbb{E}\left[Y_{1}^{*} \left\vert X = x, U = u, S_{0} = 1, S_{1} = 1 \right.\right] \geq \mathbb{E}\left[Y_{1}^{*} \left\vert X = x, U = u, S_{0} = 0, S_{1} = 1 \right.\right]
	\end{equation*}
	for any $x \in \mathcal{X}$ and $u \in \left[0, 1\right]$.
\end{assumption}
Unfortunately, this assumption is empirically untestable, implying that its use must be justified for each application based on qualitative or theoretical arguments. In particular, in my empirical application, Assumption \ref{meandominanceG} imposes that the marginal treatment response function of wages when treated for the always-employed population is greater than the same object for the employed-only-when-treated population. Intuitively, this assumption imposes that the group with better potential employment outcomes also has, on average, better potential wages, i.e., there is positive selection into employment. 

\section{Bounds for the $\mathbf{MTE^{OO}}$ on the outcome of interest} \label{bounds}
The target parameter, the MTE on the outcome of interest for the subpopulation who is always observed ($MTE^{OO}$), is given by
\begin{align}
\Delta_{Y^{*}}^{OO}\left(x, u\right) & \coloneqq \mathbb{E}\left[Y_{1}^{*} - Y_{0}^{*} \left\vert X = x, U = u, S_{0} = 1, S_{1} = 1 \right.\right] \nonumber \\
& \label{target} = \mathbb{E}\left[Y_{1}^{*} \left\vert X = x, U = u, S_{0} = 1, S_{1} = 1 \right.\right]  - \mathbb{E}\left[Y_{0}^{*} \left\vert X = x, U = u, S_{0} = 1, S_{1} = 1 \right.\right]
\end{align}
for any $u \in \left[0, 1\right]$ and any $x \in \mathcal{X}$, and is a natural parameter of interest. In labor market applications where sample selection is due to observing wages only when agents are employed, it is the effect on wages for the subpopulation who is always employed. In medical applications where sample selection is due to the death of a patient, it is the effect on health quality for the subpopulation who survives regardless of treatment status. In the education literature where sample selection is due to students quitting school, it is the effect on test scores for the subpopulation who do not drop out of school regardless of treatment status. In all those cases, the target parameter captures the intensive margin of the treatment effect.\footnote{If the researcher is interested in the extensive margin of the treatment effect, captured by the MTE on the observable outcome ($\mathbb{E}\left[Y_{1} - Y_{0} \left\vert X = x, U = u\right.\right]$) and by the MTE on the selection indicator ($\mathbb{E}\left[S_{1} - S_{0} \left\vert X = x, U = u\right.\right]$), he or she can apply the identification strategies described by \cite{Heckman2006}, \cite{Brinch2017} and \cite{Mogstad2017}.}

Other possibly interesting parameters are the MTE on the outcome of interest within the subpopulation who is never observed ($\mathbb{E}\left[Y_{1}^{*} - Y_{0}^{*} \left\vert X = x, U = u, S_{0} = 0, S_{1} = 0 \right.\right]$, $MTE^{NN}$), the MTR function under no treatment for the outcome of interest within the subpopulation who is observed only when treated ($\mathbb{E}\left[Y_{0}^{*} \left\vert X = x, U = u, S_{0} = 0, S_{1} = 1 \right.\right]$, $MTR_{0}^{NO}$) and the MTR function under treatment for the outcome of interest within the subpopulation who is observed only when treated ($\mathbb{E}\left[Y_{1}^{*} \left\vert X = x, U = u, S_{0} = 0, S_{1} = 1 \right.\right]$, $MTR_{1}^{NO}$). While the last parameter can be partially identified (Appendix \ref{observedonlywhentreated}), the first two parameters are impossible to point identify or bound in an informative way because the outcome of interest ($Y_{0}^{*}$ or $Y_{1}^{*}$) is never observed for the conditioning subpopulations.\footnote{\cite{Zhang2008} discuss this identification issue in a deeper way. Moreover, in some applications (e.g., analyzing the impact of a medical treatment on a health quality measure where selection is given by whether the patient is alive), the potential outcome $Y_{d}^{*}$ is not even properly defined when $S_{d} = 0$ for $d \in \left\lbrace 0, 1 \right\rbrace$.} As a consequence, it is not possible to point identify or bound in an informative way the Marginal Treatment Effect for the entire population ($\mathbb{E}\left[Y_{1}^{*} - Y_{0}^{*} \left\vert X = x, U = u\right.\right]$, $MTE$) either. Note also that the subpopulation who is observed only when not treated ($S_{0} = 1$ and $S_{1} = 0$) does not exist by Assumption \ref{increasing_sample_selection}. Furthermore, observe that the conditioning subpopulations in all the above-mentioned parameters are determined by post-treatment outcomes and, as a consequence, are connected to the statistical literature known as principal stratification (\cite{Frangakis2002}).

I now focus on the target parameter $\Delta_{Y^{*}}^{OO}\left(x, u\right)$ given by equation \eqref{target}. While Subsection \ref{MonotonicitySection} derives bounds for the $MTE^{OO}$ of interest (equation \eqref{target}) using only a monotonicity assumption (Assumptions \ref{ind}-\ref{increasing_sample_selection}), Subsection \ref{MeanDominanceSection} tightens those bounds by additionally imposing the Mean Dominance Assumption \ref{meandominanceG}. Finally, Subsection \ref{EmpiricalRelevance} discusses the empirical relevance of such bounds.

\subsection{Partial Identification with only a Monotonicity Assumption}\label{MonotonicitySection}

Here, my goal is to derive bounds for $\Delta_{Y^{*}}^{OO}\left(x, u\right)$ under Assumptions \ref{ind}-\ref{increasing_sample_selection}. Note that the second right-hand term in equation \eqref{target} can be written as\footnote{Appendix \ref{proofUntreated} contains a proof of this claim.}
\begin{equation}\label{m0YstarOO}
\mathbb{E}\left[Y_{0}^{*} \left\vert X = x, U = u, S_{0} = 1, S_{1} = 1 \right.\right] = \dfrac{m_{0}^{Y}\left(x, u\right)}{m_{0}^{S}\left(x, u\right)},
\end{equation}
where I define $m_{0}^{Y}\left(x, u\right) \coloneqq \mathbb{E}\left[Y_{0} \left\vert X = x, U = u \right.\right]$ and $m_{0}^{S}\left(x, u\right) \coloneqq \mathbb{E}\left[S_{0} \left\vert X = x, U = u \right.\right]$ as the MTR functions associated with the counterfactual variables $Y_{0}$ and $S_{0}$, respectively. In this section, I assume that all terms in the right-hand side of equation \eqref{m0YstarOO} are point identified, postponing the discussion about their identification to Sections \ref{interval} and \ref{discrete}.

The first right-hand term in equation \eqref{target} can be written as\footnote{Appendix \ref{proofTreated} contains a proof of this claim.}
\begin{equation}\label{m1YstarOO}
\mathbb{E}\left[Y_{1}^{*} \left\vert X = x, U = u, S_{0} = 1, S_{1} = 1 \right.\right] = \dfrac{m_{1}^{Y}\left(x, u\right) - \Delta_{Y}^{NO}\left(x, u\right) \cdot \Delta_{S}\left(x, u\right)}{m_{0}^{S}\left(x, u\right)},
\end{equation}
where $m_{1}^{Y}\left(x, u\right) \coloneqq \mathbb{E}\left[Y_{1} \left\vert X = x, U = u \right.\right]$ is the MTR function associated with the counterfactual variable $Y_{1}$, $\Delta_{Y}^{NO}\left(x, u\right) \coloneqq \mathbb{E}\left[ Y_{1} - Y_{0} \left\vert X = x, U = u, S_{0} = 0, S_{1} = 1 \right.\right]$ is the MTE on the observable outcome $Y$ for the subpopulation who is observed only when treated, $\Delta_{S}\left(x, u\right) \coloneqq \mathbb{E}\left[S_{1} - S_{0} \left\vert X = x, U = u \right.\right] = m_{1}^{S}\left(x, u\right) - m_{0}^{S}\left(x, u\right)$ is the MTE on the selection indicator, and $m_{1}^{S}\left(x, u\right) \coloneqq \mathbb{E}\left[S_{1} \left\vert X = x, U = u \right.\right]$ is the MTR function associated with the counterfactual variable $S_{1}$. In this section, I also assume that $m_{1}^{Y}\left(x, u\right)$ and $\Delta_{S}\left(x, u\right)$ are point identified, postponing the discussion about their identification to Sections \ref{interval} and \ref{discrete}.

Although point identification of $\mathbb{E}\left[Y_{1}^{*} \left\vert X = x, U = u, S_{0} = 1, S_{1} = 1 \right.\right]$ is not possible due to the term $\Delta_{Y}^{NO}\left(x, u\right)$ in equation \eqref{m1YstarOO}, I can find identifiable bounds for it.\footnote{Appendix \ref{proofboundsY1} contains a proof of this proposition.}
\begin{proposition}\label{boundsY1Proposition}
Suppose that $m_{0}^{Y}\left(x, u\right)$, $m_{1}^{Y}\left(x, u\right)$, $m_{0}^{S}\left(x, u\right)$ and $\Delta_{S}\left(x, u\right)$ are point identified.
	
Under Assumptions \ref{ind}-\ref{support}, \ref{bounded}.1 and \ref{increasing_sample_selection}, $\mathbb{E}\left[Y_{1}^{*} \left\vert X = x, U = u, S_{0} = 1, S_{1} = 1 \right.\right]$ must satisfy
\begin{equation}\label{boundsY1lower}
\underline{y}^{*} \leq \mathbb{E}\left[Y_{1}^{*} \left\vert X = x, U = u, S_{0} = 1, S_{1} = 1 \right.\right] \leq \dfrac{m_{1}^{Y}\left(x, u\right) - \underline{y}^{*} \cdot \Delta_{S}\left(x, u\right)}{m_{0}^{S}\left(x, u\right)}.
\end{equation}

Under Assumptions \ref{ind}-\ref{support}, \ref{bounded}.2 and \ref{increasing_sample_selection}, $\mathbb{E}\left[Y_{1}^{*} \left\vert X = x, U = u, S_{0} = 1, S_{1} = 1 \right.\right]$ must satisfy
\begin{equation}\label{boundsY1upper}
\dfrac{m_{1}^{Y}\left(x, u\right) - \overline{y}^{*} \cdot \Delta_{S}\left(x, u\right)}{m_{0}^{S}\left(x, u\right)} \leq \mathbb{E}\left[Y_{1}^{*} \left\vert X = x, U = u, S_{0} = 1, S_{1} = 1 \right.\right] \leq \overline{y}^{*}.
\end{equation}

Under Assumptions \ref{ind}-\ref{support}, \ref{bounded}.3 (sub-case (a) or (b)) and \ref{increasing_sample_selection}, $\mathbb{E}\left[Y_{1}^{*} \left\vert X = x, U = u, S_{0} = 1, S_{1} = 1 \right.\right]$ must satisfy
\begin{align}
\dfrac{m_{1}^{Y}\left(x, u\right) - \overline{y}^{*} \cdot \Delta_{S}\left(x, u\right)}{m_{0}^{S}\left(x, u\right)} & \leq \mathbb{E}\left[Y_{1}^{*} \left\vert X = x, U = u, S_{0} = 1, S_{1} = 1 \right.\right] \nonumber \\
& \label{boundsY1} \leq \dfrac{m_{1}^{Y}\left(x, u\right) - \underline{y}^{*} \cdot \Delta_{S}\left(x, u\right)}{m_{0}^{S}\left(x, u\right)}.
\end{align}
\end{proposition}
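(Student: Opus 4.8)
The plan is to work directly from the decomposition in equation \eqref{m1YstarOO}, which expresses the target conditional mean as
\[
\mathbb{E}\left[Y_{1}^{*} \left\vert X = x, U = u, S_{0} = 1, S_{1} = 1 \right.\right] = \dfrac{m_{1}^{Y}\left(x, u\right) - \Delta_{Y}^{NO}\left(x, u\right) \cdot \Delta_{S}\left(x, u\right)}{m_{0}^{S}\left(x, u\right)}.
\]
By hypothesis $m_{1}^{Y}(x,u)$, $m_{0}^{S}(x,u)$ and $\Delta_{S}(x,u)$ are point identified, so the only unknown on the right-hand side is $\Delta_{Y}^{NO}(x,u) = \mathbb{E}\left[Y_{1} - Y_{0} \left\vert X = x, U = u, S_{0} = 0, S_{1} = 1 \right.\right]$. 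The strategy is therefore: (i) bound $\Delta_{Y}^{NO}(x,u)$ using the support restrictions on $Y_{d}^{*}$ together with the definitions $Y_{d} = S_{d} Y_{d}^{*}$; (ii) observe that $\Delta_{S}(x,u) = m_{1}^{S}(x,u) - m_{0}^{S}(x,u) \geq 0$ by Assumption \ref{increasing_sample_selection} (indeed strictly positive, though nonnegativity suffices), and $m_{0}^{S}(x,u) > 0$ for the same reason, so the map $\Delta_{Y}^{NO} \mapsto \bigl(m_{1}^{Y} - \Delta_{Y}^{NO}\Delta_{S}\bigr)/m_{0}^{S}$ is weakly decreasing; (iii) plug the extreme admissible values of $\Delta_{Y}^{NO}(x,u)$ into \eqref{m1YstarOO} to obtain the stated bounds; and (iv) check that when $\underline{y}^{*} = 0$ the upper expression collapses appropriately, and more generally confirm which of the three support configurations in Assumption \ref{bounded} yields which side of the bound.

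The core computation is step (i). On the subpopulation with $S_{0} = 0$, $S_{1} = 1$, we have $Y_{1} = S_{1} Y_{1}^{*} = Y_{1}^{*}$ and $Y_{0} = S_{0} Y_{0}^{*} = 0$, hence $\Delta_{Y}^{NO}(x,u) = \mathbb{E}\left[Y_{1}^{*} \left\vert X = x, U = u, S_{0} = 0, S_{1} = 1 \right.\right]$; wait — this requires knowing that $Y_0 = 0$ identically on that stratum, which holds because $S_0 = 0$ forces $Y_0 = S_0 Y_0^* = 0$ regardless of how $Y_0^*$ is defined. Thus $\Delta_Y^{NO}(x,u)$ is a conditional mean of $Y_1^*$, so by Assumption \ref{support} it lies in $[\underline{y}^{*}, \overline{y}^{*}]$ whenever these are finite. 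In Case \ref{bounded}.1 ($\underline{y}^{*} > -\infty$, $\overline{y}^{*} = \infty$) only the lower bound $\Delta_Y^{NO}(x,u) \geq \underline{y}^{*}$ is available; substituting $\Delta_Y^{NO} = \underline{y}^{*}$ into \eqref{m1YstarOO} (using monotone decreasingness and $\Delta_S \geq 0$) gives the \emph{upper} bound in \eqref{boundsY1lower}, and the trivial support bound $Y_1^* \geq \underline{y}^{*}$ gives the lower bound there. In Case \ref{bounded}.2 the roles reverse, producing \eqref{boundsY1upper}. In Case \ref{bounded}.3 both support endpoints are finite, so substituting $\Delta_Y^{NO} = \underline{y}^{*}$ and $\Delta_Y^{NO} = \overline{y}^{*}$ respectively gives the two-sided bound \eqref{boundsY1}; the extra hypotheses in sub-cases (a) and (b) are exactly what guarantees $\underline{y}^{*}$ and $\overline{y}^{*}$ are themselves attainable (or approached) as conditional-mean values, which is needed for the statement and, later, for sharpness — though for the inclusion bounds proved here the coarser fact $\Delta_Y^{NO}(x,u) \in [\underline{y}^{*}, \overline{y}^{*}]$ is all that is used.

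The main obstacle I anticipate is not the algebra but the careful bookkeeping of \emph{which} bound is valid under \emph{which} sub-case of Assumption \ref{bounded}, and in particular verifying that $\Delta_Y^{NO}(x,u) = \mathbb{E}[Y_1^* \mid X=x, U=u, S_0=0, S_1=1]$ genuinely inherits the support constraint — one must be sure the conditioning event $\{X=x, U=u, S_0=0, S_1=1\}$ has positive density, which follows from Assumption \ref{increasing_sample_selection} giving $Q(1,x) > Q(0,x)$ (so the stratum $S_0=0, S_1=1$ is nonempty for every $x,u$ with $u$ such that $\{V : Q(0,x) < V \leq Q(1,x)\}$ has positive measure — and by the uniform normalization of $V$ it always does) together with Assumption \ref{ind} making $U$ and this event jointly well-behaved conditional on $X$. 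Once that measurability/positivity point is pinned down, the rest is substitution into \eqref{m1YstarOO} and matching the finite endpoint(s) to the correct inequality direction using $\Delta_S(x,u) \geq 0$ and $m_0^S(x,u) > 0$.
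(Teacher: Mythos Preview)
Your proposal is correct and follows essentially the same approach as the paper's proof: bound $\Delta_{Y}^{NO}(x,u)$ by $[\underline{y}^{*},\overline{y}^{*}]$ via its representation as a conditional mean of $Y_{1}^{*}$, then substitute into the decomposition \eqref{m1YstarOO} using $\Delta_{S}(x,u)\geq 0$ and $m_{0}^{S}(x,u)>0$, combining with the trivial support bound on $\mathbb{E}[Y_{1}^{*}\mid X=x,U=u,S_{0}=1,S_{1}=1]$ itself. Your remarks about the attainability of $\underline{y}^{*},\overline{y}^{*}$ in sub-cases (a)/(b) are relevant only for the later sharpness result, not for the inequalities in this proposition.
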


Note that, even when the support is bounded in only one direction (Assumptions \ref{bounded}.1 and \ref{bounded}.2), it is possible to derive lower and upper bounds for $\mathbb{E}\left[Y_{1}^{*} \left\vert X = x, U = u, S_{0} = 1, S_{1} = 1 \right.\right]$.

At this point, it is worth understanding the determinants of the width of those bounds.  First, if there is no sample selection problem at all ($\mathbb{P}\left[S_{0} = 1, S_{1} = 1 \left\vert X = x, U = u \right.\right] = 1$, i.e., the always-observed group is the entire population), then $m_{0}^{S}\left(x, u\right) = 1$, $\Delta_{S}\left(x, u\right) = 0$, implying point identification in equation \eqref{m1YstarOO}. Second, if there is no problem of differential sample selection with respect to treatment status ($\mathbb{P}\left[S_{0} = 0, S_{1} = 1 \left\vert X = x, U = u \right.\right] = 0$, i.e., the observed-only-when-treated subpopulation has zero mass), then $\Delta_{S}\left(x, u\right) = 0$, once more implying point identification in equation \eqref{m1YstarOO}. Both cases are theoretically uninteresting and ruled out by Assumption \ref{increasing_sample_selection}.

Finally, combining equations \eqref{target} and \eqref{m0YstarOO} and Proposition \ref{boundsY1Proposition}, I can partially identify the target parameter $\Delta_{Y^{*}}^{OO}\left(x, u\right)$:
\begin{corollary}\label{MTEbounds}
Suppose that $m_{0}^{Y}\left(x, u\right)$, $m_{1}^{Y}\left(x, u\right)$, $m_{0}^{S}\left(x, u\right)$ and $\Delta_{S}\left(x, u\right)$ are point identified.

Under Assumptions \ref{ind}-\ref{support}, \ref{bounded}.1 and \ref{increasing_sample_selection}, the bounds for $\Delta_{Y^{*}}^{OO}\left(x, u\right)$ are given by
\begin{equation}\label{mte_lb_lower}
\Delta_{Y^{*}}^{OO}\left(x, u\right) \geq \underline{y}^{*} - \dfrac{m_{0}^{Y}\left(x, u\right)}{m_{0}^{S}\left(x, u\right)} \eqqcolon \underline{\Delta_{Y^{*}}^{OO}}\left(x, u\right)
\end{equation}
and
\begin{equation}\label{mte_ub_lower}
\Delta_{Y^{*}}^{OO}\left(x, u\right) \leq \dfrac{m_{1}^{Y}\left(x, u\right) - \underline{y}^{*} \cdot \Delta_{S}\left(x, u\right)}{m_{0}^{S}\left(x, u\right)} - \dfrac{m_{0}^{Y}\left(x, u\right)}{m_{0}^{S}\left(x, u\right)} \eqqcolon \overline{\Delta_{Y^{*}}^{OO}}\left(x, u\right).
\end{equation}

Under Assumptions \ref{ind}-\ref{support}, \ref{bounded}.2 and \ref{increasing_sample_selection}, the bounds for $\Delta_{Y^{*}}^{OO}\left(x, u\right)$ are given by
\begin{equation}\label{mte_lb_upper}
\Delta_{Y^{*}}^{OO}\left(x, u\right) \geq \dfrac{m_{1}^{Y}\left(x, u\right) - \overline{y}^{*} \cdot \Delta_{S}\left(x, u\right)}{m_{0}^{S}\left(x, u\right)} - \dfrac{m_{0}^{Y}\left(x, u\right)}{m_{0}^{S}\left(x, u\right)} \eqqcolon \underline{\Delta_{Y^{*}}^{OO}}\left(x, u\right)
\end{equation}
and
\begin{equation}\label{mte_ub_upper}
\Delta_{Y^{*}}^{OO}\left(x, u\right) \leq \overline{y}^{*} - \dfrac{m_{0}^{Y}\left(x, u\right)}{m_{0}^{S}\left(x, u\right)} \eqqcolon \overline{\Delta_{Y^{*}}^{OO}}\left(x, u\right).
\end{equation}

Under Assumptions \ref{ind}-\ref{support}, \ref{bounded}.3 (sub-case (a) or (b)) and \ref{increasing_sample_selection}, the bounds for $\Delta_{Y^{*}}^{OO}\left(x, u\right)$ are given by
\begin{equation}\label{mte_lb}
\Delta_{Y^{*}}^{OO}\left(x, u\right) \geq \max\left\lbrace \dfrac{m_{1}^{Y}\left(x, u\right) - \overline{y}^{*} \cdot \Delta_{S}\left(x, u\right)}{m_{0}^{S}\left(x, u\right)}, \underline{y}^{*}\right\rbrace - \dfrac{m_{0}^{Y}\left(x, u\right)}{m_{0}^{S}\left(x, u\right)} \eqqcolon \underline{\Delta_{Y^{*}}^{OO}}\left(x, u\right)
\end{equation}
and
\begin{equation}\label{mte_ub}
\Delta_{Y^{*}}^{OO}\left(x, u\right) \leq \min\left\lbrace \dfrac{m_{1}^{Y}\left(x, u\right) - \underline{y}^{*} \cdot \Delta_{S}\left(x, u\right)}{m_{0}^{S}\left(x, u\right)}, \overline{y}^{*}\right\rbrace - \dfrac{m_{0}^{Y}\left(x, u\right)}{m_{0}^{S}\left(x, u\right)} \eqqcolon \overline{\Delta_{Y^{*}}^{OO}}\left(x, u\right).
\end{equation}
\end{corollary}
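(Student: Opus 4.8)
The plan is to obtain the bounds for $\Delta_{Y^{*}}^{OO}(x,u)$ by substituting the representations already in hand into the definition \eqref{target}: equation \eqref{m0YstarOO} handles the $Y_{0}^{*}$ term exactly, and the inequalities of Proposition \ref{boundsY1Proposition} handle the $Y_{1}^{*}$ term. Since $m_{0}^{Y}(x,u)$, $m_{1}^{Y}(x,u)$, $m_{0}^{S}(x,u)$ and $\Delta_{S}(x,u)$ are assumed point identified, the quotient $m_{0}^{Y}(x,u)/m_{0}^{S}(x,u)=\mathbb{E}[Y_{0}^{*}\mid X=x,U=u,S_{0}=1,S_{1}=1]$ is a known number, so that bounding $\Delta_{Y^{*}}^{OO}(x,u)$ reduces to bounding $\mathbb{E}[Y_{1}^{*}\mid X=x,U=u,S_{0}=1,S_{1}=1]$ and subtracting this known number from the endpoints.

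Under Assumption \ref{bounded}.1 I would take the two-sided bound \eqref{boundsY1lower} of Proposition \ref{boundsY1Proposition}, subtract $m_{0}^{Y}(x,u)/m_{0}^{S}(x,u)$ from both sides, and read off \eqref{mte_lb_lower} and \eqref{mte_ub_lower}; under Assumption \ref{bounded}.2 the argument is identical starting from \eqref{boundsY1upper}, yielding \eqref{mte_lb_upper} and \eqref{mte_ub_upper}. Under Assumption \ref{bounded}.3 the support of $Y_{1}^{*}$ lies in the finite interval $[\underline{y}^{*},\overline{y}^{*}]$, so on top of the bound \eqref{boundsY1} one also has the trivial envelope $\underline{y}^{*}\le\mathbb{E}[Y_{1}^{*}\mid X=x,U=u,S_{0}=1,S_{1}=1]\le\overline{y}^{*}$; intersecting these two valid intervals gives
\[
\max\left\{\frac{m_{1}^{Y}(x,u)-\overline{y}^{*}\cdot\Delta_{S}(x,u)}{m_{0}^{S}(x,u)},\ \underline{y}^{*}\right\}\le\mathbb{E}[Y_{1}^{*}\mid\cdots]\le\min\left\{\frac{m_{1}^{Y}(x,u)-\underline{y}^{*}\cdot\Delta_{S}(x,u)}{m_{0}^{S}(x,u)},\ \overline{y}^{*}\right\},
\]
and subtracting $m_{0}^{Y}(x,u)/m_{0}^{S}(x,u)$ produces \eqref{mte_lb} and \eqref{mte_ub}.

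The one point that deserves a sentence rather than pure algebra is why the $\max$ and $\min$ are the correct way to merge the two restrictions available in case 3: both the Proposition \ref{boundsY1Proposition} interval and the support envelope $[\underline{y}^{*},\overline{y}^{*}]$ are valid outer sets for $\mathbb{E}[Y_{1}^{*}\mid\cdots]$, hence so is their intersection, and the intersection of two intervals has lower endpoint equal to the larger of the two lower endpoints and upper endpoint equal to the smaller of the two upper endpoints. I do not anticipate any genuine obstacle: all the substantive content --- the decompositions \eqref{m0YstarOO} and \eqref{m1YstarOO} and the bounding of the term $\Delta_{Y}^{NO}(x,u)\cdot\Delta_{S}(x,u)$ using Assumptions \ref{ind}--\ref{increasing_sample_selection} --- is already carried out in Proposition \ref{boundsY1Proposition}, so the corollary is essentially a transcription step. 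If one further wanted the $\Delta_{Y^{*}}^{OO}$ bounds to inherit the sharpness of the Proposition \ref{boundsY1Proposition} bounds, that would follow because $\mathbb{E}[Y_{0}^{*}\mid\cdots]$ is point identified and subtracting it loses no degree of freedom.
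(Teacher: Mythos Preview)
Your proposal is correct and follows essentially the same approach as the paper, which simply states that the corollary follows by combining equation \eqref{target}, equation \eqref{m0YstarOO}, and Proposition \ref{boundsY1Proposition}. Your additional remark about intersecting with the trivial support envelope $[\underline{y}^{*},\overline{y}^{*}]$ in case~3 is exactly the right observation to explain the appearance of the $\max$/$\min$ in \eqref{mte_lb}--\eqref{mte_ub}, since Proposition \ref{boundsY1Proposition} does not state that envelope explicitly in its case-3 bound (though it is derived in the proof via equation \eqref{naturalbounds}).
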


Furthermore, I can show that\footnote{The definition of pointwise sharpness used here and in the rest of the paper follows the definition of sharpness given by \citet[Remark 2.1.]{Canay2017}. Moreover, note that, if the functions $m_{0}^{Y}$, $m_{1}^{Y}$, $m_{0}^{S}$ and $\Delta_{S}$ are point identified only in a subset of the unit interval, then pointwise sharpness holds only in that subset.}:

\begin{theorem}\label{sharpbounds}
Suppose that the functions $m_{0}^{Y}$, $m_{1}^{Y}$, $m_{0}^{S}$ and $\Delta_{S}$ are point identified at every pair $\left(x, u\right) \in \mathcal{X} \times \left[0, 1\right]$. Under Assumptions \ref{ind}-\ref{support}, \ref{bounded} (sub-cases 1, 2, 3(a) or 3(b)) and \ref{increasing_sample_selection}, the bounds $\underline{\Delta_{Y^{*}}^{OO}}$ and $\overline{\Delta_{Y^{*}}^{OO}}$, given by Corollary \ref{MTEbounds}, are pointwise sharp, i.e., for any $\overline{u} \in \left[0, 1\right]$, $\overline{x} \in \mathcal{X}$ and $\delta\left(\overline{x}, \overline{u}\right) \in \left(\underline{\Delta_{Y^{*}}^{OO}}\left(\overline{x}, \overline{u}\right), \overline{\Delta_{Y^{*}}^{OO}}\left(\overline{x}, \overline{u}\right)\right)$, there exist random variables $\left(\tilde{Y}_{0}^{*}, \tilde{Y}_{1}^{*}, \tilde{U}, \tilde{V}\right)$ such that
\begin{equation}\label{faketarget}
\Delta_{\tilde{Y}^{*}}^{OO}\left(\overline{x}, \overline{u}\right) \coloneqq \mathbb{E}\left[\tilde{Y}_{1}^{*} - \tilde{Y}_{0}^{*} \left\vert X = \overline{x}, \tilde{U} = \overline{u}, \tilde{S}_{0} = 1, \tilde{S}_{1} = 1 \right.\right] = \delta\left(\overline{x}, \overline{u}\right),
\end{equation}
\begin{equation}\label{correctsupport}
\mathbb{P}\left[\left. \left(\tilde{Y}_{0}^{*}, \tilde{Y}_{1}^{*}, \tilde{V}\right) \in \mathcal{Y}^{*} \times \mathcal{Y}^{*} \times \left[0, 1\right] \right\vert X = \overline{x}, \tilde{U} = u \right] = 1 \text{ for any } u \in \left[0, 1\right],
\end{equation}
and
\begin{equation}\label{DataRestriction}
F_{\tilde{Y}, \tilde{D}, \tilde{S}, Z, X}\left(y, d, s, z, \overline{x} \right) = F_{Y, D, S, Z, X} \left(y, d, s, z, \overline{x}\right)
\end{equation}
for any $\left(y, d, s, z\right) \in \mathbb{R}^{4}$,	where $\tilde{D} \coloneqq \mathbf{1}\left\lbrace P\left(X, Z\right) \geq \tilde{U}\right\rbrace$, $\tilde{S}_{0} = \mathbf{1}\left\lbrace Q\left(0, X\right) \geq \tilde{V}\right\rbrace$, $\tilde{S}_{1} = \mathbf{1}\left\lbrace Q\left(1, X\right) \geq \tilde{V}\right\rbrace$, $\tilde{Y}_{0} = \tilde{S}_{0} \cdot \tilde{Y}_{0}^{*}$, $\tilde{Y}_{1} = \tilde{S}_{1} \cdot \tilde{Y}_{1}^{*}$ and $\tilde{Y} = \tilde{D} \cdot \tilde{Y}_{1} + \left(1 - \tilde{D}\right) \cdot \tilde{Y}_{0}$.
\end{theorem}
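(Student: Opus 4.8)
The plan is to prove pointwise sharpness by explicitly constructing, for an arbitrary target value $\delta(\overline{x},\overline{u})$ strictly between the lower and upper bounds, a joint distribution of latent variables $(\tilde{Y}_0^*, \tilde{Y}_1^*, \tilde{U}, \tilde{V})$ that (i) is observationally equivalent to the true data-generating process in the sense of \eqref{DataRestriction}, (ii) respects the support restriction \eqref{correctsupport} and all the maintained assumptions, and (iii) delivers the prescribed value of the $MTE^{OO}$ at $(\overline{x},\overline{u})$. The natural strategy is to keep $(\tilde{U},\tilde{V})$ — and hence $\tilde{D}$, $\tilde{S}_0$, $\tilde{S}_1$, and the event structure of the principal strata — identical to the original model, so that the selection equations \eqref{treatment} and \eqref{selection} continue to hold with the same $P$ and $Q$; this automatically preserves Assumptions \ref{ind}, \ref{propensityscore}, \ref{positive}, \ref{invariantX}, and \ref{increasing_sample_selection}. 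What remains is to re-specify the conditional law of $(\tilde{Y}_0^*, \tilde{Y}_1^*)$ given $(X,\tilde{U},\tilde{V})$ in a way that leaves the distribution of the \emph{observable} outcome $\tilde{Y} = \tilde{S}\cdot(\tilde{D}\tilde{Y}_1^* + (1-\tilde{D})\tilde{Y}_0^*)$ unchanged while moving the unobservable conditional means that enter \eqref{m0YstarOO} and \eqref{m1YstarOO}.

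The key steps, in order, are as follows. First I would recall from \eqref{m0YstarOO} and \eqref{m1YstarOO} that $\Delta_{Y^*}^{OO}(\overline{x},\overline{u})$ depends on the new distribution only through the identified quantities $m_0^Y, m_1^Y, m_0^S, \Delta_S$ (which are pinned down by \eqref{DataRestriction}) together with the one free object $\Delta_Y^{NO}(\overline{x},\overline{u}) = \mathbb{E}[\tilde Y_1 - \tilde Y_0 \mid X=\overline x, \tilde U = \overline u, \tilde S_0 = 0, \tilde S_1 = 1]$; since $\tilde Y_0 = 0$ on the stratum $\{\tilde S_0 = 0\}$, this reduces to choosing $\mathbb{E}[\tilde Y_1^* \mid X=\overline x, \tilde U=\overline u, \tilde S_0=0, \tilde S_1=1]$, call it $\mu \in \mathcal{Y}^*$. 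Second, I would show that as $\mu$ ranges over $\mathcal{Y}^*$ (or its closure, using sub-case 3(b) to attain the endpoints), the resulting value of $\Delta_{Y^*}^{OO}$ traces out exactly the open interval between $\underline{\Delta_{Y^*}^{OO}}(\overline x,\overline u)$ and $\overline{\Delta_{Y^*}^{OO}}(\overline x,\overline u)$ — this is the algebraic content already implicit in Corollary \ref{MTEbounds}, obtained by substituting the extreme admissible values of $\mu$ and invoking continuity/monotonicity of the expression in $\mu$; hence there is a unique $\mu^\star \in \mathcal{Y}^*$ with the desired property. Third, I would exhibit a concrete construction realizing $\mu^\star$: on the observed-only-when-treated stratum, replace the conditional law of $Y_1^*$ by one supported in $\mathcal{Y}^*$ with mean $\mu^\star$ (e.g., a two-point distribution on $\{\underline y^*, \overline y^*\}$ in the bounded case, or an appropriate translate/mixture in the one-sided cases, taking care that \eqref{correctsupport} and Assumption \ref{support} hold), leaving the conditional laws on every other stratum — and the conditional law of $Y_0^*$ everywhere — untouched except possibly for an innocuous redistribution that keeps the observable law fixed; on the always-observed stratum, $\tilde Y_1^*$ is observed (as $\tilde Y$ when $\tilde D=1$) so its law there is forced by \eqref{DataRestriction} and cannot be altered. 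Fourth, I would verify \eqref{DataRestriction} directly: the joint law of $(\tilde Y, \tilde D, \tilde S, Z, X)$ decomposes over the strata exactly as in the original model, and on the one stratum we modified, $\tilde Y = 0$ under $\tilde D = 0$ regardless of $\tilde Y_0^*$, while under $\tilde D = 1$ the law of $\tilde Y_1^*$ on the union of the always-observed and observed-only-when-treated strata must be re-balanced so that the \emph{mixture} matches the identified $m_1^Y$; since $m_1^Y$, $m_0^S$ and $\Delta_S$ are held fixed, this re-balancing has a feasible solution precisely when $\mu^\star$ lies in the admissible range, which is what Step 2 guarantees. Finally, I would check Assumption \ref{finite} (finite first and second moments) for the constructed outcomes, which is immediate in the bounded sub-cases and follows in the one-sided sub-cases by choosing the modifying distribution with a suitable tail.

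I expect the main obstacle to be Step 3–4: making the construction simultaneously (a) feasible as a genuine probability law on $\mathcal{Y}^* \times \mathcal{Y}^*$ for \emph{every} $u \in [0,1]$ (not merely $u = \overline u$), (b) consistent with the support assumption \ref{support} that $Y_0^*$ and $Y_1^*$ share the same support, and (c) exactly observationally equivalent, i.e., the re-distribution of $\tilde Y_1^*$ across the always-observed and observed-only-when-treated strata at $u = \overline u$ must leave the pooled conditional law (hence $m_1^Y$) numerically unchanged. The delicate point is that one cannot freely perturb $\tilde Y_1^*$ on the always-observed stratum (it is identified), so all the adjustment needed to hit $\mu^\star$ on the observed-only-when-treated stratum while preserving the pooled mean must be absorbed within that stratum alone; this is possible exactly because the mass of that stratum is $\Delta_S(\overline x,\overline u) > 0$ and the admissible conditional means there form the interval $[\underline y^*, \overline y^*]$, so the map $\mu \mapsto m_1^Y$-compatible configurations is non-empty on the full bound interval and empty just outside it — which is the essence of sharpness. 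I would handle the case distinctions of Assumption \ref{bounded} in parallel, since the only difference is whether the extremizing $\mu$ is attained ($\underline y^* \in \mathcal{Y}^*$ or $\overline y^* \in \mathcal{Y}^*$) or merely approached, which is why the sharpness claim is stated for the \emph{open} interval $(\underline{\Delta_{Y^*}^{OO}}, \overline{\Delta_{Y^*}^{OO}})$.
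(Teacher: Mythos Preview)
Your proposal contains a genuine gap that would block the construction as you have described it. The central error is the claim that ``on the always-observed stratum, $\tilde Y_1^*$ is observed (as $\tilde Y$ when $\tilde D=1$) so its law there is forced by \eqref{DataRestriction} and cannot be altered.'' This is false: when $\tilde D=1$ and $\tilde S=1$ we observe $\tilde Y_1^*$ conditional on $\tilde S_1=1$, which is the \emph{union} of the always-observed and observed-only-when-treated strata; the principal-stratum membership is determined by the latent $V$ and is never observed. Hence only the pooled law (and, via LIV, the pooled MTR $m_1^Y$) is pinned down, not the stratum-specific law. If you literally follow your Step~3 --- change the observed-only-when-treated mean to $\mu^\star$ while leaving the always-observed mean at its original value $\alpha_0$ --- then the pooled constraint $\alpha_0\, m_0^S(\overline x,\overline u)+\mu^\star\,\Delta_S(\overline x,\overline u)=m_1^Y(\overline x,\overline u)$ forces $\mu^\star$ to equal its original value, so $\delta$ cannot move. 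Your Step~4 tacitly acknowledges this (``re-balanced so that the mixture matches''), but re-balancing requires altering the always-observed law, contradicting Step~3.

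The paper resolves this not by a careful rebalancing across strata for all $u$, but by a \emph{measure-zero} trick you did not mention: it sets the conditional law of $(\tilde V,\tilde Y_0^*,\tilde Y_1^*)$ given $(X,\tilde U)$ equal to the true law for every $u\neq\overline u$, and redefines it freely at the single point $u=\overline u$. Because $\tilde U$ is continuous, this modification carries zero mass, so \eqref{DataRestriction} holds automatically with no rebalancing required. At $u=\overline u$ one is then free to pick the always-observed mean $\alpha=\delta(\overline x,\overline u)+m_0^Y/m_0^S$ and the observed-only-when-treated mean $\gamma=(m_1^Y-\alpha\,m_0^S)/\Delta_S$; the role of Corollary~\ref{MTEbounds} is precisely that $\delta\in(\underline{\Delta_{Y^*}^{OO}},\overline{\Delta_{Y^*}^{OO}})$ is equivalent to both $\alpha$ and $\gamma$ lying in $(\underline y^*,\overline y^*)$, so suitable conditional distributions supported in $\mathcal Y^*$ exist (degenerate in the interval sub-cases, two-point on $\{\underline y^*,\overline y^*\}$ in sub-case~3(b)). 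Once you drop the mistaken ``forced'' claim and adopt the measure-zero localization, your remaining outline (choose $\mu^\star$ so that the resulting $\alpha$ hits $\delta$, verify support and moments case by case) goes through essentially as the paper's proof does.
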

\begin{proof}
	Here, I provide a sketch of the proof of Theorem \ref{sharpbounds}. Appendix \ref{proofsharp} contains its detailed version. I define candidate random variables $\left(\tilde{Y}_{0}^{*}, \tilde{Y}_{1}^{*}, \tilde{U}, \tilde{V}\right)$ through their joint cumulative distribution function $F_{\tilde{Y}_{0}^{*}, \tilde{Y}_{1}^{*}, \tilde{U}, \tilde{V}, Z, X}$ and then check that equations \eqref{faketarget}, \eqref{correctsupport} and \eqref{DataRestriction} are satisfied. Intuitively, I define this joint probability function to be equal to $F_{Y_{0}^{*}, Y_{1}^{*}, U, V, Z, X}$ at every point, but the point $\tilde{U} = \bar{u}$. By doing so, I ensure that the equation \eqref{DataRestriction} holds because $\tilde{U} = \bar{u}$ is associated to a mass zero set. I, then, define the function $F_{\tilde{Y}_{0}^{*}, \tilde{Y}_{1}^{*}, \tilde{U}, \tilde{V}, Z, X}$ at $\tilde{U} = \bar{u}$ to ensure that equations \eqref{faketarget} and \eqref{correctsupport} hold.
\end{proof}

Intuitively, Theorem \ref{sharpbounds} says that, for any $\delta\left(\overline{x}, \overline{u}\right) \in \left(\underline{\Delta_{Y^{*}}^{OO}}\left(\overline{x}, \overline{u}\right), \overline{\Delta_{Y^{*}}^{OO}}\left(\overline{x}, \overline{u}\right)\right)$, it is possible to create candidate random variables $\left(\tilde{Y}_{0}^{*}, \tilde{Y}_{1}^{*}, \tilde{U}, \tilde{V}\right)$ that generate the candidate marginal treatment effect $\delta\left(\overline{x}, \overline{u}\right)$ (equation \eqref{faketarget}), satisfy the bounded support condition --- a restriction imposed by my model (Assumption \ref{bounded}) and summarized in equation \eqref{correctsupport} --- and generate the same distribution of the observable variables --- a restriction imposed by the data and summarized in equation \eqref{DataRestriction}. In other words, the data and the model in Section \ref{model} do not generate enough restrictions to refute that the true target parameter $\Delta_{Y^{*}}^{OO}\left(\overline{x}, \overline{u}\right)$ is equal to the candidate target parameter $\delta\left(\overline{x}, \overline{u}\right)$.

Moreover, the bounded support condition (Assumption \ref{bounded}) is partially necessary to the existence of bounds for the target parameter $\Delta_{Y^{*}}^{OO}\left(\overline{x}, \overline{u}\right)$. When the support is unbounded in both directions (i.e., $\underline{y}^{*} = - \infty$ and $\overline{y}^{*} = + \infty$), then it is impossible to derive bounds for the target parameter $\Delta_{Y^{*}}^{OO}\left(\overline{x}, \overline{u}\right)$ without any extra assumption. Proposition \ref{partialnecessary} formalizes this last statement.\footnote{Appendix \ref{proofnecessary} contains the proof of this proposition, whose intuition is similar to the one provided for Theorem \ref{sharpbounds}.}
\begin{proposition}\label{partialnecessary}
Suppose that the functions $m_{0}^{Y}$, $m_{1}^{Y}$, $m_{0}^{S}$ and $\Delta_{S}$ are point identified at every pair $\left(x, u\right) \in \mathcal{X} \times \left[0, 1\right]$. Impose Assumptions \ref{ind}-\ref{support} and \ref{increasing_sample_selection}. If $\mathcal{Y}^{*} = \mathbb{R}$, then, for any $\overline{u} \in \left[0, 1\right]$, $\overline{x} \in \mathcal{X}$ and $\delta\left(\overline{x}, \overline{u}\right) \in \mathbb{R}$, there exist random variables $\left(\tilde{Y}_{0}^{*}, \tilde{Y}_{1}^{*}, \tilde{U}, \tilde{V}\right)$ such that
\begin{equation}\label{faketargetP}
\Delta_{\tilde{Y}^{*}}^{OO}\left(\overline{x}, \overline{u}\right) \coloneqq \mathbb{E}\left[\tilde{Y}_{1}^{*} - \tilde{Y}_{0}^{*} \left\vert X = \overline{x}, \tilde{U} = \overline{u}, \tilde{S}_{0} = 1, \tilde{S}_{1} = 1 \right.\right] = \delta\left(\overline{x}, \overline{u}\right),
\end{equation}
\begin{equation}\label{correctsupportP}
\mathbb{P}\left[\left. \left(\tilde{Y}_{0}^{*}, \tilde{Y}_{1}^{*}, \tilde{V}\right) \in \mathcal{Y}^{*} \times \mathcal{Y}^{*} \times \left[0, 1\right] \right\vert X = \overline{x}, \tilde{U} = u \right] = 1 \text{ for any } u \in \left[0, 1\right],
\end{equation}
and
\begin{equation}\label{DataRestrictionP}
F_{\tilde{Y}, \tilde{D}, \tilde{S}, Z, X}\left(y, d, s, z, \overline{x} \right) = F_{Y, D, S, Z, X} \left(y, d, s, z, \overline{x}\right)
\end{equation}
for any $\left(y, d, s, z\right) \in \mathbb{R}^{4}$,	where $\tilde{D} \coloneqq \mathbf{1}\left\lbrace P\left(X, Z\right) \geq \tilde{U}\right\rbrace$, $\tilde{S}_{0} = \mathbf{1}\left\lbrace Q\left(0, X\right) \geq \tilde{V}\right\rbrace$, $\tilde{S}_{1} = \mathbf{1}\left\lbrace Q\left(1, X\right) \geq \tilde{V}\right\rbrace$, $\tilde{Y}_{0} = \tilde{S}_{0} \cdot \tilde{Y}_{0}^{*}$, $\tilde{Y}_{1} = \tilde{S}_{1} \cdot \tilde{Y}_{1}^{*}$ and $\tilde{Y} = \tilde{D} \cdot \tilde{Y}_{1} + \left(1 - \tilde{D}\right) \cdot \tilde{Y}_{0}$.
\end{proposition}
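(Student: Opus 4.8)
The plan is to reuse the mechanism behind the sketch of Theorem \ref{sharpbounds}, exploiting that the event $\left\lbrace U=\overline{u}\right\rbrace$ is a $\mathbb{P}$-null set (because $U\mid X$ is normalized to be uniform on $\left[0,1\right]$): the conditional law of the latent variables on that slice is therefore invisible to the distribution of $\left(Y,D,S,Z,X\right)$, while the target $\Delta_{Y^{*}}^{OO}\left(\overline{x},\overline{u}\right)$ \emph{is} a feature of that slice, since an MTE is defined by conditioning on $U$ at a point; and with $\mathcal{Y}^{*}=\mathbb{R}$ there is no bounded-support restriction to cap the perturbation. Concretely, I would let $\left(\tilde{Y}_{0}^{*},\tilde{Y}_{1}^{*},\tilde{U},\tilde{V},Z,X\right)$ have the same joint law as $\left(Y_{0}^{*},Y_{1}^{*},U,V,Z,X\right)$, except that on the slice $\left\lbrace X=\overline{x},\,\tilde{U}=\overline{u}\right\rbrace$ I keep $\left(\tilde{Y}_{0}^{*},\tilde{V}\right)$ with the same conditional law as $\left(Y_{0}^{*},V\right)$ and set $\tilde{Y}_{1}^{*}$ equal to the constant $c\coloneqq\delta\left(\overline{x},\overline{u}\right)+m_{0}^{Y}\left(\overline{x},\overline{u}\right)/m_{0}^{S}\left(\overline{x},\overline{u}\right)$ on $\left\lbrace \tilde{V}\le Q\left(0,\overline{x}\right)\right\rbrace$ (setting $\tilde{Y}_{1}^{*}=Y_{1}^{*}$ on the rest of the slice, which is immaterial). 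Here $c\in\mathbb{R}$ is a finite number since $m_{0}^{Y},m_{0}^{S}$ are point identified and $m_{0}^{S}\left(\overline{x},\overline{u}\right)>0$. One first checks that this defines a valid law and that the candidate still obeys Assumptions \ref{ind}--\ref{support} and \ref{increasing_sample_selection} (all but the unboundedly-supported coordinates are untouched, and $\tilde{Y}_{1}^{*}\overset{d}{=}Y_{1}^{*}$ since they differ only on a null set).

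Next I would verify the three displayed requirements. Condition \eqref{correctsupportP} is immediate: on the modified slice $\tilde{Y}_{0}^{*}$ is supported in $\mathcal{Y}^{*}=\mathbb{R}$, $\tilde{Y}_{1}^{*}\in\mathbb{R}=\mathcal{Y}^{*}$, and $\tilde{V}=V\in\left[0,1\right]$; elsewhere it is inherited from the true model. For \eqref{faketargetP}, note that $Q\left(1,\overline{x}\right)>Q\left(0,\overline{x}\right)$ forces $\tilde{S}_{0}=1\Rightarrow\tilde{S}_{1}=1$, so the conditioning event is $\left\lbrace X=\overline{x},\tilde{U}=\overline{u},\tilde{V}\le Q\left(0,\overline{x}\right)\right\rbrace$, on which $\tilde{Y}_{1}^{*}\equiv c$; and since the candidate satisfies the model's assumptions and has the same conditional law of $\left(Y_{0}^{*},V\right)$ given $\left(U,X\right)$, the identity \eqref{m0YstarOO} applied to the candidate gives $\mathbb{E}\left[\tilde{Y}_{0}^{*}\left\vert X=\overline{x},\tilde{U}=\overline{u},\tilde{S}_{0}=1,\tilde{S}_{1}=1\right.\right]=m_{0}^{Y}\left(\overline{x},\overline{u}\right)/m_{0}^{S}\left(\overline{x},\overline{u}\right)$; subtracting, $\Delta_{\tilde{Y}^{*}}^{OO}\left(\overline{x},\overline{u}\right)=c-m_{0}^{Y}\left(\overline{x},\overline{u}\right)/m_{0}^{S}\left(\overline{x},\overline{u}\right)=\delta\left(\overline{x},\overline{u}\right)$.

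For \eqref{DataRestrictionP}, since $\left(\tilde{U},\tilde{V},Z,X\right)\overset{d}{=}\left(U,V,Z,X\right)$ we get $\left(\tilde{D},\tilde{S},Z,X\right)\overset{d}{=}\left(D,S,Z,X\right)$, so only the outcome coordinate needs care. Fix a cell $\left\lbrace \tilde{D}=d,\tilde{S}=s,Z=z,X=\overline{x}\right\rbrace$ of positive probability; conditionally on $\left(\tilde{D},Z,X\right)$ the variable $\tilde{U}$ is still continuously distributed, so the modification event $\left\lbrace \tilde{U}=\overline{u}\right\rbrace$ has conditional probability zero inside that cell, whence $\tilde{Y}=\tilde{D}\tilde{S}_{1}\tilde{Y}_{1}^{*}+\left(1-\tilde{D}\right)\tilde{S}_{0}\tilde{Y}_{0}^{*}$ has the same conditional law there as $Y$; cells with $s=0$ satisfy $\tilde{Y}=0=Y$ and cells with $X\neq\overline{x}$ are untouched, so \eqref{DataRestrictionP} follows.

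The crux is not any single display but holding two opposing facts together and making them rigorous in the paper's density setup: the perturbation at $u=\overline{u}$ must be free enough to set $\Delta_{\tilde{Y}^{*}}^{OO}\left(\overline{x},\overline{u}\right)$ to an arbitrary real $\delta\left(\overline{x},\overline{u}\right)$, yet confined to the $U$-null slice $\left\lbrace U=\overline{u}\right\rbrace$ so that every integrated feature of the law --- the whole distribution of $\left(Y,D,S,Z,X\right)$, and hence the point-identified $m_{0}^{Y},m_{1}^{Y},m_{0}^{S},\Delta_{S}$ --- is left exactly as in the true model. This is feasible precisely because $\mathcal{Y}^{*}=\mathbb{R}$ removes the bounded-support constraint of Assumption \ref{bounded}, which is the only thing that, in Theorem \ref{sharpbounds}, keeps the analogous perturbation from being unbounded.
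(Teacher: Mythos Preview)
Your proposal is correct and follows essentially the same approach as the paper: both exploit that $\{U=\overline{u}\}$ is a null set (since $U\mid X$ is uniform) to redefine the conditional law on that slice so as to hit the arbitrary target $\delta(\overline{x},\overline{u})$, while leaving the distribution of the observables $(Y,D,S,Z,X)$ unchanged, with $\mathcal{Y}^{*}=\mathbb{R}$ removing any support obstruction. The only cosmetic difference is that the paper reconstructs the entire slice law from scratch (redefining $F_{\tilde{V}\mid X,\tilde{U}}$, $F_{\tilde{Y}_{0}^{*}\mid X,\tilde{U},\tilde{V}}$, and $F_{\tilde{Y}_{1}^{*}\mid X,\tilde{U},\tilde{V}}$ via explicit point masses, including a value $\gamma$ on $\{Q(0,\overline{x})<\tilde{V}\le Q(1,\overline{x})\}$ that also matches $m_{1}^{Y}(\overline{x},\overline{u})$), whereas you more parsimoniously keep $(\tilde{Y}_{0}^{*},\tilde{V})$ equal in law to $(Y_{0}^{*},V)$ and modify only $\tilde{Y}_{1}^{*}$ on $\{\tilde{V}\le Q(0,\overline{x})\}$; as you note, the behavior on the rest of the slice is immaterial for the three conditions stated in the proposition.
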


In other words, when the support of the potential outcome is the entire real line, the data and the model in Section \ref{model} do not generate enough restrictions to refute that the true target parameter $\Delta_{Y^{*}}^{OO}\left(\overline{x}, \overline{u}\right)$ is equal to an arbitrarily large effect in magnitude. This impossibility result is interesting in light of the previous literature about partial identification of treatment effects with sample selection. In the case of the $ITT^{OO}$ (\cite{Lee2009}) and the $LATE^{OO}$ (\cite{Chen2015}), it is possible to construct informative bounds even when the support of the potential outcome is the entire real line. However, when focusing on a specific point of the $MTE^{OO}$ function, it is impossible to construct informative bounds when $\mathcal{Y}^{*} = \mathbb{R}$ due to the local nature of the target parameter.

There is one remark about the results I just derived. Theorem \ref{sharpbounds} and Proposition \ref{partialnecessary} do not impose any smoothness condition on the joint distribution of $\left(Y_{0}^{*}, Y_{1}^{*}, U, V, Z, X\right)$. In particular, the conditional cumulative distribution functions $F_{V \left\vert X, U \right.}$, $F_{Y_{0}^{*} \left\vert X, U, V \right.}$ and $F_{Y_{1}^{*} \left\vert X, U, V \right.}$ are allowed to be discontinuous functions of U at the point $\overline{u}$. Appendix \ref{discontinuous} states and proves a sharpness result similar to Theorem \ref{sharpbounds} and an impossibility result similar to Proposition \ref{partialnecessary} when $F_{V \left\vert X, U \right.}$, $F_{Y_{0}^{*} \left\vert X, U, V \right.}$ and $F_{Y_{1}^{*} \left\vert X, U, V \right.}$ must be continuous functions of U.

\subsection{Partial Identification with an Extra Mean Dominance Assumption}\label{MeanDominanceSection}

Here, I use the Mean Dominance Assumption \ref{meandominanceG} to tighten the bounds for the target parameter $\Delta_{Y^{*}}^{OO}$ (equation \eqref{target}) given by Corollary \ref{MTEbounds}. Note that Assumption \ref{meandominanceG} implies that $\Delta_{Y}^{NO}\left(x, u\right) \leq \dfrac{m_{1}^{Y}\left(x, u\right)}{m_{1}^{S}\left(x, u\right)} \leq \mathbb{E}\left[Y_{1}^{*} \left\vert X = x, U = u, S_{0} = 1, S_{1} = 1 \right.\right]$ by equations \eqref{DeltaY_NO} and \eqref{decomposition}.  As a consequence, by following the same steps of the proof of corollary \ref{MTEbounds}, I can derive:
\begin{corollary}\label{boundmeandomG}
	Fix $u \in \left[0, 1\right]$ and $x \in \mathcal{X}$ arbitrarily. Suppose that the $m_{0}^{Y}\left(x, u\right)$, $m_{1}^{Y}\left(x, u\right)$, $m_{0}^{S}\left(x, u\right)$ and $\Delta_{S}\left(x, u\right)$ are point identified.
	
	Under Assumptions \ref{ind}-\ref{support}, \ref{bounded}.1, \ref{increasing_sample_selection} and \ref{meandominanceG}, $\Delta_{Y^{*}}^{OO}\left(x, u\right)$ must satisfy
	\begin{equation}\label{mte_lb_md}
	\Delta_{Y^{*}}^{OO}\left(x, u\right) \geq \dfrac{m_{1}^{Y}\left(x, u\right)}{m_{1}^{S}\left(x, u\right)} - \dfrac{m_{0}^{Y}\left(x, u\right)}{m_{0}^{S}\left(x, u\right)} \eqqcolon \underline{\Delta_{Y^{*}}^{OO}}\left(x, u\right)
	\end{equation}
	and
	\begin{equation}\label{mte_ub_md}
	\Delta_{Y^{*}}^{OO}\left(x, u\right) \leq \dfrac{m_{1}^{Y}\left(x, u\right) - \underline{y}^{*} \cdot \Delta_{S}\left(x, u\right)}{m_{0}^{S}\left(x, u\right)} - \dfrac{m_{0}^{Y}\left(x, u\right)}{m_{0}^{S}\left(x, u\right)} \eqqcolon \overline{\Delta_{Y^{*}}^{OO}}\left(x, u\right).
	\end{equation}
	
	Under Assumptions \ref{ind}-\ref{support}, \ref{bounded}.2, \ref{increasing_sample_selection} and \ref{meandominanceG}, $\Delta_{Y^{*}}^{OO}\left(x, u\right)$ must satisfy
	\begin{equation}
	\Delta_{Y^{*}}^{OO}\left(x, u\right) \geq \dfrac{m_{1}^{Y}\left(x, u\right)}{m_{1}^{S}\left(x, u\right)} - \dfrac{m_{0}^{Y}\left(x, u\right)}{m_{0}^{S}\left(x, u\right)} \eqqcolon \underline{\Delta_{Y^{*}}^{OO}}\left(x, u\right)
	\end{equation}
	and
	\begin{equation}
	\Delta_{Y^{*}}^{OO}\left(x, u\right) \leq \overline{y}^{*} - \dfrac{m_{0}^{Y}\left(x, u\right)}{m_{0}^{S}\left(x, u\right)} \eqqcolon \overline{\Delta_{Y^{*}}^{OO}}\left(x, u\right).
	\end{equation}
	
	Under Assumptions \ref{ind}-\ref{support}, \ref{bounded}.3 (sub-case (a) or (b)), \ref{increasing_sample_selection} and \ref{meandominanceG}, $\Delta_{Y^{*}}^{OO}\left(x, u\right)$ must satisfy
	\begin{equation}
	\Delta_{Y^{*}}^{OO}\left(x, u\right) \geq \dfrac{m_{1}^{Y}\left(x, u\right)}{m_{1}^{S}\left(x, u\right)} - \dfrac{m_{0}^{Y}\left(x, u\right)}{m_{0}^{S}\left(x, u\right)} \eqqcolon \underline{\Delta_{Y^{*}}^{OO}}\left(x, u\right)
	\end{equation}
	and
	\begin{equation}
	\Delta_{Y^{*}}^{OO}\left(x, u\right) \leq \min\left\lbrace \dfrac{m_{1}^{Y}\left(x, u\right) - \underline{y}^{*} \cdot \Delta_{S}\left(x, u\right)}{m_{0}^{S}\left(x, u\right)}, \overline{y}^{*}\right\rbrace - \dfrac{m_{0}^{Y}\left(x, u\right)}{m_{0}^{S}\left(x, u\right)} \eqqcolon \overline{\Delta_{Y^{*}}^{OO}}\left(x, u\right).
	\end{equation}
	
	When $\mathcal{Y}^{*} = \mathbb{R}$ and Assumptions \ref{ind}-\ref{support}, \ref{increasing_sample_selection} and \ref{meandominanceG} hold, $\Delta_{Y^{*}}^{OO}\left(x, u\right)$ must satisfy
	\begin{equation}\label{lowermean}
	\Delta_{Y^{*}}^{OO}\left(x, u\right) \geq \dfrac{m_{1}^{Y}\left(x, u\right)}{m_{1}^{S}\left(x, u\right)} - \dfrac{m_{0}^{Y}\left(x, u\right)}{m_{0}^{S}\left(x, u\right)} \eqqcolon \underline{\Delta_{Y^{*}}^{OO}}\left(x, u\right)
	\end{equation}
	and
	\begin{equation}
	\Delta_{Y^{*}}^{OO}\left(x, u\right) \leq \infty \eqqcolon \overline{\Delta_{Y^{*}}^{OO}}\left(x, u\right).
	\end{equation}
\end{corollary}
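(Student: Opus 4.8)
The plan is to mirror the proof of Corollary \ref{MTEbounds}, replacing the crude lower bound $\underline{y}^{*}$ on $\mathbb{E}[Y_{1}^{*} \mid X = x, U = u, S_{0} = 1, S_{1} = 1]$ by the sharper one that Assumption \ref{meandominanceG} delivers, and then checking that the same assumption cannot improve the upper bound. Fix $u \in [0,1]$ and $x \in \mathcal{X}$, and abbreviate $A \coloneqq \mathbb{E}[Y_{1}^{*} \mid X = x, U = u, S_{0} = 1, S_{1} = 1]$ and $B \coloneqq \Delta_{Y}^{NO}(x,u)$.

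First I would establish the chain $B \le \frac{m_{1}^{Y}(x,u)}{m_{1}^{S}(x,u)} \le A$. Since $Y_{0} = S_{0} \cdot Y_{0}^{*}$ vanishes on the event $\{S_{0} = 0\}$, equation \eqref{DeltaY_NO} identifies $B$ with $\mathbb{E}[Y_{1}^{*} \mid X = x, U = u, S_{0} = 0, S_{1} = 1]$; the decomposition \eqref{decomposition} then reads $m_{1}^{Y}(x,u) = m_{0}^{S}(x,u)\,A + \Delta_{S}(x,u)\,B$, and $m_{1}^{S}(x,u) = m_{0}^{S}(x,u) + \Delta_{S}(x,u)$. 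Assumption \ref{increasing_sample_selection} makes $m_{0}^{S}(x,u) > 0$ and $\Delta_{S}(x,u) > 0$, hence $m_{1}^{S}(x,u) > 0$, and Assumption \ref{meandominanceG} gives $A \ge B$. Replacing $B$ by the weakly larger $A$ in the decomposition yields $m_{1}^{Y}(x,u) \le m_{1}^{S}(x,u)\,A$, and replacing $A$ by the weakly smaller $B$ yields $m_{1}^{Y}(x,u) \ge m_{1}^{S}(x,u)\,B$; dividing by $m_{1}^{S}(x,u) > 0$ gives the chain. In particular $A \ge \frac{m_{1}^{Y}(x,u)}{m_{1}^{S}(x,u)}$, and since $\frac{m_{1}^{Y}(x,u)}{m_{1}^{S}(x,u)} = \mathbb{E}[Y_{1}^{*} \mid X = x, U = u, S_{1} = 1] \ge \underline{y}^{*}$, this lower bound on $A$ is weakly tighter than the one used in Proposition \ref{boundsY1Proposition}.

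Next I would substitute this into $\Delta_{Y^{*}}^{OO}(x,u) = A - \frac{m_{0}^{Y}(x,u)}{m_{0}^{S}(x,u)}$, which combines \eqref{target} with \eqref{m0YstarOO}. The lower bound $\underline{\Delta_{Y^{*}}^{OO}}(x,u) = \frac{m_{1}^{Y}(x,u)}{m_{1}^{S}(x,u)} - \frac{m_{0}^{Y}(x,u)}{m_{0}^{S}(x,u)}$ then holds in every sub-case of Assumption \ref{bounded}, and also when $\mathcal{Y}^{*} = \mathbb{R}$, where it stays finite and informative because $m_{0}^{S}, m_{1}^{S} > 0$ and $m_{0}^{Y}, m_{1}^{Y}$ are finite by Assumption \ref{finite}. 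For the upper bound I would note that, by \eqref{m1YstarOO}, $A = \frac{m_{1}^{Y}(x,u) - B \cdot \Delta_{S}(x,u)}{m_{0}^{S}(x,u)}$ is \emph{decreasing} in $B$ because $\Delta_{S}(x,u) > 0$; since Assumption \ref{meandominanceG} restricts $B$ only from above, it cannot raise the upper bound on $A$, and the largest admissible value of $A$ is still attained at the smallest admissible $B$, namely $B = \underline{y}^{*}$, together with the trivial ceiling $A \le \overline{y}^{*}$. This reproduces the upper bounds already in Corollary \ref{MTEbounds}: $\frac{m_{1}^{Y}(x,u) - \underline{y}^{*} \Delta_{S}(x,u)}{m_{0}^{S}(x,u)} - \frac{m_{0}^{Y}(x,u)}{m_{0}^{S}(x,u)}$ under \ref{bounded}.1; $\overline{y}^{*} - \frac{m_{0}^{Y}(x,u)}{m_{0}^{S}(x,u)}$ under \ref{bounded}.2; the pointwise minimum of these two under \ref{bounded}.3; and, when $\mathcal{Y}^{*} = \mathbb{R}$, $B$ ranges over $\left(-\infty, \tfrac{m_{1}^{Y}(x,u)}{m_{1}^{S}(x,u)}\right]$, so $A$ admits no finite upper bound and $\overline{\Delta_{Y^{*}}^{OO}}(x,u) = \infty$.

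The only genuinely delicate points are getting the direction of the inequalities in the chain right and recognizing that, because of the sign of $\Delta_{S}(x,u)$, the mean dominance assumption tightens the lower bound but leaves the upper bound untouched; the remaining case analysis is exactly the bookkeeping already carried out in Corollary \ref{MTEbounds}. Note that only validity of these bounds is asserted here; a separate argument in the spirit of Theorem \ref{sharpbounds} would be required to show they are also pointwise sharp.
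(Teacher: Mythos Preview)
Your proposal is correct and follows essentially the same approach as the paper: the paper's proof consists of the single observation that Assumption~\ref{meandominanceG} yields the chain $\Delta_{Y}^{NO}(x,u) \le \dfrac{m_{1}^{Y}(x,u)}{m_{1}^{S}(x,u)} \le \mathbb{E}[Y_{1}^{*} \mid X=x, U=u, S_{0}=1, S_{1}=1]$ via equations \eqref{DeltaY_NO} and \eqref{decomposition}, and then invokes the steps of Corollary~\ref{MTEbounds}, which is exactly what you do. Your write-up is in fact more explicit than the paper's, particularly in spelling out why the mean dominance assumption cannot tighten the upper bound (because $A$ is decreasing in $B$ and Assumption~\ref{meandominanceG} constrains $B$ only from above).
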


Notice that, under Mean Dominance Assumption \ref{meandominanceG}, I can increase the lower bounds proposed in Corollary \ref{MTEbounds} under Assumption \ref{bounded} and provide an informative lower bound even when the support of the outcome of interest is the entire real line, a result in stark contrast with Proposition \ref{partialnecessary}.\footnote{Appendix \ref{comparing} discusses when Corollary \ref{boundmeandomG} provides bounds that are strictly tighter than the ones provided by Corollary \ref{MTEbounds}.} These improvements clearly show the identifying power of the Mean Dominance Assumption \ref{meandominanceG}. Moreover, the phenomenon of obtaining more informative bounds by imposing extra assumptions is common in the partial identification literature, as explained by \cite{Tamer2010} and illustrated by \cite{Kline2016}.

As in Subsection \ref{MonotonicitySection}, I assume that $m_{0}^{Y}\left(x, u\right)$, $m_{1}^{Y}\left(x, u\right)$, $m_{0}^{S}\left(x, u\right)$, $m_{1}^{S}\left(x, u\right)$, and $\Delta_{S}\left(x, u\right)$ are point identified, postponing the discussion about their identification to Sections \ref{interval} and \ref{discrete}.

Now, using the above corollary, I can combine the sharpness and the impossibility results of Subsection \ref{MonotonicitySection} in one single proposition\footnote{Appendix \ref{proofsharpboundsmeanG} contains a proof of this proposition, whose intuition is similar to the one provided for Theorem \ref{sharpbounds}. The only difference is that, now, the function $F_{\tilde{Y}_{0}^{*}, \tilde{Y}_{1}^{*}, \tilde{U}, \tilde{V}, Z, X}$ at $\tilde{U} = \bar{u}$ must also satisfy equation \eqref{fakemeandominanceG}.}:
\begin{proposition}\label{sharpboundsmeanG}
	Suppose that the functions $m_{0}^{Y}$, $m_{1}^{Y}$, $m_{0}^{S}$, $m_{1}^{S}$ and $\Delta_{S}$ are point identified at every pair $\left(x, u\right) \in \mathcal{X} \times \left[0, 1\right]$. Under Assumptions \ref{ind}-\ref{support}, \ref{increasing_sample_selection} and \ref{meandominanceG}, the bounds $\underline{\Delta_{Y^{*}}^{OO}}$ and $\overline{\Delta_{Y^{*}}^{OO}}$, given by Corollary \ref{boundmeandomG}, are pointwise sharp, i.e., for any $\overline{u} \in \left[0, 1\right]$, $\overline{x} \in \mathcal{X}$ and $\delta\left(\overline{x}, \overline{u}\right) \in \left(\underline{\Delta_{Y^{*}}^{OO}}\left(\overline{x}, \overline{u}\right), \overline{\Delta_{Y^{*}}^{OO}}\left(\overline{x}, \overline{u}\right)\right)$, there exist random variables $\left(\tilde{Y}_{0}^{*}, \tilde{Y}_{1}^{*}, \tilde{U}, \tilde{V}\right)$ such that
	\begin{equation}\label{faketargetmeanG}
	\Delta_{\tilde{Y}^{*}}^{OO}\left(\overline{x}, \overline{u}\right) \coloneqq \mathbb{E}\left[\tilde{Y}_{1}^{*} - \tilde{Y}_{0}^{*} \left\vert X = \overline{x}, \tilde{U} = \overline{u}, \tilde{S}_{0} = 1, \tilde{S}_{1} = 1 \right.\right] = \delta\left(\overline{x}, \overline{u}\right),
	\end{equation}
	\begin{equation}\label{correctsupportmeanG}
	\mathbb{P}\left[\left. \left(\tilde{Y}_{0}^{*}, \tilde{Y}_{1}^{*}, \tilde{V}\right) \in \mathcal{Y}^{*} \times \mathcal{Y}^{*} \times \left[0, 1\right] \right\vert X = \overline{x}, \tilde{U} = u \right] = 1 \text{ for any } u \in \left[0, 1\right],
	\end{equation}
	\begin{equation}\label{fakemeandominanceG}
	\mathbb{E}\left[\tilde{Y}_{1}^{*} \left\vert X = \overline{x}, \tilde{U} = \overline{u}, \tilde{S}_{0} = 1, \tilde{S}_{1} = 1 \right.\right] \geq \mathbb{E}\left[\tilde{Y}_{1}^{*} \left\vert X = \overline{x}, \tilde{U} = \overline{u}, \tilde{S}_{0} = 0, \tilde{S}_{1} = 1 \right.\right],
	\end{equation}
	and
	\begin{equation}\label{DataRestrictionmeanG}
	F_{\tilde{Y}, \tilde{D}, \tilde{S}, Z, X}\left(y, d, s, z, \overline{x} \right) = F_{Y, D, S, Z, X} \left(y, d, s, z, \overline{x}\right)
	\end{equation}
	for any $\left(y, d, s, z\right) \in \mathbb{R}^{4}$,	where $\tilde{D} \coloneqq \mathbf{1}\left\lbrace P\left(X, Z\right) \geq \tilde{U}\right\rbrace$, $\tilde{S}_{0} = \mathbf{1}\left\lbrace Q\left(0, X\right) \geq \tilde{V}\right\rbrace$, $\tilde{S}_{1} = \mathbf{1}\left\lbrace Q\left(1, X\right) \geq \tilde{V}\right\rbrace$, $\tilde{Y}_{0} = \tilde{S}_{0} \cdot \tilde{Y}_{0}^{*}$, $\tilde{Y}_{1} = \tilde{S}_{1} \cdot \tilde{Y}_{1}^{*}$ and $\tilde{Y} = \tilde{D} \cdot \tilde{Y}_{1} + \left(1 - \tilde{D}\right) \cdot \tilde{Y}_{0}$.
\end{proposition}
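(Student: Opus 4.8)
The plan is to reproduce the perturbation argument of Theorem \ref{sharpbounds}, adding a single algebraic check for the mean dominance restriction. First observe that, at the fixed point $\left(\overline{x}, \overline{u}\right)$, the only feature of the model that is not pinned down is the scalar $\theta \coloneqq \Delta_{Y}^{NO}\left(\overline{x}, \overline{u}\right)$ (equivalently $\mathbb{E}\left[Y_{1}^{*} \mid X = \overline{x}, U = \overline{u}, S_{0} = 0, S_{1} = 1\right]$, since $Y_{0} = 0$ on $\left\{S_{0} = 0\right\}$): the functions $m_{0}^{Y}, m_{1}^{Y}, m_{0}^{S}, m_{1}^{S}, \Delta_{S}$ are assumed point identified, and $\mathbb{E}\left[Y_{0}^{*} \mid X = \overline{x}, U = \overline{u}, S_{0} = 1, S_{1} = 1\right] = m_{0}^{Y}/m_{0}^{S}$ is point identified by \eqref{m0YstarOO}. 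By \eqref{target}, \eqref{m1YstarOO} and \eqref{m0YstarOO}, $\Delta_{Y^{*}}^{OO}\left(\overline{x}, \overline{u}\right) = \left(m_{1}^{Y} - \theta\,\Delta_{S} - m_{0}^{Y}\right)/m_{0}^{S}$ (everything at $\left(\overline{x}, \overline{u}\right)$), a strictly decreasing affine function of $\theta$ whenever $\Delta_{S}\left(\overline{x}, \overline{u}\right) > 0$. If $\Delta_{S}\left(\overline{x}, \overline{u}\right) = 0$ there is no observed-only-when-treated subpopulation at $\left(\overline{x}, \overline{u}\right)$, the target is point identified, the open interval in the statement is empty, and there is nothing to prove; so assume $\Delta_{S}\left(\overline{x}, \overline{u}\right) > 0$ (and $m_{0}^{S}\left(\overline{x}, \overline{u}\right) > 0$, needed for the target to be defined). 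Sharpening the implication noted just before Corollary \ref{boundmeandomG}, Assumption \ref{meandominanceG} holds at $\left(\overline{x}, \overline{u}\right)$ \emph{if and only if} $\theta \leq m_{1}^{Y}/m_{1}^{S}$, because $m_{1}^{Y}/m_{1}^{S}$ is the weighted average --- with weights $m_{0}^{S}/m_{1}^{S}$ and $\Delta_{S}/m_{1}^{S}$ --- of $\mathbb{E}\left[Y_{1}^{*} \mid \ldots, S_{0} = 1, S_{1} = 1\right]$ and $\theta$. Hence the interval $\left(\underline{\Delta_{Y^{*}}^{OO}}, \overline{\Delta_{Y^{*}}^{OO}}\right)$ of Corollary \ref{boundmeandomG} is exactly the image, under the above affine bijection, of the interior of the identified set for $\theta$, namely $\left(\max\left\{\underline{y}^{*},\, \left(m_{1}^{Y} - \overline{y}^{*} m_{0}^{S}\right)/\Delta_{S}\right\},\ m_{1}^{Y}/m_{1}^{S}\right)$ --- which is $\left(-\infty,\, m_{1}^{Y}/m_{1}^{S}\right)$ when $\mathcal{Y}^{*} = \mathbb{R}$, the case in which $\overline{\Delta_{Y^{*}}^{OO}} = \infty$.

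Given $\delta\left(\overline{x}, \overline{u}\right)$ strictly inside $\left(\underline{\Delta_{Y^{*}}^{OO}}\left(\overline{x}, \overline{u}\right), \overline{\Delta_{Y^{*}}^{OO}}\left(\overline{x}, \overline{u}\right)\right)$, I would set $\theta \coloneqq \left(m_{1}^{Y} - m_{0}^{Y} - \delta\left(\overline{x}, \overline{u}\right) m_{0}^{S}\right)/\Delta_{S}$; by the previous paragraph $\theta$ is interior to the identified set for $\theta$ and the induced value $\delta\left(\overline{x}, \overline{u}\right) + m_{0}^{Y}/m_{0}^{S}$ is interior to the convex hull of $\mathcal{Y}^{*}$ --- the routine bookkeeping already carried out in the proof of Theorem \ref{sharpbounds}. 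I would then construct $\left(\tilde{Y}_{0}^{*}, \tilde{Y}_{1}^{*}, \tilde{U}, \tilde{V}\right)$ exactly as there: let the joint law of $\left(\tilde{Y}_{0}^{*}, \tilde{Y}_{1}^{*}, \tilde{U}, \tilde{V}, Z, X\right)$ coincide with that of $\left(Y_{0}^{*}, Y_{1}^{*}, U, V, Z, X\right)$ except for the conditional law of $\left(\tilde{Y}_{0}^{*}, \tilde{Y}_{1}^{*}\right)$ given $X = \overline{x}$ and $\tilde{U} = \overline{u}$, keeping $\left(\tilde{U}, \tilde{V}, Z, X\right) \overset{d}{=} \left(U, V, Z, X\right)$ so that $\tilde{D}, \tilde{S}_{0}, \tilde{S}_{1}$ inherit the original joint law, and in particular $\mathbb{P}\left[\tilde{S}_{0} = 1, \tilde{S}_{1} = 1 \mid X = \overline{x}, \tilde{U} = \overline{u}\right] = m_{0}^{S}\left(\overline{x}, \overline{u}\right)$. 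On $\left\{X = \overline{x}, \tilde{U} = \overline{u}\right\}$: on the always-observed stratum $\left\{\tilde{V} \leq Q\left(0, \overline{x}\right)\right\}$ give $\tilde{Y}_{0}^{*}$ a finitely supported conditional law on $\mathcal{Y}^{*}$ with mean $m_{0}^{Y}/m_{0}^{S}$ and $\tilde{Y}_{1}^{*}$ one with mean $\left(m_{1}^{Y} - \theta\,\Delta_{S}\right)/m_{0}^{S}$; on the observed-only-when-treated stratum $\left\{Q\left(0, \overline{x}\right) < \tilde{V} \leq Q\left(1, \overline{x}\right)\right\}$ give $\tilde{Y}_{1}^{*}$ mean $\theta$; and put the remaining, unconstrained coordinates at an arbitrary point of $\mathcal{Y}^{*}$. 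As in Theorem \ref{sharpbounds}, each prescribed mean is realized by a one- or two-point law supported on $\mathcal{Y}^{*}$.

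It then remains to check \eqref{faketargetmeanG}--\eqref{DataRestrictionmeanG}. Equation \eqref{DataRestrictionmeanG} holds by the argument of Theorem \ref{sharpbounds}: the perturbation lives on the $\mathbb{P}$-null event $\left\{\tilde{U} = \overline{u}\right\}$, so $\left(\tilde{Y}, \tilde{D}, \tilde{S}, Z, X\right) \overset{d}{=} \left(Y, D, S, Z, X\right)$, and for the same reason Assumption \ref{finite} survives. Equation \eqref{correctsupportmeanG} holds because off $\left\{\tilde{U} = \overline{u}\right\}$ the candidate equals the true process (for which $Y_{d}^{*} \in \mathcal{Y}^{*}$ and $V \in \left[0,1\right]$) and on $\left\{\tilde{U} = \overline{u}\right\}$ it holds by construction. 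Equation \eqref{faketargetmeanG} follows from applying \eqref{m0YstarOO} and \eqref{m1YstarOO} to the candidate together with the choice of $\theta$. For \eqref{fakemeandominanceG}, $\mathbb{E}\left[\tilde{Y}_{1}^{*} \mid \ldots, \tilde{S}_{0} = 1, \tilde{S}_{1} = 1\right] - \mathbb{E}\left[\tilde{Y}_{1}^{*} \mid \ldots, \tilde{S}_{0} = 0, \tilde{S}_{1} = 1\right] = \left(m_{1}^{Y} - \theta\,\Delta_{S}\right)/m_{0}^{S} - \theta = \left(m_{1}^{Y} - \theta\,m_{1}^{S}\right)/m_{0}^{S} \geq 0$ exactly because $\theta \leq m_{1}^{Y}/m_{1}^{S}$; moreover, since the candidate coincides with the true process --- which satisfies Assumption \ref{meandominanceG} --- off the null event, it satisfies the full Assumption \ref{meandominanceG}, hence is a legitimate element of the model. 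The main obstacle is precisely the one already present in Theorem \ref{sharpbounds}, with nothing conceptually new: making rigorous the phrase ``change the joint law only at $\tilde{U} = \overline{u}$'', i.e., specifying the candidate's regular conditional distributions so that the conditional expectations in \eqref{faketargetmeanG} and \eqref{fakemeandominanceG} --- which condition on the null event $\left\{\tilde{U} = \overline{u}\right\}$ and are therefore \emph{not} pinned down by the joint distribution --- are well defined and equal the prescribed values, while the observable law \eqref{DataRestrictionmeanG} is left intact; as in the ``one remark'' following Theorem \ref{sharpbounds}, this requires no smoothness in $u$ and is carried out exactly as in Appendix \ref{proofsharp}. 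The single new step, the verification of \eqref{fakemeandominanceG}, collapses to the elementary inequality $\theta \leq m_{1}^{Y}/m_{1}^{S}$, which is hard-wired into the choice of $\theta$ by Corollary \ref{boundmeandomG}.
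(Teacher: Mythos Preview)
Your proposal is correct and follows essentially the same route as the paper. Your $\theta$ is the paper's $\gamma\left(\overline{x},\overline{u}\right)$, your induced value $\delta+m_{0}^{Y}/m_{0}^{S}$ is the paper's $\alpha\left(\overline{x},\overline{u}\right)$, the construction is the perturbation of Appendix~\ref{proofsharp}, and your verification of \eqref{fakemeandominanceG} via $\left(m_{1}^{Y}-\theta\,m_{1}^{S}\right)/m_{0}^{S}\geq 0$ is exactly the algebra of Appendix~\ref{proofsharpboundsmeanG} (there written as $\alpha\geq m_{1}^{Y}/m_{1}^{S}\geq\gamma$). One cosmetic difference: the paper's construction also replaces $F_{\tilde{V}\mid X,\tilde{U}}\left(\cdot\mid x,\overline{u}\right)$ with a specific piecewise-linear CDF (Step~9), whereas you keep the original conditional law of $V$; both choices deliver $\mathbb{P}\left[\tilde{V}\leq Q\left(d,\overline{x}\right)\mid X=\overline{x},\tilde{U}=\overline{u}\right]=m_{d}^{S}\left(\overline{x},\overline{u}\right)$, so this is immaterial.
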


Note that, in addition to all the restriction imposed by Theorem \ref{sharpbounds}, the candidate random  variables $\left(\tilde{Y}_{0}^{*}, \tilde{Y}_{1}^{*}, \tilde{U}, \tilde{V}\right)$ must also satisfy an extra model restriction (equation \eqref{fakemeandominanceG}) associated with the Mean Dominance Assumption \ref{meandominanceG}. Intuitively, Proposition \ref{sharpboundsmeanG} says that the data (equation \eqref{DataRestrictionmeanG}) and the model (equations \eqref{correctsupportmeanG} and \eqref{fakemeandominanceG}) do not generate enough restrictions to refute that the true target parameter $\Delta_{Y^{*}}^{OO}\left(\overline{x}, \overline{u}\right)$ is equal to the candidate target parameter $\delta\left(\overline{x}, \overline{u}\right)$ (equation \eqref{faketargetmeanG}).

\subsection{Empirical Relevance of bounds for the $\mathbf{MTE^{OO}}$ of Interest}\label{EmpiricalRelevance}
Now, it is worth discussing the empirical relevance of partially identifying the $MTE^{OO}$ of interest. First, bounds for the $MTE^{OO}$ can illuminate the heterogeneity of the treatment effect, allowing the researcher to understand who would benefit and who would lose with a specific treatment. This is important because common parameters (e.g., $ATE^{OO}$, $ATT^{OO}$, $ATU^{OO}$, $LATE^{OO}$) can be positive even when most people lose with a policy if the few winners have very large gains. Moreover, knowing, even partially, the $MTE^{OO}$ function can be useful to optimally design policies that provides incentives to agents to take some treatment. Second, I can use the $MTE^{OO}$ bounds to partially identify any treatment effect that is described as a weighted integral of $\Delta_{Y^{*}}^{OO}\left(x, u\right)$ because
\begin{align}
	\int_{0}^{1} \left(\underline{\Delta_{Y^{*}}^{OO}}\left(x, u\right)\right) \cdot \omega\left(x, u\right) \, \text{d} u & \leq \int_{0}^{1} \Delta_{Y^{*}}^{OO}\left(x, u\right) \cdot \omega\left(x, u\right) \, \text{d} u \nonumber \\
	& \label{integralbounds} \leq \int_{0}^{1} \left(\overline{\Delta_{Y^{*}}^{OO}}\left(x, u\right)\right) \cdot \omega\left(x, u\right) \, \text{d} u,
\end{align}
where $\omega(x, \cdot)$ is a known or identifiable weighting function. Even though such bounds may not be sharp for any specific parameter, they are a general and off-the-shelf solution to many empirical problems. As a consequence of this trade-off, I recommend the applied researcher to use a specialized tool if he or she is interested in a parameter that already has specific bounds for it (e.g., $ITT^{OO}$ by \cite{Lee2009} and $LATE^{OO}$ by \cite{Chen2015}). However, I suggest the applied researcher to easily compute a weighted integral of pointwise sharp bounds for the MTE of interest if he or she is interested in parameters without specialized bounds (e.g., ATE, ATT and ATU in the case with imperfect compliance). In other words, facing a trade-off between empirical flexibility and sharpness, the partial identification tool proposed in this paper focus on empirical flexibility while still ensuring pointwise sharpness of the bounds for the MTE of interest.

Tables \ref{integral} and \ref{weights} show some of the treatment effect parameters that can be partially identified using inequality \eqref{integralbounds}. More examples are given by \citet[Tables 1A and 1B]{Heckman2006} and \citet[Table 1]{Mogstad2017}.

\begin{table}[!htbp]
	\centering
	\caption{{Treatment Effects as Weighted Integrals of the Marginal Treatment Effect}} \label{integral}
	\begin{lrbox}{\tablebox}
		\begin{tabular}{l}
			\hline
			\hline
			
			$ATE^{OO} = \mathbb{E}\left[Y_{1}^{*} - Y_{0}^{*} \left\vert S_{0} = 1, S_{1} = 1 \right.\right] = \int_{0}^{1} \Delta_{Y^{*}}^{OO}\left(u\right) \, \text{d} u$ \\
			
			\\
			
			$ATT^{OO} = \mathbb{E}\left[Y_{1}^{*} - Y_{0}^{*} \left\vert D = 1, S_{0} = 1, S_{1} = 1 \right.\right] = \int_{0}^{1} \Delta_{Y^{*}}^{OO}\left(u\right) \cdot \omega_{ATT}\left(u\right) \, \text{d} u$ \\
			
			\\
			
			$ATU^{OO} = \mathbb{E}\left[Y_{1}^{*} - Y_{0}^{*} \left\vert D = 0, S_{0} = 1, S_{1} = 1 \right.\right] = \int_{0}^{1} \Delta_{Y^{*}}^{OO}\left(u\right) \cdot \omega_{ATU}\left(u\right) \, \text{d} u$ \\
			
			\\
			
			$LATE^{OO}(\underline{u}, \overline{u}) = \mathbb{E}\left[Y_{1}^{*} - Y_{0}^{*} \left\vert U \in \left[\underline{u}, \overline{u}\right], S_{0} = 1, S_{1} = 1 \right.\right] = \int_{0}^{1} \Delta_{Y^{*}}^{OO}\left(u\right) \cdot \omega_{LATE}\left(u\right) \, \text{d} u$ \\
					
			\hline			
		\end{tabular}
	\end{lrbox}
	\usebox{\tablebox}\\
	\settowidth{\tableboxwidth}{\usebox{\tablebox}} \parbox{\tableboxwidth}{\footnotesize{Source: \citet{Heckman2006} and \citet{Mogstad2017}. Note: Conditioning on $X$ is kept implicit in this table for brevity.}
	}
\end{table}

\begin{table}[!htbp]
	\centering
	\caption{{Weights}} \label{weights}
	\begin{lrbox}{\tablebox}
		\begin{tabular}{l}
			\hline
			\hline
			
			$\omega_{ATT}\left(x, u\right) = \dfrac{\int_{u}^{1} f_{P\left(W\right) \left\vert X \right.}\left(p \left\vert x \right.\right) \, \text{d} p}{\mathbb{E}\left[P\left(W\right) \left\vert X = x \right.\right]}$ \\
			
			\\
			
			$\omega_{ATU}\left(x, u\right) = = \dfrac{\int_{0}^{u} f_{P\left(W\right) \left\vert X \right.}\left(p \left\vert x \right.\right) \, \text{d} p}{ 1- \mathbb{E}\left[P\left(W\right) \left\vert X = x \right.\right]}$ \\
			
			\\
			
			$\omega_{LATE}\left(x, u\right) = \dfrac{\mathbf{1}\left\lbrace u \in \left[\underline{u}, \overline{u}\right]  \right\rbrace}{\overline{u} - \underline{u}}$ \\
			
			\hline			
		\end{tabular}
	\end{lrbox}
	\usebox{\tablebox}\\
	\settowidth{\tableboxwidth}{\usebox{\tablebox}} \parbox{\tableboxwidth}{\footnotesize{Source: \citet{Heckman2006} and \citet{Mogstad2017}.}
	}
\end{table}


\section{Partial identification when the support of the propensity score is an interval} \label{interval}
Here, I fix $x \in \mathcal{X}$ and impose that the support of the propensity score, defined by $\mathcal{P}_{x} \coloneqq \left\lbrace P\left(x, z\right): z \in \mathcal{Z} \right\rbrace$, is an interval\footnote{$\mathcal{P}_{x}$ as an interval may be achieved by a continuous instrument $Z$ or by the existence of independent covariates \citep{Carneiro2011}.}. Then, under Assumptions \ref{ind}-\ref{invariantX}, the MTR functions associated with any variable $A \in \left\lbrace Y, S \right\rbrace$ are point identified by\footnote{Appendix \ref{proofQ0Q1} contains a proof of this claim based on the Local Instrumental Variable (LIV) approach described by \cite{Heckman2005}.}:
\begin{equation} \label{identifyQ0}
m_{0}^{A}\left(x, p\right) = \mathbb{E}\left[A \left\vert X = x, P\left(W\right) = p, D = 0 \right.\right] - \dfrac{\partial \mathbb{E}\left[A \left\vert X = x, P\left(W\right) = p, D = 0 \right.\right]}{\partial p} \cdot \left(1 - p\right),
\end{equation}
and
\begin{equation}\label{identifyQ1}
m_{1}^{A}\left(x, p\right) = \mathbb{E}\left[A \left\vert X = x, P\left(W\right) = p, D = 1 \right.\right] +  \dfrac{\partial \mathbb{E}\left[A \left\vert X = x, P\left(W\right) = p, D = 1 \right.\right]}{\partial p} \cdot p
\end{equation}
for any $p \in \mathcal{P}_{x}$.

Finally, the pointwise sharp bounds for $\Delta_{Y^{*}}^{OO}\left(x, p\right)$ are point identified by combining equations \eqref{identifyQ0} and \eqref{identifyQ1}, the fact that $\Delta_{S}\left(x, p\right) = m_{1}^{S}\left(x, p\right) - m_{0}^{S}\left(x, p\right)$, and Corollaries \ref{MTEbounds} or \ref{boundmeandomG}.


\section{Partial identification when the support of the propensity score is discrete} \label{discrete}

When the support of the propensity score is not an interval, I cannot point identify $m_{0}^{Y}\left(x, u\right)$, $m_{1}^{Y}\left(x, u\right)$, $m_{0}^{S}\left(x, u\right)$, $m_{1}^{S}\left(x, u\right)$, and $\Delta_{S}\left(x, u\right)$ without extra assumptions, implying that I cannot identify the bounds for $\Delta_{Y^{*}}^{OO}\left(x, u\right)$ given by Corollaries \ref{MTEbounds} or \ref{boundmeandomG}. There are two solutions for this lack of identification: I can non-parametrically bound those four objects (\cite{Mogstad2017}) or I can impose flexible parametric assumptions (\cite{Brinch2017}) to point identify them. While the first approach is discussed in Subsection \ref{boundsMTE}, the second one is detailed in Subsection \ref{parametric}.

\subsection{Non-parametric outer set around the $\mathbf{MTE^{OO}}$ of interest}\label{boundsMTE}
For any $u \in \left[0, 1\right]$ and $x \in \mathcal{X}$, I can bound $m_{0}^{S}\left(x, u\right)$, $m_{1}^{S}\left(x, u\right)$, $\Delta_{S}\left(x, u\right)$, $m_{0}^{Y}\left(x, u\right)$, $m_{1}^{Y}\left(x, u\right)$ and $\Delta_{Y}\left(x, u\right)$ using the machinery proposed by \cite{Mogstad2017}. To do so, fix $A \in \left\lbrace S, Y \right\rbrace$ and $d \in \left\lbrace 0, 1 \right\rbrace$ and define the pair of functions $m^{A} \coloneqq \left(m_{0}^{A}, m_{1}^{A}\right)$ and the set of admissible MTR functions $\mathcal{M}^{A} \ni m^{A}$. For example, in the case of a binary function, the admissible set would be $\mathcal{M}^{A} = \left[0,1\right]^{\mathcal{X} \times \left[0, 1\right]} \times \left[0,1\right]^{\mathcal{X} \times \left[0, 1\right]}$ and, in the case of the selection indicator, this set would be further restricted by Assumption \ref{increasing_sample_selection} to $$\mathcal{M}^{A} = \left\lbrace \left(m_{0}^{A}, m_{1}^{A}\right) \in \left[0,1\right]^{\mathcal{X} \times \left[0, 1\right]} \times \left[0,1\right]^{\mathcal{X} \times \left[0, 1\right]} \colon m_{1}^{A}\left(x, u\right) \geq m_{0}^{A}\left(x, u\right) \hspace{5pt} \forall \left(x, u\right) \in \mathcal{X} \times \left[0, 1\right] \right\rbrace.$$ Moreover, define the function $\Gamma_{A}^{*} \colon \mathcal{M}^{A} \rightarrow \mathbb{R}$  as:
\begin{equation*}
\Gamma_{A}^{*}\left(\tilde{m}^{A}\right) = \tilde{m}_{1}^{A}\left(x, u\right) - \tilde{m}_{0}^{A}\left(x, u\right),
\end{equation*}
and observe that $\Gamma_{A}^{*}\left(m^{A}\right) = \Delta_{A}\left(x, u\right)$. Furthermore, define $\mathcal{G}_{A}$ to be a collection of known or identified measurable functions $g_{A} \colon \left\lbrace 0, 1 \right\rbrace \times \mathcal{Z} \rightarrow \mathbb{R}$ whose second moment is finite. For each IV-like specification $g_{A} \in \mathcal{G}_{A}$, define also $\beta_{g_{A}} \coloneqq \mathbb{E}\left[g_{A}\left(D, Z\right) A \left\vert X = x \right.\right]$. According to \citet[Proposition 1]{Mogstad2017}, the function $\Gamma_{g_{A}} \colon \mathcal{M}^{A} \rightarrow \mathbb{R}$, defined as
\begin{align*}
\Gamma_{g_{A}}\left(\tilde{m}^{A}\right) & = \mathbb{E}\left[\left. \int_{0}^{1} \tilde{m}_{0}^{A}\left(X, u\right) \cdot g_{A}\left(0, Z\right) \cdot \mathbf{1}\left\lbrace p\left(W\right) < u \right\rbrace \, \text{d} u \right\vert X = x \right] \\
& \hspace{20pt} + \mathbb{E}\left[\left. \int_{0}^{1} \tilde{m}_{1}^{A}\left(X, u\right) \cdot g_{A}\left(1, Z\right) \cdot \mathbf{1}\left\lbrace p\left(W\right) \geq u \right\rbrace \, \text{d} u \right\vert X = x \right],
\end{align*}
satisfies $\Gamma_{g_{A}}\left(m^{A}\right) = \beta_{g_{A}}$. As a result, $m^{A}$ must lie in the set $\mathcal{M}_{\mathcal{G}_{A}}$ of admissible functions that satisfy the restrictions imposed by the data through the IV-like specifications, where:
\begin{equation*}
\mathcal{M}_{\mathcal{G}_{A}} \coloneqq \left\lbrace \tilde{m}^{A} \in \mathcal{M}^{A} \colon \Gamma_{g_{A}}\left(\tilde{m}^{A}\right) = \beta_{g_{A}} \text{ for all } g_{A} \in \mathcal{G}_{A} \right\rbrace.
\end{equation*}

Assuming that $\mathcal{M}^{A}$ is convex and $M_{\mathcal{G}_{A}} \neq \emptyset$ for every $A \in \left\lbrace S, Y \right\rbrace$, \citet[Proposition 2]{Mogstad2017} show that:
\begin{equation}\label{nonparametricBoundsMTE}
\begin{array}{rcccl}
\inf\limits_{\tilde{m}^{A} \in \mathcal{M}_{\mathcal{G}_{A}}} \Gamma_{A}^{*} \left( \tilde{m}^{A} \right) & \eqqcolon \underline{\Delta_{A}\left(x, u\right)} \leq & \Delta_{A}\left(x, u\right) & \leq \overline{\Delta_{A}\left(x, u\right)} \coloneqq & \sup\limits_{\tilde{m}^{A} \in \mathcal{M}_{\mathcal{G}_{A}}} \Gamma_{A}^{*} \left( \tilde{m}^{A} \right).
\end{array}
\end{equation}
Based on this result, I can also define bounds for the MTR functions as $$\left(\overline{m_{0}^{A}\left(x, u\right)}, \underline{m_{1}^{A}\left(x, u\right)}\right) \coloneqq \arginf\limits_{\tilde{m}^{A} \in \mathcal{M}_{\mathcal{G}_{A}}} \Gamma_{A}^{*} \left( \tilde{m}^{A} \right) \text{ and } \left(\underline{m_{0}^{A}\left(x, u\right)}, \overline{m_{1}^{A}\left(x, u\right)}\right) \coloneqq \argsup\limits_{\tilde{m}^{A} \in \mathcal{M}_{\mathcal{G}_{A}}} \Gamma_{A}^{*} \left( \tilde{m}^{A} \right),$$ where
\begin{equation}\label{nonparametricBoundsMTR}
\underline{m_{d}^{A}\left(x, u\right)} \leq m_{d}^{A}\left(x, u\right) \leq \overline{m_{d}^{A}\left(x, u\right)} \text{ for any } d \in \left\lbrace 0, 1 \right\rbrace.
\end{equation}

As a consequence, I can combine Corollaries \ref{MTEbounds} and \ref{boundmeandomG} and inequalities \eqref{nonparametricBoundsMTE} and \eqref{nonparametricBoundsMTR} to provide a non-parametrically identified outer set around $\Delta_{Y^{*}}^{OO}\left(x, u\right)$, that contains the true target parameter $\Delta_{Y^{*}}^{OO}\left(x, u\right)$ by construction. However, the cost of non-parametric partial identification of $m_{0}^{S}\left(x, u\right)$, $m_{1}^{S}\left(x, u\right)$, $\Delta_{S}\left(x, u\right)$, $m_{0}^{Y}\left(x, u\right)$, $m_{1}^{Y}\left(x, u\right)$ and $\Delta_{Y}\left(x, u\right)$ is losing the pointwise sharpness of the bounds around the target parameter $\Delta_{Y^{*}}^{OO}\left(x, u\right)$.

\subsection{Parametric identification of the $\mathbf{MTE^{OO}}$ bounds}\label{parametric}

The fully non-parametric approach explained in Subsection \ref{boundsMTE} may provide an uninformative outer set (e.g., equal to $\overline{y}^{*} - \underline{y}^{*}$ or $\underline{y}^{*} - \overline{y}^{*}$ when the support of the potential outcome is bounded). In such cases, parametric assumptions on the marginal treatment response functions may buy a lot of identifying power. Although restrictive in principle, parametric assumptions may be flexible enough to provide credible bounds for $\Delta_{Y^{*}}^{OO}\left(x, u\right)$, as illustrated by \cite{Brinch2017}.

I fix $x \in \mathcal{X}$ and assume that the support of the propensity score $P\left(x, Z\right)$ is discrete and given by $\mathcal{P}_{x} = \left\lbrace p_{x, 1}, \ldots, p_{x,N} \right\rbrace$ for some $N \in \mathbb{N}$. I could directly apply the identification strategy proposed by \cite{Brinch2017} by assuming that the MTR functions associated with $Y$ and $S$ are polynomial functions of $U$. However, this assumption is problematic for binary variables, such as the selection indicator $S$. For this reason, I make a small modification to the procedure created by \cite{Brinch2017}: for $d \in \left\lbrace 0, 1 \right\rbrace$ and $A \in \left\lbrace Y, S \right\rbrace$, the MTR function is given by
\begin{equation}\label{polynomialMTR}
m_{d}^{A}\left(x, u\right) = M^{A}\left(u, \boldsymbol{\theta}_{x, d}^{A}\right)
\end{equation}
for any $u \in \left[0, 1\right]$, where $\Theta_{x}^{A} \subset \mathbb{R}^{2L}$ is a set of feasible parameters, $L \in \left\lbrace 1, \ldots, N \right\rbrace$ is the number of parameters for each treatment group $d$, $\left(\boldsymbol{\theta}_{x, 0}^{A}, \boldsymbol{\theta}_{x, 1}^{A}\right) \in \Theta_{x}^{A}$ is a vector of pseudo-true unknown parameters, and $M^{A} \colon \left[0, 1\right] \times \mathbb{R}^{2L} \rightarrow \mathbb{R}$ is a known function. For instance, in the case of a binary variable, a reasonable choice of $M^{A}$ is the Bernstein Polynomial $\left(M^{A}\left(u, \boldsymbol{\theta}_{x, d}^{A}\right) = \sum_{l = 0}^{L - 1} \theta_{x, d, l}^{A} \cdot {L - 1 \choose l} \cdot u^{l} \cdot \left(1 - u\right)^{L - 1 - l}\right)$ with feasible set $\Theta_{x}^{A} = \left[0, 1\right]^{2L}$. In the case of the selection indicator, the feasible set would be further restricted by Assumption \ref{increasing_sample_selection} to $\Theta_{x}^{A} = \left\lbrace \left(\boldsymbol{\tilde{\theta}}_{x, 0}^{A}, \boldsymbol{\tilde{\theta}}_{x, 1}^{A}\right) \in \left[0, 1\right]^{2L} \colon \boldsymbol{\tilde{\theta}}_{x, 1}^{A} \geq \boldsymbol{\tilde{\theta}}_{x, 0}^{A} \right\rbrace$. I stress that the only difference between the Bernstein polynomial model and the simple polynomial model proposed by \cite{Brinch2017} is that it is easier to impose feasibility restrictions on the former model.

Back to the parametric model given by equation \eqref{polynomialMTR}, I define the parameters $\left(\boldsymbol{\theta}_{x, 0}^{A}, \boldsymbol{\theta}_{x, 1}^{A}\right)$ as pseudo-true parameters in the sense that the parametric model in equation \eqref{polynomialMTR} is an approximation to the true data generating process via the moments $\mathbb{E}\left[A \left\vert X = x, P\left(W\right) = p_{n}, D = d \right.\right]$ for any $d \in \left\lbrace 0, 1 \right\rbrace$ and $n \in \left\lbrace 1, \ldots, N \right\rbrace$. Formally, I define
\begin{align}
\left(\boldsymbol{\theta}_{x, 0}^{A}, \boldsymbol{\theta}_{x, 1}^{A}\right) \coloneqq \argmin\limits_{\left(\boldsymbol{\tilde{\theta}}_{x, 0}^{A}, \boldsymbol{\tilde{\theta}}_{x, 1}^{A}\right) \in \Theta_{x}^{A}} \hspace{5pt} \sum_{n = 1}^{N} & \left\lbrace \left(\mathbb{E}\left[A \left\vert X = x, P\left(W\right) = p_{n}, D = 0 \right.\right] - \dfrac{\int_{p_{n}}^{1} M^{A}\left(u, \boldsymbol{\tilde{\theta}}_{x,0}^{A}\right) \, \text{d} u}{1 - p_{n}}\right)^{2} \right. \nonumber\\
& \label{pseudotrue} + \left. \left(\mathbb{E}\left[A \left\vert X = x, P\left(W\right) = p_{n}, D = 1 \right.\right] - \dfrac{\int_{0}^{p_{n}} M^{A}\left(u, \boldsymbol{\tilde{\theta}}_{x,1}^{A}\right) \, \text{d} u}{p_{n}}\right)^{2} \right\rbrace.
\end{align}

Note that, to estimate parameters $\left(\boldsymbol{\theta}_{x, 0}^{A}, \boldsymbol{\theta}_{x, 1}^{A}\right)$, I can simply use the sample analogue of equation \eqref{pseudotrue}, i.e., I only have to estimate a constrained OLS regression whose restrictions are given by the set $\Theta_{x}^{A}$. If the model restrictions imposed through the set of feasible parameters $\Theta_{x}^{A}$ are valid and $L = N$, then my parametric model collapses to the model proposed by \cite{Brinch2017} and I find that\footnote{Appendix \ref{parametricproof} contains a proof of this claim.}, for any $p_{n} \in \mathcal{P}_{x}$,
\begin{align}
\label{nonseparable0} \mathbb{E}\left[A \left\vert X = x, P\left(W\right) = p_{n}, D = 0 \right.\right] & = \dfrac{\int_{p_{n}}^{1} M^{A}\left(u, \boldsymbol{\theta}_{x,0}^{A}\right) \, \text{d} u}{1 - p_{n}} \\
\label{nonseparable1} \mathbb{E}\left[A \left\vert X = x, P\left(W\right) = p_{n}, D = 1 \right.\right] & = \dfrac{\int_{0}^{p_{n}} M^{A}\left(u, \boldsymbol{\theta}_{x,1}^{A}\right) \, \text{d} u}{p_{n}}.
\end{align}

I can then combine Corollaries \ref{MTEbounds} and \ref{boundmeandomG} and equations \eqref{polynomialMTR} and \eqref{pseudotrue} to bound $\Delta_{Y^{*}}^{OO}\left(x, u\right)$.


\section{Empirical Application: Job Corps Training Program}\label{application}
I focus on analyzing the Marginal Treatment Effect of the Job Corps Training Program (JCTP) on wages for the always-employed subpopulation ($MTE^{OO}$). This program provides free education and vocational training to individuals who are legal residents of the U.S., are between the ages of 16 and 24 and come from a low-income household (\cite{Schochet2001} and \cite{Lee2009}). Besides receiving education and vocational training, the trainees reside in the Job Corps center, that offers meals and a small cash allowance.

In the mid 1990's, the U.S. Department of Labor hired Mathematica Policy Resarch, Inc., to evaluate the JCTP through a randomized experiment. According to \cite{Chen2015}, eligible people who applied to JCTP for the first time between November 1994 and December 1995 (80,833 applicants) were randomly assigned into a treatment group and a control group. People in the control group (5,977) were embargoed from the program for 3 years, while those in the treatment group (74,856) were allowed to enroll in JC. However, in this randomized control trial, there was non-compliance (selection into treatment) because some individuals in the treated group decided not to participate in the program and some individuals in the control group were able to attend the JCTP even though they were officially embargoed.

To evaluate the JCTP, I start by describing the dataset, providing summary statistics and, most importantly, formally testing the assumptions that the potential treatment status is monotone on the instrument (equation \eqref{treatment}) and that the potential employment (sample selection status) is positively monotone on the treatment (Assumption \ref{increasing_sample_selection}) using the test elaborated by \cite{Machado2018}. I then estimate and discuss the marginal treatment responses and effects on employment and labor earnings using the parametric tool developed by \cite{Brinch2017}. Finally, I estimate and discuss the bounds for the $MTE^{OO}$ on wages without and with the mean dominance assumption (Assumption \ref{meandominanceG}), given, respectively, by Corollaries \ref{MTEbounds} and \ref{boundmeandomG}.

\subsection{Descriptive Statistics and the Monotonicity Assumptions}
The publicly available National Job Corps Study (NJCS) sample contains 15,386 individuals --- all 5,977 control group individuals and 9,409 randomly selected treatment group individuals. All of them were interviewed at random assignment and at 12, 30 and 48 months after random assignment. Following \cite{Lee2009}, I only keep individuals with non-missing values for weekly earnings and weekly hours worked for every week after randomization (9,145). Following \cite{Chen2015}, my instrument ($Z$) is random treatment assignment and my treatment dummy ($D$) is an indicator variable that is equal to one if the individual was ever enrolled in the JCTP during the 208 weeks after random assignment. Since this variable has 51 missing values, the final sample size is 9,094 observations.

The dataset contains information about demographic covariates (sex, age, race, marriage, number of children, years of schooling, criminal behavior, personal income) and pre- and post-treatment labor market outcomes (employment and earnings). Following \cite{Chen2015}, hourly wages at week 208 are created by dividing weekly earnings by weekly hours worked at that week, implying that a missing wage is equivalent to zero weekly hours worked. I consider the person to be unemployed ($S = 0$) when the wage is missing and to be employed ($S = 1$) when the wage is non-missing. Differently from \cite{Lee2009} and \cite{Chen2015}, who use log hourly wages as their main outcome variable, my outcome of interest ($Y^{*}$) is the level of the hourly wage because Assumption \ref{bounded}.1 requires that the support $\mathcal{Y}^{*}$ has a finite lower bound. As a consequence, the observable outcome $Y$ is defined as hourly labor earnings. Finally, I use the NJCS design weights in my empirical analysis because some subpopulations were randomized with different, but known, probabilities (\cite{Schochet2001}).

Considering the results found by \cite{Flores-Lagunes2010}, who focus on explaining the negative but insignificant effects on employment and labor earnings for the Hispanic subpopulation, I separately analyze two subsamples from the NJCS sample: the Non-Hispanics subsample and the Hispanics subsample. Table \ref{descriptive} shows descriptive statistics for both subsamples. Note that, as expected, the pre-treatment covariates are, on average, very similar between the groups defined by the random treatment assignment. Consequently, both subsamples maintain the balance of baseline variables. However, when comparing Non-Hispanics and Hispanics, I find numerically small differences with respect to the variables \emph{female}, \emph{never married}, \emph{has children}, \emph{ever arrested}, \emph{has a job at baseline}, and \emph{had a job}.

\begin{table}[!htbp]
	\centering
	\caption{Summary Statistics of Selected Baseline Variables} \label{descriptive}
	\begin{lrbox}{\tablebox}
	\begin{tabular}{lccclccc}
		\hline \hline
		& \multicolumn{3}{c}{Non-Hispanic Sample} &  & \multicolumn{3}{c}{Hispanic Sample} \\ \cline{2-4} \cline{6-8} 
		& Z = 1 & Z = 0 & Diff. &  & Z = 1 & Z = 0 & Diff. \\ \hline
		Female &  .443 & .454 & -.011 &  & .502 & .473 & .030 \\
		& & & (.011) & & & & (.025) \\
		Age at baseline &  18.436 & 18.342 & .095* &  & 18.438 & 18.398 & .040 \\
		& & & (.049) & & & & (.109) \\
		White &  .318 & .318 & .000 &  & --- & --- & --- \\
		& & & (.011) & & & & \\
		Black &  .595 & .592 & .002 &  & --- & --- & ---  \\
		& & & (.011) & & & & \\
		Never married &  .926 & .924 & .002 &  & .875 & .874 & .001 \\
		& & & (.006) & & & & (.017) \\
		Has children &  .186 & .190 & -.004 &  & .201 & .206 & -.004 \\
		& & & (.009) & & & & (.020) \\
		Years of Schooling &  10.137 & 10.115 & .022  &  &  10.022 & 10.057 & -.034 \\
		& & & (.036) & & & & (.084) \\
		Ever arrested &  .255 & .257 & -.002 &  & .216 & .211 & .005  \\
		& & & (.010) & & & & (.021) \\
		Personal Inc.: $<$3000 & .787 & .788 & -.001 &  &  .789 & .794 & -.005 \\
		& & & (.010) & & & & (.022) \\
		Has a job at baseline &  .204 & .188 & .016*  &  &  .170 & .211 & -.041** \\
		& & & (.009) & & & & (.020) \\
		\emph{A year before baseline:} &  &  &  &  &  &  &  \\
		\hspace{5pt} Had a job &  .642 & .627 & .015  &  &  .601 & .630 & -.029 \\
		& & & (.011) & & & & (.025) \\
		\hspace{5pt} Months employed & 3.652 & 3.513 & .140  &  &  3.344 & 3.616 & -.272 \\
		& & & (.098) & & & & (.214) \\
		\hspace{5pt} Earnings & 2899.41 & 2795.62 & 103.79  &  &  2956.38 & 2885.47 & 70.91 \\
		& & & (103.81) & & & & (477.08) \\
		Observations & 4554 & 2977 & Total: 7531 &  & 942 & 621 & Total: 1563 \\
		\hline
	\end{tabular}
	\end{lrbox}
	\usebox{\tablebox}\\
	\settowidth{\tableboxwidth}{\usebox{\tablebox}} \parbox{\tableboxwidth}{\footnotesize{Note: Z indicates random treatment assignment. Robust standard errors are in parenthesis. ***, ** and * denote  that difference is statistically significant at the 1\%, at 5\% and 10\% level, respectively. Estimation uses design weights.}
	}
\end{table}

Table \ref{PreliminaryEffects} shows preliminary effects within the Non-Hispanic and the Hispanic subsamples. The first row shows that a large number of individuals did not comply to their treatment assignment. As is expected for any voluntary treatment, a large share of individuals (around 30\% for both subsamples) decided not to take the treatment even though they were assigned to the treatment group. There are also some individuals (5\% among Non-Hispanics and 3\% among Hispanics) who attended the JCTP even though they were embargoed. Moreover, the instrument (treatment assignment) is clearly strong for both subsamples, suggesting that Assumption \ref{propensityscore} is plausible in this context. When analyzing the treatment effects and similarly to the previous literature (e.g., \cite{Schochet2008}, \cite{Flores-Lagunes2010} and \cite{Chen2015}), we find that the JCTP has a positive and significant effect on Non-Hispanics and a negative but insignificant effect on Hispanics.

\begin{table}[!htbp]
	\centering
	\caption{Preliminary Effects} \label{PreliminaryEffects}
	\begin{lrbox}{\tablebox}
		\begin{tabular}{lccclccc}
			\hline \hline
			& \multicolumn{3}{c}{Non-Hispanic Sample} &  & \multicolumn{3}{c}{Hispanic Sample} \\ \cline{2-4} \cline{6-8} 
			& Z = 1 & Z = 0 & Diff. &  & Z = 1 & Z = 0 & Diff. \\ \hline
			Ever enrolled in JCTP & .737 & .047 & .689*** &  & .747 & .028 & .719*** \\
			& & & (.008) & & & & (.016) \\
			\emph{ITT estimates} &  & &  &  &  &  & \\
			\hspace{5pt} Hours per week & 28.06 & 25.54 & 2.52*** &  & 26.63 & 27.30 & -.670 \\
			& & & (.60) & & & & (1.28) \\
			\hspace{5pt} Earnings per week & 230.24 & 194.72 & 35.52*** &  & 218.34 & 228.63 & -1.29 \\
			& & & (5.49) & & & & (12.68) \\
			\hspace{5pt} Employed & .613 & .564 & .049*** &  & .605 & .607 & -.002 \\
			& & & (.011) & & & & (.025) \\
			\emph{LATE estimates} &  & &  &  &  &  & \\
			\hspace{5pt} Hours per week &  &  & 3.66*** &  &  &  & -.930 \\
			& & & (.880) & & & & (1.78) \\
			\hspace{5pt} Earnings per week &  &  & 51.52*** &  &  &  & -14.31 \\
			& & & (8.00) & & & & (17.64) \\
			\hspace{5pt} Employed &  &  & .071*** &  &  &  & -.003 \\
			& & & (.016) & & & & (.034) \\
			
			\hline
		\end{tabular}
	\end{lrbox}
	\usebox{\tablebox}\\
	\settowidth{\tableboxwidth}{\usebox{\tablebox}} \parbox{\tableboxwidth}{\footnotesize{Note: Z indicates random treatment assignment. Outcome variables are measured at week 208 after randomization. Robust standard errors are in parenthesis. ***, ** and * denote that difference is statistically significant at the 1\%, at 5\% and 10\% level, respectively. Estimation uses design weights.}
	}
\end{table}

This last result, particularly with respect to the employment status, is important for my analysis. Similarly to \cite{Lee2009} and \cite{Chen2015}, I assume that the effect of the treatment on employment (i.e., sample selection) is monotone and positive. However, a negative effect of JCTP on employment is evidence against this assumption as discussed by \cite{Flores-Lagunes2010} and \cite{Chen2015}. For this reason, I formally test Assumption \ref{increasing_sample_selection}. To do so, I implement the procedure developed by \cite{Machado2018}, that simultaneously tests instrument exogeneity (Assumption \ref{ind}), monotonicity of treatment take-up on treatment assignment (equation \eqref{treatment}) and monotonicity of employment on the treatment (equation \eqref{selection}). Their procedure also uses this last test as a gate-keeper to test that the effect of the treatment on employment is positive (Assumption \eqref{increasing_sample_selection}).

In a more detailed way, the test proposed by \cite{Machado2018} has three steps. In the first step, the null hypothesis is that the instrument is not exogenous, or treatment take-up is not monotone on treatment assignment, or employment is not monotone on treatment take-up. As a consequence, the alternative hypothesis is that Assumption \ref{ind} and equations \eqref{treatment} and \eqref{selection} hold. In the second step, that is implemented only if the first step rejects its null hypothesis, the second null hypothesis is that the effect of the treatment on employment is non-positive. Consequently, its alternative hypothesis is that Assumptions \ref{ind} and \ref{increasing_sample_selection} and equations \eqref{treatment} and \eqref{selection} hold. Finally, in the third step, that is implemented only if the second step does not reject its null hypothesis, the third null hypothesis is that the effect of the treatment on employment is non-negative. Consequently, its alternative hypothesis is that, while Assumption \ref{ind} and equations \eqref{treatment} and \eqref{selection} are valid, Assumption \ref{increasing_sample_selection} holds in the opposite direction (see Assumption \ref{decreasingassumption}).

Table \ref{msv2018} shows the results of the test described above. Within the Non-Hispanics subsample, steps 1 and 2 reject their null hypotheses at the 1\%-significance level, implying that Assumptions \ref{ind} and \ref{increasing_sample_selection} and equations \eqref{treatment} and \eqref{selection} are plausible given the data. Consequently, it is reasonable to use Corollary \ref{MTEbounds} to bound the $MTE^{OO}$ of the JCTP on wages within the Non-Hispanics subsample. For the Hispanics subsample, step 1 rejects its null hypothesis at the 1\%-significance level, while neither step 2 nor step 3 reject their null hypotheses at the 10\%-significance level. As a consequence, Assumption \ref{ind} and equations \eqref{treatment} and \eqref{selection} are plausible given the data, but it seems that there is no effect of the treatment on employment, i.e., $S_{1} = S_{0}$ for all individuals. With no differential sample selection for the Hispanic population, point identification of the MTE of interest is trivial as discussed immediately after Proposition \ref{boundsY1Proposition}. For this reason, I focus my empirical analysis on the Non-Hispanic subsample.

\begin{table}[!htbp]
	\centering
	\caption{Testing the Identification Assumptions} \label{msv2018}
	\begin{lrbox}{\tablebox}
		\begin{tabular}{lccccccccc}
			\hline \hline
			& \multicolumn{4}{c}{Non-Hispanics Subsample} & & \multicolumn{4}{c}{Hispanics Subsample} \\
			\cline{2-5} \cline{7-10}
			& Estimated & \multicolumn{3}{c}{Critical Value} & & Estimated & \multicolumn{3}{c}{Critical Value} \\
			\cline{3-5} \cline{8-10}
			& Test Statistic & 10\% & 5\% & 1\%  & & Test Statistic & 10\% & 5\% & 1\%\\
			\hline
			Step 1 & .282 & .034 & .039 & .043 & & .308 & .044 & .047 & .050 \\
			Step 2 & .070 & .033 & .036 & .039 & & -.003 & .032 & .036 & .038 \\
			Step 3 & -.070 & .033 & .036 & .039 & & .003 & .032 & .036 & .038 \\
			\hline
		\end{tabular}
	\end{lrbox}
	\usebox{\tablebox}\\
	\settowidth{\tableboxwidth}{\usebox{\tablebox}} \parbox{\tableboxwidth}{\footnotesize{Note: The alternative hypothesis of step 1 is that Assumption \ref{ind} and equations \eqref{treatment} and \eqref{selection} are valid. The alternative hypothesis of step 2 is that Assumptions \ref{ind} and \ref{increasing_sample_selection} and equations \eqref{treatment} and \eqref{selection} are valid. The alternative hypothesis of step 3 is that Assumptions \ref{ind} and \ref{decreasingassumption} and equations \eqref{treatment} and \eqref{selection} are valid. Critical values were computed using 10,000 bootstrap repetitions and are related to the 10\%, 5\% and 1\% significance levels. Estimation uses design weights.}
	}
\end{table}

\subsection{MTR and MTE on Employment and Labor Earnings: Non-Hispanics subpopulation}\label{EstimationSandY}
As a preliminary step to estimate the bounds for the $MTE^{OO}$ of the JCTP on hourly wages within the Non-Hispanic subsample, I need to estimate the MTR functions on employment and hourly labor earnings, i.e., I need to estimate the functions $m_{0}^{S}$, $m_{1}^{S}$, $m_{0}^{Y}$, and $m_{1}^{Y}$. To do so, I use the procedure described in Subsection \ref{parametric}, that adapts the method developed by \cite{Brinch2017} to a constrained framework. Specifically, I model the MTR functions of $Y$ and $S$ using Bernstein polynomials with four parameters, i.e., $M^{A}\left(u, \boldsymbol{\theta}_{d}^{A}\right) = \theta_{d, 0}^{A} \cdot \left(1 - u\right) + \theta_{d, 1}^{A} \cdot u$ for any $A \in \left\lbrace Y, S \right\rbrace$ and $d \in \left\lbrace 0, 1 \right\rbrace$ with feasible sets $\Theta^{Y} = \mathbb{R}_{+}^{4}$ and $\Theta^{S} = \left\lbrace \left(\boldsymbol{\theta}_{0}^{S}, \boldsymbol{\theta}_{1}^{S}\right) \in \left[0, 1\right]^{4} \colon \boldsymbol{\theta}_{1}^{S} \geq \boldsymbol{\theta}_{0}^{S} \right\rbrace$. To estimate $\left(\boldsymbol{\theta}_{0}^{A}, \boldsymbol{\theta}_{1}^{A}\right)$. I run the following constrained OLS model:\footnote{Appendix \ref{OLSproof} connects the OLS model \eqref{OLSmodel} to the minimization problem \eqref{pseudotrue} when the instrument is binary and there are no covariates. It also provides the explicit formula for the bounds in Corollaries \ref{MTEbounds} and \ref{boundmeandomG} using the parametric model described in Subsection \ref{parametric}. Appendix \ref{montecarlo} implements a Monte Carlo Simulation that analyzes the coverage rate of confidence intervals around the MTE bounds that are based on the OLS model \eqref{OLSmodel}.}

\begin{equation}\label{OLSmodel}
A = a_{0}^{A} \cdot \left(1 - D\right) + b_{0}^{A} \cdot \left(1 - D\right) \cdot P\left(Z\right) + a_{1}^{A} \cdot D + b_{1}^{A} \cdot D \cdot P\left(Z\right) + e,
\end{equation}
where $e$ is the error term, $\theta_{0, 0}^{A} = a_{0}^{A} - b_{0}^{A}$, $\theta_{0, 1}^{A} = a_{0}^{A} + b_{0}^{A}$, $\theta_{1, 0}^{A} = a_{1}^{A}$, $\theta_{1, 1}^{A} = a_{1}^{A} + 2 \cdot b_{1}^{A}$ and the constraints on $\left(a_{0}^{A}, b_{0}^{A}, a_{1}^{A}, b_{1}^{A}\right)$ are given by $\Theta^{A}$.

Tabel \ref{bmw2017} reports the point-estimates and 90\%-confidence intervals of the parametric models for the MTR functions on employment and hourly labor earnings. Note that the feasibility constraint $\theta_{1,0}^{S} \geq \theta_{0,0}^{S}$ is binding even though Assumption \ref{increasing_sample_selection} is plausible according to the test proposed by \cite{Machado2018}. Moreover, for the upper bound of the 90\%-confidence interval, the feasibility constraint $\theta_{1,0}^{S} \leq 1$ is also binding.

\begin{table}[!htbp]
	\centering
	\caption{Parametric MTR Functions: Non-Hispanic Subsample} \label{bmw2017}
	\begin{lrbox}{\tablebox}
		\begin{tabular}{ccccc}
			\hline \hline
			Outcome & \multicolumn{4}{c}{Parameters for any $A \in \left\lbrace Y, S \right\rbrace$} \\
			\cline{2-5}
			Variable & $\theta_{0, 0}^{A}$ & $\theta_{0, 1}^{A}$ & $\theta_{1, 0}^{A}$ & $\theta_{1, 1}^{A}$ \\
			\hline
			\multirow{2}{*}{Employment (S)} & 0.46 & 0.66 & 0.46 & 0.89 \\
			& $\left[ 0.39, 0.47 \right]$ & $\left[ 0.64, 0.71 \right]$ & $\left[ 0.39, 0.47 \right]$ & $\left[ 0.84, 1.00 \right]$ \\
			\multirow{2}{*}{Labor Earnings (Y)} & 2.96 & 5.74 & 3.00 & 8.39 \\
			& $\left[ 1.45, 3.69\right]$ & $\left[ 4.98, 6.94 \right]$ & $\left[ 2.20, 3.41 \right]$ & $\left[ 7.54, 9.81 \right]$ \\
			\hline
		\end{tabular}
	\end{lrbox}
	\usebox{\tablebox}\\
	\settowidth{\tableboxwidth}{\usebox{\tablebox}} \parbox{\tableboxwidth}{\footnotesize{Note: The MTR on Employment is given by $M^{S}\left(u, \boldsymbol{\theta}_{d}^{S}\right) = \theta_{d, 0}^{S} \cdot \left(1 - u\right) + \theta_{d, 1}^{S} \cdot u$ with feasibility set given by $\Theta^{S} = \left\lbrace \left(\boldsymbol{\theta}_{0}^{S}, \boldsymbol{\theta}_{1}^{S}\right) \in \left[0, 1\right]^{4} \colon \boldsymbol{\theta}_{1}^{S} \geq \boldsymbol{\theta}_{0}^{S} \right\rbrace$. The MTR on Labor Earnings is given by $M^{Y}\left(u, \boldsymbol{\theta}_{d}^{Y}\right) = \theta_{d, 0}^{Y} \cdot \left(1 - u\right) + \theta_{d, 1}^{Y} \cdot u$ with feasibility set given by $\Theta^{Y} = \mathbb{R}_{+}^{4}$. In brackets, I report 90\%-confidence interval based on 5,000 bootstrap repetitions. Estimation uses design weights.}
	}
\end{table}

It is easier to understand and interpret those estimates using Figure \ref{ParametricMTRMTE}. The solid lines are the point-estimates of the MTR and MTE functions based on the parameters reported in Table \ref{bmw2017}. The dotted lines are pointwise 90\%-confidence intervals around the estimated functions based on 5,000 bootstrap repetitions. Blue colored lines are associated with treated potential outcomes, while red colored lines are associated with untreated outcomes. In Subfigure \ref{MTRS}, I find that, although the employment probability for the agents who are most likely to attend the JCTP is similar between treated and untreated individuals, the employment probability for the agents who are less likely to attend the JCTP is much higher for treated individuals than for untreated ones. As a consequence, the MTE on employment within the Non-Hispanic subsample (Subfigure \ref{MTES}) is increasing in the latent heterogeneity. Similarly, in Subfigure \ref{MTRY}, I find that, although expected hourly labor earnings for the agents who are most likely to attend the JCTP is similar between treated and untreated individuals, expected hourly labor earnings for the agents who are less likely to attend the JCTP is much higher for treated individuals than for untreated ones. As a consequence, the MTE on hourly labor earnings within the Non-Hispanic subsample (Subfigure \ref{MTEY}) is increasing in the latent heterogeneity. I highlight that the shape of my estimated MTE functions are in line with the results by \cite{Chen2017}, whose estimated upper bounds also suggest that the ATE on those variables is greater than the ATT.

\begin{figure}[!htbp] 
	\caption{Parametric MTR and MTE Functions: Non-Hispanic subsample} \label{ParametricMTRMTE}
	
	\begin{center}
		\subfloat[MTR on Employment\label{MTRS}]{\includegraphics[width = .45 \columnwidth]{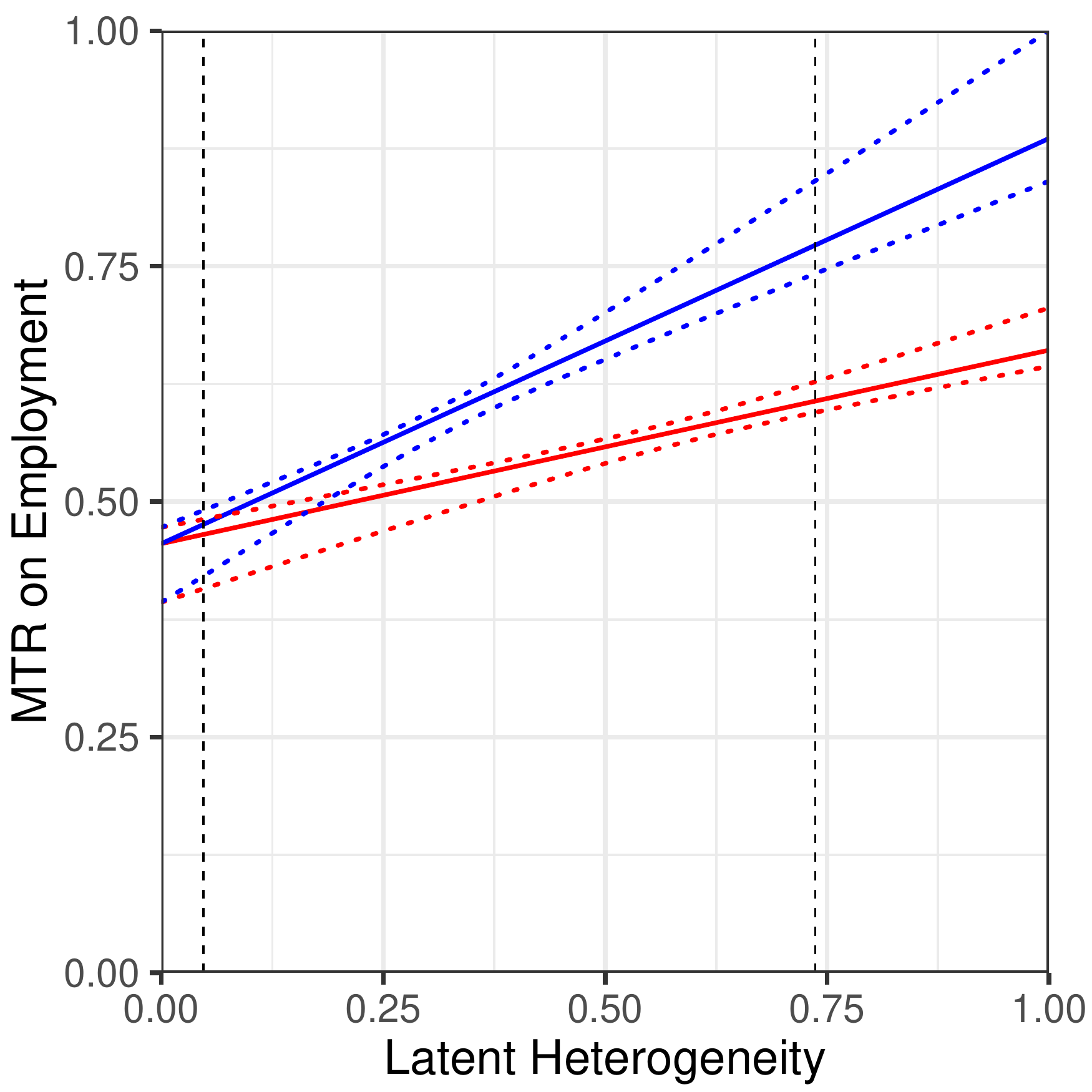}} \quad	
		\subfloat[MTE on Employment\label{MTES}]{\includegraphics[width = .45 \columnwidth]{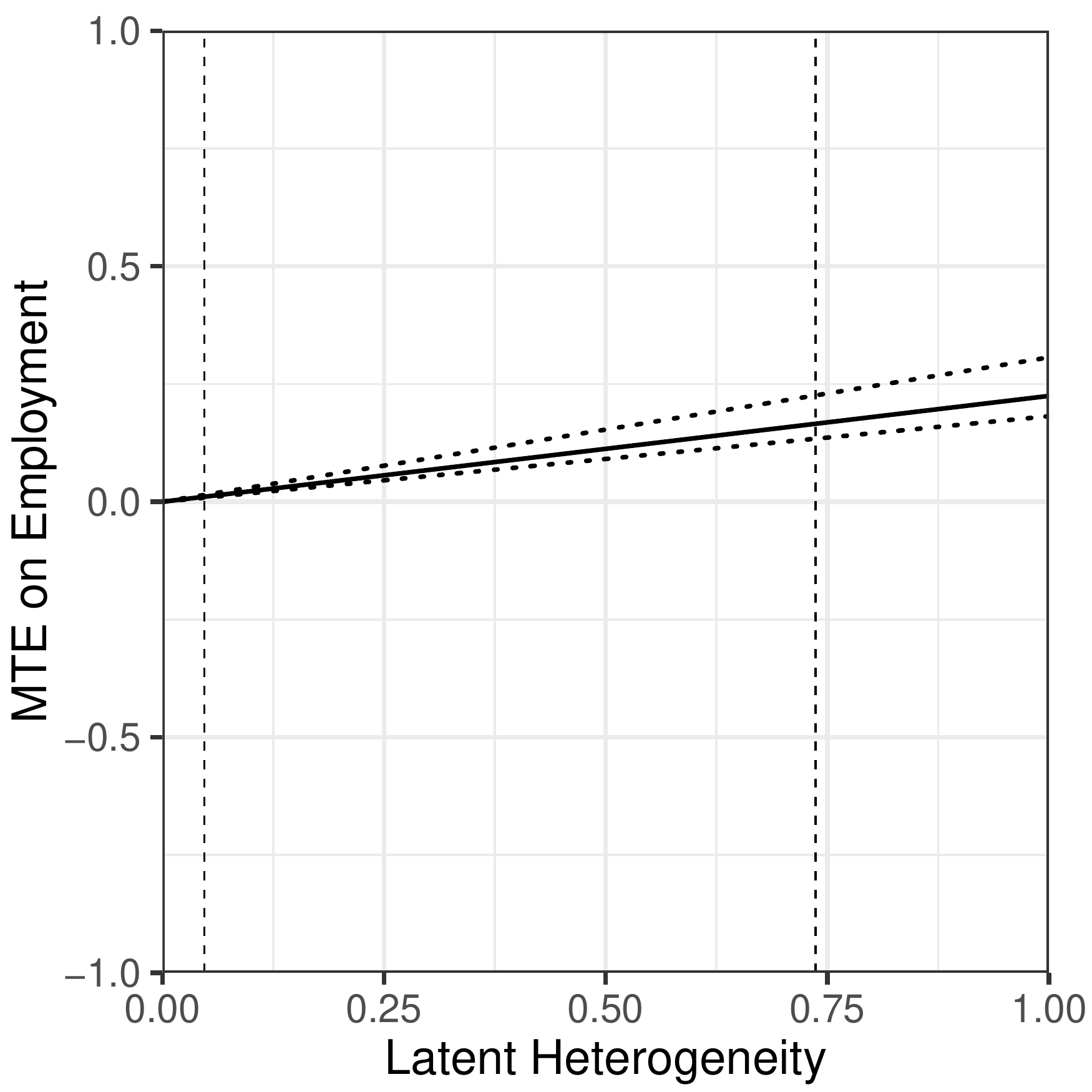}} \\
		\subfloat[MTR on Labor Earnings\label{MTRY}]{\includegraphics[width = .45 \columnwidth]{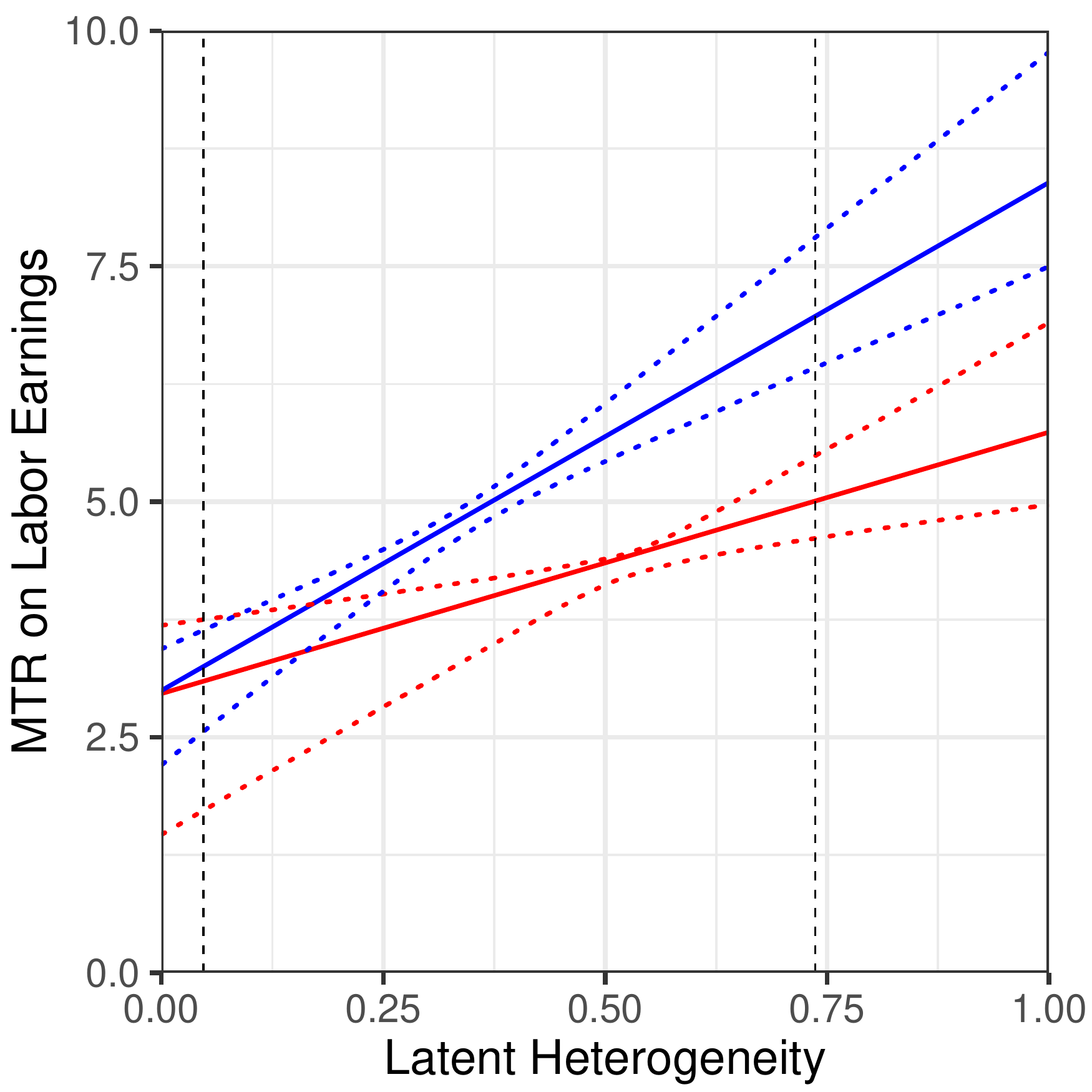}} \quad	
		\subfloat[MTE on Labor Earnings\label{MTEY}]{\includegraphics[width = .45 \columnwidth]{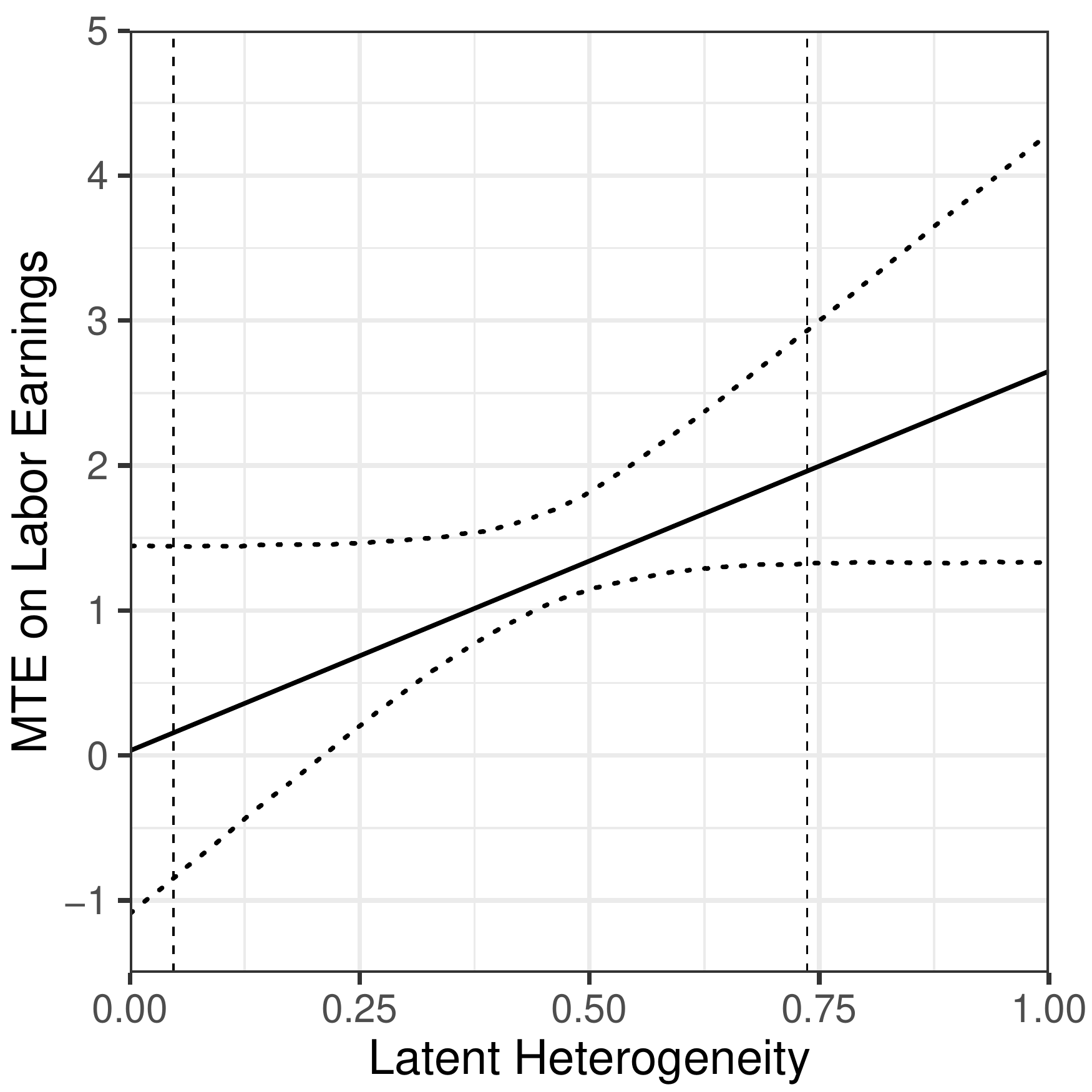}} \\
	\end{center}
	\footnotesize{Notes: The solid lines are the point-estimates of the MTR and MTE functions based on the parameters reported in Table \ref{bmw2017}. The dotted lines are pointwise 90\%-confidence intervals around the estimated functions based on 5,000 bootstrap repetitions. Blue colored lines are associated with treated potential outcomes, while red colored lines are associated with untreated outcomes. The vertical dashed lines represent the sample values of the propensity score $P\left[\left. D = 1 \right \vert Z\right]$. Estimation uses design weights.}
\end{figure}

\subsection{Bounds for the $\mathbf{MTE^{OO}}$ on Wages: Non-Hispanic subpopulation}\label{empiricalresults}
To partially identify the $MTE^{OO}$ of the JCTP on wages within the Non-Hispanic subsample, I can combine the functions estimated in Subsection \ref{EstimationSandY} with Corollaries \ref{MTEbounds} and \ref{boundmeandomG}. While the first corollary imposes only assumptions that are valid by the experimental design (Assumption \ref{ind}), technical (Assumptions \ref{finite}-\ref{bounded}) or testable (Assumptions \ref{propensityscore} and \ref{increasing_sample_selection}, and equation \ref{treatment}), Corollary \ref{boundmeandomG} additionally uses the Mean Dominance Assumption \ref{meandominanceG}. This last assumption imposes that the marginal treatment response function of wages when treated for the always-employed population is greater than the same object for the employed-only-when-treated population, implying a positive correlation between potential employment and potential wages, which is supported by standard models of labor supply.\footnote{\cite{Chen2015} discuss the connection between the Mean Dominance Assumption \ref{meandominanceG} and the Labor Economics literature in a deeper way.}.

Another issue when estimating bounds for a parameter of interest is that there are two ways to construct confidence intervals. The conservative method finds the $\zeta$-confidence intervals around the upper and lower $MTE^{OO}$ bounds and then uses their upper most and lower most bounds to construct a confidence interval that contains the identified region with probability $\zeta$. Since the parameter of interest has to be inside the identified region, this confidence interval contain the parameter of interest with probability at least $\zeta$. An alternative method is proposed by \cite{Imbens2004}, who directly construct a $\zeta$-confidence interval that contains the parameter of interest. Since they take into account that the parameter of interest has to be inside the identified region by construction, their confidence interval is tighter than the conservative method.

Figure \ref{ParametricMTEW} shows the parametric bounds of the $MTE^{OO}$ on wages using Corollary \ref{MTEbounds} (Subfigure \ref{MTEW}) and using Corollary \ref{boundmeandomG} (Subfigure \ref{MTEW-MD}). The solid lines are the point-estimates of the parametric bounds of the MTE on wages, while the dotted lines are pointwise conservative 90\%-confidence intervals around the identified region based on 5,000 bootstrap repetitions and the dashed lines are pointwise 90\%-confidence intervals of the parameter of interest (\cite{Imbens2004}) based also on 5,000 bootstraps repetitions.

As a way to understand the magnitude of the effects, I compare the estimated $MTE^{OO}$ bounds against the average observed hourly wage of the Non-Hispanics assigned to the control group, \$7.72. Note that the lower bounds that do not use the mean dominance assumption (Subfigure \ref{MTEW}) are implausibly negative. Even for the agents who are the most likely to attend the JCTP, the lower bound of the $MTE^{OO}$ on wages (-\$6.51) imply that the JCTP would drive their hourly wages almost to zero. This implausibly negative lower bound is based on the worst-case scenario that unrealistically imposes that the treated potential wage for the always-employed subpopulation is equal to zero.

By imposing the Mean Dominance Assumption \ref{meandominanceG}, I rule out this extreme case by assuming that there is positive selection into employment. As a consequence, I can increase the lower bound from equation \eqref{mte_lb_lower} to equation \eqref{mte_lb_md}, narrowing the bounds of the $MTE^{OO}$ on wages (Subfigure \ref{MTEW-MD}). Under this extra assumption, the $MTE^{OO}$ on wages is significant at the 10\%-confidence level for latent heterogeneity values between 0.34 and 0.68 when I use the conservative confidence interval and between 0.35 and 0.73 when I use the confidence interval based on \cite{Imbens2004}. Most interestingly, the point-estimate of the lower bound of the $MTE^{OO}$ on wages is decreasing in the likelihood of attending the JCTP.

\begin{figure}[!htbp] 
	\caption{Parametric Bounds of the $MTE^{OO}$ on Wages: Non-Hispanic subsample} \label{ParametricMTEW}
	
	\begin{center}
		\subfloat[Without Mean Dominance Assumption\label{MTEW}]{\includegraphics[width = .45 \columnwidth]{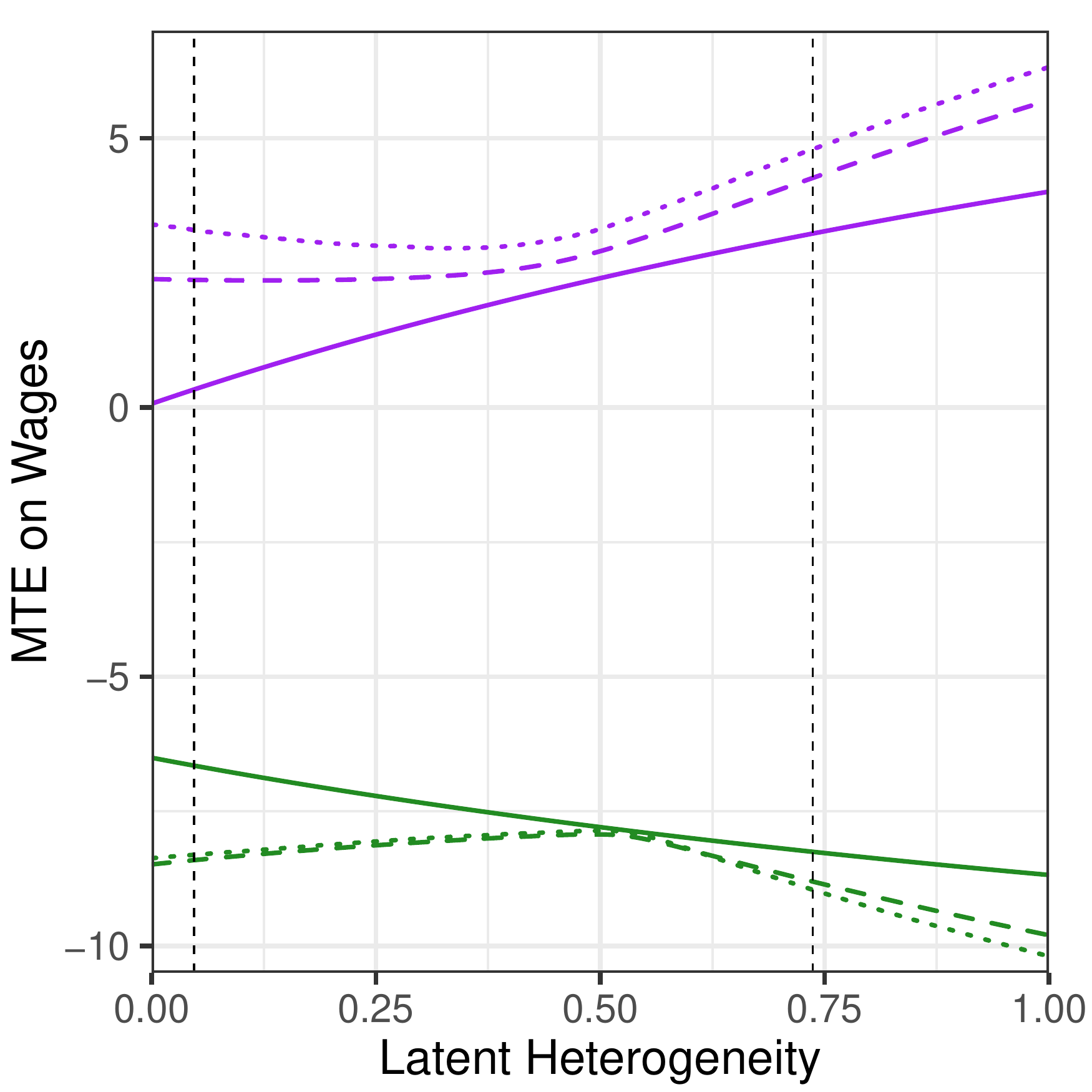}} \quad	
		\subfloat[With Mean Dominance Assumption\label{MTEW-MD}]{\includegraphics[width = .45 \columnwidth]{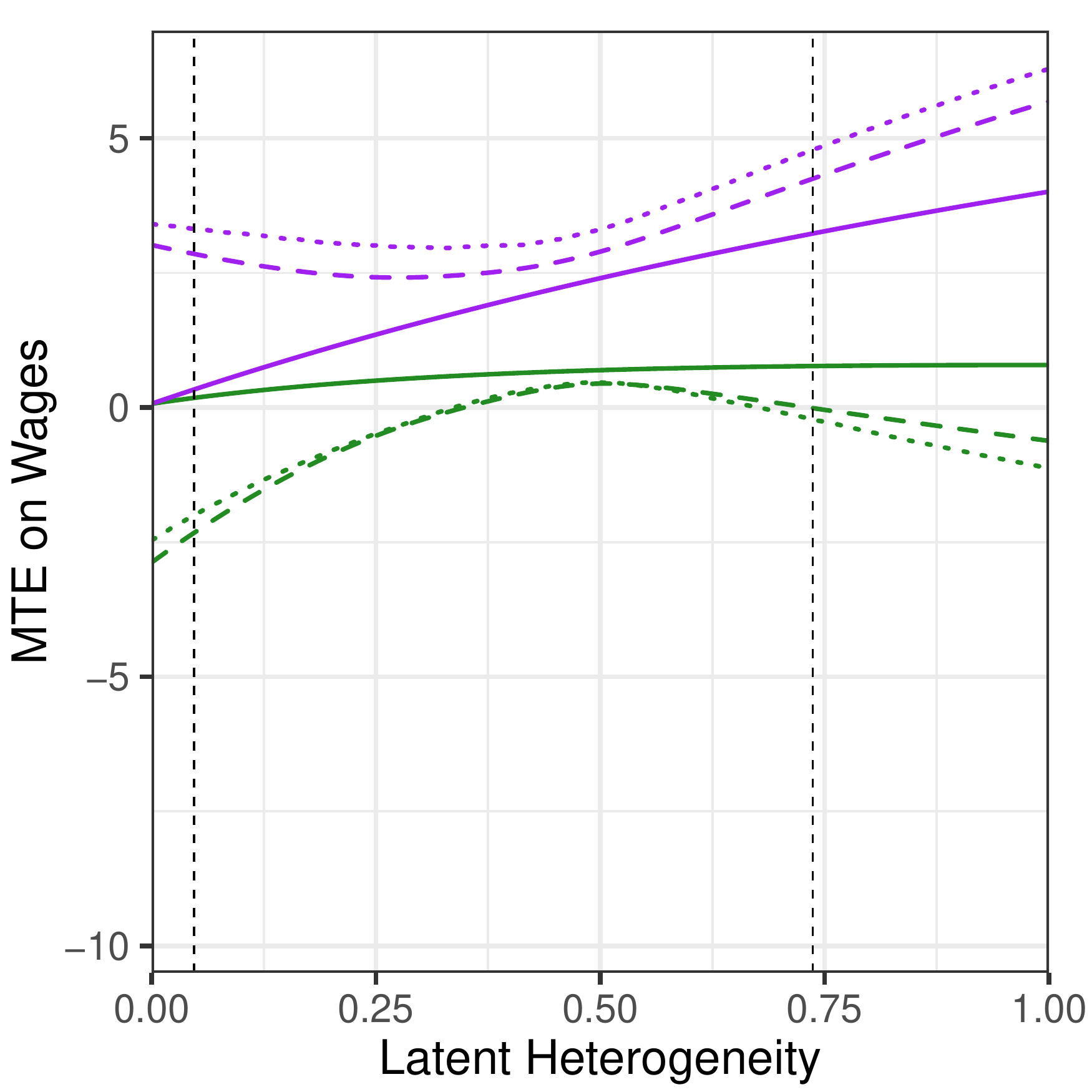}} \\
	\end{center}
	\footnotesize{Notes: The solid lines are the point-estimates of the parametric bounds of the $MTE^{OO}$ on wages. The dotted lines are pointwise conservative 90\%-confidence intervals around the identified region based on 5,000 bootstrap repetitions. The dashed lines are pointwise 90\%-confidence intervals of the parameter of interest (\cite{Imbens2004}) based on 5,000 bootstrap repetitions. The vertical dashed lines represent the sample values of the propensity score $P\left[\left. D = 1 \right \vert Z\right]$. Estimation uses design weights.}
\end{figure}

To better understand the magnitude of those effects and compare my results with the previous literature, I summarize the bounds for the $MTE^{OO}$ function using four key parameters --- $ATE^{OO}$, $ATT^{OO}$, $ATU^{OO}$ and $LATE^{OO}$ --- that are described in Tables \ref{integral} and \ref{weights} as integrals of the $MTE^{OO}$ function. Table \ref{ATEW} reports those bounds in brackets, the 90\%-conservative confidence intervals of the identified region in parenthesis and the 90\%-confidence intervals of the parameter of interest (\cite{Imbens2004}) in braces. As expected, the bounds without the mean dominance assumption are wide and uninformative, while, when imposing Assumption \ref{meandominanceG}, all parameters but the $ATT^{OO}$ are significant at 10\% according to both types of confidence intervals.

\begin{table}[!htbp]
	\centering
	\caption{Bounds of the $ATE^{OO}$, $ATT^{OO}$, $ATU^{OO}$ and $LATE^{OO}$ on Wages: Non-Hispanic subsample} \label{ATEW}
	\begin{lrbox}{\tablebox}
		\begin{tabular}{ccccc}
			\hline \hline
			Mean Dominance& \multicolumn{4}{c}{Treatment Effect} \\
			\cline{2-5}
			Asssumption \ref{meandominanceG} & $ATE^{OO}$ & $ATT^{OO}$ & $ATU^{OO}$ & $LATE^{OO}$ \\
			\hline
			\multirow{3}{*}{NO} & $\left[ -7.73, 2.28 \right]$ & $\left[ -7.11, 1.17 \right]$ & $\left[ -8.20, 3.14 \right]$ & $\left[ -7.52, 1.91 \right]$ \\
			& $\left( -7.88, 3.15 \right)$ & $\left( -8.16, 3.09 \right)$ & $\left( -8.54, 4.29 \right)$ & $\left( -7.94, 2.97 \right)$ \\
			& $\left\lbrace -7.95, 2.75 \right\rbrace$ & $\left\lbrace -8.35, 2.57 \right\rbrace$ & $\left\lbrace -8.57, 3.96 \right\rbrace$ & $\left\lbrace -8.01, 2.51 \right\rbrace$ \\
			&&&&\\
			\multirow{3}{*}{YES} & $\left[ 0.61, 2.28 \right]$ & $\left[ 0.33, 0.99 \right]$ & $\left[ 0.71, 3.00 \right]$ & $\left[ 0.58, 1.91 \right]$ \\
			& $\left( 0.38, 3.14 \right)$ & $\left( -1.42, 3.18 \right)$ & $\left( 0.18, 3.69 \right)$ & $\left( 0.12, 3.00 \right)$ \\
			& $\left\lbrace 0.35, 2.75 \right\rbrace$ & $\left\lbrace -1.43, 2.76 \right\rbrace$ & $\left\lbrace 0.27, 3.69 \right\rbrace$ & $\left\lbrace 0.07, 2.51 \right\rbrace$ \\
			\hline
		\end{tabular}
	\end{lrbox}
	\usebox{\tablebox}\\
	\settowidth{\tableboxwidth}{\usebox{\tablebox}} \parbox{\tableboxwidth}{\footnotesize{Note: In brackets, I report the bounds for the parameter of interest that are integrals of the bounds for the $MTE^{OO}$ function. In parenthesis, I report conservative 90\%-confidence intervals around the identified region based on 5,000 bootstrap repetitions, while, in braces, I report 90\%-confidence intervals of the parameter of interest (\cite{Imbens2004}) based on 5,000 bootstrap repetitions. Estimation uses design weights.}
	}
\end{table}

I stress that my $LATE^{OO}$ estimates represent an effect between 7.51\% and 24.74\% of the average observed hourly wage of the Non-Hispanics assigned to the control group, which are comparable to the bounds of the $LATE^{OO}$ parameter derived by \cite{Chen2015} --- approximately between 7.7\% and 17.5\% under a similar set of assumptions. The finding that their bounds are tighter than mine for the $LATE^{OO}$ is not surprising because their method leverages all the available information to specifically identify the $LATE^{OO}$ while my tool bounds the $MTE^{OO}$ function and then flexibly bounds the other treatment effects for the always-employed population.

As a consequence of this flexibility, I can partially identify other treatment effects that may be policy-relevant. For example, the $ATE^{OO}$ is bounded between 7.90\% and 29.53\% of the average observed hourly wage of the Non-Hispanics assigned to the control group. Most interestingly, the $ATT^{OO}$ and the $ATU^{OO}$ are, respectively, bounded between 4.27\% and 12.82\%, and 9.20\% and 38.86\%, suggesting that the agents who do not attend the JCTP might be the ones who would benefit the most from it. This result is even stronger when we analyze the confidence intervals around the $ATT^{OO}$ and the $ATU^{OO}$: while the first treatment effect is not significantly different from zero, the second parameter is significantly different from zero. To conclude, I highlight that, even though the upper bound of the treatment effects on wages may be unrealistically large, the magnitude of the lower bounds are similar to the results found by \cite{Chen2017} and are reasonable when compared to ITT effects of 16.70\% on earnings per week and of 9.87\% on hours per week that are shown in Table \ref{PreliminaryEffects}.


\section{Conclusion}\label{furtherwork}
My main theoretical contribution provides pointwise sharp bounds for the MTE of interest within the always-observed subpopulation by imposing a monotonicity assumption that the treatment has a positive impact on sample selection for every agent. Those bounds are tightened by imposing an extra mean dominance assumption that the potential outcome when treated within the always-observed subpopulation is greater than or equal to the same parameter within the observed-only-when-treated subpopulation. Both bounds can be estimated using the LIV approach if the instrument is continuous, using a non-parametric outer set based on the method developed by \cite{Mogstad2017}, or using a parametric model based on the strategy proposed by \cite{Brinch2017}. Such bounds are useful to analyze many empirical problems that include endogenous self-selection into treatment and sample selection.

My main empirical findings suggest that the marginal treatment effect of the Job Corps Training Program (JCTP) on employment, hourly labor earnings and hourly wages increases with the latent heterogeneity variable within the Non-Hispanic group. More specifically, while MTEs for the agents who are the most likely to attend the JCTP are very small, the MTEs for the agents who are the least likely to attend the JCTP are considerably large. Economically, this result implies that the agents who are more likely to benefit from the JCTP are not attending it due to some unobserved constraint. A similar result is found by \cite{Chen2017}, whose empirical evidence suggests that the effects of the JCTP on employment and labor earnings for never-takers are significantly positive. They argue that those agents are not enrolling at the JCTP due to family constraints (lack of childcare services), incomplete information on JCTP's benefits, overconfidence or personal preferences for non-enrollment. A more complete analysis of why agents who would benefit from attending the JCTP are not doing so is beyond the scope of this paper, but is an important question for future research because it may help policy makers to better target the JCTP to the population who would benefit the most from this program.

%
%


\singlespace

\renewcommand{\refname}{References}

\bibliographystyle{agsm}
\bibliography{references_mte_sample_selection}


\pagebreak

\newpage

\pagebreak

\setcounter{table}{0}
\renewcommand\thetable{A.\arabic{table}}

\setcounter{figure}{0}
\renewcommand\thefigure{A.\arabic{figure}}

\setcounter{equation}{0}
\renewcommand\theequation{A.\arabic{equation}}

\appendix

\begin{center}
	\huge
	Supporting Information
	
	(Online Appendix)
	
\end{center}

\doublespacing
\normalsize

\section{Proofs of the main results} \label{proofs}

\setcounter{table}{0}
\renewcommand\thetable{A.\arabic{table}}

\setcounter{figure}{0}
\renewcommand\thefigure{A.\arabic{figure}}

\setcounter{equation}{0}
\renewcommand\theequation{A.\arabic{equation}}

\subsection{Proof of Equation \eqref{m0YstarOO}}\label{proofUntreated}

Note that
\begin{align*}
\mathbb{E}\left[Y_{0}^{*} \left\vert X = x, U = u, S_{0} = 1, S_{1} = 1 \right.\right] & = \mathbb{E}\left[Y_{0}^{*} \left\vert X = x, U = u, S_{0} = 1 \right.\right] \\
& \hspace{20pt} \text{by Assumption } \ref{increasing_sample_selection} \\
& = \dfrac{\mathbb{E}\left[S_{0} \cdot Y_{0}^{*} \left\vert X = x, U = u \right.\right]}{\mathbb{P}\left[S_{0} = 1 \left\vert X = x, U = u \right.\right]} \\
& \hspace{20pt} \text{by the definition of conditional expectation } \\
& = \dfrac{\mathbb{E}\left[Y_{0} \left\vert X = x, U = u \right.\right]}{\mathbb{E}\left[S_{0} \left\vert X = x, U = u \right.\right]} \\
& = \dfrac{m_{0}^{Y}\left(x, u\right)}{m_{0}^{S}\left(x, u\right)}. & \blacksquare
\end{align*}

\subsection{Proof of Equation \eqref{m1YstarOO}}\label{proofTreated}

First, observe that
\begin{align}
	m_{0}^{S}\left(x, u\right) & \coloneqq \mathbb{E}\left[S_{0} \left\vert X = x, U = u \right.\right] \nonumber \\
	& \label{m0S} = \mathbb{P}\left[Q\left(0, X\right) \geq V \left\vert X = x, U = u \right.\right] \\
	& \hspace{20pt} \text{by equation } \eqref{selection}, \nonumber \\
	m_{1}^{S}\left(x, u\right) & \coloneqq \mathbb{E}\left[S_{1} \left\vert X = x, U = u \right.\right] \nonumber \\
	& \label{m1S} = \mathbb{P}\left[Q\left(1, X\right) \geq V \left\vert X = x, U = u \right.\right] \\
	& \hspace{20pt} \text{by equation } \eqref{selection}, \nonumber \\
	\Delta_{S}\left(x, u\right) & \coloneqq \mathbb{E}\left[S_{1} - S_{0} \left\vert X = x, U = u \right.\right] \nonumber \\
	& = m_{1}^{S}\left(x, u\right) - m_{0}^{S}\left(x, u\right) \nonumber\\
	&  = \mathbb{P}\left[Q\left(1, X\right) \geq V > Q\left(0, X\right) \left\vert X = x, U = u \right.\right] \nonumber \\
	& \hspace{20pt} \text{by equations } \eqref{m0S} \text{ and } \eqref{m1S} \text{ and Assumption } \eqref{increasing_sample_selection} \nonumber \\
	& \label{deltaS} = \mathbb{P}\left[S_{0} = 0, S_{1} = 1 \left\vert X = x, U = u \right.\right] \\
	& \hspace{20pt} \text{by equation } \eqref{selection} \text{, and} \nonumber \\
	\Delta_{Y}^{NO}\left(x, u\right) & \coloneqq \mathbb{E}\left[ Y_{1} - Y_{0} \left\vert X = x, U = u, S_{0} = 0, S_{1} = 1 \right.\right] \nonumber \\
	& = \mathbb{E}\left[S_{1} \cdot Y_{1}^{*} - S_{0} \cdot Y_{0}^{*} \left\vert X = x, U = u, S_{0} = 0, S_{1} = 1 \right.\right] \nonumber \\
	& \label{DeltaY_NO} = \mathbb{E}\left[Y_{1}^{*} \left\vert X = x, U = u, S_{0} = 0, S_{1} = 1 \right.\right].
\end{align}

Note also that:
\begin{align}
m_{1}^{Y}\left(x, u\right) & \coloneqq \mathbb{E}\left[Y_{1} \left\vert X = x, U = u \right. \right] \nonumber \\
& = \mathbb{E}\left[S_{1} \cdot Y_{1}^{*} \left\vert X = x, U = u \right. \right] \nonumber \\
& = \mathbb{E}\left[Y_{1}^{*} \left\vert X = x, U = u, S_{0} = 1, S_{1} = 1 \right. \right] \cdot \mathbb{P}\left[S_{0} = 1 \left\vert X = x, U = u \right.\right] \nonumber \\
& \hspace{20pt} + \mathbb{E}\left[Y_{1}^{*} \left\vert X = x, U = u, S_{0} = 0,  S_{1} = 1 \right. \right] \cdot \mathbb{P}\left[S_{0} = 0, S_{1} = 1 \left\vert X = x, U = u \right.\right] \nonumber \\
& \text{by Assumption } \ref{increasing_sample_selection} \text{ and the Law of Iterated Expectations} \nonumber \\
& \label{decomposition} = \mathbb{E}\left[Y_{1}^{*} \left\vert X = x, U = u, S_{0} = 1, S_{1} = 1 \right. \right] \cdot m_{0}^{S}\left(x, u\right) + \Delta_{Y}^{NO}\left(x, u\right) \cdot \Delta_{S}\left(x, u\right) \\
& \text{by equations } \eqref{m0S}\text{, } \eqref{deltaS} \text{ and } \eqref{DeltaY_NO} \nonumber,
\end{align}
implying equation \eqref{m1YstarOO} after some rearrangement. $\blacksquare$

\subsection{Proof of Proposition \ref{boundsY1Proposition}}\label{proofboundsY1}
Note that
\begin{equation}\label{naturalbounds}
\underline{y}^{*} \leq \mathbb{E}\left[Y_{1}^{*} \left\vert X = x, U = u, S_{0} = 1, S_{1} = 1 \right.\right] \leq \overline{y}^{*}
\end{equation}
by the definition of $\underline{y}^{*}$ and $\overline{y}^{*}$. Observe also that \begin{equation*}
\underline{y}^{*} \leq \Delta_{Y}^{NO}\left(x, u\right) \leq \overline{y}^{*}
\end{equation*}
by equation \eqref{DeltaY_NO} and the definition of $\underline{y}^{*}$ and $\overline{y}^{*}$, implying, by equation \eqref{m1YstarOO}, that
\begin{equation}\label{app_boundsY1lower}
\mathbb{E}\left[Y_{1}^{*} \left\vert X = x, U = u, S_{0} = 1, S_{1} = 1 \right.\right] \leq \dfrac{m_{1}^{Y}\left(x, u\right) - \underline{y}^{*} \cdot \Delta_{S}\left(x, u\right)}{m_{0}^{S}\left(x, u\right)}
\end{equation}
under assumption \ref{bounded}.1,
\begin{equation}\label{app_boundsY1upper}
\dfrac{m_{1}^{Y}\left(x, u\right) - \overline{y}^{*} \cdot \Delta_{S}\left(x, u\right)}{m_{0}^{S}\left(x, u\right)} \leq \mathbb{E}\left[Y_{1}^{*} \left\vert X = x, U = u, S_{0} = 1, S_{1} = 1 \right.\right]
\end{equation}
under assumption \ref{bounded}.2, and
\begin{align}
\dfrac{m_{1}^{Y}\left(x, u\right) - \overline{y}^{*} \cdot \Delta_{S}\left(x, u\right)}{m_{0}^{S}\left(x, u\right)} & \leq \mathbb{E}\left[Y_{1}^{*} \left\vert X = x, U = u, S_{0} = 1, S_{1} = 1 \right.\right] \nonumber \\
& \label{app_boundsY1} \leq \dfrac{m_{1}^{Y}\left(x, u\right) - \underline{y}^{*} \cdot \Delta_{S}\left(x, u\right)}{m_{0}^{S}\left(x, u\right)}.
\end{align}
under Assumption \ref{bounded}.3 (sub-case (a) or (b)). Combining equations \eqref{naturalbounds}-\eqref{app_boundsY1}, it is easy to show that Proposition \ref{boundsY1Proposition} holds. $\blacksquare$

\subsection{Proof of Theorem \ref{sharpbounds}}\label{proofsharp}
First, I prove Theorem \ref{sharpbounds} under Assumption \ref{bounded}.3 (sub-cases (a) and (b)). At the end of this section, I prove Theorem \ref{sharpbounds} under assumptions \ref{bounded}.1 and \ref{bounded}.2.

\subsubsection{Proof under Assumption \ref{bounded}.3 (sub-cases (a) and (b))}\label{proofsharp3}

Fix $\overline{u} \in \left[0, 1\right]$, $\overline{x} \in \mathcal{X}$ and $\delta\left(\overline{x}, \overline{u}\right) \in \left(\underline{\Delta_{Y^{*}}^{OO}}\left(\overline{x}, \overline{u}\right), \overline{\Delta_{Y^{*}}^{OO}}\left(\overline{x}, \overline{u}\right)\right)$ arbitrarily. For brevity, define $\alpha\left(\overline{x}, \overline{u}\right) \coloneqq \delta\left(\overline{x}, \overline{u}\right) + \dfrac{m_{0}^{Y}\left(\overline{x}, \overline{u}\right)}{m_{0}^{S}\left(\overline{x}, \overline{u}\right)}$ and $\gamma\left(\overline{x}, \overline{u}\right) \coloneqq \dfrac{m_{1}^{Y}\left(\overline{x}, \overline{u}\right) - \alpha\left(\overline{x}, \overline{u}\right) \cdot m_{0}^{S}\left(\overline{x}, \overline{u}\right)}{\Delta_{S}\left(\overline{x}, \overline{u}\right)}$.

Note that
\begin{equation}\label{sanity1}
\begin{array}{cll}
& \delta\left(\overline{x}, \overline{u}\right) & \in \left(\underline{\Delta_{Y^{*}}^{OO}}\left(\overline{x}, \overline{u}\right), \overline{\Delta_{Y^{*}}^{OO}}\left(\overline{x}, \overline{u}\right)\right) \\
& & \\
\Leftrightarrow & \alpha\left(\overline{x}, \overline{u}\right) & \in \left(\max\left\lbrace \dfrac{m_{1}^{Y}\left(x, u\right) - \overline{y}^{*} \cdot \Delta_{S}\left(x, u\right)}{m_{0}^{S}\left(x, u\right)}, \underline{y}^{*}\right\rbrace, \right. \\
& & \\
& & \hspace{30pt} \left. \min\left\lbrace \dfrac{m_{1}^{Y}\left(x, u\right) - \underline{y}^{*} \cdot \Delta_{S}\left(x, u\right)}{m_{0}^{S}\left(x, u\right)}, \overline{y}^{*}\right\rbrace\right) \\
& & \\
& & \subseteq \left(\underline{y}^{*}, \overline{y}^{*}\right),
\end{array}
\end{equation}
and that
\begin{equation}\label{sanity2}
\begin{array}{cll}
& \alpha\left(\overline{x}, \overline{u}\right) & \in \left(\dfrac{m_{1}^{Y}\left(x, u\right) - \overline{y}^{*} \cdot \Delta_{S}\left(x, u\right)}{m_{0}^{S}\left(x, u\right)}, \dfrac{m_{1}^{Y}\left(x, u\right) - \underline{y}^{*} \cdot \Delta_{S}\left(x, u\right)}{m_{0}^{S}\left(x, u\right)} \right) \\
& & \\
\Leftrightarrow & \gamma\left(\overline{x}, \overline{u}\right) & \in \left(\underline{y}^{*}, \overline{y}^{*}\right).
\end{array}
\end{equation}

The strategy of this proof consists of defining candidate random variables $\left(\tilde{Y}_{0}^{*}, \tilde{Y}_{1}^{*}, \tilde{U}, \tilde{V}\right)$ through their joint cumulative distribution function $F_{\tilde{Y}_{0}^{*}, \tilde{Y}_{1}^{*}, \tilde{U}, \tilde{V}, Z, X}$ and then checking that equations \eqref{faketarget}, \eqref{correctsupport} and \eqref{DataRestriction} are satisfied. I fix $\left(y_{0}, y_{1}, u, v, z, x\right) \in \mathbb{R}^{6}$ and define $F_{\tilde{Y}_{0}^{*}, \tilde{Y}_{1}^{*}, \tilde{U}, \tilde{V}, Z, X}$ in twelve steps:
\begin{enumerate}
	\item[Step 1.] For $x \notin \mathcal{X}$, $F_{\tilde{Y}_{0}^{*}, \tilde{Y}_{1}^{*}, \tilde{U}, \tilde{V}, Z, X}\left(y_{0}, y_{1}, u, v, z, x\right) = F_{Y_{0}^{*}, Y_{1}^{*}, U, V, Z, X}\left(y_{0}, y_{1}, u, v, z, x\right)$.
	
	\item[Step 2.] From now on, consider $x \in \mathcal{X}$. Since $$F_{\tilde{Y}_{0}^{*}, \tilde{Y}_{1}^{*}, \tilde{U}, \tilde{V}, Z, X}\left(y_{0}, y_{1}, u, v, z, x\right) = F_{\tilde{Y}_{0}^{*}, \tilde{Y}_{1}^{*}, \tilde{U}, \tilde{V}, Z \left\vert X \right.}\left(y_{0}, y_{1}, u, v, z \left\vert x \right.\right) \cdot F_{X}\left(x\right),$$ it suffices to define $F_{\tilde{Y}_{0}^{*}, \tilde{Y}_{1}^{*}, \tilde{U}, \tilde{V}, Z \left\vert X \right.}\left(y_{0}, y_{1}, u, v, z \left\vert x \right.\right)$. Moreover, I impose $$\left. Z \independent \left(\tilde{Y}_{0}^{*}, \tilde{Y}_{1}^{*}, \tilde{U}, \tilde{V} \right) \right\vert X$$ by writing $$F_{\tilde{Y}_{0}^{*}, \tilde{Y}_{1}^{*}, \tilde{U}, \tilde{V}, Z \left\vert X \right.}\left(y_{0}, y_{1}, u, v, z \left\vert x \right.\right) = F_{\tilde{Y}_{0}^{*}, \tilde{Y}_{1}^{*}, \tilde{U}, \tilde{V}\left\vert X \right.}\left(y_{0}, y_{1}, u, v \left\vert x \right.\right) \cdot F_{Z \left\vert X \right.}\left(z \left\vert x \right.\right),$$ implying that it is sufficient to define $F_{\tilde{Y}_{0}^{*}, \tilde{Y}_{1}^{*}, \tilde{U}, \tilde{V}\left\vert X \right.}\left(y_{0}, y_{1}, u, v \left\vert x \right.\right)$.
	
	\item[Step 3.] For $u \notin \left[0, 1\right]$, I define $F_{\tilde{Y}_{0}^{*}, \tilde{Y}_{1}^{*}, \tilde{U}, \tilde{V}\left\vert X \right.}\left(y_{0}, y_{1}, u, v \left\vert x \right.\right) = F_{Y_{0}^{*}, Y_{1}^{*}, U, V\left\vert X \right.}\left(y_{0}, y_{1}, u, v \left\vert x \right.\right)$.
	
	\item[Step 4.] From now on, consider $u \in \left[0, 1\right]$. Since $$F_{\tilde{Y}_{0}^{*}, \tilde{Y}_{1}^{*}, \tilde{U}, \tilde{V}\left\vert X \right.}\left(y_{0}, y_{1}, u, v \left\vert x \right.\right) = F_{\tilde{Y}_{0}^{*}, \tilde{Y}_{1}^{*}, \tilde{V}\left\vert X, \tilde{U} \right.}\left(y_{0}, y_{1}, v \left\vert x, u \right.\right) \cdot F_{\tilde{U}\left\vert X \right.}\left(u \left\vert x \right.\right),$$ it suffices to define $F_{\tilde{Y}_{0}^{*}, \tilde{Y}_{1}^{*}, \tilde{V}\left\vert X, \tilde{U} \right.}\left(y_{0}, y_{1}, v \left\vert x, u \right.\right)$ and $F_{\tilde{U}\left\vert X \right.}\left(u \left\vert x \right.\right)$.
	
	\item[Step 5.] I define $F_{\tilde{U}\left\vert X \right.}\left(u \left\vert x \right.\right) = F_{U\left\vert X \right.}\left(u \left\vert x \right.\right) = u$.
	
	\item[Step 6.] For any $u \neq \overline{u}$, I define $F_{\tilde{Y}_{0}^{*}, \tilde{Y}_{1}^{*}, \tilde{V}\left\vert X, \tilde{U} \right.}\left(y_{0}, y_{1}, v \left\vert x, u\right.\right) = F_{Y_{0}^{*}, Y_{1}^{*}, V\left\vert X, U \right.}\left(y_{0}, y_{1}, v \left\vert x, u\right.\right)$.
	
	\item[Step 7.] For any $v \notin \left[0, 1\right]$, I define $F_{\tilde{Y}_{0}^{*}, \tilde{Y}_{1}^{*}, \tilde{V}\left\vert X, \tilde{U} \right.}\left(y_{0}, y_{1}, v \left\vert x, \overline{u}\right.\right) = F_{Y_{0}^{*}, Y_{1}^{*}, V\left\vert X, U \right.}\left(y_{0}, y_{1}, v \left\vert x, \overline{u}\right.\right)$.
	
	\item[Step 8.] From now on, consider $v \in \left[0, 1\right]$. Since $$F_{\tilde{Y}_{0}^{*}, \tilde{Y}_{1}^{*}, \tilde{V}\left\vert X, \tilde{U} \right.}\left(y_{0}, y_{1}, v \left\vert x, \overline{u}\right.\right) = F_{\tilde{Y}_{0}^{*}, \tilde{Y}_{1}^{*}\left\vert X, \tilde{U}, \tilde{V} \right.}\left(y_{0}, y_{1}\left\vert x, \overline{u}, v\right.\right) \cdot F_{\tilde{V}\left\vert X, \tilde{U} \right.}\left(v \left\vert x, \overline{u}\right.\right),$$ it is sufficient to define $F_{\tilde{Y}_{0}^{*}, \tilde{Y}_{1}^{*}\left\vert X, \tilde{U}, \tilde{V} \right.}\left(y_{0}, y_{1}\left\vert x, \overline{u}, v\right.\right)$ and $F_{\tilde{V}\left\vert X, \tilde{U} \right.}\left(v \left\vert x, \overline{u}\right.\right)$.
	
	\item[Step 9.] I define
	$$
	F_{\tilde{V}\left\vert X, \tilde{U} \right.}\left(v \left\vert x, \overline{u}\right.\right) = \left\lbrace
	\begin{array}{cl}
	m_{0}^{S}\left(x, \overline{u}\right) \cdot \dfrac{v}{Q\left(0, x\right)} & \text{if } v \leq Q\left(0, x\right) \\
	& \\
	m_{0}^{S}\left(x, \overline{u}\right) + \Delta_{S}\left(x, \overline{u}\right) \cdot \dfrac{v - Q\left(0, x\right)}{Q\left(1, x\right) - Q\left(0, x\right)} & \text{if } Q\left(0, x\right) < v \leq Q\left(1, x\right) \\
	& \\
	m_{1}^{S}\left(x, \overline{u}\right) + \left(1 - m_{1}^{S}\left(x, \overline{u}\right)\right)\dfrac{v - Q\left(1, x\right)}{1 - Q\left(1, x\right)} & \text{if } Q\left(1, x\right) < v
	\end{array}
	\right..
	$$
	
	\item[Step 10.] I write $F_{\tilde{Y}_{0}^{*}, \tilde{Y}_{1}^{*}\left\vert X, \tilde{U}, \tilde{V} \right.}\left(y_{0}, y_{1}\left\vert x, \overline{u}, v\right.\right) = F_{\tilde{Y}_{0}^{*}\left\vert X, \tilde{U}, \tilde{V} \right.}\left(y_{0}\left\vert x, \overline{u}, v\right.\right) \cdot F_{ \tilde{Y}_{1}^{*}\left\vert X, \tilde{U}, \tilde{V} \right.}\left(y_{1}\left\vert x, \overline{u}, v\right.\right)$, implying that I can separately define $F_{\tilde{Y}_{0}^{*}\left\vert X, \tilde{U}, \tilde{V} \right.}\left(y_{0}\left\vert x, \overline{u}, v\right.\right)$ and $F_{ \tilde{Y}_{1}^{*}\left\vert X, \tilde{U}, \tilde{V} \right.}\left(y_{1}\left\vert x, \overline{u}, v\right.\right)$.
	
	\item[Step 11.] When $\mathcal{Y}^{*}$ is a bounded interval (sub-case (a) in Assumption \ref{bounded}.3), I define
	$$
	F_{\tilde{Y}_{0}^{*}\left\vert X, \tilde{U}, \tilde{V} \right.}\left(y_{0}\left\vert x, \overline{u}, v\right.\right) = \left\lbrace
	\begin{array}{cl}
	\mathbf{1}\left\lbrace y_{0} \geq \dfrac{m_{0}^{Y}\left(\overline{x}, \overline{u}\right)}{m_{0}^{S}\left(\overline{x}, \overline{u}\right)} \right\rbrace & \text{if } v \leq Q\left(0, x\right) \\
	---------- & ------- \\
	\mathbf{1}\left\lbrace y_{0} \geq \dfrac{\underline{y}^{*} + \overline{y}^{*}}{2} \right\rbrace & \text{if } Q\left(0, x\right) < v
	\end{array}
	\right..
	$$

	When $\overline{y}^{*} = \max \left\lbrace y \in \mathcal{Y}^{*} \right\rbrace$ and $\underline{y}^{*} = \min \left\lbrace y \in \mathcal{Y}^{*} \right\rbrace$ (sub-case (b) in Assumption \ref{bounded}.3), I define
	$$
	F_{\tilde{Y}_{0}^{*}\left\vert X, \tilde{U}, \tilde{V} \right.}\left(y_{0}\left\vert x, \overline{u}, v\right.\right) = \left\lbrace
	\begin{array}{cl}
	0 & \text{if } y_{0} < \underline{y}^{*} \text{ and } v \leq Q\left(0, x\right) \\
	& \\
	1 - \dfrac{\dfrac{m_{0}^{Y}\left(\overline{x}, \overline{u}\right)}{m_{0}^{S}\left(\overline{x}, \overline{u}\right)} - \underline{y}^{*}}{\overline{y}^{*} - \underline{y}^{*}} & \text{if } \underline{y}^{*} \leq y_{0} < \overline{y}^{*} \text{ and } v \leq Q\left(0, x\right) \\
	& \\
	1 & \text{if } \overline{y}^{*} \leq y_{0} \text{ and } v \leq Q\left(0, x\right) \\
	---------- & -------------- \\
	\mathbf{1}\left\lbrace y_{0} \geq \overline{y}^{*} \right\rbrace & \text{if } Q\left(0, x\right) < v
	\end{array}
	\right..
	$$
	which are valid cumulative distribution functions because $\dfrac{m_{0}^{Y}\left(\overline{x}, \overline{u}\right)}{m_{0}^{S}\left(\overline{x}, \overline{u}\right)} \in \left[\underline{y}^{*}, \overline{y}^{*}\right]$.

	\item[Step 12.] When $\mathcal{Y}^{*}$ is a bounded interval (sub-case (a) in Assumption \ref{bounded}.3), I define
	$$
	F_{\tilde{Y}_{1}^{*}\left\vert X, \tilde{U}, \tilde{V} \right.}\left(y_{1}\left\vert x, \overline{u}, v\right.\right) = \left\lbrace
	\begin{array}{cl}
	\mathbf{1}\left\lbrace y_{1} \geq \alpha\left(\overline{x}, \overline{u}\right) \right\rbrace & \text{if } v \leq Q\left(0, x\right) \\
	-------- & ----------- \\
	\mathbf{1}\left\lbrace y_{1} \geq \gamma\left(\overline{x}, \overline{u}\right) \right\rbrace & \text{if } Q\left(0, x\right) < v \leq Q\left(1, x\right) \\
	-------- & ----------- \\
	\mathbf{1}\left\lbrace y_{1} \geq \dfrac{\underline{y}^{*} + \overline{y}^{*}}{2} \right\rbrace & \text{if } Q\left(1, x\right) < v
	\end{array}
	\right..
	$$
	
	When $\overline{y}^{*} = \max \left\lbrace y \in \mathcal{Y}^{*} \right\rbrace$ and $\underline{y}^{*} = \min \left\lbrace y \in \mathcal{Y}^{*} \right\rbrace$ (sub-case (b) in Assumption \ref{bounded}.3), I define
	$$
	F_{\tilde{Y}_{1}^{*}\left\vert X, \tilde{U}, \tilde{V} \right.}\left(y_{1}\left\vert x, \overline{u}, v\right.\right) = \left\lbrace
	\begin{array}{cl}
	0 & \text{if } y_{1} < \underline{y}^{*} \text{ and } v \leq Q\left(0, x\right) \\
	& \\
	1 - \dfrac{\alpha\left(\overline{x}, \overline{u}\right) - \underline{y}^{*}}{\overline{y}^{*} - \underline{y}^{*}} & \text{if } \underline{y}^{*} \leq y_{1} < \overline{y}^{*} \text{ and } v \leq Q\left(0, x\right) \\
	& \\
	1 & \text{if } \overline{y}^{*} \leq y_{1} \text{ and } v \leq Q\left(0, x\right) \\
	-------- & ------------------ \\
	0 & \text{if } y_{1} < \underline{y}^{*} \text{ and } Q\left(0, x\right) < v \leq Q\left(1, x\right) \\
	& \\
	1 - \dfrac{\gamma\left(\overline{x}, \overline{u}\right) - \underline{y}^{*}}{\overline{y}^{*} - \underline{y}^{*}} & \text{if } \underline{y}^{*} \leq y_{1} < \overline{y}^{*} \text{ and } Q\left(0, x\right) < v \leq Q\left(1, x\right) \\
	& \\
	1 & \text{if } \overline{y}^{*} \leq y_{1} \text{ and } Q\left(0, x\right) < v \leq Q\left(1, x\right) \\
	-------- & ------------------ \\
	\mathbf{1}\left\lbrace y_{1} \geq \overline{y}^{*} \right\rbrace & \text{if } Q\left(1, x\right) < v
	\end{array}
	\right..
	$$
	which are valid cumulative distribution functions because of equations \eqref{sanity1} and \eqref{sanity2}.
\end{enumerate}

Having defined the joint cumulative distribution function $F_{\tilde{Y}_{0}^{*}, \tilde{Y}_{1}^{*}, \tilde{U}, \tilde{V}, Z, X}$, note that equations \eqref{sanity1} and \eqref{sanity2}, $\dfrac{m_{0}^{Y}\left(\overline{x}, \overline{u}\right)}{m_{0}^{S}\left(\overline{x}, \overline{u}\right)} \in \left[\underline{y}^{*}, \overline{y}^{*}\right]$ and steps 7-12 ensure that equation \eqref{correctsupport} holds.

Now, I show, in three steps, that equation \eqref{faketarget} holds.
\begin{enumerate}	
	\item[Step 13.] Observe that
	\begin{align}
	& \mathbb{E}\left[\tilde{Y}_{1}^{*} \left\vert X = \overline{x}, \tilde{U} = \overline{u},  \tilde{S}_{0} = 1, \tilde{S}_{1} = 1 \right.\right] \nonumber \\
	& \hspace{30pt} = \mathbb{E}\left[\tilde{Y}_{1}^{*} \left\vert X = \overline{x}, \tilde{U} = \overline{u},  Q\left(0, \overline{x}\right) \geq \tilde{V} \right.\right] \nonumber \\
	& \hspace{50pt} \text{by the definition of } \tilde{S}_{0} \text{ and } \tilde{S}_{1} \nonumber \\
	& \hspace{30pt} = \dfrac{\mathbb{E}\left[\mathbf{1}\left\lbrace Q\left(0, \overline{x}\right) \geq \tilde{V} \right\rbrace \cdot \tilde{Y}_{1}^{*} \left\vert X = \overline{x}, \tilde{U} = \overline{u} \right.\right]}{\mathbb{P}\left[Q\left(0, \overline{x}\right) \geq \tilde{V} \left\vert X = \overline{x}, \tilde{U} = \overline{u} \right.\right]} \nonumber \\
	& \hspace{50pt} \text{by the definition of conditional expectation} \nonumber \\
	& \hspace{30pt} = \dfrac{\mathbb{E}\left[\mathbf{1}\left\lbrace Q\left(0, \overline{x}\right) \geq \tilde{V} \right\rbrace \cdot \mathbb{E}\left[\tilde{Y}_{1}^{*} \left\vert X = \overline{x}, \tilde{U} = \overline{u}, \tilde{V} \right. \right] \left\vert X = \overline{x}, \tilde{U} = \overline{u} \right.\right]}{\mathbb{P}\left[Q\left(0, \overline{x}\right) \geq \tilde{V} \left\vert X = \overline{x}, \tilde{U} = \overline{u} \right.\right]} \nonumber \\
	& \hspace{50pt} \text{by the Law of Iterated Expectations} \nonumber \\
	& \hspace{30pt} = \dfrac{\mathop{\mathlarger{\int\limits_{0}^{Q\left(0, \overline{x}\right)}}} \mathbb{E}\left[\tilde{Y}_{1}^{*} \left\vert X = \overline{x}, \tilde{U} = \overline{u}, \tilde{V} = v \right. \right] \, \text{d}F_{\tilde{V} \left\vert X, \tilde{U} \right.}\left(v \left\vert x, \overline{u} \right.\right)}{\mathbb{P}\left[Q\left(0, \overline{x}\right) \geq \tilde{V} \left\vert X = \overline{x}, \tilde{U} = \overline{u} \right.\right]} \nonumber \\
	& \hspace{50pt} \text{by the definition of expectation and by step 7} \nonumber \\
	& \hspace{30pt} = \dfrac{\mathop{\mathlarger{\int\limits_{0}^{Q\left(0, \overline{x}\right)}}} \alpha\left(\overline{x}, \overline{u}\right) \, \text{d}F_{\tilde{V} \left\vert X, \tilde{U} \right.}\left(v \left\vert x, \overline{u} \right.\right)}{\mathbb{P}\left[Q\left(0, \overline{x}\right) \geq \tilde{V} \left\vert X = \overline{x}, \tilde{U} = \overline{u} \right.\right]} \nonumber \\
	& \hspace{50pt} \text{by step 12} \nonumber \\
	& \label{step14a} \hspace{30pt} = \alpha\left(\overline{x}, \overline{u}\right) \\
	& \hspace{50pt} \text{by linearity of the Lebesgue Integral} \nonumber
	\end{align}
	
	\item[Step 14.] Similarly to the last step, notice that
	\begin{align}
	& \mathbb{E}\left[\tilde{Y}_{0}^{*} \left\vert X = \overline{x}, \tilde{U} = \overline{u},  \tilde{S}_{0} = 1, \tilde{S}_{1} = 1 \right.\right] \nonumber \\
	& \hspace{30pt} = \mathbb{E}\left[\tilde{Y}_{0}^{*} \left\vert X = \overline{x}, \tilde{U} = \overline{u},  Q\left(0, \overline{x}\right) \geq \tilde{V} \right.\right] \nonumber \\
	& \hspace{30pt} = \dfrac{\mathbb{E}\left[\mathbf{1}\left\lbrace Q\left(0, \overline{x}\right) \geq \tilde{V} \right\rbrace \cdot \tilde{Y}_{0}^{*} \left\vert X = \overline{x}, \tilde{U} = \overline{u} \right.\right]}{\mathbb{P}\left[Q\left(0, \overline{x}\right) \geq \tilde{V} \left\vert X = \overline{x}, \tilde{U} = \overline{u} \right.\right]} \nonumber \\
	& \hspace{30pt} = \dfrac{\mathbb{E}\left[\mathbf{1}\left\lbrace Q\left(0, \overline{x}\right) \geq \tilde{V} \right\rbrace \cdot \mathbb{E}\left[\tilde{Y}_{0}^{*} \left\vert X = \overline{x}, \tilde{U} = \overline{u}, \tilde{V} \right. \right] \left\vert X = \overline{x}, \tilde{U} = \overline{u} \right.\right]}{\mathbb{P}\left[Q\left(0, \overline{x}\right) \geq \tilde{V} \left\vert X = \overline{x}, \tilde{U} = \overline{u} \right.\right]} \nonumber \\
	& \hspace{30pt} = \dfrac{\mathop{\mathlarger{\int\limits_{0}^{Q\left(0, \overline{x}\right)}}} \mathbb{E}\left[\tilde{Y}_{0}^{*} \left\vert X = \overline{x}, \tilde{U} = \overline{u}, \tilde{V} = v \right. \right] \, \text{d}F_{\tilde{V} \left\vert X, \tilde{U} \right.}\left(v \left\vert x, \overline{u} \right.\right)}{\mathbb{P}\left[Q\left(0, \overline{x}\right) \geq \tilde{V} \left\vert X = \overline{x}, \tilde{U} = \overline{u} \right.\right]} \nonumber \\
	& \hspace{30pt} = \dfrac{ \mathop{\mathlarger{\mathlarger{\mathop{\mathlarger{\int\limits_{0}^{Q\left(0, \overline{x}\right)}}}}}} \dfrac{m_{0}^{Y}\left(\overline{x}, \overline{u}\right)}{m_{0}^{S}\left(\overline{x}, \overline{u}\right)} \, \text{d}F_{\tilde{V} \left\vert X, \tilde{U} \right.}\left(v \left\vert x, \overline{u} \right.\right)}{\mathbb{P}\left[Q\left(0, \overline{x}\right) \geq \tilde{V} \left\vert X = \overline{x}, \tilde{U} = \overline{u} \right.\right]} \text{ by step 11} \nonumber \\
	& \label{step15a} \hspace{30pt} = \dfrac{m_{0}^{Y}\left(\overline{x}, \overline{u}\right)}{m_{0}^{S}\left(\overline{x}, \overline{u}\right)}.
	\end{align}
		
	\item[Step 15.] Note that
	\begin{align*}
	\Delta_{\tilde{Y}^{*}}^{OO}\left(\overline{x}, \overline{u}\right) & \coloneqq \mathbb{E}\left[\tilde{Y}_{1}^{*} - \tilde{Y}_{0}^{*} \left\vert X = \overline{x}, \tilde{U} = \overline{u}, \tilde{S}_{0} = 1, \tilde{S}_{1} = 1 \right.\right] \\
	& = \mathbb{E}\left[\tilde{Y}_{1}^{*}\left\vert X = \overline{x}, \tilde{U} = \overline{u}, \tilde{S}_{0} = 1, \tilde{S}_{1} = 1 \right.\right] \\
	& \hspace{20pt} - \mathbb{E}\left[\tilde{Y}_{0}^{*}\left\vert X = \overline{x}, \tilde{U} = \overline{u}, \tilde{S}_{0} = 1, \tilde{S}_{1} = 1 \right.\right] \\
	& = \alpha\left(\overline{x}, \overline{u}\right) - \dfrac{m_{0}^{Y}\left(\overline{x}, \overline{u}\right)}{m_{0}^{S}\left(\overline{x}, \overline{u}\right)} \\
	& \hspace{20pt} \text{by equations } \eqref{step14a} \text{ and } \eqref{step15a} \\
	& = \delta\left(\overline{x}, \overline{u}\right) \\
	& \hspace{20pt} \text{by the definition of } \alpha\left(\overline{x}, \overline{u}\right),
	\end{align*}
	ensuring that equation \eqref{faketarget} holds.
\end{enumerate}

Finally, I show, in two steps, that equation \eqref{DataRestriction} holds.
\begin{enumerate}
	\item[Step 16.] Fix $\left(y, d, s, z\right) \in \mathbb{R}^{4}$ arbitrarily and observe that equation \eqref{DataRestriction} can be simplified to:
	\begin{align}
	& F_{\tilde{Y}, \tilde{D}, \tilde{S}, Z, X}\left(y, d, s, z, \overline{x} \right) = F_{Y, D, S, Z, X} \left(y, d, s, z, \overline{x}\right) \nonumber \\
	\Leftrightarrow & F_{\tilde{Y}, \tilde{D}, \tilde{S}, Z \left\vert X \right.}\left(y, d, s, z \left\vert \overline{x} \right.\right) \cdot F_{X}\left(\overline{x}\right) = F_{Y, D, S, Z \left\vert X \right.}\left(y, d, s, z \left\vert \overline{x} \right.\right) \cdot F_{X}\left(\overline{x}\right) \nonumber \\
	\Leftrightarrow & \label{DataRestrictionSimplified} F_{\tilde{Y}, \tilde{D}, \tilde{S}, Z \left\vert X \right.}\left(y, d, s, z \left\vert \overline{x} \right.\right) = F_{Y, D, S, Z \left\vert X \right.}\left(y, d, s, z \left\vert \overline{x} \right.\right)
	\end{align}
	
	\item[Step 17.] Notice that
	\begin{align}
	& F_{\tilde{Y}, \tilde{D}, \tilde{S}, Z \left\vert X \right.}\left(y, d, s, z \left\vert \overline{x} \right.\right) \nonumber \\
	& \hspace{30pt} = \mathbb{E}\left[\left.\mathbf{1}\left\lbrace \left(\tilde{Y}, \tilde{D}, \tilde{S}, Z\right) \leq \left(y, d, s, z\right) \right\rbrace \right\vert X = \overline{x} \right] \nonumber \\
	& \hspace{30pt} = \int \mathbf{1}\left\lbrace \left(\tilde{Y}, \tilde{D}, \tilde{S}, Z\right) \leq \left(y, d, s, z\right) \right\rbrace \, \text{d} F_{\tilde{Y}_{0}^{*}, \tilde{Y}_{1}^{*}, \tilde{U}, \tilde{V}, Z \left\vert X \right.}\left(y_{0}, y_{1}, u, v, z \left\vert \overline{x} \right.\right) \nonumber \\
	& \hspace{50pt} \text{because } \left(\tilde{Y}, \tilde{D}, \tilde{S}, Z\right) \text{ are functions of } \left(\tilde{Y}_{0}^{*}, \tilde{Y}_{1}^{*}, \tilde{U}, \tilde{V}, Z\right) \nonumber \\
	& \hspace{30pt} = \int \left[\mathbf{1}\left\lbrace \left(\tilde{Y}, \tilde{D}, \tilde{S}, Z\right) \leq \left(y, d, s, z\right) \right\rbrace \cdot \mathbf{1}\left\lbrace u \neq \overline{u} \right\rbrace\right] \, \text{d} F_{\tilde{Y}_{0}^{*}, \tilde{Y}_{1}^{*}, \tilde{U}, \tilde{V}, Z \left\vert X \right.}\left(y_{0}, y_{1}, u, v, z \left\vert \overline{x} \right.\right) \nonumber \\
	& \hspace{50pt} + \int \left[\mathbf{1}\left\lbrace \left(\tilde{Y}, \tilde{D}, \tilde{S}, Z\right) \leq \left(y, d, s, z\right) \right\rbrace \cdot \mathbf{1}\left\lbrace u = \overline{u} \right\rbrace\right] \, \text{d} F_{\tilde{Y}_{0}^{*}, \tilde{Y}_{1}^{*}, \tilde{U}, \tilde{V}, Z \left\vert X \right.}\left(y_{0}, y_{1}, u, v, z \left\vert \overline{x} \right.\right) \nonumber \\
	& \hspace{50pt} \text{by linearity of the Lebesgue Integral} \nonumber \\
	& \hspace{30pt} = \int \left[\mathbf{1}\left\lbrace \left(\tilde{Y}, \tilde{D}, \tilde{S}, Z\right) \leq \left(y, d, s, z\right) \right\rbrace \cdot \mathbf{1}\left\lbrace u \neq \overline{u} \right\rbrace\right] \, \text{d} F_{\tilde{Y}_{0}^{*}, \tilde{Y}_{1}^{*}, \tilde{U}, \tilde{V}, Z \left\vert X \right.}\left(y_{0}, y_{1}, u, v, z \left\vert \overline{x} \right.\right) \nonumber \\
	& \hspace{50pt} \text{because } \mathbb{P}\left[\left. \tilde{U} = \overline{u} \right\vert X = \overline{x}\right] = 0 \text{ by step 5} \nonumber \\
	& \hspace{30pt} = \int \left[\mathbf{1}\left\lbrace \left(Y, D, S, Z\right) \leq \left(y, d, s, z\right) \right\rbrace \cdot \mathbf{1}\left\lbrace u \neq \overline{u} \right\rbrace\right] \, \text{d} F_{Y_{0}^{*}, Y_{1}^{*}, U, V, Z \left\vert X \right.}\left(y_{0}, y_{1}, u, v, z \left\vert \overline{x} \right.\right) \nonumber \\
	& \hspace{50pt} \text{by steps 2-6} \nonumber \\
	& \hspace{30pt} = \int \left[\mathbf{1}\left\lbrace \left(Y, D, S, Z\right) \leq \left(y, d, s, z\right) \right\rbrace \cdot \mathbf{1}\left\lbrace u \neq \overline{u} \right\rbrace\right] \, \text{d} F_{Y_{0}^{*}, Y_{1}^{*}, U, V, Z \left\vert X \right.}\left(y_{0}, y_{1}, u, v, z \left\vert \overline{x} \right.\right) \nonumber \\
	& \hspace{50pt} + \int \left[\mathbf{1}\left\lbrace \left(Y, D, S, Z\right) \leq \left(y, d, s, z\right) \right\rbrace \cdot \mathbf{1}\left\lbrace u = \overline{u} \right\rbrace\right] \, \text{d} F_{Y_{0}^{*}, Y_{1}^{*}, U, V, Z \left\vert X \right.}\left(y_{0}, y_{1}, u, v, z \left\vert \overline{x} \right.\right) \nonumber \\
	& \hspace{50pt} \text{because } \mathbb{P}\left[U = \overline{u} \left\vert X = \overline{x} \right.\right] = 0 \nonumber \\
	& \hspace{30pt} = \int \mathbf{1}\left\lbrace \left(Y, D, S, Z\right) \leq \left(y, d, s, z\right) \right\rbrace \, \text{d} F_{Y_{0}^{*}, Y_{1}^{*}, U, V, Z \left\vert X \right.}\left(y_{0}, y_{1}, u, v, z \left\vert \overline{x} \right.\right) \nonumber \\
	& \hspace{50pt} \text{by linearity of the Lebesgue Integral} \nonumber \\
	& \hspace{30pt} = \mathbb{E}\left[\left.\mathbf{1}\left\lbrace \left(Y, D, S, Z\right) \leq \left(y, d, s, z\right) \right\rbrace \right\vert X = \overline{x} \right] \nonumber \\
	& \hspace{30pt} = F_{Y, D, S, Z \left\vert X \right.}\left(y, d, s, z \left\vert \overline{x} \right.\right), \nonumber
	\end{align}
	implying equation \eqref{DataRestriction} according to equation \eqref{DataRestrictionSimplified}.
\end{enumerate}

I can then conclude that Theorem \ref{sharpbounds} is true. $\blacksquare$

As a remark, the above constructive proof defines random variables $\left(\tilde{Y}_{0}^{*}, \tilde{Y}_{1}^{*}, \tilde{U}, \tilde{V}\right)$ that matches other important moments of the true data generating process besides the ones imposed by Theorem \ref{sharpbounds}.

\begin{enumerate}
	\item[Remark 1.] Note that
	\begin{align}
	\mathbb{P}\left[\tilde{S}_{0} = 1, \tilde{S}_{1} = 1 \left\vert X = \overline{x}, \tilde{U} = \overline{u} \right.\right] & = \mathbb{P}\left[Q\left(0, \overline{x}\right) \geq \tilde{V} \left\vert X = \overline{x}, \tilde{U} = \overline{u} \right.\right] \nonumber \\
	& \hspace{20pt} \text{by the definition of } \tilde{S}_{0} \text{ and } \tilde{S}_{1} \nonumber \\
	& \label{step13a} = m_{0}^{S}\left(\overline{x}, \overline{u}\right) \\
	& \hspace{20pt} \text{by step 9}, \nonumber
	\end{align}
	and, similarly, that
	\begin{align}
	\mathbb{P}\left[\tilde{S}_{0} = 0, \tilde{S}_{1} = 1 \left\vert X = \overline{x}, \tilde{U} = \overline{u} \right.\right] & = \mathbb{P}\left[Q\left(1, \overline{x}\right) \geq \tilde{V} > Q\left(0, \overline{x}\right) \left\vert X = \overline{x}, \tilde{U} = \overline{u} \right.\right] \nonumber \\
	& \label{step13b} = \Delta_{S}\left(\overline{x}, \overline{u}\right).
	\end{align}

	\item[Remark 2.] Analogously to equation \eqref{step14a}, I find that
	\begin{equation}\label{step14b}
	\mathbb{E}\left[\tilde{Y}_{1}^{*} \left\vert X = \overline{x}, \tilde{U} = \overline{u},  \tilde{S}_{0} = 0, \tilde{S}_{1} = 1 \right.\right] = \gamma\left(\overline{x}, \overline{u}\right).
	\end{equation}
	
	\item[Remark 3.] Combining equations \eqref{decomposition}, \eqref{step14a} and \eqref{step13a}-\eqref{step14b}, I have that
	\begin{equation*}
	\mathbb{E}\left[\tilde{Y}_{1} \left\vert X = x, \tilde{U} = \overline{u} \right. \right] = m_{1}^{Y}\left(\overline{x}, \overline{u}\right).
	\end{equation*}
	
	\item[Remark 4.] Similarly to step 17, I can show that $F_{\tilde{Y}_{0}^{*}, \tilde{Y}_{1}^{*}, \tilde{V}}\left(y_{0}, y_{1}, v\right) = F_{Y_{0}^{*}, Y_{1}^{*}, V}\left(y_{0}, y_{1}, v\right),$ implying that $\mathbb{E}\left[\left\vert \tilde{Y}_{d}^{*} \right\vert\right] < + \infty$ and $\mathbb{E}\left[\left( \tilde{Y}_{d}^{*} \right)^{2}\right] < + \infty$ for any $d \in \left\lbrace 0, 1 \right\rbrace$.
\end{enumerate}

\subsubsection{Proof under Assumptions \ref{bounded}.1 and \ref{bounded}.2}
I, now, prove Theorem \ref{sharpbounds} under Assumptions \ref{bounded}.1 and \ref{bounded}.2. In particular, I focus on the case $\underline{y}^{*} > - \infty$ and $\overline{y}^{*} = + \infty$ (Assumption \ref{bounded}.1) because it is more common in empirical applications. The case $\underline{y}^{*} = - \infty$ and $\overline{y}^{*} < + \infty$ (Assumption \ref{bounded}.2) is symmetric.

The proof under Assumption \ref{bounded}.1 is equal to the proof under Assumption \ref{bounded}.3(a). The only difference is that
\begin{equation}
\begin{array}{cll}
& \delta\left(\overline{x}, \overline{u}\right) & \in \left(\underline{\Delta_{Y^{*}}^{OO}}\left(\overline{x}, \overline{u}\right), \overline{\Delta_{Y^{*}}^{OO}}\left(\overline{x}, \overline{u}\right)\right) \\
& & \\
\Leftrightarrow & \alpha\left(\overline{x}, \overline{u}\right) & \in \left(\underline{y}^{*}, \dfrac{m_{1}^{Y}\left(x, u\right) - \underline{y}^{*} \cdot \Delta_{S}\left(x, u\right)}{m_{0}^{S}\left(x, u\right)}\right) \\
& & \\
& & \subseteq \left(\underline{y}^{*}, +\infty\right),
\end{array}
\end{equation}
and that
\begin{equation}
\begin{array}{cll}
& \alpha\left(\overline{x}, \overline{u}\right) & \in \left(\underline{y}^{*}, \dfrac{m_{1}^{Y}\left(x, u\right) - \underline{y}^{*} \cdot \Delta_{S}\left(x, u\right)}{m_{0}^{S}\left(x, u\right)}\right) \\
& & \\
\Leftrightarrow & \gamma\left(\overline{x}, \overline{u}\right) & \in \left(\underline{y}^{*},  + \infty\right).
\end{array}
\end{equation}

\subsection{Proof of Proposition \ref{partialnecessary}}\label{proofnecessary}
This proof is essentially the same proof of Theorem \ref{sharpbounds} under Assumption \ref{bounded}.3.(a) (appendix \ref{proofsharp3}). Fix $\overline{u} \in \left[0, 1\right]$, $\overline{x} \in \mathcal{X}$ and $\delta\left(\overline{x}, \overline{u}\right) \in \mathbb{R}$ arbitrarily. For brevity, define $\alpha\left(\overline{x}, \overline{u}\right) \coloneqq \delta\left(\overline{x}, \overline{u}\right) + \dfrac{m_{0}^{Y}\left(\overline{x}, \overline{u}\right)}{m_{0}^{S}\left(\overline{x}, \overline{u}\right)}$ and $\gamma\left(\overline{x}, \overline{u}\right) \coloneqq \dfrac{m_{1}^{Y}\left(\overline{x}, \overline{u}\right) - \alpha\left(\overline{x}, \overline{u}\right) \cdot m_{0}^{S}\left(\overline{x}, \overline{u}\right)}{\Delta_{S}\left(\overline{x}, \overline{u}\right)}$. Note that $\alpha\left(\overline{x}, \overline{u}\right) \in \mathbb{R} = \mathcal{Y}^{*}$ and $\gamma\left(\overline{x}, \overline{u}\right) \in \mathbb{R} = \mathcal{Y}^{*}$.

I define the random variables $\left(\tilde{Y}_{0}^{*}, \tilde{Y}_{1}^{*}, \tilde{U}, \tilde{V}\right)$ using the joint cumulative distribution function $F_{\tilde{Y}_{0}^{*}, \tilde{Y}_{1}^{*}, \tilde{U}, \tilde{V}, Z, X}$ described by steps 1-12 in Appendix \ref{proofsharp3} for the case of convex support $\mathcal{Y}^{*}$. Note that equation \eqref{correctsupportP} is trivially true when $\mathcal{Y}^{*} = \mathbb{R}$. Moreover, equations \eqref{faketargetP} and \eqref{DataRestrictionP} are valid by the argument described in steps 13-17 in Appendix \ref{proofsharp3}.

I can then conclude that Proposition \ref{partialnecessary} is true. $\blacksquare$

\subsection{Comparing Corollaries \ref{MTEbounds} and \ref{boundmeandomG}}\label{comparing}
In order to compare Corollaries \ref{MTEbounds} and \ref{boundmeandomG}, I first prove that the second corollary provides lower bounds that are weakly larger than the lower bounds provided by the first corollary.

Fix $u \in \left[0, 1\right]$ and $x \in \mathcal{X}$ arbitrarily and note that $$\dfrac{m_{1}^{Y}\left(x, u\right)}{m_{1}^{S}\left(x, u\right)} = \dfrac{\mathbb{E}\left[\left. S_{1} \cdot Y_{1}^{*} \right\vert X = x, U = u\right]}{\mathbb{P}\left[\left. S_{1} = 1 \right\vert X = x, U = u\right]} = \mathbb{E}\left[\left. Y_{1}^{*} \right\vert X = x, U = u, S_{1} = 1\right],$$ implying that $\underline{y}^{*} \leq \dfrac{m_{1}^{Y}\left(x, u\right)}{m_{1}^{S}\left(x, u\right)} \leq \overline{y}^{*}$. Consequently, observe that $$\dfrac{m_{1}^{Y}\left(x, u\right) - \overline{y}^{*} \cdot \Delta_{S}\left(x, u\right)}{m_{0}^{S}\left(x, u\right)} \leq \dfrac{m_{1}^{Y}\left(x, u\right) - \dfrac{m_{1}^{Y}\left(x, u\right)}{m_{1}^{S}\left(x, u\right)} \cdot \Delta_{S}\left(x, u\right)}{m_{0}^{S}\left(x, u\right)} = \dfrac{m_{1}^{Y}\left(x, u\right)}{m_{1}^{S}\left(x, u\right)}.$$

The argument above shows that Corollary \ref{boundmeandomG} provides bounds that are weakly tighter than the ones provided by Corollary \ref{MTEbounds}. They will be strictly tighter if $\underline{y}^{*} < \dfrac{m_{1}^{Y}\left(x, u\right)}{m_{1}^{S}\left(x, u\right)} < \overline{y}^{*}$. Moreover, the improvement generated by the Mean Dominance Assumption \ref{meandominanceG} is proportional to $\dfrac{m_{1}^{Y}\left(x, u\right)}{m_{1}^{S}\left(x, u\right)} - \underline{y}^{*}$ and $\overline{y}^{*} - \dfrac{m_{1}^{Y}\left(x, u\right)}{m_{1}^{S}\left(x, u\right)}$ because $$\dfrac{m_{1}^{Y}\left(x, u\right)}{m_{1}^{S}\left(x, u\right)} - \dfrac{m_{1}^{Y}\left(x, u\right) - \overline{y}^{*} \cdot \Delta_{S}\left(x, u\right)}{m_{0}^{S}\left(x, u\right)} = \dfrac{\Delta_{S}\left(x, u\right) \cdot \left(\overline{y}^{*} \cdot m_{1}^{S}\left(x, u\right) - m_{1}^{Y}\left(x, u\right)\right)}{m_{0}^{S}\left(x, u\right) \cdot m_{1}^{S}\left(x, u\right)}.$$

\subsection{Proof of Proposition \ref{sharpboundsmeanG}}\label{proofsharpboundsmeanG}
This proof is essentially the same proof of Theorem \ref{sharpbounds} and Proposition \ref{partialnecessary} (Appendices \ref{proofsharp} and \ref{proofnecessary}). Fix $\overline{u} \in \left[0, 1\right]$, $\overline{x} \in \mathcal{X}$ and $\delta\left(\overline{x}, \overline{u}\right) \in \left(\underline{\Delta_{Y^{*}}^{OO}}\left(\overline{x}, \overline{u}\right), \overline{\Delta_{Y^{*}}^{OO}}\left(\overline{x}, \overline{u}\right)\right)$ arbitrarily. For brevity, define $\alpha\left(\overline{x}, \overline{u}\right) \coloneqq \delta\left(\overline{x}, \overline{u}\right) + \dfrac{m_{0}^{Y}\left(\overline{x}, \overline{u}\right)}{m_{0}^{S}\left(\overline{x}, \overline{u}\right)}$ and $\gamma\left(\overline{x}, \overline{u}\right) \coloneqq \dfrac{m_{1}^{Y}\left(\overline{x}, \overline{u}\right) - \alpha\left(\overline{x}, \overline{u}\right) \cdot m_{0}^{S}\left(\overline{x}, \overline{u}\right)}{\Delta_{S}\left(\overline{x}, \overline{u}\right)}$. The only difference from the previous proofs is that, now, 
\begin{align}
\mathbb{E}\left[\tilde{Y}_{1}^{*} \left\vert X = \overline{x}, \tilde{U} = \overline{u}, \tilde{S}_{0} = 1, \tilde{S}_{1} = 1 \right.\right] & = \alpha\left(\overline{x}, \overline{u}\right) \nonumber \\
& \hspace{20pt} \text{ by equation } \eqref{step14a} \nonumber \\
& \label{MDresult} \geq \dfrac{m_{1}^{Y}\left(\overline{x}, \overline{u}\right)}{m_{1}^{S}\left(\overline{x}, \overline{u}\right)} \\
& \hspace{20pt} \text{because }  \delta\left(\overline{x}, \overline{u}\right)  \geq \underline{\Delta_{Y^{*}}^{OO}}\left(\overline{x}, \overline{u}\right) \nonumber
\end{align}
and that
\begin{align}
\mathbb{E}\left[\tilde{Y}_{1}^{*} \left\vert X = \overline{x}, \tilde{U} = \overline{u}, \tilde{S}_{0} = 0, \tilde{S}_{1} = 1 \right.\right] & = \gamma\left(\overline{x}, \overline{u}\right) \nonumber \\
& \hspace{20pt} \text{ by equation } \eqref{step14b} \nonumber \\
& = \dfrac{m_{1}^{Y}\left(\overline{x}, \overline{u}\right) - \alpha\left(\overline{x}, \overline{u}\right) \cdot m_{0}^{S}\left(\overline{x}, \overline{u}\right)}{\Delta_{S}\left(\overline{x}, \overline{u}\right)} \nonumber \\
& \leq \dfrac{m_{1}^{Y}\left(\overline{x}, \overline{u}\right) - \dfrac{m_{1}^{Y}\left(\overline{x}, \overline{u}\right)}{m_{1}^{S}\left(\overline{x}, \overline{u}\right)} \cdot m_{0}^{S}\left(\overline{x}, \overline{u}\right)}{\Delta_{S}\left(\overline{x}, \overline{u}\right)} \nonumber \\
& \hspace{20pt} \text{by equation } \eqref{MDresult} \nonumber \\
& = \dfrac{m_{1}^{Y}\left(\overline{x}, \overline{u}\right)}{m_{1}^{S}\left(\overline{x}, \overline{u}\right)}, \nonumber
\end{align}
implying that the model restriction \eqref{fakemeandominanceG} holds.

\subsection{Proof of Equations \eqref{identifyQ0} and \eqref{identifyQ1}}\label{proofQ0Q1}
I first prove that equation \eqref{identifyQ0} holds. For any $A \in \left\lbrace Y, S \right\rbrace$, observe that
\begin{align*}
\mathbb{E}\left[A \left\vert X = x, P\left(W\right) = p, D = 0 \right.\right] & = \mathbb{E}\left[A_{0} \left\vert X = x, P\left(W\right) = p, D = 0 \right.\right] \\
& = \mathbb{E}\left[A_{0} \left\vert X = x, P\left(W\right) = p, P\left(W\right) < U \right.\right] \\
& \hspace{20pt} \text{by equation } \eqref{treatment} \\
& = \mathbb{E}\left[A_{0} \left\vert X = x, P\left(W\right) = p, p < U \right.\right] \\
& = \mathbb{E}\left[A_{0} \left\vert X = x, p < U \right.\right] \\
& \hspace{20pt} \text{by assumption } \eqref{ind} \\
& = \dfrac{\mathbb{E}\left[\mathbf{1}\left\lbrace p < U \right\rbrace \cdot A_{0}  \left\vert X = x\right. \right]}{\mathbb{P}\left[p < U \left\vert X = x \right. \right]} \\
& \hspace{20pt} \text{by the definition of conditional expectation} \\
& = \dfrac{\mathbb{E}\left[\mathbf{1}\left\lbrace p < U \right\rbrace \cdot A_{0} \left\vert X = x \right. \right]}{1 - p} \\
& \hspace{20pt} \text{ by the normalization } U \left\vert X \right. \sim \text{Uniform}\left[0, 1\right] \\
& = \dfrac{\mathbb{E}\left[\mathbf{1}\left\lbrace p < U \right\rbrace \cdot \mathbb{E}\left[A_{0} \left\vert X = x, U = u \right.\right] \left\vert X = x \right. \right]}{1 - p} \\
& \hspace{20pt} \text{by the Law of Iterated Expectations} \\
& = \dfrac{\int_{p}^{1} m_{0}^{A}\left(x, u\right) \, \text{d} u}{1 - p} \\
& \hspace{20pt} \text{by the normalization } U \left\vert X \right. \sim \text{Uniform}\left[0, 1\right], \\
\end{align*}
implying that
\begin{align*}
\dfrac{\partial \mathbb{E}\left[A \left\vert X = x, P\left(W\right) = p, D = 0 \right.\right]}{\partial p} & = \dfrac{- m_{0}^{A}\left(x, p\right)}{1 - p}  + \dfrac{\mathbb{E}\left[\mathbf{1}\left\lbrace p < U \right\rbrace \cdot A_{0} \left\vert X = x \right. \right]}{\left(1 - p\right)^{2}} \\
& = \dfrac{- m_{0}^{A}\left(x, p\right)}{1 - p}  + \dfrac{\mathbb{E}\left[\mathbf{1}\left\lbrace p < U \right\rbrace \cdot A_{0} \left\vert X = x \right. \right]}{\left(1 - p\right) \cdot \mathbb{P}\left[p < U \left\vert X = x \right. \right]} \\
& \hspace{20pt} \text{by the normalization } U \left\vert X \right. \sim \text{Uniform}\left[0, 1\right] \\
& = \dfrac{- m_{0}^{A}\left(x, p\right)}{1 - p} + \dfrac{\mathbb{E}\left[A \left\vert X = x, P\left(W\right) = p, D = 0 \right.\right]}{1 - p}
\end{align*}
Rearranging the last expression, I can derive equation \eqref{identifyQ0}:
\begin{align*}
m_{0}^{A}\left(x, p\right) & = \mathbb{E}\left[A \left\vert X = x, P\left(W\right) = p, D = 0 \right.\right] \\
& \hspace{20pt} - \dfrac{\partial \mathbb{E}\left[A \left\vert X = x, P\left(W\right) = p, D = 0 \right.\right]}{\partial p} \cdot \left(1 - p\right).
\end{align*}

Equation \eqref{identifyQ1} is derived in an analogous way using $\mathbb{E}\left[A \left\vert X = x, P\left(W\right) = p, D = 1 \right.\right]$ and its derivative with respect to the propensity score. $\blacksquare$

\subsection{Proof of Equations \eqref{nonseparable0} and \eqref{nonseparable1}} \label{parametricproof}
We first prove that equation \eqref{nonseparable0} holds. For any $A \in \left\lbrace Y, S \right\rbrace$, observe that
\begin{align*}
\mathbb{E}\left[A \left\vert X = x, P\left(W\right) = p_{n}, D = 0 \right.\right] & = \dfrac{\int_{p_{n}}^{1} m_{0}^{A}\left(x, u\right) \, \text{d} u}{1 - p_{n}} \\
& \hspace{20pt} \text{according to Appendix } \ref{proofQ0Q1} \\
& = \dfrac{\int_{p_{n}}^{1} M^{A}\left(u, \boldsymbol{\theta}_{x,0}^{A}\right) \, \text{d} u}{1 - p_{n}} \\
& \hspace{20pt} \text{by equation } \eqref{polynomialMTR}.
\end{align*}

Equation \eqref{nonseparable1} is derived in an analogous way using $\mathbb{E}\left[A \left\vert X = x, P\left(W\right) = p_{n}, D = 1 \right.\right]$. $\blacksquare$

\subsection{Parametric Bounds for the $\mathbf{MTE^{OO}}$}\label{OLSproof}

\subsubsection{Connecting OLS Model \eqref{OLSmodel} to the Minimization Problem \eqref{pseudotrue}} 
Note that, for any $z \in \left\lbrace 0, 1 \right\rbrace$,
\begin{align}
\dfrac{\int_{P\left(z\right)}^{1} M^{A}\left(u, \boldsymbol{\theta}_{0}^{A}\right) \, \text{d} u}{1 - P\left(z\right)} & = \dfrac{\int_{P\left(z\right)}^{1} \left(\theta_{0, 0}^{A} \cdot \left(1 - u\right) + \theta_{0, 1}^{A} \cdot u\right) \, \text{d} u}{1 - P\left(z\right)} \nonumber \\
& = \dfrac{\theta_{0, 0}^{A} + \theta_{0, 1}^{A}}{2} + \dfrac{- \theta_{0, 0}^{A} + \theta_{0, 1}^{A}}{2} \cdot P\left(z\right) \nonumber \\
& \label{OLSd0} = a_{0}^{A} + b_{0}^{A} \cdot P\left(z\right),
\end{align}
where $a_{0}^{A} \coloneqq \dfrac{\theta_{0, 0}^{A} + \theta_{0, 1}^{A}}{2}$ and $b_{0}^{A} \coloneqq \dfrac{-\theta_{0, 0}^{A} + \theta_{0, 1}^{A}}{2}$, and
\begin{align}
\dfrac{\int_{0}^{P\left(z\right)} M^{A}\left(u, \boldsymbol{\theta}_{1}^{A}\right) \, \text{d} u}{P\left(z\right)} & = \dfrac{\int_{0}^{P\left(z\right)} \left(\theta_{1, 0}^{A} \cdot \left(1 - u\right) + \theta_{1, 1}^{A} \cdot u\right) \, \text{d} u}{P\left(z\right)} \nonumber \\
& = \theta_{1, 0}^{A} + \dfrac{-\theta_{1, 0}^{A} + \theta_{1, 1}^{A}}{2} \cdot P\left(z\right) \nonumber \\
& \label{OLSd1} = a_{1}^{A} + b_{1}^{A} \cdot P\left(z\right),
\end{align}
where $a_{1}^{A} \coloneqq \theta_{1, 0}^{A}$ and $b_{1}^{A} \coloneqq \dfrac{-\theta_{1, 0}^{A} + \theta_{1, 1}^{A}}{2}$.

When I combine equations \eqref{pseudotrue}, \eqref{OLSd0} and \eqref{OLSd1}, I find the OLS model given by equation \eqref{OLSmodel}. Moreover, by solving the linear system given by $a_{0}^{A} = \dfrac{\theta_{0, 0}^{A} + \theta_{0, 1}^{A}}{2}$, $b_{0}^{A} = \dfrac{-\theta_{0, 0}^{A} + \theta_{0, 1}^{A}}{2}$, $a_{1}^{A} = \theta_{1, 0}^{A}$ and $b_{1}^{A} = \dfrac{-\theta_{1, 0}^{A} + \theta_{1, 1}^{A}}{2}$, I find that $\theta_{0, 0}^{A} = a_{0}^{A} - b_{0}^{A}$, $\theta_{0, 1}^{A} = a_{0}^{A} + b_{0}^{A}$, $\theta_{1, 0}^{A} = a_{1}^{A}$, $\theta_{1, 1}^{A} = a_{1}^{A} + 2 \cdot b_{1}^{A}$.

\subsubsection{Explicit Formulas for the Bounds in Corollaries \ref{MTEbounds} and \ref{boundmeandomG}}

When the marginal treatment response functions are given by the parametric model described in Subsection \ref{EstimationSandY} and the outcome of interested is bounded below by zero (e.g., hourly wages), Corollary \ref{MTEbounds} implies that, for any $x \in \mathcal{X}$ and $u \in \left[0, 1\right]$,
\begin{equation}
\Delta_{Y^{*}}^{OO}\left(x, u\right) \geq - \dfrac{\theta_{0, 0}^{Y} \cdot \left(1 - u\right) + \theta_{0, 1}^{Y} \cdot u}{\theta_{0, 0}^{S} \cdot \left(1 - u\right) + \theta_{0, 1}^{S} \cdot u},
\end{equation}
and
\begin{equation}
\Delta_{Y^{*}}^{OO}\left(x, u\right) \leq \dfrac{\theta_{1, 0}^{Y} \cdot \left(1 - u\right) + \theta_{1, 1}^{Y} \cdot u}{\theta_{0, 0}^{S} \cdot \left(1 - u\right) + \theta_{0, 1}^{S} \cdot u} - \dfrac{\theta_{0, 0}^{Y} \cdot \left(1 - u\right) + \theta_{0, 1}^{Y} \cdot u}{\theta_{0, 0}^{S} \cdot \left(1 - u\right) + \theta_{0, 1}^{S} \cdot u}.
\end{equation}

In the same context, Corollary \ref{boundmeandomG} implies that
\begin{equation}
\Delta_{Y^{*}}^{OO}\left(x, u\right) \geq \dfrac{\theta_{1, 0}^{Y} \cdot \left(1 - u\right) + \theta_{1, 1}^{Y} \cdot u}{\theta_{1, 0}^{S} \cdot \left(1 - u\right) + \theta_{1, 1}^{S} \cdot u} - \dfrac{\theta_{0, 0}^{Y} \cdot \left(1 - u\right) + \theta_{0, 1}^{Y} \cdot u}{\theta_{0, 0}^{S} \cdot \left(1 - u\right) + \theta_{0, 1}^{S} \cdot u},
\end{equation}
and
\begin{equation}
\Delta_{Y^{*}}^{OO}\left(x, u\right) \leq \dfrac{\theta_{1, 0}^{Y} \cdot \left(1 - u\right) + \theta_{1, 1}^{Y} \cdot u}{\theta_{0, 0}^{S} \cdot \left(1 - u\right) + \theta_{0, 1}^{S} \cdot u} - \dfrac{\theta_{0, 0}^{Y} \cdot \left(1 - u\right) + \theta_{0, 1}^{Y} \cdot u}{\theta_{0, 0}^{S} \cdot \left(1 - u\right) + \theta_{0, 1}^{S} \cdot u}.
\end{equation}

\pagebreak

\setcounter{table}{0}
\renewcommand\thetable{B.\arabic{table}}

\setcounter{figure}{0}
\renewcommand\thefigure{B.\arabic{figure}}

\setcounter{equation}{0}
\renewcommand\theequation{B.\arabic{equation}}

\setcounter{theorem}{0}
\renewcommand\thetheorem{B.\arabic{theorem}}

\section{Bounds for the MTR within the Observed-only-when-treated subpopulation}\label{observedonlywhentreated}
Here, I use the same notation of Section \ref{bounds} and I am interested in the following target parameter: $m_{1}^{NO}\left(x, u\right) \coloneqq \mathbb{E}\left[Y_{1}^{*} \left\vert X = x, U = u, S_{0} = 0, S_{1} = 1 \right.\right]$, which is equal to $\Delta_{Y}^{NO}$ according to equation \eqref{DeltaY_NO}. Following the same steps of the proof of Proposition \ref{boundsY1Proposition}, I can show that:
\begin{corollary}\label{boundsother}
	Suppose that the $m_{0}^{Y}\left(x, u\right)$, $m_{1}^{Y}\left(x, u\right)$, $m_{0}^{S}\left(x, u\right)$ and $\Delta_{S}\left(x, u\right)$ are point identified.
	
	Under assumptions \ref{ind}-\ref{support}, \ref{bounded}.1 and \ref{increasing_sample_selection}, the bounds for $m_{1}^{NO}\left(x, u\right)$ are given by
	\begin{equation}
	\underline{m_{1}^{NO}}\left(x, u\right) \coloneqq \underline{y}^{*} \leq m_{1}^{NO}\left(x, u\right) \leq \dfrac{m_{1}^{Y}\left(x, u\right) - \underline{y}^{*} \cdot m_{0}^{S}\left(x, u\right)}{\Delta_{S}\left(x, u\right)} \eqqcolon \overline{m_{1}^{NO}}\left(x, u\right).
	\end{equation}
	
	Under assumptions \ref{ind}-\ref{support}, \ref{bounded}.2 and \ref{increasing_sample_selection}, the bounds for $m_{1}^{NO}\left(x, u\right)$ are given by
	\begin{equation}
	\underline{m_{1}^{NO}}\left(x, u\right) \coloneqq \dfrac{m_{1}^{Y}\left(x, u\right) - \overline{y}^{*} \cdot m_{0}^{S}\left(x, u\right)}{\Delta_{S}\left(x, u\right)} \leq m_{1}^{NO}\left(x, u\right) \leq \overline{y}^{*} \eqqcolon \overline{m_{1}^{NO}}\left(x, u\right).
	\end{equation}
	
	Under assumptions \ref{ind}-\ref{support}, \ref{bounded}.3 (sub-case (a) or (b)) and \ref{increasing_sample_selection}, the bounds for $m_{1}^{NO}\left(x, u\right)$ are given by
	\begin{equation}
	\underline{m_{1}^{NO}}\left(x, u\right) \coloneqq \dfrac{m_{1}^{Y}\left(x, u\right) - \overline{y}^{*} \cdot m_{0}^{S}\left(x, u\right)}{\Delta_{S}\left(x, u\right)} \leq m_{1}^{NO}\left(x, u\right) \leq \dfrac{m_{1}^{Y}\left(x, u\right) - \underline{y}^{*} \cdot m_{0}^{S}\left(x, u\right)}{\Delta_{S}\left(x, u\right)} \eqqcolon \overline{m_{1}^{NO}}\left(x, u\right).
	\end{equation}
\end{corollary}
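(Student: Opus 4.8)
The plan is to mimic the proof of Proposition \ref{boundsY1Proposition} in Appendix \ref{proofboundsY1}, but running the key identity in the opposite direction. First I would note that by equation \eqref{DeltaY_NO} the target parameter is literally $\Delta_Y^{NO}$, i.e. $m_1^{NO}(x,u) = \mathbb{E}\left[Y_1^* \mid X=x, U=u, S_0 = 0, S_1 = 1\right] = \Delta_Y^{NO}(x,u)$; hence, by the definition of $\underline{y}^*$ and $\overline{y}^*$ (and Assumptions \ref{finite} and \ref{support}), one has the ``naive'' bounds $\underline{y}^* \le m_1^{NO}(x,u) \le \overline{y}^*$, exactly in the spirit of equation \eqref{naturalbounds}.

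Second, I would extract from the decomposition \eqref{decomposition} an identity linking $m_1^{NO}$ to the point-identified objects. Since $\Delta_S(x,u) = \mathbb{P}\left[S_0 = 0, S_1 = 1 \mid X=x, U=u\right] > 0$ by equation \eqref{deltaS} and Assumption \ref{increasing_sample_selection} --- this positivity being precisely what makes the observed-only-when-treated stratum, and therefore $m_1^{NO}(x,u)$, well-defined --- I can rearrange \eqref{decomposition} (equivalently, \eqref{m1YstarOO}) to obtain
\[
m_1^{NO}(x,u) = \Delta_Y^{NO}(x,u) = \frac{m_1^Y(x,u) - \mathbb{E}\left[Y_1^* \mid X=x, U=u, S_0 = 1, S_1 = 1\right]\cdot m_0^S(x,u)}{\Delta_S(x,u)}.
\]
Because $m_0^S(x,u) = \mathbb{P}\left[S_0 = 1 \mid X=x, U=u\right] \ge 0$ and $\Delta_S(x,u) > 0$, the right-hand side is affine and non-increasing in the quantity $\mathbb{E}\left[Y_1^* \mid X=x, U=u, S_0 = 1, S_1 = 1\right]$.

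Third, I would substitute the bound $\underline{y}^* \le \mathbb{E}\left[Y_1^* \mid X=x, U=u, S_0 = 1, S_1 = 1\right] \le \overline{y}^*$ (valid by the definition of the support endpoints): monotonicity in the previous display then sends the endpoint $\underline{y}^*$ to the largest admissible value of $m_1^{NO}$ and $\overline{y}^*$ to the smallest, giving
\[
\frac{m_1^Y(x,u) - \overline{y}^*\cdot m_0^S(x,u)}{\Delta_S(x,u)} \;\le\; m_1^{NO}(x,u) \;\le\; \frac{m_1^Y(x,u) - \underline{y}^*\cdot m_0^S(x,u)}{\Delta_S(x,u)}.
\]
Intersecting these with the naive bounds $\underline{y}^* \le m_1^{NO}(x,u) \le \overline{y}^*$ and splitting into the three cases of Assumption \ref{bounded} yields the statement: under \ref{bounded}.1 the lower fractional bound is $-\infty$ (so $\underline{y}^*$ is the binding lower bound) while $\overline{y}^* = +\infty$ leaves the upper fractional bound; under \ref{bounded}.2 the situation is symmetric; under \ref{bounded}.3 both fractional bounds are finite real numbers and are exactly the ones displayed.

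The argument is essentially a one-line rearrangement of an identity the paper has already established, so I do not anticipate a substantive obstacle. The only points requiring care are (i) justifying $\Delta_S(x,u) > 0$ so that the division is legitimate --- this follows from Assumption \ref{increasing_sample_selection} via \eqref{deltaS} together with the standing assumption that the observed-only-when-treated stratum at $(x,u)$ is nonempty --- and (ii) checking the sign of the coefficient $-m_0^S(x,u)/\Delta_S(x,u)$ so that the endpoints of $[\underline{y}^*,\overline{y}^*]$ are mapped to the correct ends of the interval for $m_1^{NO}$. I would also remark, as in Corollary \ref{MTEbounds}, that in case \ref{bounded}.3 one may weakly tighten the bounds by taking $\max\{\cdot,\underline{y}^*\}$ on the left and $\min\{\cdot,\overline{y}^*\}$ on the right; the corollary states the (slightly looser) form that follows immediately from the identity.
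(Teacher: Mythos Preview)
Your proposal is correct and follows essentially the same approach the paper indicates: it mirrors the proof of Proposition~\ref{boundsY1Proposition} by combining the trivial support bounds on the complementary conditional mean $\mathbb{E}[Y_1^*\mid X=x,U=u,S_0=1,S_1=1]$ with the rearranged decomposition~\eqref{decomposition}, using $\Delta_S(x,u)>0$ to divide. Your closing remark about the optional $\max/\min$ tightening in case~\ref{bounded}.3 is also apt, since the corollary is stated in the form parallel to Proposition~\ref{boundsY1Proposition} rather than Corollary~\ref{MTEbounds}.
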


Following the same proof of Theorem \ref{sharpbounds} (see Remark 2 at the end of Appendix \ref{proofsharp3}), I can also show that:
\begin{proposition}
	Suppose that the functions $m_{0}^{Y}$, $m_{1}^{Y}$, $m_{0}^{S}$ and $\Delta_{S}$ are point identified at every pair $\left(x, u\right) \in \mathcal{X} \times \left[0, 1\right]$. Under assumptions \ref{ind}-\ref{support}, \ref{bounded} (sub-cases 1, 2, 3(a) or 3(b)) and \ref{increasing_sample_selection}, the bounds $\underline{m_{1}^{NO}}$ and $\overline{m_{1}^{NO}}$, given by Proposition \ref{boundsother}, are pointwise sharp, i.e., for any $\overline{u} \in \left[0, 1\right]$, $\overline{x} \in \mathcal{X}$ and $\gamma\left(\overline{x}, \overline{u}\right) \in \left(\underline{m_{1}^{NO}}\left(\overline{x}, \overline{u}\right), \overline{m_{1}^{NO}}\left(\overline{x}, \overline{u}\right)\right)$, there exist random variables $\left(\tilde{Y}_{0}^{*}, \tilde{Y}_{1}^{*}, \tilde{U}, \tilde{V}\right)$ such that
	\begin{equation}
	\tilde{m}_{1}^{NO}\left(\overline{x}, \overline{u}\right) \coloneqq \mathbb{E}\left[\tilde{Y}_{1}^{*} \left\vert X = \overline{x}, \tilde{U} = \overline{u}, \tilde{S}_{0} = 0, \tilde{S}_{1} = 1 \right.\right] = \gamma\left(\overline{x}, \overline{u}\right),
	\end{equation}
	\begin{equation}
	\mathbb{P}\left[\left. \left(\tilde{Y}_{0}^{*}, \tilde{Y}_{1}^{*}, \tilde{V}\right) \in \mathcal{Y}^{*} \times \mathcal{Y}^{*} \times \left[0, 1\right] \right\vert X = \overline{x}, \tilde{U} = u \right] = 1 \text{ for any } u \in \left[0, 1\right],
	\end{equation}
	and
	\begin{equation}
	F_{\tilde{Y}, \tilde{D}, \tilde{S}, Z, X}\left(y, d, s, z, \overline{x} \right) = F_{Y, D, S, Z, X} \left(y, d, s, z, \overline{x}\right)
	\end{equation}
	for any $\left(y, d, s, z\right) \in \mathbb{R}^{4}$,	where $\tilde{D} \coloneqq \mathbf{1}\left\lbrace P\left(X, Z\right) \geq \tilde{U}\right\rbrace$, $\tilde{S}_{0} = \mathbf{1}\left\lbrace Q\left(0, X\right) \geq \tilde{V}\right\rbrace$, $\tilde{S}_{1} = \mathbf{1}\left\lbrace Q\left(1, X\right) \geq \tilde{V}\right\rbrace$, $\tilde{Y}_{0} = \tilde{S}_{0} \cdot \tilde{Y}_{0}^{*}$, $\tilde{Y}_{1} = \tilde{S}_{1} \cdot \tilde{Y}_{1}^{*}$ and $\tilde{Y} = \tilde{D} \cdot \tilde{Y}_{1} + \left(1 - \tilde{D}\right) \cdot \tilde{Y}_{0}$.
\end{proposition}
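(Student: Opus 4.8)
The plan is to recycle, almost verbatim, the constructive argument from the proof of Theorem \ref{sharpbounds} (Appendix \ref{proofsharp}), exploiting the fact that the random variables built there already pin down $\mathbb{E}\left[\tilde{Y}_{1}^{*} \left\vert X = \overline{x}, \tilde{U} = \overline{u}, \tilde{S}_{0} = 0, \tilde{S}_{1} = 1\right.\right]$ --- see Remark 2 and equation \eqref{step14b} at the end of Appendix \ref{proofsharp3}. The only conceptual novelty is a change of free parameter: in Theorem \ref{sharpbounds} one fixes the target $\delta$ (equivalently $\alpha$) and lets $\gamma$ be determined by the decomposition \eqref{decomposition}; here one fixes $\gamma = m_{1}^{NO}$ and lets $\alpha$ be determined.

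First I would fix $\overline{u} \in \left[0,1\right]$, $\overline{x} \in \mathcal{X}$ and $\gamma\left(\overline{x}, \overline{u}\right) \in \left(\underline{m_{1}^{NO}}\left(\overline{x}, \overline{u}\right), \overline{m_{1}^{NO}}\left(\overline{x}, \overline{u}\right)\right)$ arbitrarily, and set $\alpha\left(\overline{x}, \overline{u}\right) \coloneqq \dfrac{m_{1}^{Y}\left(\overline{x}, \overline{u}\right) - \gamma\left(\overline{x}, \overline{u}\right) \cdot \Delta_{S}\left(\overline{x}, \overline{u}\right)}{m_{0}^{S}\left(\overline{x}, \overline{u}\right)}$. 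The first substantive step is then the algebraic check that, under Assumption \ref{bounded}.3 (and the analogous one-sided versions under \ref{bounded}.1 and \ref{bounded}.2), the membership $\gamma\left(\overline{x}, \overline{u}\right) \in \left(\underline{m_{1}^{NO}}\left(\overline{x}, \overline{u}\right), \overline{m_{1}^{NO}}\left(\overline{x}, \overline{u}\right)\right)$ --- with the endpoints read off from Corollary \ref{boundsother} --- is equivalent to $\alpha\left(\overline{x}, \overline{u}\right) \in \left(\underline{y}^{*}, \overline{y}^{*}\right)$, using $\Delta_{S} > 0$ and $m_{0}^{S} > 0$. This is exactly the counterpart of equations \eqref{sanity1}--\eqref{sanity2} in Appendix \ref{proofsharp3}, with the roles of $\alpha$ and $\gamma$ interchanged.

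With $\alpha\left(\overline{x}, \overline{u}\right)$ and $\gamma\left(\overline{x}, \overline{u}\right)$ both verified to lie in $\mathcal{Y}^{*}$ (or in $\left(\underline{y}^{*}, \overline{y}^{*}\right)$ when $\mathcal{Y}^{*}$ is an interval), I would define the candidate random variables $\left(\tilde{Y}_{0}^{*}, \tilde{Y}_{1}^{*}, \tilde{U}, \tilde{V}\right)$ through the joint c.d.f. $F_{\tilde{Y}_{0}^{*}, \tilde{Y}_{1}^{*}, \tilde{U}, \tilde{V}, Z, X}$ constructed in Steps 1--12 of Appendix \ref{proofsharp3} (choosing the sub-case of Steps 11--12 that matches the sub-case of Assumption \ref{bounded}). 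Steps 2--6 together with Step 17 then give $F_{\tilde{Y}, \tilde{D}, \tilde{S}, Z, X}\left(\cdot\,, \overline{x}\right) = F_{Y, D, S, Z, X}\left(\cdot\,, \overline{x}\right)$ verbatim; Steps 7--12 combined with $\alpha\left(\overline{x}, \overline{u}\right), \gamma\left(\overline{x}, \overline{u}\right) \in \mathcal{Y}^{*}$ give the support condition $\mathbb{P}\left[\left(\tilde{Y}_{0}^{*}, \tilde{Y}_{1}^{*}, \tilde{V}\right) \in \mathcal{Y}^{*} \times \mathcal{Y}^{*} \times \left[0,1\right] \left\vert X = \overline{x}, \tilde{U} = u\right.\right] = 1$; and Remark 2, i.e.\ equation \eqref{step14b}, gives $\mathbb{E}\left[\tilde{Y}_{1}^{*} \left\vert X = \overline{x}, \tilde{U} = \overline{u}, \tilde{S}_{0} = 0, \tilde{S}_{1} = 1\right.\right] = \gamma\left(\overline{x}, \overline{u}\right)$, which is precisely the required $\tilde{m}_{1}^{NO}\left(\overline{x}, \overline{u}\right) = \gamma\left(\overline{x}, \overline{u}\right)$. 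The one-sided cases under Assumptions \ref{bounded}.1 and \ref{bounded}.2 follow by the same symmetry argument used at the end of Appendix \ref{proofsharp}.

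I do not anticipate a genuine obstacle here: the entire construction is inherited wholesale from Theorem \ref{sharpbounds}, and the only place requiring care is the equivalence ``$\gamma$ in the Corollary \ref{boundsother} interval $\Leftrightarrow$ $\alpha$ in the admissible interval,'' which is a short inequality manipulation. The mildly delicate point is bookkeeping --- making sure that, for each sub-case of Assumption \ref{bounded}, one invokes the matching sub-case of Steps 11--12 and the matching one-sided version of the $\alpha$--$\gamma$ equivalence --- but nothing beyond that is needed.
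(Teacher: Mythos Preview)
Your proposal is correct and takes essentially the same approach as the paper: the paper's own proof consists of the single sentence ``Following the same proof of Theorem \ref{sharpbounds} (see Remark 2 at the end of Appendix \ref{proofsharp3})'', and your write-up is precisely a fleshed-out version of that, swapping the roles of the free parameter (fixing $\gamma$ and solving for $\alpha$) and then invoking equation \eqref{step14b}.
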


Finally, following the same proof of Proposition \ref{partialnecessary}, I can also show that:
\begin{proposition}
	Suppose that the functions $m_{0}^{Y}$, $m_{1}^{Y}$, $m_{0}^{S}$ and $\Delta_{S}$ are point identified at every pair $\left(x, u\right) \in \mathcal{X} \times \left[0, 1\right]$. Impose assumptions \ref{ind}-\ref{support} and \ref{increasing_sample_selection}. If $\mathcal{Y}^{*} = \mathbb{R}$, then, for any $\overline{u} \in \left[0, 1\right]$, $\overline{x} \in \mathcal{X}$ and $\gamma\left(\overline{x}, \overline{u}\right) \in \mathbb{R}$, there exist random variables $\left(\tilde{Y}_{0}^{*}, \tilde{Y}_{1}^{*}, \tilde{U}, \tilde{V}\right)$ such that
	\begin{equation}
	\tilde{m}_{1}^{NO}\left(\overline{x}, \overline{u}\right) \coloneqq \mathbb{E}\left[\tilde{Y}_{1}^{*} \left\vert X = \overline{x}, \tilde{U} = \overline{u}, \tilde{S}_{0} = 0, \tilde{S}_{1} = 1 \right.\right] = \gamma\left(\overline{x}, \overline{u}\right),
	\end{equation}
	\begin{equation}
	\mathbb{P}\left[\left. \left(\tilde{Y}_{0}^{*}, \tilde{Y}_{1}^{*}, \tilde{V}\right) \in \mathcal{Y}^{*} \times \mathcal{Y}^{*} \times \left[0, 1\right] \right\vert X = \overline{x}, \tilde{U} = u \right] = 1 \text{ for any } u \in \left[0, 1\right],
	\end{equation}
	and
	\begin{equation}
	F_{\tilde{Y}, \tilde{D}, \tilde{S}, Z, X}\left(y, d, s, z, \overline{x} \right) = F_{Y, D, S, Z, X} \left(y, d, s, z, \overline{x}\right)
	\end{equation}
	for any $\left(y, d, s, z\right) \in \mathbb{R}^{4}$,	where $\tilde{D} \coloneqq \mathbf{1}\left\lbrace P\left(X, Z\right) \geq \tilde{U}\right\rbrace$, $\tilde{S}_{0} = \mathbf{1}\left\lbrace Q\left(0, X\right) \geq \tilde{V}\right\rbrace$, $\tilde{S}_{1} = \mathbf{1}\left\lbrace Q\left(1, X\right) \geq \tilde{V}\right\rbrace$, $\tilde{Y}_{0} = \tilde{S}_{0} \cdot \tilde{Y}_{0}^{*}$, $\tilde{Y}_{1} = \tilde{S}_{1} \cdot \tilde{Y}_{1}^{*}$ and $\tilde{Y} = \tilde{D} \cdot \tilde{Y}_{1} + \left(1 - \tilde{D}\right) \cdot \tilde{Y}_{0}$.
\end{proposition}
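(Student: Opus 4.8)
The plan is to reproduce the proof of Proposition \ref{partialnecessary} almost verbatim, exploiting the fact that the target parameter $m_{1}^{NO}$ is exactly the quantity already computed as a by-product of the sharpness construction (see Remark~2 at the end of Appendix \ref{proofsharp3}, equation \eqref{step14b}). Fix $\overline{u} \in [0,1]$, $\overline{x} \in \mathcal{X}$ and an arbitrary $\gamma(\overline{x}, \overline{u}) \in \mathbb{R}$, and set $\alpha(\overline{x}, \overline{u}) := \dfrac{m_{1}^{Y}(\overline{x}, \overline{u}) - \gamma(\overline{x}, \overline{u}) \cdot \Delta_{S}(\overline{x}, \overline{u})}{m_{0}^{S}(\overline{x}, \overline{u})}$, i.e. invert the affine map $\delta \mapsto \gamma$ used in Appendix \ref{proofsharp3} (this particular choice of $\alpha$ additionally preserves $\mathbb{E}[\tilde{Y}_{1} \mid X = \overline{x}, \tilde{U} = \overline{u}] = m_{1}^{Y}(\overline{x}, \overline{u})$ via equation \eqref{decomposition}, but any real value would do). Since $\mathcal{Y}^{*} = \mathbb{R}$, both $\alpha(\overline{x}, \overline{u})$ and $\gamma(\overline{x}, \overline{u})$ lie in $\mathcal{Y}^{*}$ automatically, so the analogues of the sanity-check inclusions \eqref{sanity1}--\eqref{sanity2} are trivially satisfied.

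I would then define the candidate law $F_{\tilde{Y}_{0}^{*}, \tilde{Y}_{1}^{*}, \tilde{U}, \tilde{V}, Z, X}$ through Steps~1--12 of Appendix \ref{proofsharp3}, using the convex-support branch (sub-case (a) of Steps~11--12) and replacing the now-meaningless midpoint $\tfrac{\underline{y}^{*} + \overline{y}^{*}}{2}$ by any fixed real number on the never-observed stratum. Away from $\tilde{U} = \overline{u}$ this law copies $F_{Y_{0}^{*}, Y_{1}^{*}, U, V, Z, X}$; at $\tilde{U} = \overline{u}$, Step~9 reshuffles $\tilde{V}$ so that $\mathbb{P}[\tilde{S}_{0} = 1, \tilde{S}_{1} = 1 \mid X = \overline{x}, \tilde{U} = \overline{u}] = m_{0}^{S}(\overline{x}, \overline{u})$ and $\mathbb{P}[\tilde{S}_{0} = 0, \tilde{S}_{1} = 1 \mid X = \overline{x}, \tilde{U} = \overline{u}] = \Delta_{S}(\overline{x}, \overline{u})$ (exactly equations \eqref{step13a}--\eqref{step13b}), Step~11 puts a point mass at $m_{0}^{Y}(\overline{x}, \overline{u})/m_{0}^{S}(\overline{x}, \overline{u})$ on $\tilde{Y}_{0}^{*}$ over the always-observed stratum, and Step~12 puts point masses at $\alpha(\overline{x}, \overline{u})$ and $\gamma(\overline{x}, \overline{u})$ on $\tilde{Y}_{1}^{*}$ over the always-observed and observed-only-when-treated strata respectively. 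The three displayed conditions are then checked as in Appendix \ref{proofnecessary}: the support condition is immediate since $\mathcal{Y}^{*} = \mathbb{R}$; the data-matching condition follows exactly as in Step~17, because the candidate law differs from the truth only on the conditionally null set $\{\tilde{U} = \overline{u}\}$ (Step~5); and conditioning on $\{\tilde{S}_{0} = 0, \tilde{S}_{1} = 1\} = \{Q(0,\overline{x}) < \tilde{V} \leq Q(1,\overline{x})\}$ and applying the Law of Iterated Expectations against the Step~12 point mass gives $\tilde{m}_{1}^{NO}(\overline{x}, \overline{u}) = \gamma(\overline{x}, \overline{u})$, precisely as in equation \eqref{step14b}.

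Since everything rides on a construction already carried out for Theorem \ref{sharpbounds} and Proposition \ref{partialnecessary}, there is no substantive obstacle; the delicate points are purely bookkeeping. First, $\tilde{m}_{1}^{NO}(\overline{x}, \overline{u})$ must be well defined, i.e. the conditioning event $\{\tilde{S}_{0} = 0, \tilde{S}_{1} = 1\}$ must have positive conditional probability --- this holds because Step~9 preserves $\Delta_{S}(\overline{x}, \overline{u})$, which is positive whenever the target parameter is meaningful (as under Assumption \ref{increasing_sample_selection}); the same implicit regularity $m_{0}^{S}(\overline{x}, \overline{u}) > 0$ lets Step~11 go through. Second, one must confirm that the piecewise objects of Steps~9, 11 and 12 remain valid conditional c.d.f.'s when $\underline{y}^{*} = -\infty$ and $\overline{y}^{*} = +\infty$; this is automatic, because with unbounded support every non-degenerate branch of Steps~11--12 simply collapses to an unconstrained point mass on $\mathbb{R}$, so no cumulative-distribution-function validity check is binding. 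Thus no new ideas beyond the arguments of Appendices \ref{proofsharp} and \ref{proofnecessary} are required.
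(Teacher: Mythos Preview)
Your proposal is correct and follows essentially the same approach as the paper, which simply points to the proof of Proposition \ref{partialnecessary} (Appendix \ref{proofnecessary}) together with Remark~2 (equation \eqref{step14b}) at the end of Appendix \ref{proofsharp3}. Your only addition is making explicit the inversion of the affine map $\alpha \mapsto \gamma$ so that $\gamma$ can be freely prescribed, which is exactly the minor bookkeeping change the paper leaves implicit.
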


\pagebreak

\setcounter{table}{0}
\renewcommand\thetable{C.\arabic{table}}

\setcounter{figure}{0}
\renewcommand\thefigure{C.\arabic{figure}}

\setcounter{equation}{0}
\renewcommand\theequation{C.\arabic{equation}}

\setcounter{theorem}{0}
\renewcommand\thetheorem{C.\arabic{theorem}}

\section{Negative Treatment Effect on the Selection Indicator}\label{decreasing_sample_selection}
Even when sample selection is monotone (equation \eqref{selection}), Assumption \ref{increasing_sample_selection} may be invalid in some empirical applications. In particular, it might be the case that the following assumption holds:
\begin{assumption}\label{decreasingassumption}
	Treatment has a negative effect on the sample selection indicator for all individuals, i.e., $Q\left(0, x\right) > Q\left(1, x\right) > 0$ for any $x \in \mathcal{X}$.
\end{assumption}
I stress that this assumption is testable according to \cite{Machado2018}.

With straightforward modifications to the proofs of Corollary \ref{MTEbounds}, Theorem \ref{sharpbounds} and Proposition \ref{partialnecessary} (see the proofs of Propositions \ref{agnosticsharp} and \ref{agnosticnecessary}), I can show that the target parameter in Section \ref{bounds} can be bounded, that its bounds are sharp and that it is impossible to derive bounds for the target parameter with only assumptions \ref{ind}-\ref{support} and \ref{decreasingassumption}. First, I state a result that is analogous to Corollary \ref{MTEbounds}.
\begin{corollary}\label{propositiondecreasing}
	Fix $u \in \left[0, 1\right]$ and $x \in \mathcal{X}$ arbitrarily. Suppose that the $m_{0}^{Y}\left(x, u\right)$, $m_{1}^{Y}\left(x, u\right)$, $m_{0}^{S}\left(x, u\right)$ and $\Delta_{S}\left(x, u\right)$ are point identified.
	
	Under Assumptions \ref{ind}-\ref{support}, \ref{bounded}.1 and \ref{decreasingassumption}, the bounds for $\Delta_{Y^{*}}^{OO}\left(x, u\right)$ are given by
	\begin{equation}
	\Delta_{Y^{*}}^{OO}\left(x, u\right) \geq \dfrac{m_{1}^{Y}\left(x, u\right)}{m_{1}^{S}\left(x, u\right)} - \dfrac{m_{0}^{Y}\left(x, u\right) - \underline{y}^{*} \cdot \left(- \Delta_{S}\left(x, u\right)\right)}{m_{1}^{S}\left(x, u\right)} \eqqcolon \underline{\Lambda_{Y^{*}}^{OO}}\left(x, u\right)
	\end{equation}
	and
	\begin{equation}
	\Delta_{Y^{*}}^{OO}\left(x, u\right) \leq \dfrac{m_{1}^{Y}\left(x, u\right)}{m_{1}^{S}\left(x, u\right)} - \underline{y}^{*}  \eqqcolon \overline{\Lambda_{Y^{*}}^{OO}}\left(x, u\right).
	\end{equation}
	
	Under Assumptions \ref{ind}-\ref{support}, \ref{bounded}.2 and \ref{decreasingassumption}, the bounds for $\Delta_{Y^{*}}^{OO}\left(x, u\right)$ are given by
	\begin{equation}
	\Delta_{Y^{*}}^{OO}\left(x, u\right) \geq \dfrac{m_{1}^{Y}\left(x, u\right)}{m_{1}^{S}\left(x, u\right)} - \overline{y}^{*} \eqqcolon \underline{\Lambda_{Y^{*}}^{OO}}\left(x, u\right)
	\end{equation}
	and
	\begin{equation}
	\Delta_{Y^{*}}^{OO}\left(x, u\right) \leq \dfrac{m_{1}^{Y}\left(x, u\right)}{m_{1}^{S}\left(x, u\right)} - \dfrac{m_{0}^{Y}\left(x, u\right) - \overline{y}^{*} \cdot \left(- \Delta_{S}\left(x, u\right)\right) }{m_{1}^{S}\left(x, u\right)} \eqqcolon \overline{\Lambda_{Y^{*}}^{OO}}\left(x, u\right).
	\end{equation}
	
	Under Assumptions \ref{ind}-\ref{support}, \ref{bounded}.3 (sub-case (a) or (b)) and \ref{decreasingassumption}, the bounds for $\Delta_{Y^{*}}^{OO}\left(x, u\right)$ are given by
	\begin{equation}
	\Delta_{Y^{*}}^{OO}\left(x, u\right) \geq \dfrac{m_{1}^{Y}\left(x, u\right)}{m_{1}^{S}\left(x, u\right)} - \min\left\lbrace \dfrac{m_{0}^{Y}\left(x, u\right) - \underline{y}^{*} \cdot \left(- \Delta_{S}\left(x, u\right)\right)}{m_{1}^{S}\left(x, u\right)}, \overline{y}^{*}\right\rbrace \eqqcolon \underline{\Lambda_{Y^{*}}^{OO}}\left(x, u\right)
	\end{equation}
	and
	\begin{equation}
	\Delta_{Y^{*}}^{OO}\left(x, u\right) \leq \dfrac{m_{1}^{Y}\left(x, u\right)}{m_{1}^{S}\left(x, u\right)} - \max\left\lbrace \dfrac{m_{0}^{Y}\left(x, u\right) - \overline{y}^{*} \cdot \left(- \Delta_{S}\left(x, u\right)\right)}{m_{1}^{S}\left(x, u\right)}, \underline{y}^{*}\right\rbrace \eqqcolon \overline{\Lambda_{Y^{*}}^{OO}}\left(x, u\right).
	\end{equation}
\end{corollary}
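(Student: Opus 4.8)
The plan is to mirror the argument behind Corollary \ref{MTEbounds}, interchanging the roles of the treated and untreated selection indicators. Under Assumption \ref{decreasingassumption} we have $Q\left(0, x\right) > Q\left(1, x\right) > 0$, so $\left\lbrace S_{1} = 1 \right\rbrace = \left\lbrace V \leq Q\left(1, x\right) \right\rbrace \subseteq \left\lbrace V \leq Q\left(0, x\right) \right\rbrace = \left\lbrace S_{0} = 1 \right\rbrace$. Hence the always-observed stratum now satisfies $\left\lbrace S_{0} = 1, S_{1} = 1 \right\rbrace = \left\lbrace S_{1} = 1 \right\rbrace$ and carries conditional mass $m_{1}^{S}\left(x, u\right)$, while the stratum that drops out of the sample after treatment, $\left\lbrace S_{0} = 1, S_{1} = 0 \right\rbrace$, carries conditional mass $m_{0}^{S}\left(x, u\right) - m_{1}^{S}\left(x, u\right) = - \Delta_{S}\left(x, u\right) > 0$. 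This stratum plays the role that the observed-only-when-treated stratum played in Section \ref{bounds}.

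First I would establish the two counterparts of equations \eqref{m0YstarOO} and \eqref{m1YstarOO}. Because $\left\lbrace S_{0} = 1, S_{1} = 1 \right\rbrace = \left\lbrace S_{1} = 1 \right\rbrace$, the treated branch becomes point identified: by the definition of conditional expectation,
\begin{equation*}
\mathbb{E}\left[Y_{1}^{*} \left\vert X = x, U = u, S_{0} = 1, S_{1} = 1 \right.\right] = \dfrac{\mathbb{E}\left[S_{1} \cdot Y_{1}^{*} \left\vert X = x, U = u \right.\right]}{\mathbb{P}\left[S_{1} = 1 \left\vert X = x, U = u \right.\right]} = \dfrac{m_{1}^{Y}\left(x, u\right)}{m_{1}^{S}\left(x, u\right)}.
\end{equation*}
For the untreated branch, I would partition $\left\lbrace S_{0} = 1 \right\rbrace$ into the always-observed and observed-only-when-untreated strata and use the Law of Iterated Expectations to obtain $m_{0}^{Y}\left(x, u\right) = \mathbb{E}\left[Y_{0}^{*} \left\vert X=x, U=u, S_{0} = 1, S_{1} = 1 \right.\right] \cdot m_{1}^{S}\left(x, u\right) + \mathbb{E}\left[Y_{0}^{*} \left\vert X=x, U=u, S_{0} = 1, S_{1} = 0 \right.\right] \cdot \left(- \Delta_{S}\left(x, u\right)\right)$, and hence
\begin{equation*}
\mathbb{E}\left[Y_{0}^{*} \left\vert X = x, U = u, S_{0} = 1, S_{1} = 1 \right.\right] = \dfrac{m_{0}^{Y}\left(x, u\right) - \mathbb{E}\left[Y_{0}^{*} \left\vert X=x, U=u, S_{0} = 1, S_{1} = 0 \right.\right] \cdot \left(- \Delta_{S}\left(x, u\right)\right)}{m_{1}^{S}\left(x, u\right)}.
\end{equation*}
The only unidentified object is the nuisance mean $\mathbb{E}\left[Y_{0}^{*} \left\vert X=x, U=u, S_{0} = 1, S_{1} = 0 \right.\right]$, which lies in $\left[\underline{y}^{*}, \overline{y}^{*}\right]$ by Assumptions \ref{support}--\ref{bounded} and the definitions of $\underline{y}^{*}$ and $\overline{y}^{*}$.

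Then I would propagate this, tracking the sign $- \Delta_{S}\left(x, u\right) > 0$: replacing the nuisance mean by $\underline{y}^{*}$ (resp. $\overline{y}^{*}$) produces an upper (resp. lower) bound on $\mathbb{E}\left[Y_{0}^{*} \left\vert X=x, U=u, S_{0} = 1, S_{1} = 1 \right.\right]$, and therefore, after subtracting this from the point-identified $m_{1}^{Y}\left(x, u\right)/m_{1}^{S}\left(x, u\right)$, a lower (resp. upper) bound on $\Delta_{Y^{*}}^{OO}\left(x, u\right)$. Intersecting with the trivial inclusion $\mathbb{E}\left[Y_{0}^{*} \left\vert X=x, U=u, S_{0} = 1, S_{1} = 1 \right.\right] \in \left[\underline{y}^{*}, \overline{y}^{*}\right]$ and specializing to the support configurations of Assumption \ref{bounded} --- keeping only the $\underline{y}^{*}$-side under \ref{bounded}.1, only the $\overline{y}^{*}$-side under \ref{bounded}.2, and the tighter of the two candidate expressions (a $\min$ and a $\max$) under \ref{bounded}.3 --- reproduces the stated formulas for $\underline{\Lambda_{Y^{*}}^{OO}}$ and $\overline{\Lambda_{Y^{*}}^{OO}}$.

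The main thing to watch is bookkeeping rather than a genuine obstacle: one must check that $m_{1}^{S}\left(x, u\right) > 0$ (so all ratios are well defined), which follows from $Q\left(1, x\right) > 0$ in Assumption \ref{decreasingassumption}, and that $m_{1}^{Y}\left(x, u\right)/m_{1}^{S}\left(x, u\right) = \mathbb{E}\left[Y_{1}^{*} \left\vert X = x, U = u, S_{1} = 1 \right.\right] \in \left[\underline{y}^{*}, \overline{y}^{*}\right]$, so the $\min$ and $\max$ appearing under \ref{bounded}.3 are taken over a non-empty range. These verifications are of the same nature as those carried out in Appendix \ref{proofboundsY1}, and the sharpness of these bounds, together with the impossibility result when $\mathcal{Y}^{*} = \mathbb{R}$, would then follow by adapting the constructions in the proofs of Theorem \ref{sharpbounds} and Proposition \ref{partialnecessary} with the thresholds $Q\left(0, x\right)$ and $Q\left(1, x\right)$ exchanged.
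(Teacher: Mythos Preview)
Your proposal is correct and matches the paper's own approach: the paper states that Corollary \ref{propositiondecreasing} follows from ``straightforward modifications to the proofs of Corollary \ref{MTEbounds}, Theorem \ref{sharpbounds} and Proposition \ref{partialnecessary}'', and what you have written is precisely that modification --- swapping the roles of $S_{0}$ and $S_{1}$ so that the treated MTR is point identified as $m_{1}^{Y}/m_{1}^{S}$ while the untreated MTR is decomposed and bounded via the nuisance mean on the observed-only-when-untreated stratum. The bookkeeping you flag (positivity of $m_{1}^{S}\left(x,u\right)$, the range check for $m_{1}^{Y}/m_{1}^{S}$) is at the same level of rigor as in Appendix \ref{proofboundsY1}, and your remark about adapting the sharpness construction by exchanging the thresholds $Q\left(0,x\right)$ and $Q\left(1,x\right)$ is exactly what the paper does in Appendix \ref{agnostic}.
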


Second, I state a result that is analogous to Theorem \ref{sharpbounds}.

\begin{proposition}
	Suppose that the functions $m_{0}^{Y}$, $m_{1}^{Y}$, $m_{0}^{S}$ and $\Delta_{S}$ are point identified at every pair $\left(x, u\right) \in \mathcal{X} \times \left[0, 1\right]$. Under Assumptions \ref{ind}-\ref{support}, \ref{bounded} (sub-cases 1, 2, 3(a) or 3(b)) and \ref{decreasingassumption}, the bounds $\underline{\Lambda_{Y^{*}}^{OO}}$ and $\overline{\Lambda_{Y^{*}}^{OO}}$, given by Proposition \ref{propositiondecreasing}, are pointwise sharp, i.e., for any $\overline{u} \in \left[0, 1\right]$, $\overline{x} \in \mathcal{X}$ and $\delta\left(\overline{x}, \overline{u}\right) \in \left(\underline{\Lambda_{Y^{*}}^{OO}}\left(\overline{x}, \overline{u}\right), \overline{\Lambda_{Y^{*}}^{OO}}\left(\overline{x}, \overline{u}\right)\right)$, there exist random variables $\left(\tilde{Y}_{0}^{*}, \tilde{Y}_{1}^{*}, \tilde{U}, \tilde{V}\right)$ such that
	\begin{equation}
	\Delta_{\tilde{Y}^{*}}^{OO}\left(\overline{x}, \overline{u}\right) \coloneqq \mathbb{E}\left[\tilde{Y}_{1}^{*} - \tilde{Y}_{0}^{*} \left\vert X = \overline{x}, \tilde{U} = \overline{u}, \tilde{S}_{0} = 1, \tilde{S}_{1} = 1 \right.\right] = \delta\left(\overline{x}, \overline{u}\right),
	\end{equation}
	\begin{equation}
	\mathbb{P}\left[\left. \left(\tilde{Y}_{0}^{*}, \tilde{Y}_{1}^{*}, \tilde{V}\right) \in \mathcal{Y}^{*} \times \mathcal{Y}^{*} \times \left[0, 1\right] \right\vert X = \overline{x}, \tilde{U} = u \right] = 1 \text{ for any } u \in \left[0, 1\right],
	\end{equation}
	and
	\begin{equation}
	F_{\tilde{Y}, \tilde{D}, \tilde{S}, Z, X}\left(y, d, s, z, \overline{x} \right) = F_{Y, D, S, Z, X} \left(y, d, s, z, \overline{x}\right)
	\end{equation}
	for any $\left(y, d, s, z\right) \in \mathbb{R}^{4}$,	where $\tilde{D} \coloneqq \mathbf{1}\left\lbrace P\left(X, Z\right) \geq \tilde{U}\right\rbrace$, $\tilde{S}_{0} = \mathbf{1}\left\lbrace Q\left(0, X\right) \geq \tilde{V}\right\rbrace$, $\tilde{S}_{1} = \mathbf{1}\left\lbrace Q\left(1, X\right) \geq \tilde{V}\right\rbrace$, $\tilde{Y}_{0} = \tilde{S}_{0} \cdot \tilde{Y}_{0}^{*}$, $\tilde{Y}_{1} = \tilde{S}_{1} \cdot \tilde{Y}_{1}^{*}$ and $\tilde{Y} = \tilde{D} \cdot \tilde{Y}_{1} + \left(1 - \tilde{D}\right) \cdot \tilde{Y}_{0}$.
\end{proposition}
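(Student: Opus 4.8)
The plan is to mirror the constructive argument behind Theorem \ref{sharpbounds} (Appendix \ref{proofsharp}), exchanging the roles of the treated and untreated potential outcomes. Under Assumption \ref{decreasingassumption} we have $Q(0,x) > Q(1,x) > 0$, so $\{S_1 = 1\} \subseteq \{S_0 = 1\}$ and hence $\mathbb{E}\!\left[Y_1^* \mid X=x, U=u, S_0=1, S_1=1\right] = m_1^Y(x,u)/m_1^S(x,u)$ is point identified, whereas the untreated mean on the always-observed stratum is entangled with the untreated mean on the observed-only-when-untreated stratum through $m_0^Y(x,u) = \mathbb{E}\!\left[Y_0^* \mid \cdot, S_0=1, S_1=1\right] m_1^S(x,u) + \mathbb{E}\!\left[Y_0^* \mid \cdot, S_0=1, S_1=0\right]\bigl(-\Delta_S(x,u)\bigr)$. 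This decomposition is exactly what produces the bounds in Corollary \ref{propositiondecreasing}, and it is what the construction must exploit.

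First I would fix $\overline{u} \in [0,1]$, $\overline{x} \in \mathcal{X}$ and $\delta(\overline{x},\overline{u})$ strictly interior to the identified interval, and introduce $\beta(\overline{x},\overline{u}) \coloneqq m_1^Y(\overline{x},\overline{u})/m_1^S(\overline{x},\overline{u}) - \delta(\overline{x},\overline{u})$, the candidate value of $\mathbb{E}[Y_0^*\mid\cdot,S_0=1,S_1=1]$, and $\gamma(\overline{x},\overline{u}) \coloneqq \bigl(m_0^Y(\overline{x},\overline{u}) - \beta(\overline{x},\overline{u})\, m_1^S(\overline{x},\overline{u})\bigr)/\bigl(-\Delta_S(\overline{x},\overline{u})\bigr)$, the value of $\mathbb{E}[Y_0^*\mid\cdot,S_0=1,S_1=0]$ forced by that decomposition. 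An algebraic check analogous to equations \eqref{sanity1}--\eqref{sanity2} shows that $\delta(\overline{x},\overline{u}) \in \bigl(\underline{\Lambda_{Y^*}^{OO}}(\overline{x},\overline{u}), \overline{\Lambda_{Y^*}^{OO}}(\overline{x},\overline{u})\bigr)$ is equivalent to $\beta(\overline{x},\overline{u}), \gamma(\overline{x},\overline{u}) \in (\underline{y}^*, \overline{y}^*)$, which keeps all assigned outcome values admissible.

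Next I would build the candidate law $F_{\tilde{Y}_0^*, \tilde{Y}_1^*, \tilde{U}, \tilde{V}, Z, X}$ following the analogues of the first ten steps of Appendix \ref{proofsharp}: it coincides with the true joint law off the $\mathbb{P}$-null event $\{\tilde{U}=\overline{u}\}$, keeps $Z$ conditionally independent, keeps $\tilde{U}\mid X$ uniform on $[0,1]$, and at $\tilde{U}=\overline{u}$ gives $\tilde{V}$ a distribution that is piecewise uniform on $[0,Q(1,\overline{x})]$, $(Q(1,\overline{x}),Q(0,\overline{x})]$ and $(Q(0,\overline{x}),1]$, calibrated so that $\mathbb{P}[\tilde{S}_0=1,\tilde{S}_1=1\mid X=\overline{x},\tilde{U}=\overline{u}] = m_1^S(\overline{x},\overline{u})$ and $\mathbb{P}[\tilde{S}_0=1,\tilde{S}_1=0\mid\cdot] = -\Delta_S(\overline{x},\overline{u})$; this is a valid CDF precisely because Assumption \ref{decreasingassumption} gives $0 < Q(1,\overline{x}) < Q(0,\overline{x})$ together with $0 \le m_1^S \le m_0^S \le 1$. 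The only genuinely new bookkeeping is the analogue of the outcome-assignment steps: on $\{\tilde{V} \le Q(1,\overline{x})\}$ set $\tilde{Y}_1^* = m_1^Y(\overline{x},\overline{u})/m_1^S(\overline{x},\overline{u})$ and $\tilde{Y}_0^* = \beta(\overline{x},\overline{u})$; on $\{Q(1,\overline{x}) < \tilde{V} \le Q(0,\overline{x})\}$ set $\tilde{Y}_0^* = \gamma(\overline{x},\overline{u})$ and $\tilde{Y}_1^*$ to the midpoint; and on $\{\tilde{V} > Q(0,\overline{x})\}$ set both to the midpoint. In sub-case 3(b) of Assumption \ref{bounded} each degenerate assignment is replaced by the two-point distribution on $\{\underline{y}^*, \overline{y}^*\}$ with the prescribed mean, exactly as in Appendix \ref{proofsharp}; sub-cases 1 and 2 are handled with one-sided sanity intervals.

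Finally I would verify the three requirements. The support condition holds because every assigned value lies in $[\underline{y}^*,\overline{y}^*]$ (using $\beta,\gamma \in (\underline{y}^*,\overline{y}^*)$, that $m_1^Y/m_1^S = \mathbb{E}[Y_1^*\mid\cdot,S_1=1] \in [\underline{y}^*,\overline{y}^*]$, and that the midpoint lies in $\mathcal{Y}^*$ in the interval sub-cases). The target equation holds by the computation analogous to the treated/untreated conditional-mean steps of Appendix \ref{proofsharp}: on the always-observed stratum $\{\tilde{V} \le Q(1,\overline{x})\}$ one gets $\mathbb{E}[\tilde{Y}_1^*\mid\cdot,\tilde{S}_0=1,\tilde{S}_1=1] = m_1^Y(\overline{x},\overline{u})/m_1^S(\overline{x},\overline{u})$ and $\mathbb{E}[\tilde{Y}_0^*\mid\cdot,\tilde{S}_0=1,\tilde{S}_1=1] = \beta(\overline{x},\overline{u})$, whose difference is $\delta(\overline{x},\overline{u})$ by definition of $\beta$. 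The data-matching equation holds by the measure-zero argument, since $\{\tilde{U}=\overline{u}\}$ carries zero conditional probability and the induced law of $(\tilde{Y},\tilde{D},\tilde{S},Z)$ given $X=\overline{x}$ therefore equals that of $(Y,D,S,Z)$ given $X=\overline{x}$. The main obstacle is purely clerical: keeping the signs correct now that $\Delta_S < 0$, so that $-\Delta_S$ is the mass of the observed-only-when-untreated stratum, and confirming that the calibrated $\gamma$ together with the $m_0^Y$ decomposition exactly reproduces $m_0^Y(\overline{x},\overline{u})$; everything else transfers mechanically from Appendix \ref{proofsharp}.
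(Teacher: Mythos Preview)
Your proposal is correct and follows essentially the same route as the paper. The paper handles this proposition by pointing to the proof of Proposition \ref{agnosticsharp}, whose $Q(1,x) < Q(0,x)$ branch is exactly your construction: your $\beta(\overline{x},\overline{u})$ is the paper's $\alpha(\overline{x},\overline{u})$ in that branch, your $\gamma$ matches theirs, and the piecewise assignment of $(\tilde{Y}_0^*,\tilde{Y}_1^*)$ on the three $\tilde{V}$-regions together with the piecewise-uniform $\tilde{V}$-law and the measure-zero data-matching argument are identical.
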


Finally, I state a result that is analogous to Proposition \ref{partialnecessary}.

\begin{proposition}
	Suppose that the functions $m_{0}^{Y}$, $m_{1}^{Y}$, $m_{0}^{S}$ and $\Delta_{S}$ are point identified at every pair $\left(x, u\right) \in \mathcal{X} \times \left[0, 1\right]$. Impose Assumptions \ref{ind}-\ref{support} and \ref{decreasingassumption}. If $\mathcal{Y}^{*} = \mathbb{R}$, then, for any $\overline{u} \in \left[0, 1\right]$, $\overline{x} \in \mathcal{X}$ and $\delta\left(\overline{x}, \overline{u}\right) \in \mathbb{R}$, there exist random variables $\left(\tilde{Y}_{0}^{*}, \tilde{Y}_{1}^{*}, \tilde{U}, \tilde{V}\right)$ such that
	\begin{equation}
	\Delta_{\tilde{Y}^{*}}^{OO}\left(\overline{x}, \overline{u}\right) \coloneqq \mathbb{E}\left[\tilde{Y}_{1}^{*} - \tilde{Y}_{0}^{*} \left\vert X = \overline{x}, \tilde{U} = \overline{u}, \tilde{S}_{0} = 1, \tilde{S}_{1} = 1 \right.\right] = \delta\left(\overline{x}, \overline{u}\right),
	\end{equation}
	\begin{equation}
	\mathbb{P}\left[\left. \left(\tilde{Y}_{0}^{*}, \tilde{Y}_{1}^{*}, \tilde{V}\right) \in \mathcal{Y}^{*} \times \mathcal{Y}^{*} \times \left[0, 1\right] \right\vert X = \overline{x}, \tilde{U} = u \right] = 1 \text{ for any } u \in \left[0, 1\right],
	\end{equation}
	and
	\begin{equation}
	F_{\tilde{Y}, \tilde{D}, \tilde{S}, Z, X}\left(y, d, s, z, \overline{x} \right) = F_{Y, D, S, Z, X} \left(y, d, s, z, \overline{x}\right)
	\end{equation}
	for any $\left(y, d, s, z\right) \in \mathbb{R}^{4}$,	where $\tilde{D} \coloneqq \mathbf{1}\left\lbrace P\left(X, Z\right) \geq \tilde{U}\right\rbrace$, $\tilde{S}_{0} = \mathbf{1}\left\lbrace Q\left(0, X\right) \geq \tilde{V}\right\rbrace$, $\tilde{S}_{1} = \mathbf{1}\left\lbrace Q\left(1, X\right) \geq \tilde{V}\right\rbrace$, $\tilde{Y}_{0} = \tilde{S}_{0} \cdot \tilde{Y}_{0}^{*}$, $\tilde{Y}_{1} = \tilde{S}_{1} \cdot \tilde{Y}_{1}^{*}$ and $\tilde{Y} = \tilde{D} \cdot \tilde{Y}_{1} + \left(1 - \tilde{D}\right) \cdot \tilde{Y}_{0}$.
\end{proposition}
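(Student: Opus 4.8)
The plan is to mirror the constructive argument of Proposition \ref{partialnecessary} (Appendix \ref{proofnecessary}), reversing the geometry of the selection strata to match Assumption \ref{decreasingassumption}. Under $Q(0,x) > Q(1,x) > 0$ the always-observed stratum $\{S_0=1, S_1=1\}$ coincides with $\{S_1=1\}$, so its conditional mass at $(\overline{x},\overline{u})$ is $m_1^S(\overline{x},\overline{u})$ rather than $m_0^S(\overline{x},\overline{u})$; the third nondegenerate stratum is ``observed only when untreated,'' $\{S_0=1, S_1=0\}$, with mass $-\Delta_S(\overline{x},\overline{u}) = m_0^S(\overline{x},\overline{u}) - m_1^S(\overline{x},\overline{u}) > 0$; and $\{S_0=0\}$ has mass $1 - m_0^S(\overline{x},\overline{u})$. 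In particular $\mathbb{E}[Y_1^* \mid X=\overline{x}, U=\overline{u}, S_0=1, S_1=1] = m_1^Y(\overline{x},\overline{u})/m_1^S(\overline{x},\overline{u})$ is forced by the data, while it is the untreated leg that is only partially pinned down --- precisely the asymmetry already recorded in Corollary \ref{propositiondecreasing}.

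First I would fix $\overline{u} \in [0,1]$, $\overline{x} \in \mathcal{X}$ and $\delta(\overline{x},\overline{u}) \in \mathbb{R}$, and define $(\tilde{Y}_0^*, \tilde{Y}_1^*, \tilde{U}, \tilde{V})$ through a joint c.d.f.\ that agrees with $F_{Y_0^*, Y_1^*, U, V, Z, X}$ at every point with $\tilde{U} \neq \overline{u}$, exactly as in Steps~1--6 of Appendix \ref{proofsharp3}, keeping $\tilde{U} \mid X$ uniform on $[0,1]$. On the slice $\tilde{U} = \overline{u}$, $X = \overline{x}$, I would design the conditional law of $\tilde{V}$ --- the analogue of Step~9 with the roles of $Q(0,\cdot)$ and $Q(1,\cdot)$ interchanged --- so that $\{\tilde{V} \le Q(1,\overline{x})\}$, $\{Q(1,\overline{x}) < \tilde{V} \le Q(0,\overline{x})\}$ and $\{\tilde{V} > Q(0,\overline{x})\}$ carry conditional probabilities $m_1^S(\overline{x},\overline{u})$, $-\Delta_S(\overline{x},\overline{u})$ and $1 - m_0^S(\overline{x},\overline{u})$. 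On the always-observed region I would place $\tilde{Y}_1^*$ at the point mass $m_1^Y(\overline{x},\overline{u})/m_1^S(\overline{x},\overline{u})$ and $\tilde{Y}_0^*$ at the point mass $m_1^Y(\overline{x},\overline{u})/m_1^S(\overline{x},\overline{u}) - \delta(\overline{x},\overline{u})$; on the observed-only-when-untreated region I would place $\tilde{Y}_0^*$ at the point mass that the decomposition analogous to \eqref{decomposition} forces in order to reproduce $m_0^Y(\overline{x},\overline{u})$, namely $\bigl(m_0^Y(\overline{x},\overline{u}) - m_1^Y(\overline{x},\overline{u}) + \delta(\overline{x},\overline{u})\, m_1^S(\overline{x},\overline{u})\bigr)\big/\bigl(-\Delta_S(\overline{x},\overline{u})\bigr)$, the value of $\tilde{Y}_1^*$ there being immaterial since $\tilde{S}_1 = 0$; elsewhere both potential outcomes may be set to $0$. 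Because $\mathcal{Y}^* = \mathbb{R}$, every point mass invoked lies in $\mathcal{Y}^*$ --- this is the only place the hypothesis $\mathcal{Y}^* = \mathbb{R}$ enters, and it is exactly why no nontrivial bound survives.

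I would then verify the three displayed conclusions, reusing Steps~13--17 of Appendix \ref{proofsharp3}. The support statement holds for every $u \in [0,1]$: off the null slice it inherits the true law, and on the slice the constructed $\tilde{V} \in [0,1]$ while $\tilde{Y}_0^*, \tilde{Y}_1^* \in \mathbb{R} = \mathcal{Y}^*$. The target equation follows by computing $\mathbb{E}[\tilde{Y}_1^* - \tilde{Y}_0^* \mid X=\overline{x}, \tilde{U}=\overline{u}, \tilde{S}_0=1, \tilde{S}_1=1]$ from the two point masses on the always-observed region, which collapses to $m_1^Y(\overline{x},\overline{u})/m_1^S(\overline{x},\overline{u}) - \bigl(m_1^Y(\overline{x},\overline{u})/m_1^S(\overline{x},\overline{u}) - \delta(\overline{x},\overline{u})\bigr) = \delta(\overline{x},\overline{u})$. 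The data-matching equation follows as in Step~17: since $\tilde{U} \mid X$ is uniform, the event $\{\tilde{U} = \overline{u}\}$ has conditional probability zero, and $F_{\tilde{Y}, \tilde{D}, \tilde{S}, Z \mid X}$ is an integral against a law that coincides with $F_{Y_0^*, Y_1^*, U, V, Z \mid X}$ off that slice, so $F_{\tilde{Y}, \tilde{D}, \tilde{S}, Z, X} = F_{Y, D, S, Z, X}$.

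The main obstacle is purely bookkeeping: checking that the conditional law of $\tilde{V}$ on $\{\tilde{U} = \overline{u}\}$ is a genuine c.d.f.\ (monotone, right-continuous, unit total mass) while simultaneously hitting the three prescribed stratum masses, and confirming that, together with the chosen conditional laws of $\tilde{Y}_0^*$ and $\tilde{Y}_1^*$, it reproduces $m_0^Y$, $m_1^Y$, $m_0^S$, $\Delta_S$ and indeed the whole observable distribution. Boundary configurations (e.g.\ $m_0^S(\overline{x},\overline{u}) = 1$, or $m_1^S(\overline{x},\overline{u})$ at an endpoint of $(0,1)$) need only a one-line separate check that the corresponding region is empty or full, exactly as in the base case of Appendix \ref{proofsharp}; none of them causes real difficulty, since the entire modification lives on a probability-zero set.
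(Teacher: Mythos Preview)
Your proposal is correct and follows essentially the same approach as the paper. The paper proves this proposition by pointing to the agnostic construction in Appendix~\ref{agnostic} (Propositions~\ref{agnosticsharp} and~\ref{agnosticnecessary}) specialized to the case $Q(1,x) < Q(0,x)$: there the always-observed region is $\{\tilde V \le Q(1,x)\}$, $\tilde Y_1^*$ is placed at $m_1^Y/m_1^S$ on that region, $\tilde Y_0^*$ at $\alpha = m_1^Y/m_1^S - \delta$ on that region and at $\gamma = (m_0^Y - \alpha\, m_1^S)/(-\Delta_S)$ on the observed-only-when-untreated region --- exactly the point masses you compute --- with the data-matching step unchanged because the modification lives on $\{\tilde U = \overline u\}$, a null set.
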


\pagebreak

\setcounter{table}{0}
\renewcommand\thetable{D.\arabic{table}}

\setcounter{figure}{0}
\renewcommand\thefigure{D.\arabic{figure}}

\setcounter{equation}{0}
\renewcommand\theequation{D.\arabic{equation}}

\setcounter{theorem}{0}
\renewcommand\thetheorem{D.\arabic{theorem}}

\section{Monotone Sample Selection}\label{agnostic}
Depending on the results of the test proposed by \cite{Machado2018}, a researcher may want to be agnostic about the direction of the monotone selection problem and impose only equation \eqref{selection}, while ruling out uninteresting cases. In this situation, it is reasonable to assume:
\begin{assumption}\label{agnosticassumption}
	Treatment has a monotone effect on the sample selection indicator for all individuals, i.e., \textbf{either} (i) $Q\left(1, x\right) > Q\left(0, x\right) > 0$ for any $x \in \mathcal{X}$ \textbf{or} (ii) $Q\left(0, x\right) > Q\left(1, x\right) > 0$ for any $x \in \mathcal{X}$.
\end{assumption}
Note that Assumption \ref{agnosticassumption} only strengthens equation \eqref{selection} by ruling out the theoretically uninteresting cases mentioned after Assumption \eqref{increasing_sample_selection}.

By combining Corollaries \ref{MTEbounds} and \ref{propositiondecreasing}, I find that:
\begin{corollary}\label{agnosticprop}
Fix $u \in \left[0, 1\right]$ and $x \in \mathcal{X}$ arbitrarily. Suppose that the $m_{0}^{Y}\left(x, u\right)$, $m_{1}^{Y}\left(x, u\right)$, $m_{0}^{S}\left(x, u\right)$ and $\Delta_{S}\left(x, u\right)$ are point identified. Under Assumptions \ref{ind}-\ref{support}, \ref{bounded} and \ref{agnosticassumption}, the bounds for $\Delta_{Y^{*}}^{OO}\left(x, u\right)$ are given by
\begin{align}
\underline{\Upsilon_{Y^{*}}^{OO}}\left(x, u\right) & \coloneqq \min \left\lbrace \underline{\Delta_{Y^{*}}^{OO}}\left(x, u\right), \underline{\Lambda_{Y^{*}}^{OO}}\left(x, u\right) \right\rbrace \nonumber \\
& \leq \Delta_{Y^{*}}^{OO}\left(x, u\right) \\
& \leq \max \left\lbrace \overline{\Delta_{Y^{*}}^{OO}}\left(x, u\right), \overline{\Lambda_{Y^{*}}^{OO}}\left(x, u\right) \right\rbrace \eqqcolon \overline{\Upsilon_{Y^{*}}^{OO}}\left(x, u\right) \nonumber
\end{align}
\end{corollary}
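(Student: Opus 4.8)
The plan is to prove Corollary \ref{agnosticprop} by a direct case analysis on the disjunction that is built into Assumption \ref{agnosticassumption}. By definition, under Assumption \ref{agnosticassumption} exactly one of the following obtains: (i) $Q\left(1, x\right) > Q\left(0, x\right) > 0$ for every $x \in \mathcal{X}$, which is precisely Assumption \ref{increasing_sample_selection}; or (ii) $Q\left(0, x\right) > Q\left(1, x\right) > 0$ for every $x \in \mathcal{X}$, which is precisely Assumption \ref{decreasingassumption}. Since all the other maintained hypotheses (Assumptions \ref{ind}--\ref{support} and \ref{bounded}) are imposed in the statement, in case (i) the entire hypothesis set of Corollary \ref{MTEbounds} holds, and in case (ii) the entire hypothesis set of Corollary \ref{propositiondecreasing} holds.

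First I would dispatch case (i): invoking Corollary \ref{MTEbounds} with whichever sub-case of Assumption \ref{bounded} is in force yields $\underline{\Delta_{Y^{*}}^{OO}}\left(x, u\right) \leq \Delta_{Y^{*}}^{OO}\left(x, u\right) \leq \overline{\Delta_{Y^{*}}^{OO}}\left(x, u\right)$. Because $\underline{\Upsilon_{Y^{*}}^{OO}}\left(x, u\right) = \min\{\underline{\Delta_{Y^{*}}^{OO}}\left(x, u\right), \underline{\Lambda_{Y^{*}}^{OO}}\left(x, u\right)\} \leq \underline{\Delta_{Y^{*}}^{OO}}\left(x, u\right)$ and $\overline{\Upsilon_{Y^{*}}^{OO}}\left(x, u\right) = \max\{\overline{\Delta_{Y^{*}}^{OO}}\left(x, u\right), \overline{\Lambda_{Y^{*}}^{OO}}\left(x, u\right)\} \geq \overline{\Delta_{Y^{*}}^{OO}}\left(x, u\right)$, the chain $\underline{\Upsilon_{Y^{*}}^{OO}}\left(x, u\right) \leq \Delta_{Y^{*}}^{OO}\left(x, u\right) \leq \overline{\Upsilon_{Y^{*}}^{OO}}\left(x, u\right)$ follows at once. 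Case (ii) is handled symmetrically, now invoking Corollary \ref{propositiondecreasing} to get $\underline{\Lambda_{Y^{*}}^{OO}}\left(x, u\right) \leq \Delta_{Y^{*}}^{OO}\left(x, u\right) \leq \overline{\Lambda_{Y^{*}}^{OO}}\left(x, u\right)$ and then using $\underline{\Upsilon_{Y^{*}}^{OO}}\left(x, u\right) \leq \underline{\Lambda_{Y^{*}}^{OO}}\left(x, u\right)$ and $\overline{\Lambda_{Y^{*}}^{OO}}\left(x, u\right) \leq \overline{\Upsilon_{Y^{*}}^{OO}}\left(x, u\right)$. Since one of the two cases must occur, the bounds hold under Assumption \ref{agnosticassumption}, which completes the argument. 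There is no substantive computation: the $\Upsilon$ bounds were defined as the coordinate-wise $\min$/$\max$ of the two case-specific bounds exactly so that each case-specific interval embeds into $[\underline{\Upsilon_{Y^{*}}^{OO}}\left(x, u\right), \overline{\Upsilon_{Y^{*}}^{OO}}\left(x, u\right)]$.

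The only point that warrants care, and the closest thing to an obstacle, is the scope of the claim. The argument above shows the $\Upsilon$ bounds are \emph{valid} — the true parameter lies inside them — but not that they are pointwise sharp; indeed, since $\Delta_{S}\left(x, u\right)$ is taken to be point identified, its sign ordinarily reveals which of (i) and (ii) is compatible with the data, so the sharp identified set typically collapses to one of the two case-specific intervals. If a companion sharpness statement is wanted (in the spirit of the cross-references to Propositions \ref{agnosticsharp} and \ref{agnosticnecessary}), I would instead combine the construction in the proof of Theorem \ref{sharpbounds} with its analogue under Assumption \ref{decreasingassumption} to realize every $\delta\left(\overline{x}, \overline{u}\right) \in \left(\underline{\Upsilon_{Y^{*}}^{OO}}\left(\overline{x}, \overline{u}\right), \overline{\Upsilon_{Y^{*}}^{OO}}\left(\overline{x}, \overline{u}\right)\right)$ by an appropriate choice of data-consistent $\left(\tilde{Y}_{0}^{*}, \tilde{Y}_{1}^{*}, \tilde{U}, \tilde{V}\right)$ satisfying whichever branch of Assumption \ref{agnosticassumption} makes the target value attainable; for the corollary as stated, however, the two-case validity argument is all that is required.
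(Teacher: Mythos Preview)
Your proposal is correct and matches the paper's own approach: the paper simply states that the result follows ``by combining Corollaries \ref{MTEbounds} and \ref{propositiondecreasing},'' which is precisely your two-case validity argument. Your additional remarks on sharpness are accurate and anticipate the separate treatment given in Proposition \ref{agnosticsharp}, but for the corollary as stated no more is needed.
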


Moreover, these bounds are also pointwise sharp:\footnote{The proof of propositions \ref{agnosticsharp} and \ref{agnosticnecessary} are located at the end of Appendix \ref{agnostic}.}
\begin{proposition}\label{agnosticsharp}
Suppose that the functions $m_{0}^{Y}$, $m_{1}^{Y}$, $m_{0}^{S}$ and $\Delta_{S}$ are point identified at every pair $\left(x, u\right) \in \mathcal{X} \times \left[0, 1\right]$. Under Assumptions \ref{ind}-\ref{support}, \ref{bounded} (sub-cases 1, 2, 3(a) or 3(b)) and \ref{agnosticassumption}, the bounds $\underline{\Upsilon_{Y^{*}}^{OO}}$ and $\overline{\Upsilon_{Y^{*}}^{OO}}$, given by Corollary \ref{agnosticprop}, are pointwise sharp, i.e., for any $\overline{u} \in \left[0, 1\right]$, $\overline{x} \in \mathcal{X}$ and $\delta\left(\overline{x}, \overline{u}\right) \in \left(\underline{\Upsilon_{Y^{*}}^{OO}}\left(\overline{x}, \overline{u}\right), \overline{\Upsilon_{Y^{*}}^{OO}}\left(\overline{x}, \overline{u}\right)\right)$, there exist random variables $\left(\tilde{Y}_{0}^{*}, \tilde{Y}_{1}^{*}, \tilde{U}, \tilde{V}\right)$ such that
\begin{equation}\label{agnosticfaketarget}
\Delta_{\tilde{Y}^{*}}^{OO}\left(\overline{x}, \overline{u}\right) \coloneqq \mathbb{E}\left[\tilde{Y}_{1}^{*} - \tilde{Y}_{0}^{*} \left\vert X = \overline{x}, \tilde{U} = \overline{u}, \tilde{S}_{0} = 1, \tilde{S}_{1} = 1 \right.\right] = \delta\left(\overline{x}, \overline{u}\right),
\end{equation}
\begin{equation}\label{agnosticcorrectsupport}
\mathbb{P}\left[\left. \left(\tilde{Y}_{0}^{*}, \tilde{Y}_{1}^{*}, \tilde{V}\right) \in \mathcal{Y}^{*} \times \mathcal{Y}^{*} \times \left[0, 1\right] \right\vert X = \overline{x}, \tilde{U} = u \right] = 1 \text{ for any } u \in \left[0, 1\right],
\end{equation}
and
\begin{equation}\label{agnosticDataRestriction}
F_{\tilde{Y}, \tilde{D}, \tilde{S}, Z, X}\left(y, d, s, z, \overline{x} \right) = F_{Y, D, S, Z, X} \left(y, d, s, z, \overline{x}\right)
\end{equation}
for any $\left(y, d, s, z\right) \in \mathbb{R}^{4}$,	where $\tilde{D} \coloneqq \mathbf{1}\left\lbrace P\left(X, Z\right) \geq \tilde{U}\right\rbrace$, $\tilde{S}_{0} = \mathbf{1}\left\lbrace Q\left(0, X\right) \geq \tilde{V}\right\rbrace$, $\tilde{S}_{1} = \mathbf{1}\left\lbrace Q\left(1, X\right) \geq \tilde{V}\right\rbrace$, $\tilde{Y}_{0} = \tilde{S}_{0} \cdot \tilde{Y}_{0}^{*}$, $\tilde{Y}_{1} = \tilde{S}_{1} \cdot \tilde{Y}_{1}^{*}$ and $\tilde{Y} = \tilde{D} \cdot \tilde{Y}_{1} + \left(1 - \tilde{D}\right) \cdot \tilde{Y}_{0}$.
\end{proposition}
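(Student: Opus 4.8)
The plan is to reduce Proposition \ref{agnosticsharp} to the two sharpness results already in the excerpt — Theorem \ref{sharpbounds} (increasing sample selection) and its analogue for decreasing sample selection in Appendix \ref{decreasing_sample_selection} — by observing that Assumption \ref{agnosticassumption} forces exactly one of the two directional cases to hold, so that the $\min$ and $\max$ defining $\underline{\Upsilon_{Y^{*}}^{OO}}$ and $\overline{\Upsilon_{Y^{*}}^{OO}}$ collapse to the directional bound that is actually in force. So the proof is essentially a bookkeeping argument on top of constructions already carried out.

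First I would fix $\overline{u}\in[0,1]$, $\overline{x}\in\mathcal{X}$, and $\delta(\overline{x},\overline{u})\in\bigl(\underline{\Upsilon_{Y^{*}}^{OO}}(\overline{x},\overline{u}),\overline{\Upsilon_{Y^{*}}^{OO}}(\overline{x},\overline{u})\bigr)$, dispose of the knife-edge $\Delta_{S}(\overline{x},\overline{u})=0$ (there the problem is point identified, as noted after Proposition \ref{boundsY1Proposition}, all four directional bounds equal the identified value, the open interval is empty, and there is nothing to prove), and then split according to the two cases of Assumption \ref{agnosticassumption}. In case (i), $Q(1,x)>Q(0,x)>0$ for all $x$ — so $\Delta_{S}(\overline{x},\overline{u})>0$ — I would prove a short algebraic lemma: $\underline{\Lambda_{Y^{*}}^{OO}}(\overline{x},\overline{u})\le\underline{\Delta_{Y^{*}}^{OO}}(\overline{x},\overline{u})$ and $\overline{\Lambda_{Y^{*}}^{OO}}(\overline{x},\overline{u})\ge\overline{\Delta_{Y^{*}}^{OO}}(\overline{x},\overline{u})$, using only $\Delta_{S}(\overline{x},\overline{u})>0$ together with $m_{1}^{Y}/m_{1}^{S}\in[\underline{y}^{*},\overline{y}^{*}]$ and $m_{0}^{Y}/m_{0}^{S}\in[\underline{y}^{*},\overline{y}^{*}]$ — the same elementary facts exploited in Appendix \ref{comparing}. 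This forces $\bigl(\underline{\Upsilon_{Y^{*}}^{OO}}(\overline{x},\overline{u}),\overline{\Upsilon_{Y^{*}}^{OO}}(\overline{x},\overline{u})\bigr)=\bigl(\underline{\Delta_{Y^{*}}^{OO}}(\overline{x},\overline{u}),\overline{\Delta_{Y^{*}}^{OO}}(\overline{x},\overline{u})\bigr)$, so $\delta(\overline{x},\overline{u})$ lies in the interval of Theorem \ref{sharpbounds}, and the random variables $(\tilde{Y}_{0}^{*},\tilde{Y}_{1}^{*},\tilde{U},\tilde{V})$ constructed in its proof already satisfy \eqref{agnosticfaketarget}--\eqref{agnosticDataRestriction}, since these are verbatim \eqref{faketarget}--\eqref{DataRestriction}. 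Case (ii), $Q(0,x)>Q(1,x)>0$, is handled symmetrically: the reversed comparison shows the interval collapses to the one in Corollary \ref{propositiondecreasing}, and the construction from the proof of the decreasing-direction sharpness proposition in Appendix \ref{decreasing_sample_selection} supplies the required $(\tilde{Y}_{0}^{*},\tilde{Y}_{1}^{*},\tilde{U},\tilde{V})$.

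I expect the main obstacle to be the algebraic lemma comparing the increasing- and decreasing-direction bounds: one must verify, in each of the support configurations of Assumption \ref{bounded} separately, that the ``wrong-direction'' endpoints are dominated, carefully tracking the $\min$/$\max$ terms that appear under Assumption \ref{bounded}.3 and the one-sided cases under \ref{bounded}.1 and \ref{bounded}.2. No genuinely new idea beyond the manipulations already performed in Appendix \ref{comparing} is needed, and once the lemma is in hand everything else is a direct citation of the two existing constructions. A minor point worth a sentence in the final write-up is that the candidate random variables produced in each case automatically respect whichever branch of Assumption \ref{agnosticassumption} is active, because $Q(0,\cdot)$ and $Q(1,\cdot)$ — and hence $\tilde{S}_{0}$, $\tilde{S}_{1}$ — are held fixed at their true values throughout both constructions.
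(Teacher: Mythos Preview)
Your overall strategy---split on the two branches of Assumption~\ref{agnosticassumption}, show that the $\Upsilon$-interval collapses to the active directional interval, and then invoke the directional sharpness result---is exactly what the paper does. The paper packages the two directional constructions into a single one using indicators $\mathbf{1}\{Q(1,x)>Q(0,x)\}$ and the symmetrized thresholds $\underline{Q}(x)=\min\{Q(0,x),Q(1,x)\}$, $\overline{Q}(x)=\max\{Q(0,x),Q(1,x)\}$, but this is cosmetic: in Steps 11--14 it splits on the sign of $Q(1,x)-Q(0,x)$ anyway, and its unproved ``sanity'' assertions $\alpha,\gamma\in(\underline{y}^{*},\overline{y}^{*})$ are precisely your algebraic lemma in disguise.

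However, your lemma is stated with the inequalities reversed, and the inference you draw from it does not follow. You claim that in case~(i) one has $\underline{\Lambda_{Y^{*}}^{OO}}\le\underline{\Delta_{Y^{*}}^{OO}}$ and $\overline{\Lambda_{Y^{*}}^{OO}}\ge\overline{\Delta_{Y^{*}}^{OO}}$, and that this forces $(\underline{\Upsilon},\overline{\Upsilon})=(\underline{\Delta},\overline{\Delta})$. But $\underline{\Upsilon}=\min\{\underline{\Delta},\underline{\Lambda}\}$, so your stated inequality would give $\underline{\Upsilon}=\underline{\Lambda}$, not $\underline{\Delta}$; likewise on the upper side. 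A quick numerical check (e.g.\ $\underline{y}^{*}=0$, $\overline{y}^{*}=1$, $m_{0}^{S}=0.5$, $m_{1}^{S}=0.8$, $m_{0}^{Y}=0.25$, $m_{1}^{Y}=0.4$) confirms that in case~(i) one actually has $\underline{\Lambda}>\underline{\Delta}$ and $\overline{\Lambda}<\overline{\Delta}$---the $\Lambda$-formulas, evaluated where $\Delta_{S}>0$, yield an empty or degenerate interval sitting inside the $\Delta$-interval. It is \emph{this} direction that gives $\min\{\underline{\Delta},\underline{\Lambda}\}=\underline{\Delta}$ and $\max\{\overline{\Delta},\overline{\Lambda}\}=\overline{\Delta}$, hence the collapse you need. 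So the plan is salvageable, but the lemma must be flipped; as written, neither the premise nor the implication is correct.
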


Finally, I state an impossibility result that is analogous to Proposition \ref{partialnecessary}.

\begin{proposition}\label{agnosticnecessary}
	Suppose that the functions $m_{0}^{Y}$, $m_{1}^{Y}$, $m_{0}^{S}$ and $\Delta_{S}$ are point identified at every pair $\left(x, u\right) \in \mathcal{X} \times \left[0, 1\right]$. Impose assumptions \ref{ind}-\ref{support} and \ref{agnosticassumption}. If $\mathcal{Y}^{*} = \mathbb{R}$, then, for any $\overline{u} \in \left[0, 1\right]$, $\overline{x} \in \mathcal{X}$ and $\delta\left(\overline{x}, \overline{u}\right) \in \mathbb{R}$, there exist random variables $\left(\tilde{Y}_{0}^{*}, \tilde{Y}_{1}^{*}, \tilde{U}, \tilde{V}\right)$ such that
	\begin{equation}\label{agnosticfaketargetP}
	\Delta_{\tilde{Y}^{*}}^{OO}\left(\overline{x}, \overline{u}\right) \coloneqq \mathbb{E}\left[\tilde{Y}_{1}^{*} - \tilde{Y}_{0}^{*} \left\vert X = \overline{x}, \tilde{U} = \overline{u}, \tilde{S}_{0} = 1, \tilde{S}_{1} = 1 \right.\right] = \delta\left(\overline{x}, \overline{u}\right),
	\end{equation}
	\begin{equation}\label{agnosticcorrectsupportP}
	\mathbb{P}\left[\left. \left(\tilde{Y}_{0}^{*}, \tilde{Y}_{1}^{*}, \tilde{V}\right) \in \mathcal{Y}^{*} \times \mathcal{Y}^{*} \times \left[0, 1\right] \right\vert X = \overline{x}, \tilde{U} = u \right] = 1 \text{ for any } u \in \left[0, 1\right],
	\end{equation}
	and
	\begin{equation}\label{agnosticDataRestrictionP}
	F_{\tilde{Y}, \tilde{D}, \tilde{S}, Z, X}\left(y, d, s, z, \overline{x} \right) = F_{Y, D, S, Z, X} \left(y, d, s, z, \overline{x}\right)
	\end{equation}
	for any $\left(y, d, s, z\right) \in \mathbb{R}^{4}$,	where $\tilde{D} \coloneqq \mathbf{1}\left\lbrace P\left(X, Z\right) \geq \tilde{U}\right\rbrace$, $\tilde{S}_{0} = \mathbf{1}\left\lbrace Q\left(0, X\right) \geq \tilde{V}\right\rbrace$, $\tilde{S}_{1} = \mathbf{1}\left\lbrace Q\left(1, X\right) \geq \tilde{V}\right\rbrace$, $\tilde{Y}_{0} = \tilde{S}_{0} \cdot \tilde{Y}_{0}^{*}$, $\tilde{Y}_{1} = \tilde{S}_{1} \cdot \tilde{Y}_{1}^{*}$ and $\tilde{Y} = \tilde{D} \cdot \tilde{Y}_{1} + \left(1 - \tilde{D}\right) \cdot \tilde{Y}_{0}$.
\end{proposition}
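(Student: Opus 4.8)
The plan is to deduce Proposition \ref{agnosticnecessary} from the two impossibility results already in hand — Proposition \ref{partialnecessary}, proved under Assumption \ref{increasing_sample_selection}, and its mirror image stated immediately after Corollary \ref{propositiondecreasing}, proved under Assumption \ref{decreasingassumption} — by splitting on which branch of Assumption \ref{agnosticassumption} is in force. Since $P$ and $Q$ are primitives of the model that are not being perturbed, exactly one of the two branches holds at the fixed point $\bar x$: either $Q(1,\bar x) > Q(0,\bar x) > 0$, in which case Assumption \ref{increasing_sample_selection} is satisfied, or $Q(0,\bar x) > Q(1,\bar x) > 0$, in which case Assumption \ref{decreasingassumption} is satisfied. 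Fix $\bar u \in [0,1]$ and $\delta(\bar x,\bar u) \in \mathbb{R}$ arbitrarily.

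In the increasing branch, I would feed $\bar x$, $\bar u$, $\delta(\bar x,\bar u)$ into the construction used to prove Proposition \ref{partialnecessary} (steps 1--12 of Appendix \ref{proofsharp3}, specialized to convex support, which is automatic here since $\mathcal{Y}^* = \mathbb{R}$). This yields candidate variables $(\tilde Y_0^*, \tilde Y_1^*, \tilde U, \tilde V)$ for which \eqref{faketargetP} gives $\Delta_{\tilde Y^*}^{OO}(\bar x,\bar u) = \delta(\bar x,\bar u)$, the support requirement \eqref{correctsupportP} is trivial, and \eqref{DataRestrictionP} holds because only the mass-zero slice $\{\tilde U = \bar u\}$ is altered; these are precisely \eqref{agnosticfaketargetP}--\eqref{agnosticDataRestrictionP}. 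In the decreasing branch I would instead invoke the impossibility proposition stated right after Corollary \ref{propositiondecreasing}, whose proof runs along the same lines with the roles of $Q(0,\cdot)$ and $Q(1,\cdot)$ — and of $S_0$ and $S_1$ inside the always-observed conditioning event — interchanged; it delivers candidate variables satisfying the same three displays. In either branch the candidates leave $P$ and $Q$ untouched, so $\tilde D$, $\tilde S_0$, $\tilde S_1$ are still generated by the model's selection rules and the branch of Assumption \ref{agnosticassumption} that held for the true data-generating process continues to hold; Assumptions \ref{ind}--\ref{support} are likewise preserved (the perturbed $\tilde Y_d^*$ at $\bar u$ are point masses, hence have finite first two moments and common support $\mathbb{R}$). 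Combining the two branches gives the claim.

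There is no genuinely hard step here: the entire content lives in the two earlier constructions, and the present statement is their common refinement. The only item that needs a word of care is the degenerate sub-case $\Delta_S(\bar x,\bar u) = 0$ (equivalently $m_0^S(\bar x,\bar u) = m_1^S(\bar x,\bar u)$), in which the auxiliary quantity $\gamma(\bar x,\bar u) = \bigl(m_1^Y(\bar x,\bar u) - \alpha(\bar x,\bar u)\,m_0^S(\bar x,\bar u)\bigr)/\Delta_S(\bar x,\bar u)$ used to pin down the law of $\tilde Y_1^*$ on the observed-only-when-treated stratum is undefined; but then that stratum has zero mass, $\Delta_{Y^*}^{OO}$ is point identified through \eqref{m1YstarOO}, and the construction should be read as applying on the generic region where that stratum is nonempty, exactly as in the proofs of Proposition \ref{partialnecessary} and Theorem \ref{sharpbounds}.
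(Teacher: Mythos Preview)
Your proposal is correct and follows essentially the same approach as the paper: both argue by case-splitting on which branch of Assumption \ref{agnosticassumption} holds and then invoking the corresponding earlier construction (the paper does this by pointing to the proof of Proposition \ref{agnosticsharp}, which packages the two cases into a single formula via the indicators $\mathbf{1}\{Q(1,x) > Q(0,x)\}$ and $\mathbf{1}\{Q(1,x) < Q(0,x)\}$, whereas you invoke Proposition \ref{partialnecessary} and its decreasing-direction analog directly). Your remark about the degenerate sub-case $\Delta_S(\bar x,\bar u)=0$ is a fair caveat that the paper's construction also leaves implicit.
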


\begin{proof}[Proof of Proposition \ref{agnosticsharp}]
I only prove Proposition \ref{agnosticsharp} under Assumption \ref{bounded}.3 (sub-cases (a) and (b)).The proofs of Proposition \ref{agnosticsharp} under assumptions \ref{bounded}.1 and \ref{bounded}.2 are trivial modifications of the proof presented below.

Fix $\overline{u} \in \left[0, 1\right]$, $\overline{x} \in \mathcal{X}$ and $\delta\left(\overline{x}, \overline{u}\right) \in \left(\underline{\Upsilon_{Y^{*}}^{OO}}\left(\overline{x}, \overline{u}\right), \overline{\Upsilon_{Y^{*}}^{OO}}\left(\overline{x}, \overline{u}\right)\right)$ arbitrarily. For brevity, define
\begin{align*}
\alpha\left(\overline{x}, \overline{u}\right) & \coloneqq \mathbf{1}\left\lbrace Q\left(1, x\right) > Q\left(0, x\right) \right\rbrace \cdot \left(\delta\left(\overline{x}, \overline{u}\right) + \dfrac{m_{0}^{Y}\left(\overline{x}, \overline{u}\right)}{m_{0}^{S}\left(\overline{x}, \overline{u}\right)}\right) \\
& \hspace{30pt} + \mathbf{1}\left\lbrace Q\left(1, x\right) < Q\left(0, x\right) \right\rbrace \cdot \left(- \delta\left(\overline{x}, \overline{u}\right) + \dfrac{m_{1}^{Y}\left(\overline{x}, \overline{u}\right)}{m_{1}^{S}\left(\overline{x}, \overline{u}\right)}\right),
\end{align*}
\begin{align*}
\gamma\left(\overline{x}, \overline{u}\right) & \coloneqq \mathbf{1}\left\lbrace Q\left(1, x\right) > Q\left(0, x\right) \right\rbrace \cdot \left(\dfrac{m_{1}^{Y}\left(\overline{x}, \overline{u}\right) - \alpha\left(\overline{x}, \overline{u}\right) \cdot m_{0}^{S}\left(\overline{x}, \overline{u}\right)}{\Delta_{S}\left(\overline{x}, \overline{u}\right)}\right) \\
& \hspace{30pt} + \mathbf{1}\left\lbrace Q\left(1, x\right) < Q\left(0, x\right) \right\rbrace \cdot \left(\dfrac{m_{0}^{Y}\left(\overline{x}, \overline{u}\right) - \alpha\left(\overline{x}, \overline{u}\right) \cdot m_{1}^{S}\left(\overline{x}, \overline{u}\right)}{-\Delta_{S}\left(\overline{x}, \overline{u}\right)}\right),
\end{align*}
\begin{equation*}
\underline{Q}\left(x\right) = \min \left\lbrace Q\left(0, x\right), Q\left(1, x\right) \right\rbrace,
\end{equation*}
\begin{equation*}
\overline{Q}\left(x\right) = \max \left\lbrace Q\left(0, x\right), Q\left(1, x\right) \right\rbrace,
\end{equation*}
\begin{equation*}
\underline{m}^{S}\left(x, \overline{u}\right) = \min \left\lbrace m_{0}^{S}\left(x, \overline{u}\right), m_{1}^{S}\left(x, \overline{u}\right) \right\rbrace \text{ for any } x \in \mathcal{X},
\end{equation*}
and
\begin{equation*}
\overline{m}^{S}\left(x, \overline{u}\right) = \max \left\lbrace m_{0}^{S}\left(x, \overline{u}\right), m_{1}^{S}\left(x, \overline{u}\right) \right\rbrace \text{ for any } x \in \mathcal{X}.
\end{equation*}

Note that
\begin{equation}\label{sanity1A}
\alpha\left(\overline{x}, \overline{u}\right) \in \left(\underline{y}^{*}, \overline{y}^{*}\right),
\end{equation}
and that
\begin{equation}\label{sanity2A}
\gamma\left(\overline{x}, \overline{u}\right) \in \left(\underline{y}^{*}, \overline{y}^{*}\right).
\end{equation}

The strategy of this proof consists of defining candidate random variables $\left(\tilde{Y}_{0}^{*}, \tilde{Y}_{1}^{*}, \tilde{U}, \tilde{V}\right)$ through their joint cumulative distribution function $F_{\tilde{Y}_{0}^{*}, \tilde{Y}_{1}^{*}, \tilde{U}, \tilde{V}, Z, X}$ and then checking that equations \eqref{agnosticfaketarget}, \eqref{agnosticcorrectsupport} and \eqref{agnosticDataRestriction} are satisfied. I fix $\left(y_{0}, y_{1}, u, v, z, x\right) \in \mathbb{R}^{6}$ and define $F_{\tilde{Y}_{0}^{*}, \tilde{Y}_{1}^{*}, \tilde{U}, \tilde{V}, Z, X}$ in twelve steps:
\begin{enumerate}
	\item[Step 1.] For $x \notin \mathcal{X}$, $F_{\tilde{Y}_{0}^{*}, \tilde{Y}_{1}^{*}, \tilde{U}, \tilde{V}, Z, X}\left(y_{0}, y_{1}, u, v, z, x\right) = F_{Y_{0}^{*}, Y_{1}^{*}, U, V, Z, X}\left(y_{0}, y_{1}, u, v, z, x\right)$.
	
	\item[Step 2.] From now on, consider $x \in \mathcal{X}$. Since $$F_{\tilde{Y}_{0}^{*}, \tilde{Y}_{1}^{*}, \tilde{U}, \tilde{V}, Z, X}\left(y_{0}, y_{1}, u, v, z, x\right) = F_{\tilde{Y}_{0}^{*}, \tilde{Y}_{1}^{*}, \tilde{U}, \tilde{V}, Z \left\vert X \right.}\left(y_{0}, y_{1}, u, v, z \left\vert x \right.\right) \cdot F_{X}\left(x\right),$$ it suffices to define $F_{\tilde{Y}_{0}^{*}, \tilde{Y}_{1}^{*}, \tilde{U}, \tilde{V}, Z \left\vert X \right.}\left(y_{0}, y_{1}, u, v, z \left\vert x \right.\right)$. Moreover, I impose $$\left. Z \independent \left(\tilde{Y}_{0}^{*}, \tilde{Y}_{1}^{*}, \tilde{U}, \tilde{V} \right) \right\vert X$$ by writing $$F_{\tilde{Y}_{0}^{*}, \tilde{Y}_{1}^{*}, \tilde{U}, \tilde{V}, Z \left\vert X \right.}\left(y_{0}, y_{1}, u, v, z \left\vert x \right.\right) = F_{\tilde{Y}_{0}^{*}, \tilde{Y}_{1}^{*}, \tilde{U}, \tilde{V}\left\vert X \right.}\left(y_{0}, y_{1}, u, v \left\vert x \right.\right) \cdot F_{Z \left\vert X \right.}\left(z \left\vert x \right.\right),$$ implying that it is sufficient to define $F_{\tilde{Y}_{0}^{*}, \tilde{Y}_{1}^{*}, \tilde{U}, \tilde{V}\left\vert X \right.}\left(y_{0}, y_{1}, u, v \left\vert x \right.\right)$.
	
	\item[Step 3.] For $u \notin \left[0, 1\right]$, I define $F_{\tilde{Y}_{0}^{*}, \tilde{Y}_{1}^{*}, \tilde{U}, \tilde{V}\left\vert X \right.}\left(y_{0}, y_{1}, u, v \left\vert x \right.\right) = F_{Y_{0}^{*}, Y_{1}^{*}, U, V\left\vert X \right.}\left(y_{0}, y_{1}, u, v \left\vert x \right.\right)$.
	
	\item[Step 4.] From now on, consider $u \in \left[0, 1\right]$. Since $$F_{\tilde{Y}_{0}^{*}, \tilde{Y}_{1}^{*}, \tilde{U}, \tilde{V}\left\vert X \right.}\left(y_{0}, y_{1}, u, v \left\vert x \right.\right) = F_{\tilde{Y}_{0}^{*}, \tilde{Y}_{1}^{*}, \tilde{V}\left\vert X, \tilde{U} \right.}\left(y_{0}, y_{1}, v \left\vert x, u \right.\right) \cdot F_{\tilde{U}\left\vert X \right.}\left(u \left\vert x \right.\right),$$ it suffices to define $F_{\tilde{Y}_{0}^{*}, \tilde{Y}_{1}^{*}, \tilde{V}\left\vert X, \tilde{U} \right.}\left(y_{0}, y_{1}, v \left\vert x, u \right.\right)$ and $F_{\tilde{U}\left\vert X \right.}\left(u \left\vert x \right.\right)$.
	
	\item[Step 5.] I define $F_{\tilde{U}\left\vert X \right.}\left(u \left\vert x \right.\right) = F_{U\left\vert X \right.}\left(u \left\vert x \right.\right) = u$.
	
	\item[Step 6.] For any $u \neq \overline{u}$, I define $F_{\tilde{Y}_{0}^{*}, \tilde{Y}_{1}^{*}, \tilde{V}\left\vert X, \tilde{U} \right.}\left(y_{0}, y_{1}, v \left\vert x, u\right.\right) = F_{Y_{0}^{*}, Y_{1}^{*}, V\left\vert X, U \right.}\left(y_{0}, y_{1}, v \left\vert x, u\right.\right)$.
	
	\item[Step 7.] For any $v \notin \left[0, 1\right]$, I define $F_{\tilde{Y}_{0}^{*}, \tilde{Y}_{1}^{*}, \tilde{V}\left\vert X, \tilde{U} \right.}\left(y_{0}, y_{1}, v \left\vert x, \overline{u}\right.\right) = F_{Y_{0}^{*}, Y_{1}^{*}, V\left\vert X, U \right.}\left(y_{0}, y_{1}, v \left\vert x, \overline{u}\right.\right)$.
	
	\item[Step 8.] From now on, assume that $v \in \left[0, 1\right]$. Since $$F_{\tilde{Y}_{0}^{*}, \tilde{Y}_{1}^{*}, \tilde{V}\left\vert X, \tilde{U} \right.}\left(y_{0}, y_{1}, v \left\vert x, \overline{u}\right.\right) = F_{\tilde{Y}_{0}^{*}, \tilde{Y}_{1}^{*}\left\vert X, \tilde{U}, \tilde{V} \right.}\left(y_{0}, y_{1}\left\vert x, \overline{u}, v\right.\right) \cdot F_{\tilde{V}\left\vert X, \tilde{U} \right.}\left(v \left\vert x, \overline{u}\right.\right),$$ it is sufficient to define $F_{\tilde{Y}_{0}^{*}, \tilde{Y}_{1}^{*}\left\vert X, \tilde{U}, \tilde{V} \right.}\left(y_{0}, y_{1}\left\vert x, \overline{u}, v\right.\right)$ and $F_{\tilde{V}\left\vert X, \tilde{U} \right.}\left(v \left\vert x, \overline{u}\right.\right)$.
	
	\item[Step 9.] I define
	$$
	F_{\tilde{V}\left\vert X, \tilde{U} \right.}\left(v \left\vert x, \overline{u}\right.\right) = \left\lbrace
	\begin{array}{cl}
	\underline{m}^{S}\left(x, \overline{u}\right) \cdot \dfrac{v}{\underline{Q}\left(x\right)} & \text{if } v \leq \underline{Q}\left(x\right) \\
	& \\
	\underline{m}^{S}\left(x, \overline{u}\right) + \left(\overline{m}^{S}\left(x, \overline{u}\right) - \underline{m}^{S}\left(x, \overline{u}\right)\right) \cdot \dfrac{v - \underline{Q}\left(x\right)}{\overline{Q}\left(x\right) - \underline{Q}\left(x\right)} & \text{if } \underline{Q}\left(x\right) < v \leq \overline{Q}\left(x\right) \\
	& \\
	\overline{m}^{S}\left(x, \overline{u}\right) + \left(1 - \overline{m}^{S}\left(x, \overline{u}\right)\right)\dfrac{v - \overline{Q}\left(x\right)}{1 - \overline{Q}\left(x\right)} & \text{if } \overline{Q}\left(x\right) < v
	\end{array}
	\right..
	$$
	
	\item[Step 10.] I write $F_{\tilde{Y}_{0}^{*}, \tilde{Y}_{1}^{*}\left\vert X, \tilde{U}, \tilde{V} \right.}\left(y_{0}, y_{1}\left\vert x, \overline{u}, v\right.\right) = F_{\tilde{Y}_{0}^{*}\left\vert X, \tilde{U}, \tilde{V} \right.}\left(y_{0}\left\vert x, \overline{u}, v\right.\right) \cdot F_{ \tilde{Y}_{1}^{*}\left\vert X, \tilde{U}, \tilde{V} \right.}\left(y_{1}\left\vert x, \overline{u}, v\right.\right)$, implying that I can separately define $F_{\tilde{Y}_{0}^{*}\left\vert X, \tilde{U}, \tilde{V} \right.}\left(y_{0}\left\vert x, \overline{u}, v\right.\right)$ and $F_{ \tilde{Y}_{1}^{*}\left\vert X, \tilde{U}, \tilde{V} \right.}\left(y_{1}\left\vert x, \overline{u}, v\right.\right)$.
	
	\item[Step 11.] When $Q\left(1, x\right) > Q\left(0, x\right)$ and $\mathcal{Y}^{*}$ is a bounded interval (sub-case (a) in Assumption \ref{bounded}.3), I define
	$$
	F_{\tilde{Y}_{0}^{*}\left\vert X, \tilde{U}, \tilde{V} \right.}\left(y_{0}\left\vert x, \overline{u}, v\right.\right) = \left\lbrace
	\begin{array}{cl}
	\mathbf{1}\left\lbrace y_{0} \geq \dfrac{m_{0}^{Y}\left(\overline{x}, \overline{u}\right)}{m_{0}^{S}\left(\overline{x}, \overline{u}\right)} \right\rbrace & \text{if } v \leq \underline{Q}\left(x\right) \\
	---------- & ------- \\
	\mathbf{1}\left\lbrace y_{0} \geq \dfrac{\underline{y}^{*} + \overline{y}^{*}}{2} \right\rbrace & \text{if } \underline{Q}\left(x\right) < v
	\end{array}
	\right..
	$$
	
	When $Q\left(1, x\right) > Q\left(0, x\right)$ and $\overline{y}^{*} = \max \left\lbrace y \in \mathcal{Y}^{*} \right\rbrace$ and $\underline{y}^{*} = \min \left\lbrace y \in \mathcal{Y}^{*} \right\rbrace$ (sub-case (b) in Assumption \ref{bounded}.3), I define
	$$
	F_{\tilde{Y}_{0}^{*}\left\vert X, \tilde{U}, \tilde{V} \right.}\left(y_{0}\left\vert x, \overline{u}, v\right.\right) = \left\lbrace
	\begin{array}{cl}
	0 & \text{if } y_{0} < \underline{y}^{*} \text{ and } v \leq \underline{Q}\left(x\right) \\
	& \\
	1 - \dfrac{\dfrac{m_{0}^{Y}\left(\overline{x}, \overline{u}\right)}{m_{0}^{S}\left(\overline{x}, \overline{u}\right)} - \underline{y}^{*}}{\overline{y}^{*} - \underline{y}^{*}} & \text{if } \underline{y}^{*} \leq y_{0} < \overline{y}^{*} \text{ and } v \leq \underline{Q}\left(x\right) \\
	& \\
	1 & \text{if } \overline{y}^{*} \leq y_{0} \text{ and } v \leq \underline{Q}\left(x\right) \\
	---------- & -------------- \\
	\mathbf{1}\left\lbrace y_{0} \geq \overline{y}^{*} \right\rbrace & \text{if } \underline{Q}\left(x\right) < v
	\end{array}
	\right..
	$$
	which are valid cumulative distribution functions because $\dfrac{m_{0}^{Y}\left(\overline{x}, \overline{u}\right)}{m_{0}^{S}\left(\overline{x}, \overline{u}\right)} \in \left[\underline{y}^{*}, \overline{y}^{*}\right]$.
	
	When $Q\left(1, x\right) < Q\left(0, x\right)$ and $\mathcal{Y}^{*}$ is a bounded interval (sub-case (a) in Assumption \ref{bounded}.3), I define
	$$
	F_{\tilde{Y}_{0}^{*}\left\vert X, \tilde{U}, \tilde{V} \right.}\left(y_{0}\left\vert x, \overline{u}, v\right.\right) = \left\lbrace
	\begin{array}{cl}
	\mathbf{1}\left\lbrace y_{0} \geq \alpha\left(\overline{x}, \overline{u}\right) \right\rbrace & \text{if } v \leq \underline{Q}\left(x\right) \\
	-------- & ----------- \\
	\mathbf{1}\left\lbrace y_{0} \geq \gamma\left(\overline{x}, \overline{u}\right) \right\rbrace & \text{if } \underline{Q}\left(x\right) < v \leq \overline{Q}\left(x\right) \\
	-------- & ----------- \\
	\mathbf{1}\left\lbrace y_{0} \geq \dfrac{\underline{y}^{*} + \overline{y}^{*}}{2} \right\rbrace & \text{if } \overline{Q}\left(x\right) < v
	\end{array}
	\right..
	$$
	
	When $Q\left(1, x\right) < Q\left(0, x\right)$ and $\overline{y}^{*} = \max \left\lbrace y \in \mathcal{Y}^{*} \right\rbrace$ and $\underline{y}^{*} = \min \left\lbrace y \in \mathcal{Y}^{*} \right\rbrace$ (sub-case (b) in Assumption \ref{bounded}.3), I define
	$$
	F_{\tilde{Y}_{0}^{*}\left\vert X, \tilde{U}, \tilde{V} \right.}\left(y_{0}\left\vert x, \overline{u}, v\right.\right) = \left\lbrace
	\begin{array}{cl}
	0 & \text{if } y_{0} < \underline{y}^{*} \text{ and } v \leq \underline{Q}\left(x\right) \\
	& \\
	1 - \dfrac{\alpha\left(\overline{x}, \overline{u}\right) - \underline{y}^{*}}{\overline{y}^{*} - \underline{y}^{*}} & \text{if } \underline{y}^{*} \leq y_{0} < \overline{y}^{*} \text{ and } v \leq \underline{Q}\left(x\right) \\
	& \\
	1 & \text{if } \overline{y}^{*} \leq y_{0} \text{ and } v \leq \underline{Q}\left(x\right) \\
	-------- & ------------------ \\
	0 & \text{if } y_{0} < \underline{y}^{*} \text{ and } \underline{Q}\left(x\right) < v \leq \overline{Q}\left(x\right) \\
	& \\
	1 - \dfrac{\gamma\left(\overline{x}, \overline{u}\right) - \underline{y}^{*}}{\overline{y}^{*} - \underline{y}^{*}} & \text{if } \underline{y}^{*} \leq y_{0} < \overline{y}^{*} \text{ and } \underline{Q}\left(x\right) < v \leq \overline{Q}\left(x\right) \\
	& \\
	1 & \text{if } \overline{y}^{*} \leq y_{0} \text{ and } \underline{Q}\left(x\right) < v \leq \overline{Q}\left(x\right) \\
	-------- & ------------------ \\
	\mathbf{1}\left\lbrace y_{0} \geq \overline{y}^{*} \right\rbrace & \text{if } \overline{Q}\left(x\right) < v
	\end{array}
	\right..
	$$
	which are valid cumulative distribution functions because of equations \eqref{sanity1A} and \eqref{sanity2A}.
	
	\item[Step 12.] When $Q\left(1, x\right) > Q\left(0, x\right)$ and $\mathcal{Y}^{*}$ is a bounded interval (sub-case (a) in Assumption \ref{bounded}.3), I define
	$$
	F_{\tilde{Y}_{1}^{*}\left\vert X, \tilde{U}, \tilde{V} \right.}\left(y_{1}\left\vert x, \overline{u}, v\right.\right) = \left\lbrace
	\begin{array}{cl}
	\mathbf{1}\left\lbrace y_{1} \geq \alpha\left(\overline{x}, \overline{u}\right) \right\rbrace & \text{if } v \leq \underline{Q}\left(x\right) \\
	-------- & ----------- \\
	\mathbf{1}\left\lbrace y_{1} \geq \gamma\left(\overline{x}, \overline{u}\right) \right\rbrace & \text{if } \underline{Q}\left(x\right) < v \leq \overline{Q}\left(x\right) \\
	-------- & ----------- \\
	\mathbf{1}\left\lbrace y_{1} \geq \dfrac{\underline{y}^{*} + \overline{y}^{*}}{2} \right\rbrace & \text{if } \overline{Q}\left(x\right) < v
	\end{array}
	\right..
	$$
	
	When $Q\left(1, x\right) > Q\left(0, x\right)$ and $\overline{y}^{*} = \max \left\lbrace y \in \mathcal{Y}^{*} \right\rbrace$ and $\underline{y}^{*} = \min \left\lbrace y \in \mathcal{Y}^{*} \right\rbrace$ (sub-case (b) in Assumption \ref{bounded}.3), I define
	$$
	F_{\tilde{Y}_{1}^{*}\left\vert X, \tilde{U}, \tilde{V} \right.}\left(y_{1}\left\vert x, \overline{u}, v\right.\right) = \left\lbrace
	\begin{array}{cl}
	0 & \text{if } y_{1} < \underline{y}^{*} \text{ and } v \leq \underline{Q}\left(x\right) \\
	& \\
	1 - \dfrac{\alpha\left(\overline{x}, \overline{u}\right) - \underline{y}^{*}}{\overline{y}^{*} - \underline{y}^{*}} & \text{if } \underline{y}^{*} \leq y_{1} < \overline{y}^{*} \text{ and } v \leq \underline{Q}\left(x\right) \\
	& \\
	1 & \text{if } \overline{y}^{*} \leq y_{1} \text{ and } v \leq \underline{Q}\left(x\right) \\
	-------- & ------------------ \\
	0 & \text{if } y_{1} < \underline{y}^{*} \text{ and } \underline{Q}\left(x\right) < v \leq \overline{Q}\left(x\right) \\
	& \\
	1 - \dfrac{\gamma\left(\overline{x}, \overline{u}\right) - \underline{y}^{*}}{\overline{y}^{*} - \underline{y}^{*}} & \text{if } \underline{y}^{*} \leq y_{1} < \overline{y}^{*} \text{ and } \underline{Q}\left(x\right) < v \leq \overline{Q}\left(x\right) \\
	& \\
	1 & \text{if } \overline{y}^{*} \leq y_{1} \text{ and } \underline{Q}\left(x\right) < v \leq \overline{Q}\left(x\right) \\
	-------- & ------------------ \\
	\mathbf{1}\left\lbrace y_{1} \geq \overline{y}^{*} \right\rbrace & \text{if } \overline{Q}\left(x\right) < v
	\end{array}
	\right..
	$$
	which are valid cumulative distribution functions because of equations \eqref{sanity1} and \eqref{sanity2}.
	
	When $Q\left(1, x\right) < Q\left(0, x\right)$ and $\mathcal{Y}^{*}$ is a bounded interval (sub-case (a) in Assumption \ref{bounded}.3), I define
	$$
	F_{\tilde{Y}_{1}^{*}\left\vert X, \tilde{U}, \tilde{V} \right.}\left(y_{1}\left\vert x, \overline{u}, v\right.\right) = \left\lbrace
	\begin{array}{cl}
	\mathbf{1}\left\lbrace y_{1} \geq \dfrac{m_{1}^{Y}\left(\overline{x}, \overline{u}\right)}{m_{1}^{S}\left(\overline{x}, \overline{u}\right)} \right\rbrace & \text{if } v \leq \underline{Q}\left(x\right) \\
	---------- & ------- \\
	\mathbf{1}\left\lbrace y_{1} \geq \dfrac{\underline{y}^{*} + \overline{y}^{*}}{2} \right\rbrace & \text{if } \underline{Q}\left(x\right) < v
	\end{array}
	\right..
	$$
	
	When $Q\left(1, x\right) < Q\left(0, x\right)$ and $\overline{y}^{*} = \max \left\lbrace y \in \mathcal{Y}^{*} \right\rbrace$ and $\underline{y}^{*} = \min \left\lbrace y \in \mathcal{Y}^{*} \right\rbrace$ (sub-case (b) in Assumption \ref{bounded}.3), I define
	$$
	F_{\tilde{Y}_{1}^{*}\left\vert X, \tilde{U}, \tilde{V} \right.}\left(y_{1}\left\vert x, \overline{u}, v\right.\right) = \left\lbrace
	\begin{array}{cl}
	0 & \text{if } y_{1} < \underline{y}^{*} \text{ and } v \leq \underline{Q}\left(x\right) \\
	& \\
	1 - \dfrac{\dfrac{m_{1}^{Y}\left(\overline{x}, \overline{u}\right)}{m_{1}^{S}\left(\overline{x}, \overline{u}\right)} - \underline{y}^{*}}{\overline{y}^{*} - \underline{y}^{*}} & \text{if } \underline{y}^{*} \leq y_{1} < \overline{y}^{*} \text{ and } v \leq \underline{Q}\left(x\right) \\
	& \\
	1 & \text{if } \overline{y}^{*} \leq y_{1} \text{ and } v \leq \underline{Q}\left(x\right) \\
	---------- & -------------- \\
	\mathbf{1}\left\lbrace y_{1} \geq \overline{y}^{*} \right\rbrace & \text{if } \underline{Q}\left(x\right) < v
	\end{array}
	\right..
	$$
	which are valid cumulative distribution functions because $\dfrac{m_{1}^{Y}\left(\overline{x}, \overline{u}\right)}{m_{1}^{S}\left(\overline{x}, \overline{u}\right)} \in \left[\underline{y}^{*}, \overline{y}^{*}\right]$.
\end{enumerate}

Having defined the joint cumulative distribution function $F_{\tilde{Y}_{0}^{*}, \tilde{Y}_{1}^{*}, \tilde{U}, \tilde{V}, Z, X}$, note that equations \eqref{sanity1A} and \eqref{sanity2A}, the facts $\dfrac{m_{0}^{Y}\left(\overline{x}, \overline{u}\right)}{m_{0}^{S}\left(\overline{x}, \overline{u}\right)} \in \left[\underline{y}^{*}, \overline{y}^{*}\right]$ and $\dfrac{m_{1}^{Y}\left(\overline{x}, \overline{u}\right)}{m_{1}^{S}\left(\overline{x}, \overline{u}\right)} \in \left[\underline{y}^{*}, \overline{y}^{*}\right]$, and steps 7-12 ensure that equation \eqref{agnosticcorrectsupport} holds.

Now, I show, in three steps, that equation \eqref{agnosticfaketarget} holds.
\begin{enumerate}	
	\item[Step 13.] Observe that
	\begin{align}
	& \mathbb{E}\left[\tilde{Y}_{1}^{*} \left\vert X = \overline{x}, \tilde{U} = \overline{u},  \tilde{S}_{0} = 1, \tilde{S}_{1} = 1 \right.\right] \nonumber \\
	& \hspace{30pt} = \mathbf{1}\left\lbrace Q\left(1, x\right) > Q\left(0, x\right) \right\rbrace  \cdot \alpha\left(\overline{x}, \overline{u}\right) + \mathbf{1}\left\lbrace Q\left(1, x\right) < Q\left(0, x\right) \right\rbrace \cdot \dfrac{m_{1}^{Y}\left(\overline{x}, \overline{u}\right)}{m_{1}^{S}\left(\overline{x}, \overline{u}\right)}.
	\end{align}
	
	\item[Step 14.] Notice that
	\begin{align}
	& \mathbb{E}\left[\tilde{Y}_{0}^{*} \left\vert X = \overline{x}, \tilde{U} = \overline{u},  \tilde{S}_{0} = 1, \tilde{S}_{1} = 1 \right.\right] \nonumber \\
	& \hspace{30pt} = \mathbf{1}\left\lbrace Q\left(1, x\right) > Q\left(0, x\right) \right\rbrace  \cdot \dfrac{m_{0}^{Y}\left(\overline{x}, \overline{u}\right)}{m_{0}^{S}\left(\overline{x}, \overline{u}\right)} + \mathbf{1}\left\lbrace Q\left(1, x\right) < Q\left(0, x\right) \right\rbrace \cdot \alpha\left(\overline{x}, \overline{u}\right) .
	\end{align}
	
	\item[Step 15.] Note that Steps 13 and 14 imply that
	\begin{equation*}
	\Delta_{\tilde{Y}^{*}}^{OO}\left(\overline{x}, \overline{u}\right) \coloneqq \mathbb{E}\left[\tilde{Y}_{1}^{*} - \tilde{Y}_{0}^{*} \left\vert X = \overline{x}, \tilde{U} = \overline{u}, \tilde{S}_{0} = 1, \tilde{S}_{1} = 1 \right.\right] = \delta\left(\overline{x}, \overline{u}\right),
	\end{equation*}
	ensuring that equation \eqref{agnosticfaketarget} holds.
\end{enumerate}

Finally, to show that equation \eqref{agnosticDataRestriction} holds, it suffices to follow steps 16 and 17 in Appendix \ref{proofsharp3}.

I can then conclude that Proposition \ref{agnosticsharp} is true.
\end{proof}

\begin{proof}[Proof of Proposition \ref{agnosticnecessary}]
This proof is essentially the same proof of Proposition \ref{agnosticsharp} under Assumption \ref{bounded}.3.(a). Fix $\overline{u} \in \left[0, 1\right]$, $\overline{x} \in \mathcal{X}$ and $\delta\left(\overline{x}, \overline{u}\right) \in \mathbb{R}$ arbitrarily. For brevity, define
\begin{align*}
\alpha\left(\overline{x}, \overline{u}\right) & \coloneqq \mathbf{1}\left\lbrace Q\left(1, x\right) > Q\left(0, x\right) \right\rbrace \cdot \left(\delta\left(\overline{x}, \overline{u}\right) + \dfrac{m_{0}^{Y}\left(\overline{x}, \overline{u}\right)}{m_{0}^{S}\left(\overline{x}, \overline{u}\right)}\right) \\
& \hspace{30pt} + \mathbf{1}\left\lbrace Q\left(1, x\right) < Q\left(0, x\right) \right\rbrace \cdot \left(- \delta\left(\overline{x}, \overline{u}\right) + \dfrac{m_{1}^{Y}\left(\overline{x}, \overline{u}\right)}{m_{1}^{S}\left(\overline{x}, \overline{u}\right)}\right),
\end{align*}
and
\begin{align*}
\gamma\left(\overline{x}, \overline{u}\right) & \coloneqq \mathbf{1}\left\lbrace Q\left(1, x\right) > Q\left(0, x\right) \right\rbrace \cdot \left(\dfrac{m_{1}^{Y}\left(\overline{x}, \overline{u}\right) - \alpha\left(\overline{x}, \overline{u}\right) \cdot m_{0}^{S}\left(\overline{x}, \overline{u}\right)}{\Delta_{S}\left(\overline{x}, \overline{u}\right)}\right) \\
& \hspace{30pt} + \mathbf{1}\left\lbrace Q\left(1, x\right) < Q\left(0, x\right) \right\rbrace \cdot \left(\dfrac{m_{0}^{Y}\left(\overline{x}, \overline{u}\right) - \alpha\left(\overline{x}, \overline{u}\right) \cdot m_{1}^{S}\left(\overline{x}, \overline{u}\right)}{-\Delta_{S}\left(\overline{x}, \overline{u}\right)}\right).
\end{align*}
Note that $\alpha\left(\overline{x}, \overline{u}\right) \in \mathbb{R} = \mathcal{Y}^{*}$ and $\gamma\left(\overline{x}, \overline{u}\right) \in \mathbb{R} = \mathcal{Y}^{*}$.

I define the random variables $\left(\tilde{Y}_{0}^{*}, \tilde{Y}_{1}^{*}, \tilde{U}, \tilde{V}\right)$ using the joint cumulative distribution function $F_{\tilde{Y}_{0}^{*}, \tilde{Y}_{1}^{*}, \tilde{U}, \tilde{V}, Z, X}$ described by steps 1-12 in the last proof for the case of convex support $\mathcal{Y}^{*}$. Note that equation \eqref{agnosticcorrectsupportP} is trivially true when $\mathcal{Y}^{*} = \mathbb{R}$. Moreover, equations \eqref{agnosticfaketargetP} and \eqref{agnosticDataRestrictionP} are valid by the argument described in the last proof.

I can then conclude that Proposition \ref{agnosticnecessary} is true.
\end{proof}

\pagebreak

\setcounter{table}{0}
\renewcommand\thetable{E.\arabic{table}}

\setcounter{figure}{0}
\renewcommand\thefigure{E.\arabic{figure}}

\setcounter{equation}{0}
\renewcommand\theequation{E.\arabic{equation}}

\setcounter{theorem}{0}
\renewcommand\thetheorem{E.\arabic{theorem}}

\section{Uninformative Bounds with Non-monotone Sample Selection}\label{nomonotonicity}

In the main text and in Appendices \ref{decreasing_sample_selection} and \ref{agnostic}, I impose some monotonicity condition on the sample selection problem through equation \eqref{selection}. However, in some empirical applications, this assumption may be invalid. For example, in the short run, a job training program may move some individuals from unemployment to employment by increasing their human capital or from employment to unemployment by decreasing their labor market experience. Since this is a frequent feature in empirical economics, it is important to understand what can be discovered about the marginal treatment effect when sample selection is not monotone. To do so, I drop equation \eqref{selection} and impose equation \eqref{treatment}, Assumptions \ref{ind}-\ref{support}, a small generalization of Assumption \ref{bounded}
\begin{assumption}\label{boundednomonotone}
I assume that $\underline{y}^{*}$ and $\overline{y}^{*}$ are known, and that
\begin{enumerate}
	\item $\underline{y}^{*} = - \infty$, $\overline{y}^{*} = \infty$ and $\mathcal{Y}^{*} = \mathbb{R}$, or
	
	\item $\underline{y}^{*} > - \infty$, $\overline{y}^{*} = \infty$ and $\mathcal{Y}^{*}$ is an interval, or
	
	\item $\underline{y}^{*} = - \infty$, $\overline{y}^{*} < \infty$ and $\mathcal{Y}^{*}$ is an interval, or
	
	\item $\underline{y}^{*} > - \infty$, $\overline{y}^{*} < \infty$ and
	\begin{enumerate}
		\item $\mathcal{Y}^{*}$ is an interval or
		
		\item $\underline{y}^{*} \in \mathcal{Y}^{*}$ and $\overline{y}^{*} \in \mathcal{Y}^{*}$.
	\end{enumerate}
\end{enumerate}
\end{assumption}
I also impose mild regularity conditions to ensure that all objects are well-defined:
\begin{assumption}\label{regularity}
For any $x \in \mathcal{X}$ and $u \in \left[0 , 1\right]$,
\begin{equation}\label{groupOO}
\mathbb{P}\left[S_{0} = 1, S_{1} = 1 \right] > 0,
\end{equation}
\begin{equation}\label{groupON}
\mathbb{P}\left[S_{0} = 1, S_{1} = 0 \right] > 0,
\end{equation}
\begin{equation}\label{groupNO}
\mathbb{P}\left[S_{0} = 0, S_{1} = 1 \right] > 0,
\end{equation}
\begin{equation}\label{regularity1}
\overline{y}^{*} \cdot m_{d}^{S}\left(x, u\right) - m_{d}^{Y}\left(x, u\right) > 0 \text{ for any } d \in \left\lbrace 0, 1 \right\rbrace,
\end{equation}
and
\begin{equation}\label{regularity2}
m_{d}^{Y}\left(x, u\right) - \underline{y}^{*} \cdot m_{d}^{S}\left(x, u\right) > 0 \text{ for any } d \in \left\lbrace 0, 1 \right\rbrace.
\end{equation}
\end{assumption}
Observe that conditions \eqref{regularity1} and \eqref{regularity2} are implied by a non-degenerate conditional distribution for each potential outcome of interest. Most importantly, the above assumptions are sufficient to construct bounds for the $ITT^{OO}$ (\cite{Horowitz2000}) and for the $LATE^{OO}$ \citep[section 2.4]{Chen2015} that are shorter than the entire support of the treatment effect.

I, now, show that, differently from the $ITT^{OO}$ and the $LATE^{OO}$, the bounds for the $MTE^{OO}$ on the outcome of interest (equation \eqref{target}) without equation \eqref{selection} are uninformative, i.e., the bounds without monotone sample selection are equal to $\left(\underline{y}^{*} - \overline{y}^{*}, \overline{y}^{*} - \underline{y}^{*}\right)$. Formally, I have that:

\begin{proposition}\label{monotonenecessary}
Suppose that the functions $m_{0}^{Y}$, $m_{1}^{Y}$, $m_{0}^{S}$ and $\Delta_{S}$ are point identified at every pair $\left(x, u\right) \in \mathcal{X} \times \left[0, 1\right]$. Impose equation \eqref{treatment} and assumptions \ref{ind}-\ref{support} and \ref{boundednomonotone}-\ref{regularity}. Then, for any $\overline{u} \in \left[0, 1\right]$, $\overline{x} \in \mathcal{X}$ and $\delta\left(\overline{x}, \overline{u}\right) \in \left(\underline{y}^{*} - \overline{y}^{*}, \overline{y}^{*} - \underline{y}^{*}\right)$, there exist random variables $\left(\tilde{Y}_{0}^{*}, \tilde{Y}_{1}^{*}, \tilde{U}, \tilde{S}_{0}, \tilde{S}_{1}\right)$ such that
\begin{equation}\label{faketargetnomonotone}
\Delta_{\tilde{Y}^{*}}^{OO}\left(\overline{x}, \overline{u}\right) \coloneqq \mathbb{E}\left[\tilde{Y}_{1}^{*} - \tilde{Y}_{0}^{*} \left\vert X = \overline{x}, \tilde{U} = \overline{u}, \tilde{S}_{0} = 1, \tilde{S}_{1} = 1 \right.\right] = \delta\left(\overline{x}, \overline{u}\right),
\end{equation}
\begin{equation}\label{correctsupportnomonotone}
\mathbb{P}\left[\left. \left(\tilde{Y}_{0}^{*}, \tilde{Y}_{1}^{*}, \tilde{S}_{0}, \tilde{S}_{1}\right) \in \mathcal{Y}^{*} \times \mathcal{Y}^{*} \times \left\lbrace 0, 1 \right\rbrace \times \left\lbrace 0, 1 \right\rbrace \right\vert X = \overline{x}, \tilde{U} = u \right] = 1 \text{ for any } u \in \left[0, 1\right],
\end{equation}
and
\begin{equation}\label{DataRestrictionnomonotone}
F_{\tilde{Y}, \tilde{D}, \tilde{S}, Z, X}\left(y, d, s, z, \overline{x} \right) = F_{Y, D, S, Z, X} \left(y, d, s, z, \overline{x}\right)
\end{equation}
for any $\left(y, d, s, z\right) \in \mathbb{R}^{4}$, where $\tilde{D} \coloneqq \mathbf{1}\left\lbrace P\left(X, Z\right) \geq \tilde{U}\right\rbrace$, $\tilde{S} = \tilde{D} \cdot \tilde{S}_{1} + \left(1 - \tilde{D}\right) \cdot \tilde{S}_{0}$, $\tilde{Y}_{0} = \tilde{S}_{0} \cdot \tilde{Y}_{0}^{*}$, $\tilde{Y}_{1} = \tilde{S}_{1} \cdot \tilde{Y}_{1}^{*}$ and $\tilde{Y} = \tilde{D} \cdot \tilde{Y}_{1} + \left(1 - \tilde{D}\right) \cdot \tilde{Y}_{0}$.
\end{proposition}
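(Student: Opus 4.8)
The plan is to adapt the constructive argument behind Theorem~\ref{sharpbounds} and Proposition~\ref{partialnecessary}, exploiting the fact that once equation~\eqref{selection} is dropped the pair $(S_{0},S_{1})$ is no longer tied to a single latent threshold, so on the null event $\{U=\overline{u}\}$ the principal strata can be redefined freely without disturbing the law of the observables. Fix $\overline{u}\in[0,1]$, $\overline{x}\in\mathcal{X}$ and $\delta(\overline{x},\overline{u})\in(\underline{y}^{*}-\overline{y}^{*},\overline{y}^{*}-\underline{y}^{*})$. Starting from a data generating process $(Y_{0}^{*},Y_{1}^{*},U,S_{0},S_{1},Z,X)$ consistent with the observed distribution and with equation~\eqref{treatment} and Assumptions~\ref{ind}--\ref{support} and \ref{boundednomonotone}--\ref{regularity} (so that every conditional object in the statement is well defined), I would define the candidate joint cumulative distribution function $F_{\tilde{Y}_{0}^{*},\tilde{Y}_{1}^{*},\tilde{U},\tilde{S}_{0},\tilde{S}_{1},Z,X}$ in exactly the layered fashion of Appendix~\ref{proofsharp3}: it equals the true one whenever $x\notin\mathcal{X}$, whenever $u\notin[0,1]$, and whenever $u\neq\overline{u}$; it imposes $Z\independent(\tilde{Y}_{0}^{*},\tilde{Y}_{1}^{*},\tilde{U},\tilde{S}_{0},\tilde{S}_{1})\mid X$; and it sets $\tilde{U}\mid X\sim\text{Uniform}[0,1]$, so that $\mathbb{P}[\tilde{U}=\overline{u}\mid X=\overline{x}]=0$.

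The only substantive layer is the conditional law of $(\tilde{Y}_{0}^{*},\tilde{Y}_{1}^{*},\tilde{S}_{0},\tilde{S}_{1})$ given $X=\overline{x}$ and $\tilde{U}=\overline{u}$. There I would set $\tilde{S}_{0}=\tilde{S}_{1}=1$ almost surely, pick $\alpha,\beta\in[\underline{y}^{*},\overline{y}^{*}]$ with $\alpha-\beta=\delta(\overline{x},\overline{u})$ --- possible precisely because $\delta(\overline{x},\overline{u})$ lies strictly inside $(\underline{y}^{*}-\overline{y}^{*},\overline{y}^{*}-\underline{y}^{*})$ --- and let $\tilde{Y}_{1}^{*}$ have conditional mean $\alpha$ and $\tilde{Y}_{0}^{*}$ conditional mean $\beta$, with supports contained in $\mathcal{Y}^{*}$ (a point mass at an interior value when $\mathcal{Y}^{*}$ is an interval or all of $\mathbb{R}$, and a two-point mixture of $\underline{y}^{*}$ and $\overline{y}^{*}$ in case~\ref{boundednomonotone}.4(b)). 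I emphasize that, unlike the constructions for Theorem~\ref{sharpbounds}, this one deliberately does \emph{not} reproduce $m_{0}^{Y},m_{1}^{Y},m_{0}^{S},m_{1}^{S}$ at the single point $\overline{u}$: this is legitimate because those functions feed the data only through integrals over $U$ (as in Appendix~\ref{proofQ0Q1}) and through the identification formulas \eqref{identifyQ0}--\eqref{identifyQ1}, all of which are blind to a modification on a Lebesgue-null set --- the same smoothness-free phenomenon flagged in Appendix~\ref{discontinuous}.

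Verification then mirrors steps~13--17 of Appendix~\ref{proofsharp3}. First, since $\{\tilde{S}_{0}=1,\tilde{S}_{1}=1\}$ carries full conditional probability given $X=\overline{x},\tilde{U}=\overline{u}$, one gets $\Delta_{\tilde{Y}^{*}}^{OO}(\overline{x},\overline{u})=\mathbb{E}[\tilde{Y}_{1}^{*}-\tilde{Y}_{0}^{*}\mid X=\overline{x},\tilde{U}=\overline{u}]=\alpha-\beta=\delta(\overline{x},\overline{u})$, which is equation~\eqref{faketargetnomonotone}. Second, equation~\eqref{correctsupportnomonotone} holds for $u\neq\overline{u}$ by the support assumptions on the true process (Assumptions~\ref{support} and \ref{boundednomonotone}) and for $u=\overline{u}$ by the explicit choice of supports above. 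Third, for equation~\eqref{DataRestrictionnomonotone} I would repeat the reduction of steps~16 and 17 in Appendix~\ref{proofsharp3}: write the defining integral of $F_{\tilde{Y},\tilde{D},\tilde{S},Z\mid X}(\cdot\mid\overline{x})$ as its $\{u\neq\overline{u}\}$ part plus its $\{u=\overline{u}\}$ part; the latter vanishes since $\mathbb{P}[\tilde{U}=\overline{u}\mid X=\overline{x}]=0$; the former coincides with the corresponding part of $F_{Y,D,S,Z\mid X}(\cdot\mid\overline{x})$ by construction; and the missing $\{u=\overline{u}\}$ part of the latter also vanishes, so the two distribution functions agree.

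The argument involves essentially no hard calculation; the point to get right is the bookkeeping that makes the null-set override legitimate --- namely that $\tilde{D}$, $\tilde{S}$ and $\tilde{Y}$ are genuine (measurable) functions of $(\tilde{Y}_{0}^{*},\tilde{Y}_{1}^{*},\tilde{U},\tilde{S}_{0},\tilde{S}_{1},Z)$, so that step~17 transfers verbatim. Conceptually, the remark worth making is the contrast the proposition draws with $ITT^{OO}$ and $LATE^{OO}$: those parameters average over $U$ (or over the compliers), so altering the model only on $\{U=\overline{u}\}$ leaves them intact and Horowitz--Manski-type bounds persist; but $\Delta_{Y^{*}}^{OO}(\overline{x},\overline{u})$ is evaluated exactly at that null set, and once the structural link between $(S_{0},S_{1})$ and a common latent index is severed there is nothing left to stop the always-observed stratum at $\overline{u}$ from realizing any mean difference in $(\underline{y}^{*}-\overline{y}^{*},\overline{y}^{*}-\underline{y}^{*})$.
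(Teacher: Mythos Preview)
Your proposal is correct and, like the paper, rests on the null-set override at $\{U=\overline{u}\}$ followed by the Step~16--17 argument of Appendix~\ref{proofsharp3}; the verifications of \eqref{faketargetnomonotone}--\eqref{DataRestrictionnomonotone} go through exactly as you describe.

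The route, however, is genuinely more elementary than the paper's. At $\tilde{U}=\overline{u}$ you simply set $\tilde{S}_{0}=\tilde{S}_{1}=1$ almost surely and place point (or two-point) masses for $\tilde{Y}_{0}^{*},\tilde{Y}_{1}^{*}$ with the right means. The paper instead introduces auxiliary quantities $\pi(\overline{x},\overline{u})$, $\gamma_{0}(\overline{x},\overline{u})$, $\gamma_{1}(\overline{x},\overline{u})$ and builds a non-degenerate principal-stratification at $\overline{u}$ in which the $OO$, $ON$ and $NO$ strata all carry strictly positive conditional mass, with the $\gamma_{d}$'s engineered so that the construction remains inside $(\underline{y}^{*},\overline{y}^{*})$. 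What your shortcut buys is brevity and transparency: none of $\pi,\gamma_{0},\gamma_{1}$ are needed for the three displayed requirements, and you correctly flag that matching $m_{d}^{Y},m_{d}^{S}$ at the single point $\overline{u}$ is immaterial to \eqref{DataRestrictionnomonotone}. What the paper's more elaborate construction buys is robustness to a stricter reading of Assumption~\ref{regularity}: if one interprets \eqref{groupOO}--\eqref{groupNO} as requiring strictly positive strata probabilities \emph{conditional on} $(X,U)=(\overline{x},\overline{u})$, then your candidate (with $\tilde{S}_{0}=\tilde{S}_{1}=1$ a.s.\ there) would violate that reading, whereas the paper's candidate would not. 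As the proposition is stated, though, only \eqref{faketargetnomonotone}--\eqref{DataRestrictionnomonotone} are demanded, so your argument is complete.
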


\begin{proof}[Proof of Proposition \ref{monotonenecessary}]
I only prove Proposition \ref{monotonenecessary} under assumption \ref{boundednomonotone}.4 (sub-cases (a) or (b)) because this is the more demanding case and because the other cases are trivial extensions of this one.

Fix $\overline{u} \in \left[0, 1\right]$, $\overline{x} \in \mathcal{X}$ and $\delta\left(\overline{x}, \overline{u}\right) \in \left(\underline{y}^{*} - \overline{y}^{*}, \overline{y}^{*} - \underline{y}^{*}\right)$ arbitrarily. For brevity, define $\left(\alpha_{0}\left(\overline{x}, \overline{u}\right), \alpha_{1}\left(\overline{x}, \overline{u}\right)\right) \in \left(\underline{y}^{*}, \overline{y}^{*}\right)^{2}$ such that $\delta\left(\overline{x}, \overline{u}\right) = \alpha_{1}\left(\overline{x}, \overline{u}\right) - \alpha_{0}\left(\overline{x}, \overline{u}\right)$, $$\pi\left(\overline{x}, \overline{u}\right) \coloneqq \dfrac{1}{2} \cdot \min\limits_{d \in \left\lbrace 0, 1 \right\rbrace} \left\lbrace \min \left\lbrace m_{d}^{S}\left(\overline{x}, \overline{u}\right), \dfrac{\overline{y}^{*} \cdot m_{d}^{S}\left(\overline{x}, \overline{u}\right) - m_{d}^{Y}\left(\overline{x}, \overline{u}\right)}{\overline{y}^{*} - \alpha_{d}\left(\overline{x}, \overline{u}\right)}, \dfrac{m_{d}^{Y}\left(\overline{x}, \overline{u}\right) - \underline{y}^{*} \cdot m_{d}^{S}\left(\overline{x}, \overline{u}\right)}{\alpha_{d}\left(\overline{x}, \overline{u}\right) - \underline{y}^{*}}\right\rbrace \right\rbrace,$$ $$\gamma_{0}\left(\overline{x}, \overline{u}\right) \coloneqq \dfrac{m_{0}^{Y}\left(\overline{x}, \overline{u}\right) - \alpha_{0}\left(\overline{x}, \overline{u}\right) \cdot \pi\left(\overline{x}, \overline{u}\right)}{m_{0}^{S}\left(\overline{x}, \overline{u}\right) - \pi\left(\overline{x}, \overline{u}\right)} \text{ and } \gamma_{1}\left(\overline{x}, \overline{u}\right) \coloneqq \dfrac{m_{1}^{Y}\left(\overline{x}, \overline{u}\right) - \alpha_{1}\left(\overline{x}, \overline{u}\right) \cdot \pi\left(\overline{x}, \overline{u}\right)}{m_{1}^{S}\left(\overline{x}, \overline{u}\right) - \pi\left(\overline{x}, \overline{u}\right)}.$$ Note that, by construction, $$\min\left\lbrace m_{1}^{S}\left(\overline{x}, \overline{u}\right) + m_{0}^{S}\left(\overline{x}, \overline{u}\right), 1 \right\rbrace > \pi\left(\overline{x}, \overline{u}\right) > 0 \text{ and } \left(\gamma_{0}\left(\overline{x}, \overline{u}\right), \gamma_{1}\left(\overline{x}, \overline{u}\right)\right) \in \left(\underline{y}^{*}, \overline{y}^{*}\right)^{2}.$$

The strategy of this proof consists of defining candidate random variables $\left(\tilde{Y}_{0}^{*}, \tilde{Y}_{1}^{*}, \tilde{U}, \tilde{S}_{0}, \tilde{S}_{1}\right)$ through their joint cumulative distribution function $F_{\tilde{Y}_{0}^{*}, \tilde{Y}_{1}^{*}, \tilde{U}, \tilde{S}_{0}, \tilde{S}_{1}, Z, X}$ and then checking that equations \eqref{faketargetnomonotone}, \eqref{correctsupportnomonotone} and \eqref{DataRestrictionnomonotone} are satisfied. I fix $\left(y_{0}, y_{1}, u, s_{0}, s_{1}, z, x\right) \in \mathbb{R}^{7}$ and define $F_{\tilde{Y}_{0}^{*}, \tilde{Y}_{1}^{*}, \tilde{U}, \tilde{S}_{0}, \tilde{S}_{1}, Z, X}$ in twelve steps:
\begin{enumerate}
	\item[Step 1.] For $x \notin \mathcal{X}$, $F_{\tilde{Y}_{0}^{*}, \tilde{Y}_{1}^{*}, \tilde{U}, \tilde{S}_{0}, \tilde{S}_{1}, Z, X}\left(y_{0}, y_{1}, u, s_{0}, s_{1}, z, x\right) = F_{Y_{0}^{*}, Y_{1}^{*}, U, S_{0}, S_{1}, Z, X}\left(y_{0}, y_{1}, u, s_{0}, s_{1}, z, x\right)$.
	
	\item[Step 2.] From now on, consider $x \in \mathcal{X}$. Since $$F_{\tilde{Y}_{0}^{*}, \tilde{Y}_{1}^{*}, \tilde{U}, \tilde{S}_{0}, \tilde{S}_{1}, Z, X}\left(y_{0}, y_{1}, u, s_{0}, s_{1}, z, x\right) = F_{\tilde{Y}_{0}^{*}, \tilde{Y}_{1}^{*}, \tilde{U}, \tilde{S}_{0}, \tilde{S}_{1}, Z \left\vert X \right.}\left(y_{0}, y_{1}, u, s_{0}, s_{1}, z \left\vert x \right.\right) \cdot F_{X}\left(x\right),$$ it suffices to define $F_{\tilde{Y}_{0}^{*}, \tilde{Y}_{1}^{*}, \tilde{U}, \tilde{S}_{0}, \tilde{S}_{1}, Z, X}\left(y_{0}, y_{1}, u, s_{0}, s_{1}, z, x\right)$. Moreover, I impose $$\left. Z \independent \left(\tilde{Y}_{0}^{*}, \tilde{Y}_{1}^{*}, \tilde{U}, \tilde{S}_{0}, \tilde{S}_{1} \right) \right\vert X$$ by writing $$F_{\tilde{Y}_{0}^{*}, \tilde{Y}_{1}^{*}, \tilde{U}, \tilde{S}_{0}, \tilde{S}_{1}, Z, X}\left(y_{0}, y_{1}, u, s_{0}, s_{1}, z, x\right) = F_{\tilde{Y}_{0}^{*}, \tilde{Y}_{1}^{*}, \tilde{U}, \tilde{S}_{0}, \tilde{S}_{1}\left\vert X \right.}\left(y_{0}, y_{1}, u, s_{0}, s_{1} \left\vert x \right.\right) \cdot F_{Z \left\vert X \right.}\left(z \left\vert x \right.\right),$$ implying that it is sufficient to define $F_{\tilde{Y}_{0}^{*}, \tilde{Y}_{1}^{*}, \tilde{U}, \tilde{S}_{0}, \tilde{S}_{1}\left\vert X \right.}\left(y_{0}, y_{1}, u, s_{0}, s_{1} \left\vert x \right.\right)$.
	
	\item[Step 3.] For $u \notin \left[0, 1\right]$, I define $F_{\tilde{Y}_{0}^{*}, \tilde{Y}_{1}^{*}, \tilde{U}, \tilde{S}_{0}, \tilde{S}_{1}\left\vert X \right.}\left(y_{0}, y_{1}, u, s_{0}, s_{1} \left\vert x \right.\right) = F_{Y_{0}^{*}, Y_{1}^{*}, U, S_{0}, S_{1}\left\vert X \right.}\left(y_{0}, y_{1}, u, s_{0}, s_{1} \left\vert x \right.\right)$.
	
	\item[Step 4.] From now on, consider $u \in \left[0, 1\right]$. Since $$F_{\tilde{Y}_{0}^{*}, \tilde{Y}_{1}^{*}, \tilde{U}, \tilde{S}_{0}, \tilde{S}_{1}\left\vert X \right.}\left(y_{0}, y_{1}, u, s_{0}, s_{1} \left\vert x \right.\right) = F_{\tilde{Y}_{0}^{*}, \tilde{Y}_{1}^{*}, \tilde{S}_{0}, \tilde{S}_{1}\left\vert X, \tilde{U} \right.}\left(y_{0}, y_{1}, s_{0}, s_{1} \left\vert x, u \right.\right) \cdot F_{\tilde{U}\left\vert X \right.}\left(u \left\vert x \right.\right),$$ it suffices to define $F_{\tilde{Y}_{0}^{*}, \tilde{Y}_{1}^{*}, \tilde{S}_{0}, \tilde{S}_{1}\left\vert X, \tilde{U} \right.}\left(y_{0}, y_{1}, s_{0}, s_{1} \left\vert x, u \right.\right)$ and $F_{\tilde{U}\left\vert X \right.}\left(u \left\vert x \right.\right)$.
	
	\item[Step 5.] I define $F_{\tilde{U}\left\vert X \right.}\left(u \left\vert x \right.\right) = F_{U\left\vert X \right.}\left(u \left\vert x \right.\right) = u$.
	
	\item[Step 6.] For any $u \neq \overline{u}$, I define $F_{\tilde{Y}_{0}^{*}, \tilde{Y}_{1}^{*}, \tilde{S}_{0}, \tilde{S}_{1}\left\vert X, \tilde{U} \right.}\left(y_{0}, y_{1}, s_{0}, s_{1} \left\vert x, u \right.\right) = F_{Y_{0}^{*}, Y_{1}^{*}, S_{0}, S_{1}\left\vert X, U \right.}\left(y_{0}, y_{1}, s_{0}, s_{1} \left\vert x, u \right.\right)$.
	
	\item[Step 7.] For any $\left(s_{0}, s_{1}\right) \notin \left\lbrace 0, 1 \right\rbrace^{2}$, I define $F_{\tilde{Y}_{0}^{*}, \tilde{Y}_{1}^{*}, \tilde{S}_{0}, \tilde{S}_{1}\left\vert X, \tilde{U} \right.}\left(y_{0}, y_{1}, s_{0}, s_{1} \left\vert x, \overline{u}\right.\right) = F_{Y_{0}^{*}, Y_{1}^{*}, S_{0}, S_{1}\left\vert X, U \right.}\left(y_{0}, y_{1}, s_{0}, s_{1} \left\vert x, \overline{u}\right.\right)$.
	
	\item[Step 8.] From now on, consider $\left(s_{0}, s_{1}\right) \in \left\lbrace 0, 1 \right\rbrace^{2}$. Since $$F_{\tilde{Y}_{0}^{*}, \tilde{Y}_{1}^{*}, \tilde{S}_{0}, \tilde{S}_{1}\left\vert X, \tilde{U} \right.}\left(y_{0}, y_{1}, s_{0}, s_{1} \left\vert x, \overline{u}\right.\right) = F_{\tilde{Y}_{0}^{*}, \tilde{Y}_{1}^{*}\left\vert X, \tilde{U}, \tilde{S}_{0}, \tilde{S}_{1} \right.}\left(y_{0}, y_{1}\left\vert x, \overline{u}, s_{0}, s_{1}\right.\right) \cdot F_{\tilde{S}_{0}, \tilde{S}_{1}\left\vert X, \tilde{U} \right.}\left(s_{0}, s_{1} \left\vert x, \overline{u}\right.\right),$$ it is sufficient to define $F_{\tilde{Y}_{0}^{*}, \tilde{Y}_{1}^{*}\left\vert X, \tilde{U}, \tilde{S}_{0}, \tilde{S}_{1} \right.}\left(y_{0}, y_{1}\left\vert x, \overline{u}, s_{0}, s_{1}\right.\right)$ and $F_{\tilde{S}_{0}, \tilde{S}_{1}\left\vert X, \tilde{U} \right.}\left(s_{0}, s_{1} \left\vert x, \overline{u}\right.\right)$.
	
	\item[Step 9.] I define $F_{\tilde{S}_{0}, \tilde{S}_{1}\left\vert X, \tilde{U} \right.}\left(s_{0}, s_{1} \left\vert x, \overline{u}\right.\right)$ by writing
	 $$\mathbb{P}\left[\left. \tilde{S}_{0} = 1, \tilde{S}_{1} = 1 \right\vert X = x, \tilde{U} = \overline{u} \right] = \dfrac{\pi\left(\overline{x}, \overline{u}\right)}{m_{1}^{S}\left(\overline{x}, \overline{u}\right) + m_{0}^{S}\left(\overline{x}, \overline{u}\right) - \pi\left(\overline{x}, \overline{u}\right)} \in \left(0, 1\right),$$
	 $$\mathbb{P}\left[\left. \tilde{S}_{0} = 1, \tilde{S}_{1} = 0 \right\vert X = x, \tilde{U} = \overline{u} \right] = \dfrac{m_{0}^{S}\left(\overline{x}, \overline{u}\right) - \pi\left(\overline{x}, \overline{u}\right)}{m_{1}^{S}\left(\overline{x}, \overline{u}\right) + m_{0}^{S}\left(\overline{x}, \overline{u}\right) - \pi\left(\overline{x}, \overline{u}\right)} \in \left(0, 1\right),$$
	 $$\mathbb{P}\left[\left. \tilde{S}_{0} = 0, \tilde{S}_{1} = 1 \right\vert X = x, \tilde{U} = \overline{u} \right] = \dfrac{m_{1}^{S}\left(\overline{x}, \overline{u}\right) - \pi\left(\overline{x}, \overline{u}\right)}{m_{1}^{S}\left(\overline{x}, \overline{u}\right) + m_{0}^{S}\left(\overline{x}, \overline{u}\right) - \pi\left(\overline{x}, \overline{u}\right)} \in \left(0, 1\right), \text{ and}$$
	 $$\mathbb{P}\left[\left. \tilde{S}_{0} = 0, \tilde{S}_{1} = 0 \right\vert X = x, \tilde{U} = \overline{u} \right] = 0.$$
	
	\item[Step 10.] I write $F_{\tilde{Y}_{0}^{*}, \tilde{Y}_{1}^{*}\left\vert X, \tilde{U}, \tilde{S}_{0}, \tilde{S}_{1} \right.}\left(y_{0}, y_{1}\left\vert x, \overline{u}, s_{0}, s_{1}\right.\right) = F_{\tilde{Y}_{0}^{*}\left\vert X, \tilde{U}, \tilde{S}_{0}, \tilde{S}_{1} \right.}\left(y_{0}\left\vert x, \overline{u}, s_{0}, s_{1}\right.\right) \cdot F_{\tilde{Y}_{1}^{*}\left\vert X, \tilde{U}, \tilde{S}_{0}, \tilde{S}_{1} \right.}\left(y_{1}\left\vert x, \overline{u}, s_{0}, s_{1}\right.\right)$, implying that I can separately define $F_{\tilde{Y}_{0}^{*}\left\vert X, \tilde{U}, \tilde{S}_{0}, \tilde{S}_{1} \right.}\left(y_{0}\left\vert x, \overline{u}, s_{0}, s_{1}\right.\right)$ and $F_{\tilde{Y}_{1}^{*}\left\vert X, \tilde{U}, \tilde{S}_{0}, \tilde{S}_{1} \right.}\left(y_{1}\left\vert x, \overline{u}, s_{0}, s_{1}\right.\right)$.
	
	\item[Step 11.] When $\mathcal{Y}^{*}$ is a bounded interval (sub-case (a) in Assumption \ref{bounded}.3), I define
	$$
	F_{\tilde{Y}_{0}^{*}\left\vert X, \tilde{U}, \tilde{S}_{0}, \tilde{S}_{1} \right.}\left(y_{0}\left\vert x, \overline{u}, s_{0}, s_{1}\right.\right) = \left\lbrace
	\begin{array}{cl}
	\mathbf{1}\left\lbrace y_{0} \geq \alpha_{0}\left(\overline{x}, \overline{u}\right) \right\rbrace & \text{if } \left(s_{0}, s_{1}\right) = \left(1, 1\right) \\
	---------- & ------------ \\
	\mathbf{1}\left\lbrace y_{0} \geq \gamma_{0}\left(\overline{x}, \overline{u}\right) \right\rbrace & \text{if } \left(s_{0}, s_{1}\right) = \left(1, 0\right) \\
	---------- & ------------ \\
	\mathbf{1}\left\lbrace y_{0} \geq \dfrac{\underline{y}^{*} + \overline{y}^{*}}{2} \right\rbrace & \text{if } \left(s_{0}, s_{1}\right) \in \left\lbrace \left(0, 0\right), \left(0, 1\right) \right\rbrace
	\end{array}
	\right..
	$$
	
	When $\overline{y}^{*} = \max \left\lbrace y \in \mathcal{Y}^{*} \right\rbrace$ and $\underline{y}^{*} = \min \left\lbrace y \in \mathcal{Y}^{*} \right\rbrace$ (sub-case (b) in Assumption \ref{bounded}.3), I define
	$$
	F_{\tilde{Y}_{0}^{*}\left\vert X, \tilde{U}, \tilde{V} \right.}\left(y_{0}\left\vert x, \overline{u}, v\right.\right) = \left\lbrace
	\begin{array}{cl}
	0 & \text{if } y_{0} < \underline{y}^{*} \text{ and } \left(s_{0}, s_{1}\right) = \left(1, 1\right) \\
	& \\
	1 - \dfrac{\alpha_{0}\left(\overline{x}, \overline{u}\right) - \underline{y}^{*}}{\overline{y}^{*} - \underline{y}^{*}} & \text{if } \underline{y}^{*} \leq y_{0} < \overline{y}^{*} \text{ and } \left(s_{0}, s_{1}\right) = \left(1, 1\right) \\
	& \\
	1 & \text{if } \overline{y}^{*} \leq y_{0} \text{ and } \left(s_{0}, s_{1}\right) = \left(1, 1\right) \\
	---------- & ---------------- \\
	0 & \text{if } y_{0} < \underline{y}^{*} \text{ and } \left(s_{0}, s_{1}\right) = \left(1, 0\right) \\
	& \\
	1 - \dfrac{\gamma_{0}\left(\overline{x}, \overline{u}\right) - \underline{y}^{*}}{\overline{y}^{*} - \underline{y}^{*}} & \text{if } \underline{y}^{*} \leq y_{0} < \overline{y}^{*} \text{ and } \left(s_{0}, s_{1}\right) = \left(1, 0\right) \\
	& \\
	1 & \text{if } \overline{y}^{*} \leq y_{0} \text{ and } \left(s_{0}, s_{1}\right) = \left(1, 0\right) \\
	---------- & ---------------- \\
	\mathbf{1}\left\lbrace y_{0} \geq \overline{y}^{*} \right\rbrace & \left(s_{0}, s_{1}\right) \in \left\lbrace \left(0, 0\right), \left(0, 1\right) \right\rbrace
	\end{array}
	\right..
	$$
	which are valid cumulative distribution functions because $\alpha_{0}\left(\overline{x}, \overline{u}\right) \in \left(\underline{y}^{*}, \overline{y}^{*}\right)$ and $\gamma_{0}\left(\overline{x}, \overline{u}\right) \in \left(\underline{y}^{*}, \overline{y}^{*}\right)$.
	
	\item[Step 12.] When $\mathcal{Y}^{*}$ is a bounded interval (sub-case (a) in Assumption \ref{bounded}.3), I define
	$$
	F_{\tilde{Y}_{1}^{*}\left\vert X, \tilde{U}, \tilde{S}_{0}, \tilde{S}_{1} \right.}\left(y_{1}\left\vert x, \overline{u}, s_{0}, s_{1}\right.\right) = \left\lbrace
	\begin{array}{cl}
	\mathbf{1}\left\lbrace y_{1} \geq \alpha_{1}\left(\overline{x}, \overline{u}\right) \right\rbrace & \text{if } \left(s_{0}, s_{1}\right) = \left(1, 1\right) \\
	---------- & ------------ \\
	\mathbf{1}\left\lbrace y_{1} \geq \gamma_{1}\left(\overline{x}, \overline{u}\right) \right\rbrace & \text{if } \left(s_{0}, s_{1}\right) = \left(0, 1\right) \\
	---------- & ------------ \\
	\mathbf{1}\left\lbrace y_{1} \geq \dfrac{\underline{y}^{*} + \overline{y}^{*}}{2} \right\rbrace & \text{if } \left(s_{0}, s_{1}\right) \in \left\lbrace \left(0, 0\right), \left(1, 0\right) \right\rbrace
	\end{array}
	\right..
	$$
	
	When $\overline{y}^{*} = \max \left\lbrace y \in \mathcal{Y}^{*} \right\rbrace$ and $\underline{y}^{*} = \min \left\lbrace y \in \mathcal{Y}^{*} \right\rbrace$ (sub-case (b) in Assumption \ref{bounded}.3), I define
	$$
	F_{\tilde{Y}_{1}^{*}\left\vert X, \tilde{U}, \tilde{V} \right.}\left(y_{1}\left\vert x, \overline{u}, v\right.\right) = \left\lbrace
	\begin{array}{cl}
	0 & \text{if } y_{1} < \underline{y}^{*} \text{ and } \left(s_{0}, s_{1}\right) = \left(1, 1\right) \\
	& \\
	1 - \dfrac{\alpha_{1}\left(\overline{x}, \overline{u}\right) - \underline{y}^{*}}{\overline{y}^{*} - \underline{y}^{*}} & \text{if } \underline{y}^{*} \leq y_{1} < \overline{y}^{*} \text{ and } \left(s_{0}, s_{1}\right) = \left(1, 1\right) \\
	& \\
	1 & \text{if } \overline{y}^{*} \leq y_{1} \text{ and } \left(s_{0}, s_{1}\right) = \left(1, 1\right) \\
	---------- & ---------------- \\
	0 & \text{if } y_{1} < \underline{y}^{*} \text{ and } \left(s_{0}, s_{1}\right) = \left(0, 1\right) \\
	& \\
	1 - \dfrac{\gamma_{1}\left(\overline{x}, \overline{u}\right) - \underline{y}^{*}}{\overline{y}^{*} - \underline{y}^{*}} & \text{if } \underline{y}^{*} \leq y_{1} < \overline{y}^{*} \text{ and } \left(s_{0}, s_{1}\right) = \left(0, 1\right) \\
	& \\
	1 & \text{if } \overline{y}^{*} \leq y_{1} \text{ and } \left(s_{0}, s_{1}\right) = \left(0, 1\right) \\
	---------- & ---------------- \\
	\mathbf{1}\left\lbrace y_{1} \geq \overline{y}^{*} \right\rbrace & \left(s_{0}, s_{1}\right) \in \left\lbrace \left(0, 0\right), \left(1, 0\right) \right\rbrace
	\end{array}
	\right..
	$$
	which are valid cumulative distribution functions because $\alpha_{1}\left(\overline{x}, \overline{u}\right) \in \left(\underline{y}^{*}, \overline{y}^{*}\right)$ and $\gamma_{1}\left(\overline{x}, \overline{u}\right) \in \left(\underline{y}^{*}, \overline{y}^{*}\right)$.
\end{enumerate}

Having defined the joint cumulative distribution function $F_{\tilde{Y}_{0}^{*}, \tilde{Y}_{1}^{*}, \tilde{U}, \tilde{S}_{0}, \tilde{S}_{1}, Z, X}$, note that steps 7-12 ensure that equation \eqref{correctsupportnomonotone} holds.

Now, observe equation \eqref{faketargetnomonotone} holds because steps 11 and 12 ensure that $\alpha_{1}\left(\overline{x}, \overline{u}\right) = \mathbb{E}\left[\tilde{Y}_{1}^{*} \left\vert X = \overline{x}, \tilde{U} = \overline{u},  \tilde{S}_{0} = 1, \tilde{S}_{1} = 1 \right.\right]$ and $\alpha_{0}\left(\overline{x}, \overline{u}\right) = \mathbb{E}\left[\tilde{Y}_{0}^{*} \left\vert X = \overline{x}, \tilde{U} = \overline{u},  \tilde{S}_{0} = 1, \tilde{S}_{1} = 1 \right.\right]$.

Finally, equation \eqref{DataRestrictionnomonotone} holds according to the same argument described at the end of appendix \ref{proofsharp3}.

I can then conclude that Proposition \ref{monotonenecessary} is true.
\end{proof}

\pagebreak

\setcounter{table}{0}
\renewcommand\thetable{F.\arabic{table}}

\setcounter{figure}{0}
\renewcommand\thefigure{F.\arabic{figure}}

\setcounter{equation}{0}
\renewcommand\theequation{F.\arabic{equation}}

\setcounter{theorem}{0}
\renewcommand\thetheorem{F.\arabic{theorem}}

\section{MTE bounds under a Mean Dominance Assumption} \label{MeanDominance}
Here, I modify the Mean Dominance Assumption \ref{meandominanceG} by changing the direction of the inequality, i.e., I assume that:
\begin{assumption}\label{meandominanceL}
	The potential outcome when treated within the always-observed subpopulation is less than or equal to the same parameter within the observed-only-when-treated subpopulation:
	\begin{equation*}
	\mathbb{E}\left[Y_{1}^{*} \left\vert X = x, U = u, S_{0} = 1, S_{1} = 1 \right.\right] \leq \mathbb{E}\left[Y_{1}^{*} \left\vert X = x, U = u, S_{0} = 0, S_{1} = 1 \right.\right]
	\end{equation*}
	for any $x \in \mathcal{X}$ and $u \in \left[0, 1\right]$.
\end{assumption}

Note that assumption \ref{meandominanceL} implies that $\Delta_{Y}^{NO}\left(x, u\right) \geq \dfrac{m_{1}^{Y}\left(x, u\right)}{m_{1}^{S}\left(x, u\right)} \geq \mathbb{E}\left[Y_{1}^{*} \left\vert X = x, U = u, S_{0} = 1, S_{1} = 1 \right.\right]$. As a consequence, by following the same steps of the proof of Corollary \ref{boundmeandomG}, I can derive:

\begin{corollary}\label{boundmeandomL}
	Fix $u \in \left[0, 1\right]$ and $x \in \mathcal{X}$ arbitrarily. Suppose that the $m_{0}^{Y}\left(x, u\right)$, $m_{1}^{Y}\left(x, u\right)$, $m_{0}^{S}\left(x, u\right)$ and $\Delta_{S}\left(x, u\right)$ are point identified.
	
	Under assumptions \ref{ind}-\ref{support}, \ref{bounded}.1, \ref{increasing_sample_selection} and \ref{meandominanceL}, $\Delta_{Y^{*}}^{OO}\left(x, u\right)$ must satisfy
	\begin{equation}
	\Delta_{Y^{*}}^{OO}\left(x, u\right) \geq \underline{y}^{*} - \dfrac{m_{0}^{Y}\left(x, u\right)}{m_{0}^{S}\left(x, u\right)} \eqqcolon \underline{\Delta_{Y^{*}}^{OO}}\left(x, u\right)
	\end{equation}
	and
	\begin{equation}
	\Delta_{Y^{*}}^{OO}\left(x, u\right) \leq \dfrac{m_{1}^{Y}\left(x, u\right)}{m_{1}^{S}\left(x, u\right)} - \dfrac{m_{0}^{Y}\left(x, u\right)}{m_{0}^{S}\left(x, u\right)} \eqqcolon \overline{\Delta_{Y^{*}}^{OO}}\left(x, u\right).
	\end{equation}
	
	Under assumptions \ref{ind}-\ref{support}, \ref{bounded}.2, \ref{increasing_sample_selection} and \ref{meandominanceL}, $\Delta_{Y^{*}}^{OO}\left(x, u\right)$ must satisfy
	\begin{equation}
	\Delta_{Y^{*}}^{OO}\left(x, u\right) \geq \dfrac{m_{1}^{Y}\left(x, u\right) - \overline{y}^{*} \cdot \Delta_{S}\left(x, u\right)}{m_{0}^{S}\left(x, u\right)} - \dfrac{m_{0}^{Y}\left(x, u\right)}{m_{0}^{S}\left(x, u\right)} \eqqcolon \underline{\Delta_{Y^{*}}^{OO}}\left(x, u\right)
	\end{equation}
	and
	\begin{equation}
	\Delta_{Y^{*}}^{OO}\left(x, u\right) \leq \dfrac{m_{1}^{Y}\left(x, u\right)}{m_{1}^{S}\left(x, u\right)} - \dfrac{m_{0}^{Y}\left(x, u\right)}{m_{0}^{S}\left(x, u\right)} \eqqcolon \overline{\Delta_{Y^{*}}^{OO}}\left(x, u\right).
	\end{equation}
	
	Under assumptions \ref{ind}-\ref{support}, \ref{bounded}.3 (sub-case (a) or (b)), \ref{increasing_sample_selection} and \ref{meandominanceL}, $\Delta_{Y^{*}}^{OO}\left(x, u\right)$ must satisfy
	\begin{equation}
	\Delta_{Y^{*}}^{OO}\left(x, u\right) \geq \max\left\lbrace \dfrac{m_{1}^{Y}\left(x, u\right) - \overline{y}^{*} \cdot \Delta_{S}\left(x, u\right)}{m_{0}^{S}\left(x, u\right)}, \underline{y}^{*}\right\rbrace - \dfrac{m_{0}^{Y}\left(x, u\right)}{m_{0}^{S}\left(x, u\right)} \eqqcolon \underline{\Delta_{Y^{*}}^{OO}}\left(x, u\right)
	\end{equation}
	and
	\begin{equation}
	\Delta_{Y^{*}}^{OO}\left(x, u\right) \leq \dfrac{m_{1}^{Y}\left(x, u\right)}{m_{1}^{S}\left(x, u\right)} - \dfrac{m_{0}^{Y}\left(x, u\right)}{m_{0}^{S}\left(x, u\right)} \eqqcolon \overline{\Delta_{Y^{*}}^{OO}}\left(x, u\right).
	\end{equation}
	
	When $\mathcal{Y}^{*} = \mathbb{R}$ and assumptions \ref{ind}-\ref{support}, \ref{increasing_sample_selection} and \ref{meandominanceL} hold, $\Delta_{Y^{*}}^{OO}\left(x, u\right)$ must satisfy
	\begin{equation}
	\Delta_{Y^{*}}^{OO}\left(x, u\right) \geq - \infty \eqqcolon \underline{\Delta_{Y^{*}}^{OO}}\left(x, u\right)
	\end{equation}
	and
	\begin{equation}\label{uppermean}
	\Delta_{Y^{*}}^{OO}\left(x, u\right) \leq \dfrac{m_{1}^{Y}\left(x, u\right)}{m_{1}^{S}\left(x, u\right)} - \dfrac{m_{0}^{Y}\left(x, u\right)}{m_{0}^{S}\left(x, u\right)} \eqqcolon \overline{\Delta_{Y^{*}}^{OO}}\left(x, u\right).
	\end{equation}
\end{corollary}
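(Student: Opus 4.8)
The plan is to mirror the proof of Corollary \ref{boundmeandomG}, replacing its lower-bound improvement on $\mathbb{E}\left[Y_1^* \mid X=x, U=u, S_0=1, S_1=1\right]$ by the symmetric upper-bound improvement that Assumption \ref{meandominanceL} supplies. Starting from equations \eqref{target} and \eqref{m0YstarOO}, I would write $\Delta_{Y^*}^{OO}(x,u) = \mathbb{E}\left[Y_1^* \mid X=x, U=u, S_0=1, S_1=1\right] - m_0^Y(x,u)/m_0^S(x,u)$, so that bounding $\Delta_{Y^*}^{OO}$ reduces to bounding its first term, exactly as in Proposition \ref{boundsY1Proposition}, and the bounds on $\Delta_{Y^*}^{OO}$ are then obtained by subtracting the point-identified quantity $m_0^Y/m_0^S$.

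The one genuinely new ingredient is the observation recorded just before the statement: by the decomposition \eqref{decomposition}, the ratio $m_1^Y(x,u)/m_1^S(x,u)$ is a convex combination — with weights $m_0^S/m_1^S$ and $\Delta_S/m_1^S$, both strictly positive under Assumption \ref{increasing_sample_selection} — of $\mathbb{E}\left[Y_1^* \mid X=x,U=u,S_0=1,S_1=1\right]$ and $\Delta_Y^{NO}(x,u) = \mathbb{E}\left[Y_1^* \mid X=x,U=u,S_0=0,S_1=1\right]$ (the latter identity being \eqref{DeltaY_NO}). Since a convex combination of two numbers lies between them, Assumption \ref{meandominanceL} forces $\mathbb{E}\left[Y_1^* \mid X=x,U=u,S_0=1,S_1=1\right] \le m_1^Y/m_1^S \le \Delta_Y^{NO}$. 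I would also record that $m_1^Y/m_1^S = \mathbb{E}\left[Y_1^* \mid X=x,U=u,S_1=1\right] \in \left[\underline{y}^*, \overline{y}^*\right]$, which is what makes this new upper bound at least as tight as every candidate upper bound coming from Proposition \ref{boundsY1Proposition}.

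I would then go case by case through Assumption \ref{bounded}: intersect the upper bound $m_1^Y/m_1^S$ with the natural lower bound $\underline{y}^*$ (sub-case \ref{bounded}.1), with the interval from Proposition \ref{boundsY1Proposition} sub-case 2 (sub-case \ref{bounded}.2), with the interval from its sub-case 3 together with $\underline{y}^* \le \cdot \le \overline{y}^*$ (sub-case \ref{bounded}.3), and, when $\mathcal{Y}^* = \mathbb{R}$, with no finite lower restriction at all; subtracting $m_0^Y/m_0^S$ then delivers each displayed pair of bounds, including \eqref{uppermean}. The only calculation worth isolating is the identity $\left(m_1^Y - (m_1^Y/m_1^S)\Delta_S\right)/m_0^S = m_1^Y/m_1^S$ together with the monotonicity of $a \mapsto (m_1^Y - a\Delta_S)/m_0^S$ (decreasing, since $\Delta_S>0$ and $m_0^S>0$); combined with $m_1^Y/m_1^S \ge \underline{y}^*$ these show $m_1^Y/m_1^S \le (m_1^Y - \underline{y}^*\Delta_S)/m_0^S$ and $m_1^Y/m_1^S \le \overline{y}^*$, so the "$\min$" over candidate upper bounds collapses to $m_1^Y/m_1^S$. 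The main obstacle is merely this bookkeeping — verifying in every sub-case that $m_1^Y/m_1^S$ is the binding upper constraint — and, for the $\mathcal{Y}^* = \mathbb{R}$ case, noting that Assumption \ref{meandominanceL} constrains $\mathbb{E}\left[Y_1^* \mid X=x,U=u,S_0=1,S_1=1\right]$ only from above, so the lower bound must be left at $-\infty$, consistent with the impossibility result in Proposition \ref{partialnecessary}. Sharpness is not asserted in the statement, so no construction of witnessing random variables is needed; one could obtain it, if desired, by adapting the proof of Proposition \ref{sharpboundsmeanG} with the inequality in \eqref{fakemeandominanceG} reversed.
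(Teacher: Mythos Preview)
Your proposal is correct and follows essentially the same approach as the paper. The paper's own proof is a one-line remark that Assumption \ref{meandominanceL} yields $\Delta_{Y}^{NO}(x,u) \geq m_{1}^{Y}(x,u)/m_{1}^{S}(x,u) \geq \mathbb{E}\left[Y_{1}^{*} \mid X=x, U=u, S_{0}=1, S_{1}=1\right]$ via \eqref{DeltaY_NO} and \eqref{decomposition}, after which the steps of Corollary \ref{boundmeandomG} are repeated; your convex-combination argument and case-by-case bookkeeping make those steps explicit but add nothing new.
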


The bounds in corollary \ref{boundmeandomL} can be identified using the strategies that were described in Sections \ref{interval} and \ref{discrete}. Furthermore, I can derive a result similar to Proposition \ref{sharpboundsmeanG}:

\begin{proposition}\label{sharpboundsmeanL}
	Suppose that the functions $m_{0}^{Y}$, $m_{1}^{Y}$, $m_{0}^{S}$, $m_{1}^{S}$ and $\Delta_{S}$ are point identified at every pair $\left(x, u\right) \in \mathcal{X} \times \left[0, 1\right]$. Under assumptions \ref{ind}-\ref{support}, \ref{increasing_sample_selection} and \ref{meandominanceL}, the bounds $\underline{\Delta_{Y^{*}}^{OO}}$ and $\overline{\Delta_{Y^{*}}^{OO}}$, given by corollary \ref{boundmeandomL}, are pointwise sharp, i.e., for any $\overline{u} \in \left[0, 1\right]$, $\overline{x} \in \mathcal{X}$ and $\delta\left(\overline{x}, \overline{u}\right) \in \left(\underline{\Delta_{Y^{*}}^{OO}}\left(\overline{x}, \overline{u}\right), \overline{\Delta_{Y^{*}}^{OO}}\left(\overline{x}, \overline{u}\right)\right)$, there exist random variables $\left(\tilde{Y}_{0}^{*}, \tilde{Y}_{1}^{*}, \tilde{U}, \tilde{V}\right)$ such that
	\begin{equation}\label{faketargetmeanL}
	\Delta_{\tilde{Y}^{*}}^{OO}\left(\overline{x}, \overline{u}\right) \coloneqq \mathbb{E}\left[\tilde{Y}_{1}^{*} - \tilde{Y}_{0}^{*} \left\vert X = \overline{x}, \tilde{U} = \overline{u}, \tilde{S}_{0} = 1, \tilde{S}_{1} = 1 \right.\right] = \delta\left(\overline{x}, \overline{u}\right),
	\end{equation}
	\begin{equation}\label{correctsupportmeanL}
	\mathbb{P}\left[\left. \left(\tilde{Y}_{0}^{*}, \tilde{Y}_{1}^{*}, \tilde{V}\right) \in \mathcal{Y}^{*} \times \mathcal{Y}^{*} \times \left[0, 1\right] \right\vert X = \overline{x}, \tilde{U} = u \right] = 1 \text{ for any } u \in \left[0, 1\right],
	\end{equation}
	\begin{equation}\label{fakemeandominanceL}
	\mathbb{E}\left[\tilde{Y}_{1}^{*} \left\vert X = \overline{x}, \tilde{U} = \overline{u}, \tilde{S}_{0} = 1, \tilde{S}_{1} = 1 \right.\right] \leq \mathbb{E}\left[\tilde{Y}_{1}^{*} \left\vert X = \overline{x}, \tilde{U} = \overline{u}, \tilde{S}_{0} = 0, \tilde{S}_{1} = 1 \right.\right],
	\end{equation}
	and
	\begin{equation}\label{DataRestrictionmeanL}
	F_{\tilde{Y}, \tilde{D}, \tilde{S}, Z, X}\left(y, d, s, z, \overline{x} \right) = F_{Y, D, S, Z, X} \left(y, d, s, z, \overline{x}\right)
	\end{equation}
	for any $\left(y, d, s, z\right) \in \mathbb{R}^{4}$,	where $\tilde{D} \coloneqq \mathbf{1}\left\lbrace P\left(X, Z\right) \geq \tilde{U}\right\rbrace$, $\tilde{S}_{0} = \mathbf{1}\left\lbrace Q\left(0, X\right) \geq \tilde{V}\right\rbrace$, $\tilde{S}_{1} = \mathbf{1}\left\lbrace Q\left(1, X\right) \geq \tilde{V}\right\rbrace$, $\tilde{Y}_{0} = \tilde{S}_{0} \cdot \tilde{Y}_{0}^{*}$, $\tilde{Y}_{1} = \tilde{S}_{1} \cdot \tilde{Y}_{1}^{*}$ and $\tilde{Y} = \tilde{D} \cdot \tilde{Y}_{1} + \left(1 - \tilde{D}\right) \cdot \tilde{Y}_{0}$.
\end{proposition}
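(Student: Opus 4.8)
The plan is to adapt, essentially verbatim, the constructive proofs of Theorem \ref{sharpbounds} (Appendix \ref{proofsharp3}) and Proposition \ref{partialnecessary} (Appendix \ref{proofnecessary}), in exactly the way that Proposition \ref{sharpboundsmeanG} was obtained in Appendix \ref{proofsharpboundsmeanG}: the only substantive change is that the final inequality to be verified points the other way. Fix $\overline{u} \in \left[0, 1\right]$, $\overline{x} \in \mathcal{X}$ and $\delta\left(\overline{x}, \overline{u}\right) \in \left(\underline{\Delta_{Y^{*}}^{OO}}\left(\overline{x}, \overline{u}\right), \overline{\Delta_{Y^{*}}^{OO}}\left(\overline{x}, \overline{u}\right)\right)$, where the bounds are now those of Corollary \ref{boundmeandomL}. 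As in the earlier proofs, define $\alpha\left(\overline{x}, \overline{u}\right) \coloneqq \delta\left(\overline{x}, \overline{u}\right) + \dfrac{m_{0}^{Y}\left(\overline{x}, \overline{u}\right)}{m_{0}^{S}\left(\overline{x}, \overline{u}\right)}$ and $\gamma\left(\overline{x}, \overline{u}\right) \coloneqq \dfrac{m_{1}^{Y}\left(\overline{x}, \overline{u}\right) - \alpha\left(\overline{x}, \overline{u}\right) \cdot m_{0}^{S}\left(\overline{x}, \overline{u}\right)}{\Delta_{S}\left(\overline{x}, \overline{u}\right)}$, and construct the candidate variables $\left(\tilde{Y}_{0}^{*}, \tilde{Y}_{1}^{*}, \tilde{U}, \tilde{V}\right)$ through the joint cumulative distribution function described by Steps 1--12 of Appendix \ref{proofsharp3} (for the sub-cases of Assumption \ref{bounded}) or by the analogous construction in Appendix \ref{proofnecessary} (for the case $\mathcal{Y}^{*} = \mathbb{R}$).

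First I would check that $\delta\left(\overline{x}, \overline{u}\right)$ lying strictly between $\underline{\Delta_{Y^{*}}^{OO}}\left(\overline{x}, \overline{u}\right)$ and $\overline{\Delta_{Y^{*}}^{OO}}\left(\overline{x}, \overline{u}\right)$ --- with the bounds read off from Corollary \ref{boundmeandomL} --- is equivalent to $\alpha\left(\overline{x}, \overline{u}\right) \in \left(\underline{y}^{*}, \overline{y}^{*}\right)$ together with $\alpha\left(\overline{x}, \overline{u}\right) \leq \dfrac{m_{1}^{Y}\left(\overline{x}, \overline{u}\right)}{m_{1}^{S}\left(\overline{x}, \overline{u}\right)}$, and to $\gamma\left(\overline{x}, \overline{u}\right) \in \left(\underline{y}^{*}, \overline{y}^{*}\right)$; these play the role of equations \eqref{sanity1} and \eqref{sanity2} and guarantee that the piecewise conditional cumulative distribution functions built in Steps 11--12 are genuine distribution functions supported on $\mathcal{Y}^{*}$. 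Then I would invoke Steps 13--17 of Appendix \ref{proofsharp3} unchanged: Steps 13--15 yield $\Delta_{\tilde{Y}^{*}}^{OO}\left(\overline{x}, \overline{u}\right) = \delta\left(\overline{x}, \overline{u}\right)$, namely equation \eqref{faketargetmeanL}; Steps 7--12 give the support condition \eqref{correctsupportmeanL}; and Steps 16--17, which use only $\mathbb{P}\left[\tilde{U} = \overline{u} \left\vert X = \overline{x}\right.\right] = 0$, give the data-matching condition \eqref{DataRestrictionmeanL}.

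The one genuinely new computation is the verification of the model restriction \eqref{fakemeandominanceL}. By equation \eqref{step14a}, $\mathbb{E}\left[\tilde{Y}_{1}^{*} \left\vert X = \overline{x}, \tilde{U} = \overline{u}, \tilde{S}_{0} = 1, \tilde{S}_{1} = 1 \right.\right] = \alpha\left(\overline{x}, \overline{u}\right)$, and since $\delta\left(\overline{x}, \overline{u}\right) \leq \overline{\Delta_{Y^{*}}^{OO}}\left(\overline{x}, \overline{u}\right) = \dfrac{m_{1}^{Y}\left(\overline{x}, \overline{u}\right)}{m_{1}^{S}\left(\overline{x}, \overline{u}\right)} - \dfrac{m_{0}^{Y}\left(\overline{x}, \overline{u}\right)}{m_{0}^{S}\left(\overline{x}, \overline{u}\right)}$ in every case of Corollary \ref{boundmeandomL}, this gives $\alpha\left(\overline{x}, \overline{u}\right) \leq \dfrac{m_{1}^{Y}\left(\overline{x}, \overline{u}\right)}{m_{1}^{S}\left(\overline{x}, \overline{u}\right)}$. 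By equation \eqref{step14b}, $\mathbb{E}\left[\tilde{Y}_{1}^{*} \left\vert X = \overline{x}, \tilde{U} = \overline{u}, \tilde{S}_{0} = 0, \tilde{S}_{1} = 1 \right.\right] = \gamma\left(\overline{x}, \overline{u}\right)$, and substituting $\alpha\left(\overline{x}, \overline{u}\right) \leq \dfrac{m_{1}^{Y}\left(\overline{x}, \overline{u}\right)}{m_{1}^{S}\left(\overline{x}, \overline{u}\right)}$ into the definition of $\gamma\left(\overline{x}, \overline{u}\right)$ yields $\gamma\left(\overline{x}, \overline{u}\right) \geq \dfrac{m_{1}^{Y}\left(\overline{x}, \overline{u}\right)}{m_{1}^{S}\left(\overline{x}, \overline{u}\right)} \geq \alpha\left(\overline{x}, \overline{u}\right)$, which is exactly \eqref{fakemeandominanceL}. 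The same chain, now with $\underline{y}^{*} = -\infty$ and $\overline{\Delta_{Y^{*}}^{OO}}$ as in equation \eqref{uppermean}, covers the case $\mathcal{Y}^{*} = \mathbb{R}$.

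I expect the main obstacle to be purely bookkeeping: carefully checking, for each sub-case of Assumption \ref{bounded} (the one-sided cases \ref{bounded}.1 and \ref{bounded}.2, the bounded-interval and finite-support sub-cases of \ref{bounded}.3, and the unbounded case $\mathcal{Y}^{*} = \mathbb{R}$), that the interval $\left(\underline{\Delta_{Y^{*}}^{OO}}\left(\overline{x}, \overline{u}\right), \overline{\Delta_{Y^{*}}^{OO}}\left(\overline{x}, \overline{u}\right)\right)$ supplied by Corollary \ref{boundmeandomL} maps, under $\alpha$ and $\gamma$, onto precisely the validity ranges required by the constructions of Steps 11--12. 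Once that translation is in hand, the sharpness argument is an immediate re-run of Appendices \ref{proofsharp} and \ref{proofsharpboundsmeanG}, and the verification of \eqref{fakemeandominanceL} reduces to the one-line inequality above.
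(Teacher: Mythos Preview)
Your proposal is correct and follows essentially the same approach as the paper: the paper's proof simply states that it is symmetric to the proof of Proposition \ref{sharpboundsmeanG} (Appendix \ref{proofsharpboundsmeanG}), which in turn reuses the construction of Theorem \ref{sharpbounds} and Proposition \ref{partialnecessary} and adds only the verification that $\alpha\left(\overline{x}, \overline{u}\right)$ and $\gamma\left(\overline{x}, \overline{u}\right)$ satisfy the appropriate inequality via $\dfrac{m_{1}^{Y}\left(\overline{x}, \overline{u}\right)}{m_{1}^{S}\left(\overline{x}, \overline{u}\right)}$. Your explicit check that $\alpha\left(\overline{x}, \overline{u}\right) \leq \dfrac{m_{1}^{Y}\left(\overline{x}, \overline{u}\right)}{m_{1}^{S}\left(\overline{x}, \overline{u}\right)} \leq \gamma\left(\overline{x}, \overline{u}\right)$ is exactly the reversed-inequality analogue of the computation in Appendix \ref{proofsharpboundsmeanG}.
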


The proof of Proposition \ref{sharpboundsmeanL} is symmetric to the proof of Proposition \ref{sharpboundsmeanG} (Appendix \ref{proofsharpboundsmeanG}).

\pagebreak

\setcounter{table}{0}
\renewcommand\thetable{G.\arabic{table}}

\setcounter{figure}{0}
\renewcommand\thefigure{G.\arabic{figure}}

\setcounter{equation}{0}
\renewcommand\theequation{G.\arabic{equation}}

\setcounter{theorem}{0}
\renewcommand\thetheorem{G.\arabic{theorem}}

\section{Sharpness and Impossibility Results with Smoothness Restrictions}\label{discontinuous}

In the main text, I imposed no smoothness condition on the joint distribution of $\left(Y_{0}^{*}, Y_{1}^{*}, U, V, Z, X\right)$. Here, I impose the following smoothness condition:
\begin{assumption}\label{smoothness}
The conditional cumulative distribution functions $F_{V \left\vert X, U \right.}$ are $F_{Y_{0}^{*}, Y_{1}^{*} \left\vert X, U, V \right.}$ are continuous functions of the value of U.
\end{assumption}

As a consequence of this new assumption, Theorem \ref{sharpbounds} and Proposition \ref{partialnecessary} have to be modified to accommodate infinitesimal violations of the data restriction and to ensure that the extra model restrictions imposed by assumption \ref{smoothness} are also satisfied.

\begin{proposition}\label{sharpboundsS}
	Suppose that the functions $m_{0}^{Y}$, $m_{1}^{Y}$, $m_{0}^{S}$ and $\Delta_{S}$ are point identified at every pair $\left(x, u\right) \in \mathcal{X} \times \left[0, 1\right]$. Under Assumptions \ref{ind}-\ref{support}, \ref{bounded} (sub-cases 1, 2, 3(a) or 3(b)), \ref{increasing_sample_selection} and \ref{smoothness}, the bounds $\underline{\Delta_{Y^{*}}^{OO}}$ and $\overline{\Delta_{Y^{*}}^{OO}}$, given by Corollary \ref{MTEbounds} are infinitesimally pointwise sharp, i.e., for any $\epsilon \in \mathbb{R}_{++}$, $\overline{u} \in \left[0, 1\right]$, $\overline{x} \in \mathcal{X}$ and $\delta\left(\overline{x}, \overline{u}\right) \in \left(\underline{\Delta_{Y^{*}}^{OO}}\left(\overline{x}, \overline{u}\right), \overline{\Delta_{Y^{*}}^{OO}}\left(\overline{x}, \overline{u}\right)\right)$, there exist random variables $\left(\tilde{Y}_{0}^{*}, \tilde{Y}_{1}^{*}, \tilde{U}, \tilde{V}\right)$ such that
	\begin{equation}\label{faketargetS}
		\Delta_{\tilde{Y}^{*}}^{OO}\left(\overline{x}, \overline{u}\right) \coloneqq \mathbb{E}\left[\tilde{Y}_{1}^{*} - \tilde{Y}_{0}^{*} \left\vert X = \overline{x}, \tilde{U} = \overline{u}, \tilde{S}_{0} = 1, \tilde{S}_{1} = 1 \right.\right] = \delta\left(\overline{x}, \overline{u}\right),
	\end{equation}
	\begin{equation}\label{correctsupportS}
		\mathbb{P}\left[\left. \left(\tilde{Y}_{0}^{*}, \tilde{Y}_{1}^{*}, \tilde{V}\right) \in \mathcal{Y}^{*} \times \mathcal{Y}^{*} \times \left[0, 1\right] \right\vert X = \overline{x}, \tilde{U} = u \right] = 1 \text{ for any } u \in \left[0, 1\right],
	\end{equation}
	\begin{equation}\label{smoothV}
	F_{\tilde{V} \left\vert X, \tilde{U} \right.} \text{ is a continuous function of the value of } \tilde{U},
	\end{equation}
	\begin{equation}\label{smoothY}
	F_{\tilde{Y}_{0}^{*}, \tilde{Y}_{1}^{*} \left\vert X, \tilde{U}, \tilde{V} \right.} \text{ is a continuous function of the value of } \tilde{U},
	\end{equation}
	and
	\begin{equation}\label{DataRestrictionS}
		\left\vert F_{\tilde{Y}, \tilde{D}, \tilde{S}, Z, X}\left(y, d, s, z, \overline{x} \right) - F_{Y, D, S, Z, X} \left(y, d, s, z, \overline{x}\right) \right\vert \leq \epsilon
	\end{equation}
	for any $\left(y, d, s, z\right) \in \mathbb{R}^{4}$,	where $\tilde{D} \coloneqq \mathbf{1}\left\lbrace P\left(X, Z\right) \geq \tilde{U}\right\rbrace$, $\tilde{S}_{0} = \mathbf{1}\left\lbrace Q\left(0, X\right) \geq \tilde{V}\right\rbrace$, $\tilde{S}_{1} = \mathbf{1}\left\lbrace Q\left(1, X\right) \geq \tilde{V}\right\rbrace$, $\tilde{Y}_{0} = \tilde{S}_{0} \cdot \tilde{Y}_{0}^{*}$, $\tilde{Y}_{1} = \tilde{S}_{1} \cdot \tilde{Y}_{1}^{*}$ and $\tilde{Y} = \tilde{D} \cdot \tilde{Y}_{1} + \left(1 - \tilde{D}\right) \cdot \tilde{Y}_{0}$.
\end{proposition}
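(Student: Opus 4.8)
The plan is to adapt the constructive proof of Theorem \ref{sharpbounds} (Appendix \ref{proofsharp}). There, the candidate law of $(Y_0^*, Y_1^*, U, V, Z, X)$ was taken equal to the true law everywhere except on the $\tilde U$-slice $\{\tilde U = \bar u\}$; this is harmless for the data restriction because that slice has $\tilde U$-measure zero, but it is incompatible with Assumption \ref{smoothness}. To respect \eqref{smoothV}--\eqref{smoothY}, I would instead perturb the conditional law of $(Y_0^*, Y_1^*, V)$ over the whole interval $I_\eta \coloneqq [\bar u - \eta, \bar u + \eta] \cap [0,1]$ for a small $\eta > 0$ to be chosen, interpolating continuously in $u$ between the true conditional law (recovered at the endpoints of $I_\eta$) and the target-achieving law (attained at $u = \bar u$). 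Since the modification now occupies a $u$-set of positive measure $\lambda(I_\eta) \le 2\eta$, the data restriction holds only up to $O(\eta)$, which is exactly the relaxation that \eqref{DataRestrictionS} permits.

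Concretely, let $w(u) \coloneqq \max\{0,\, 1 - |u - \bar u|/\eta\}$, a continuous ``tent'' weight with $w(\bar u) = 1$ and $w \equiv 0$ off $I_\eta$. Keeping Steps 1--8 of Appendix \ref{proofsharp3} verbatim (so that $Z$ remains conditionally independent and $\tilde U \mid X \sim \mathrm{Uniform}[0,1]$), I would set, for $u \in I_\eta$,
\[
F_{\tilde V \mid X, \tilde U}(v \mid x, u) = w(u)\, F^{\mathrm c}_{V}(v \mid x) + (1 - w(u))\, F_{V \mid X, U}(v \mid x, u),
\]
\[
F_{\tilde Y_0^*, \tilde Y_1^* \mid X, \tilde U, \tilde V}(y_0, y_1 \mid x, u, v) = w(u)\, F^{\mathrm c}_{Y^*}(y_0, y_1 \mid x, v) + (1 - w(u))\, F_{Y_0^*, Y_1^* \mid X, U, V}(y_0, y_1 \mid x, u, v),
\]
where $F^{\mathrm c}_V$ and $F^{\mathrm c}_{Y^*}$ are the conditional laws constructed in Steps 9--12 of Appendix \ref{proofsharp3} (valid c.d.f.\ by equations \eqref{sanity1}--\eqref{sanity2} and $m_0^Y(\bar x, \bar u)/m_0^S(\bar x, \bar u) \in [\underline y^*, \overline y^*]$); for $u \notin I_\eta$ the conditional laws are set equal to the true ones. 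Mixtures of c.d.f.\ with weights in $[0,1]$ are c.d.f., so this is well defined; it is continuous in $u$ because $w$ is continuous, the constructed components do not depend on $u$, and the true components are continuous in $u$ by Assumption \ref{smoothness}; and it glues continuously at $u = \bar u \pm \eta$, where $w = 0$. Hence \eqref{smoothV} and \eqref{smoothY} hold. Since each component is supported on $\mathcal Y^* \times \mathcal Y^*$ in $(y_0, y_1)$ and on $[0,1]$ in $v$, equation \eqref{correctsupportS} follows exactly as in the original proof. I factor the mixture at the level of $F_{V \mid X, U}$ and of $F_{Y^*\mid X, U, V}$ separately, rather than mixing the joint law of $(Y_0^*, Y_1^*, V)$, precisely so that these two conditionals are genuinely continuous in $u$ and the support restrictions are transparent.

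It remains to verify \eqref{faketargetS} and \eqref{DataRestrictionS}. At $u = \bar u$ we have $w(\bar u) = 1$, so the conditional law of $(\tilde Y_0^*, \tilde Y_1^*, \tilde V)$ given $X = \bar x$, $\tilde U = \bar u$ is exactly $F^{\mathrm c}$, and Steps 13--15 of Appendix \ref{proofsharp3} apply unchanged, giving $\mathbb{E}[\tilde Y_1^* \mid X = \bar x, \tilde U = \bar u, \tilde S_0 = 1, \tilde S_1 = 1] = \alpha(\bar x, \bar u)$ and $\mathbb{E}[\tilde Y_0^* \mid \cdots] = m_0^Y(\bar x, \bar u)/m_0^S(\bar x, \bar u)$, hence $\Delta_{\tilde Y^*}^{OO}(\bar x, \bar u) = \delta(\bar x, \bar u)$. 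For the data restriction, note that for $u \notin I_\eta$ the conditional law of $(\tilde Y, \tilde D, \tilde S, Z)$ given $X = \bar x$, $\tilde U = u$ coincides with that of $(Y, D, S, Z)$, since these are the same measurable functions of $(\tilde Y_0^*, \tilde Y_1^*, \tilde U, \tilde V, Z)$ as in the true model and the underlying conditional law is unchanged there. Conditioning on $\tilde U$, integrating out $\tilde U \mid X \sim \mathrm{Uniform}[0,1]$, and bounding the integrand by $1$ therefore yields
\[
\left| F_{\tilde Y, \tilde D, \tilde S, Z, X}(y, d, s, z, \bar x) - F_{Y, D, S, Z, X}(y, d, s, z, \bar x) \right| \;\le\; F_X(\bar x)\,\lambda(I_\eta) \;\le\; 2\eta
\]
uniformly in $(y,d,s,z)$; choosing $\eta = \epsilon/2$ gives \eqref{DataRestrictionS}.

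The main conceptual obstacle is the one that forces the $\epsilon$-relaxation in the first place: any perturbation compatible with Assumption \ref{smoothness} must act on a $u$-set of positive measure, so exact matching of the observed distribution is impossible and infinitesimal sharpness is the best attainable statement. The remaining work is bookkeeping: checking that the factored mixture preserves the support restriction and is continuous in $u$ at the gluing points $\bar u \pm \eta$ (handled above), and dealing with the boundary cases $\bar u \in \{0, 1\}$ by using the one-sided interval $I_\eta$, for which $\lambda(I_\eta) \le 2\eta$ still holds. The sub-cases of Assumption \ref{bounded} (1, 2, 3(a), 3(b)) are treated exactly as in Appendix \ref{proofsharp3} and the end of Appendix \ref{proofsharp}, since the interpolation argument is orthogonal to which endpoints of $\mathcal Y^*$ are finite.
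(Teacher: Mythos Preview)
Your proposal is correct and follows essentially the same route as the paper's proof: both modify the conditional laws of $V$ and of $(Y_0^*, Y_1^*)$ given $(X,U,V)$ over a small interval $I_\eta$ around $\bar u$ via a continuous tent-shaped mixture, verify the target and support conditions at $\bar u$ exactly as in Theorem~\ref{sharpbounds}, and bound the resulting discrepancy in the observable distribution by $O(\eta)$. The only cosmetic difference is that the paper interpolates between the constructed law and the true law evaluated at the \emph{endpoints} $\bar u \pm \bar\epsilon$, whereas you mix with the true law evaluated at the \emph{current} $u$; both yield continuity in $u$ (yours directly from Assumption~\ref{smoothness}, the paper's from piecewise linearity), and both give the same $2\eta$ bound on the data restriction.
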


\begin{proposition}\label{partialnecessaryS}
	Suppose that the functions $m_{0}^{Y}$, $m_{1}^{Y}$, $m_{0}^{S}$ and $\Delta_{S}$ are point identified at every pair $\left(x, u\right) \in \mathcal{X} \times \left[0, 1\right]$. Impose Assumptions \ref{ind}-\ref{support}, \ref{increasing_sample_selection} and \ref{smoothness}. If $\mathcal{Y}^{*} = \mathbb{R}$, then, for any $\epsilon \in \mathbb{R}_{++}$, $\overline{u} \in \left[0, 1\right]$, $\overline{x} \in \mathcal{X}$ and $\delta\left(\overline{x}, \overline{u}\right) \in \mathbb{R}$, there exist random variables $\left(\tilde{Y}_{0}^{*}, \tilde{Y}_{1}^{*}, \tilde{U}, \tilde{V}\right)$ such that
	\begin{equation}\label{faketargetPS}
		\Delta_{\tilde{Y}^{*}}^{OO}\left(\overline{x}, \overline{u}\right) \coloneqq \mathbb{E}\left[\tilde{Y}_{1}^{*} - \tilde{Y}_{0}^{*} \left\vert X = \overline{x}, \tilde{U} = \overline{u}, \tilde{S}_{0} = 1, \tilde{S}_{1} = 1 \right.\right] = \delta\left(\overline{x}, \overline{u}\right),
	\end{equation}
	\begin{equation}\label{correctsupportPS}
		\mathbb{P}\left[\left. \left(\tilde{Y}_{0}^{*}, \tilde{Y}_{1}^{*}, \tilde{V}\right) \in \mathcal{Y}^{*} \times \mathcal{Y}^{*} \times \left[0, 1\right] \right\vert X = \overline{x}, \tilde{U} = u \right] = 1 \text{ for any } u \in \left[0, 1\right],
	\end{equation}
	\begin{equation}\label{smoothVP}
	F_{\tilde{V} \left\vert X, \tilde{U} \right.} \text{ is a continuous function of the value of } \tilde{U},
	\end{equation}
	\begin{equation}\label{smoothYP}
	F_{\tilde{Y}_{0}^{*}, \tilde{Y}_{1}^{*} \left\vert X, \tilde{U}, \tilde{V} \right.} \text{ is a continuous function of the value of } \tilde{U},
	\end{equation}
	and
	\begin{equation}\label{DataRestrictionPS}
		\left\vert F_{\tilde{Y}, \tilde{D}, \tilde{S}, Z, X}\left(y, d, s, z, \overline{x} \right) - F_{Y, D, S, Z, X} \left(y, d, s, z, \overline{x}\right) \right\vert \leq \epsilon
	\end{equation}
	for any $\left(y, d, s, z\right) \in \mathbb{R}^{4}$,	where $\tilde{D} \coloneqq \mathbf{1}\left\lbrace P\left(X, Z\right) \geq \tilde{U}\right\rbrace$, $\tilde{S}_{0} = \mathbf{1}\left\lbrace Q\left(0, X\right) \geq \tilde{V}\right\rbrace$, $\tilde{S}_{1} = \mathbf{1}\left\lbrace Q\left(1, X\right) \geq \tilde{V}\right\rbrace$, $\tilde{Y}_{0} = \tilde{S}_{0} \cdot \tilde{Y}_{0}^{*}$, $\tilde{Y}_{1} = \tilde{S}_{1} \cdot \tilde{Y}_{1}^{*}$ and $\tilde{Y} = \tilde{D} \cdot \tilde{Y}_{1} + \left(1 - \tilde{D}\right) \cdot \tilde{Y}_{0}$.
\end{proposition}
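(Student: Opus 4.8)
The plan is to obtain Proposition~\ref{partialnecessaryS} by "smearing" the point‑mass construction used to prove Proposition~\ref{partialnecessary} (Appendix~\ref{proofnecessary}) over a shrinking neighbourhood of $\overline u$, rather than over the single $\tilde U$‑slice $\{\tilde U=\overline u\}$. The construction in Appendix~\ref{proofnecessary} already delivers $\Delta_{\tilde Y^*}^{OO}(\overline x,\overline u)=\delta(\overline x,\overline u)$ exactly, but it changes the conditional law of $(Y_0^*,Y_1^*,V)$ discontinuously at $u=\overline u$; to meet the smoothness requirements \eqref{smoothVP}--\eqref{smoothYP} I will interpolate continuously between the "target‑achieving" law and the true law over $I_\eta\coloneqq(\overline u-\eta,\overline u+\eta)\cap[0,1]$. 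The cost is that the modified set of $u$'s now has measure $|I_\eta|\le 2\eta$ instead of zero, so the data restriction only holds up to an error controlled by $2\eta$; taking $\eta\le\epsilon/2$ yields \eqref{DataRestrictionPS}.

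Concretely, first I would recall from Steps~9--12 of Appendix~\ref{proofsharp3} (specialized, as in Appendix~\ref{proofnecessary}, to $\mathcal Y^*=\mathbb R$, where $\alpha(\overline x,\overline u),\gamma(\overline x,\overline u)\in\mathbb R=\mathcal Y^*$, so that \eqref{correctsupportPS} is automatic) the target‑achieving conditional law $G(\cdot\mid\overline x,\overline u)$ of $(\tilde Y_0^*,\tilde Y_1^*,\tilde V)$ given $(X,\tilde U)=(\overline x,\overline u)$; by Steps~13--15 this law yields $\Delta_{\tilde Y^*}^{OO}(\overline x,\overline u)=\delta(\overline x,\overline u)$, and it places mass $m_0^S(\overline x,\overline u)>0$ on $\{\tilde V\le Q(0,\overline x)\}$ by Assumption~\ref{increasing_sample_selection}, so the conditional expectation in \eqref{faketargetPS} is well defined. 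Next I would fix a continuous tent function $\lambda_\eta\colon[0,1]\to[0,1]$ with $\lambda_\eta(\overline u)=1$ and $\lambda_\eta\equiv 0$ off $I_\eta$ (one‑sided if $\overline u\in\{0,1\}$), keep $F_{Y_0^*,Y_1^*,V\mid X,U}$ unchanged for $X\ne\overline x$, and set, for $X=\overline x$ and each $u\in[0,1]$,
\[
F_{\tilde Y_0^*,\tilde Y_1^*,\tilde V\mid X,\tilde U}(\cdot\mid\overline x,u)=\lambda_\eta(u)\,G(\cdot\mid\overline x,\overline u)+(1-\lambda_\eta(u))\,F_{Y_0^*,Y_1^*,V\mid X,U}(\cdot\mid\overline x,u),
\]
together with $\tilde U\mid X\sim\mathrm{Uniform}[0,1]$ and $Z\independent(\tilde Y_0^*,\tilde Y_1^*,\tilde U,\tilde V)\mid X$ exactly as in Steps~2 and~5. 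Since $\lambda_\eta(\overline u)=1$, on the slice $\{\tilde U=\overline u\}$ this coincides with the construction of Proposition~\ref{partialnecessary}, so \eqref{faketargetPS} follows verbatim from Steps~13--15 and \eqref{correctsupportPS} is immediate from $\mathcal Y^*=\mathbb R$.

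Then I would verify the two new conditions. For continuity in $u$: for $X\ne\overline x$ it is Assumption~\ref{smoothness}; for $X=\overline x$, the $\tilde V$‑marginal of the displayed mixture is a convex combination, with continuous weight $\lambda_\eta(u)$, of the fixed piecewise‑linear law of Step~9 and the continuous‑in‑$u$ law $F_{V\mid X,U}(\cdot\mid\overline x,u)$, hence continuous in $u$, which gives \eqref{smoothVP}; since both mixture components admit $v$‑densities (the $\tilde V$‑CDF of Step~9 is piecewise linear, and $(U,V)\mid X$ has density $f_{U,V\mid X}$), the regular conditional law of $(\tilde Y_0^*,\tilde Y_1^*)$ given $\tilde V=v$ is a Bayes reweighting of the two components with weight jointly continuous in $(u,v)$, so $F_{\tilde Y_0^*,\tilde Y_1^*\mid X,\tilde U,\tilde V}$ is continuous in $u$, giving \eqref{smoothYP}. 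For the approximate data restriction: writing $F_{\tilde Y,\tilde D,\tilde S,Z\mid X}(\cdot\mid\overline x)=\int_0^1 F_{\tilde Y,\tilde D,\tilde S,Z\mid X,\tilde U}(\cdot\mid\overline x,u)\,\mathrm du$ and likewise for the true law, the integrands agree for $u\notin I_\eta$ (Step~6) and differ by at most $1$ for $u\in I_\eta$, so the two marginal CDFs differ by at most $|I_\eta|\le 2\eta$; multiplying by $F_X(\overline x)\le 1$ and choosing $\eta\le\epsilon/2$ gives \eqref{DataRestrictionPS}. The companion sharpness claim, Proposition~\ref{sharpboundsS}, is proved the same way, starting from the construction of Appendix~\ref{proofsharp3} instead of Appendix~\ref{proofnecessary}.

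The main obstacle I anticipate is the continuity check for $F_{\tilde Y_0^*,\tilde Y_1^*\mid X,\tilde U,\tilde V}$: unlike the $\tilde V$‑marginal, this is a conditional‑given‑$\tilde V$ object, so one must argue carefully that the Bayes reweighting of the mixture is well behaved as $u\to\overline u$ and across $\partial I_\eta$ — in particular that the degenerate (point‑mass) $G$‑component, conditioned on $\tilde V$, glues continuously onto the smooth original conditional law. Everything else (validity of the mixture as a probability law, the independence and uniform‑marginal bookkeeping, and the evaluation of $\Delta_{\tilde Y^*}^{OO}$) is a routine transcription of Appendices~\ref{proofsharp3} and~\ref{proofnecessary}.
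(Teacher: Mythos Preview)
Your proposal is correct and follows the same high-level ``smearing'' strategy as the paper: replace the single modified slice $\{\tilde U=\overline u\}$ of Appendix~\ref{proofnecessary} by a shrinking neighbourhood $I_\eta$, interpolate continuously between the target law $G$ and the true law, and bound the data-restriction error by $|I_\eta|$.

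There is one genuine technical difference worth noting. You interpolate the \emph{joint} conditional law $F_{\tilde Y_0^*,\tilde Y_1^*,\tilde V\mid X,\tilde U}$ via a single tent-weighted mixture, which forces you to recover continuity of the conditional $F_{\tilde Y_0^*,\tilde Y_1^*\mid X,\tilde U,\tilde V}$ through a Bayes-reweighting argument --- precisely the step you flag as the main obstacle. The paper (Steps~10 and~14 of the proof of Proposition~\ref{sharpboundsS}) instead interpolates the two pieces \emph{separately}: first $F_{\tilde V\mid X,\tilde U}$ as a convex combination of the Step~9 law and the true law at the boundary point, then $F_{\tilde Y_0^*,\tilde Y_1^*\mid X,\tilde U,\tilde V}$ as a convex combination of the Step~11--12 law and the true conditional at the boundary point. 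Because each interpolated object is itself a convex combination of CDFs with a weight linear in $u$, continuity in $u$ is immediate by construction, and no Bayes step or positive-density assumption is needed. Your route works but is strictly harder at exactly the point you anticipated; the paper's decomposition sidesteps it entirely. (The paper also scales the half-width by $\overline\epsilon=\epsilon/(2F_X(\overline x))$ rather than $\epsilon/2$, which is cosmetic given your ``multiply by $F_X(\overline x)\le 1$'' remark.)
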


The proofs of propositions \ref{sharpboundsS} and \ref{partialnecessaryS} are below. They are small modification of the previous proofs.

\begin{proof}[Proof of Proposition \ref{sharpboundsS}]
I only prove Proposition \ref{sharpboundsS} under Assumption \ref{bounded}.3 (sub-cases (a) and (b)).The proofs of Proposition \ref{sharpboundsS} under assumptions \ref{bounded}.1 and \ref{bounded}.2 are trivial modifications of the proof presented below.

Fix any $\overline{u} \in \left[0, 1\right]$, any $\overline{x} \in \mathcal{X}$, any $\delta\left(\overline{x}, \overline{u}\right) \in \left(\underline{\Delta_{Y^{*}}^{OO}}\left(\overline{x}, \overline{u}\right), \overline{\Delta_{Y^{*}}^{OO}}\left(\overline{x}, \overline{u}\right)\right)$ and any $\epsilon \in \mathbb{R}_{++}$ such that $\min \left\lbrace \overline{u} - \dfrac{\epsilon}{2 \cdot F_{X}\left(\overline{x}\right)}, 1 - \left(\overline{u} - \dfrac{\epsilon}{2 \cdot F_{X}\left(\overline{x}\right)}\right) \right\rbrace > 0$. For brevity, define $\alpha\left(\overline{x}, \overline{u}\right) \coloneqq \delta\left(\overline{x}, \overline{u}\right) + \dfrac{m_{0}^{Y}\left(\overline{x}, \overline{u}\right)}{m_{0}^{S}\left(\overline{x}, \overline{u}\right)}$, $\gamma\left(\overline{x}, \overline{u}\right) \coloneqq \dfrac{m_{1}^{Y}\left(\overline{x}, \overline{u}\right) - \alpha\left(\overline{x}, \overline{u}\right) \cdot m_{0}^{S}\left(\overline{x}, \overline{u}\right)}{\Delta_{S}\left(\overline{x}, \overline{u}\right)}$ and $\overline{\epsilon} \coloneqq \dfrac{\epsilon}{2 \cdot F_{X}\left(\overline{x}\right)}$.

Note that
\begin{equation}\label{sanity1S}
\begin{array}{cll}
& \delta\left(\overline{x}, \overline{u}\right) & \in \left(\underline{\Delta_{Y^{*}}^{OO}}\left(\overline{x}, \overline{u}\right), \overline{\Delta_{Y^{*}}^{OO}}\left(\overline{x}, \overline{u}\right)\right) \\
& & \\
\Leftrightarrow & \alpha\left(\overline{x}, \overline{u}\right) & \in \left(\max\left\lbrace \dfrac{m_{1}^{Y}\left(x, u\right) - \overline{y}^{*} \cdot \Delta_{S}\left(x, u\right)}{m_{0}^{S}\left(x, u\right)}, \underline{y}^{*}\right\rbrace, \right. \\
& & \\
& & \hspace{30pt} \left. \min\left\lbrace \dfrac{m_{1}^{Y}\left(x, u\right) - \underline{y}^{*} \cdot \Delta_{S}\left(x, u\right)}{m_{0}^{S}\left(x, u\right)}, \overline{y}^{*}\right\rbrace\right) \\
& & \\
& & \subseteq \left(\underline{y}^{*}, \overline{y}^{*}\right),
\end{array}
\end{equation}
and that
\begin{equation}\label{sanity2S}
\begin{array}{cll}
& \alpha\left(\overline{x}, \overline{u}\right) & \in \left(\dfrac{m_{1}^{Y}\left(x, u\right) - \overline{y}^{*} \cdot \Delta_{S}\left(x, u\right)}{m_{0}^{S}\left(x, u\right)}, \dfrac{m_{1}^{Y}\left(x, u\right) - \underline{y}^{*} \cdot \Delta_{S}\left(x, u\right)}{m_{0}^{S}\left(x, u\right)} \right) \\
& & \\
\Leftrightarrow & \gamma\left(\overline{x}, \overline{u}\right) & \in \left(\underline{y}^{*}, \overline{y}^{*}\right).
\end{array}
\end{equation}

The strategy of this proof consists of defining candidate random variables $\left(\tilde{Y}_{0}^{*}, \tilde{Y}_{1}^{*}, \tilde{U}, \tilde{V}\right)$ through their joint cumulative distribution function $F_{\tilde{Y}_{0}^{*}, \tilde{Y}_{1}^{*}, \tilde{U}, \tilde{V}, Z, X}$ and then checking that conditions \eqref{faketargetS}-\eqref{DataRestrictionS} are satisfied. I fix $\left(y_{0}, y_{1}, u, v, z, x\right) \in \mathbb{R}^{6}$ and define $F_{\tilde{Y}_{0}^{*}, \tilde{Y}_{1}^{*}, \tilde{U}, \tilde{V}, Z, X}$ in fourteen steps:
\begin{enumerate}
	\item[Step 1.] For $x \notin \mathcal{X}$, $F_{\tilde{Y}_{0}^{*}, \tilde{Y}_{1}^{*}, \tilde{U}, \tilde{V}, Z, X}\left(y_{0}, y_{1}, u, v, z, x\right) = F_{Y_{0}^{*}, Y_{1}^{*}, U, V, Z, X}\left(y_{0}, y_{1}, u, v, z, x\right)$.
	
	\item[Step 2.] From now on, consider $x \in \mathcal{X}$. Since $$F_{\tilde{Y}_{0}^{*}, \tilde{Y}_{1}^{*}, \tilde{U}, \tilde{V}, Z, X}\left(y_{0}, y_{1}, u, v, z, x\right) = F_{\tilde{Y}_{0}^{*}, \tilde{Y}_{1}^{*}, \tilde{U}, \tilde{V}, Z \left\vert X \right.}\left(y_{0}, y_{1}, u, v, z \left\vert x \right.\right) \cdot F_{X}\left(x\right),$$ it suffices to define $F_{\tilde{Y}_{0}^{*}, \tilde{Y}_{1}^{*}, \tilde{U}, \tilde{V}, Z \left\vert X \right.}\left(y_{0}, y_{1}, u, v, z \left\vert x \right.\right)$. Moreover, I impose $$\left. Z \independent \left(\tilde{Y}_{0}^{*}, \tilde{Y}_{1}^{*}, \tilde{U}, \tilde{V} \right) \right\vert X$$ by writing $$F_{\tilde{Y}_{0}^{*}, \tilde{Y}_{1}^{*}, \tilde{U}, \tilde{V}, Z \left\vert X \right.}\left(y_{0}, y_{1}, u, v, z \left\vert x \right.\right) = F_{\tilde{Y}_{0}^{*}, \tilde{Y}_{1}^{*}, \tilde{U}, \tilde{V}\left\vert X \right.}\left(y_{0}, y_{1}, u, v \left\vert x \right.\right) \cdot F_{Z \left\vert X \right.}\left(z \left\vert x \right.\right),$$ implying that it is sufficient to define $F_{\tilde{Y}_{0}^{*}, \tilde{Y}_{1}^{*}, \tilde{U}, \tilde{V}\left\vert X \right.}\left(y_{0}, y_{1}, u, v \left\vert x \right.\right)$.
	
	\item[Step 3.] For $u \notin \left[0, 1\right]$, I define $F_{\tilde{Y}_{0}^{*}, \tilde{Y}_{1}^{*}, \tilde{U}, \tilde{V}\left\vert X \right.}\left(y_{0}, y_{1}, u, v \left\vert x \right.\right) = F_{Y_{0}^{*}, Y_{1}^{*}, U, V\left\vert X \right.}\left(y_{0}, y_{1}, u, v \left\vert x \right.\right)$.
	
	\item[Step 4.] From now on, consider $u \in \left[0, 1\right]$. Since $$F_{\tilde{Y}_{0}^{*}, \tilde{Y}_{1}^{*}, \tilde{U}, \tilde{V}\left\vert X \right.}\left(y_{0}, y_{1}, u, v \left\vert x \right.\right) = F_{\tilde{Y}_{0}^{*}, \tilde{Y}_{1}^{*}, \tilde{V}\left\vert X, \tilde{U} \right.}\left(y_{0}, y_{1}, v \left\vert x, u \right.\right) \cdot F_{\tilde{U}\left\vert X \right.}\left(u \left\vert x \right.\right),$$ it suffices to define $F_{\tilde{Y}_{0}^{*}, \tilde{Y}_{1}^{*}, \tilde{V}\left\vert X, \tilde{U} \right.}\left(y_{0}, y_{1}, v \left\vert x, u \right.\right)$ and $F_{\tilde{U}\left\vert X \right.}\left(u \left\vert x \right.\right)$.
	
	\item[Step 5.] I define $F_{\tilde{U}\left\vert X \right.}\left(u \left\vert x \right.\right) = F_{U\left\vert X \right.}\left(u \left\vert x \right.\right) = u$.
	
	\item[Step 6.] For any $u \notin \left(\overline{u} - \overline{\epsilon}, \overline{u} + \overline{\epsilon}\right)$, I define $F_{\tilde{Y}_{0}^{*}, \tilde{Y}_{1}^{*}, \tilde{V}\left\vert X, \tilde{U} \right.}\left(y_{0}, y_{1}, v \left\vert x, u\right.\right) = F_{Y_{0}^{*}, Y_{1}^{*}, V\left\vert X, U \right.}\left(y_{0}, y_{1}, v \left\vert x, u\right.\right)$.
	
	\item[Step 7.] For any $v \notin \left[0, 1\right]$, I define $F_{\tilde{Y}_{0}^{*}, \tilde{Y}_{1}^{*}, \tilde{V}\left\vert X, \tilde{U} \right.}\left(y_{0}, y_{1}, v \left\vert x, \overline{u}\right.\right) = F_{Y_{0}^{*}, Y_{1}^{*}, V\left\vert X, U \right.}\left(y_{0}, y_{1}, v \left\vert x, \overline{u}\right.\right)$.
	
	\item[Step 8.] From now on, consider $v \in \left[0, 1\right]$. Since $$F_{\tilde{Y}_{0}^{*}, \tilde{Y}_{1}^{*}, \tilde{V}\left\vert X, \tilde{U} \right.}\left(y_{0}, y_{1}, v \left\vert x, \overline{u}\right.\right) = F_{\tilde{Y}_{0}^{*}, \tilde{Y}_{1}^{*}\left\vert X, \tilde{U}, \tilde{V} \right.}\left(y_{0}, y_{1}\left\vert x, \overline{u}, v\right.\right) \cdot F_{\tilde{V}\left\vert X, \tilde{U} \right.}\left(v \left\vert x, \overline{u}\right.\right),$$ it is sufficient to define $F_{\tilde{Y}_{0}^{*}, \tilde{Y}_{1}^{*}\left\vert X, \tilde{U}, \tilde{V} \right.}\left(y_{0}, y_{1}\left\vert x, \overline{u}, v\right.\right)$ and $F_{\tilde{V}\left\vert X, \tilde{U} \right.}\left(v \left\vert x, \overline{u}\right.\right)$.
	
	\item[Step 9.] I define
	$$
	F_{\tilde{V}\left\vert X, \tilde{U} \right.}\left(v \left\vert x, \overline{u}\right.\right) = \left\lbrace
	\begin{array}{cl}
	m_{0}^{S}\left(x, \overline{u}\right) \cdot \dfrac{v}{Q\left(0, x\right)} & \text{if } v \leq Q\left(0, x\right) \\
	& \\
	m_{0}^{S}\left(x, \overline{u}\right) + \Delta_{S}\left(x, \overline{u}\right) \cdot \dfrac{v - Q\left(0, x\right)}{Q\left(1, x\right) - Q\left(0, x\right)} & \text{if } Q\left(0, x\right) < v \leq Q\left(1, x\right) \\
	& \\
	m_{1}^{S}\left(x, \overline{u}\right) + \left(1 - m_{1}^{S}\left(x, \overline{u}\right)\right)\dfrac{v - Q\left(1, x\right)}{1 - Q\left(1, x\right)} & \text{if } Q\left(1, x\right) < v
	\end{array}
	\right..
	$$
	
	\item[Step 10.] For any $u \in \left(\overline{u} - \overline{\epsilon}, \overline{u} \right)$, I define
	$$
	F_{\tilde{V}\left\vert X, \tilde{U} \right.}\left(v \left\vert x, u \right.\right) = F_{\tilde{V}\left\vert X, \tilde{U} \right.}\left(v \left\vert x, \overline{u} - \overline{\epsilon}\right.\right) \cdot \left(\dfrac{\overline{u} - u}{\overline{\epsilon}}\right) + F_{\tilde{V}\left\vert X, \tilde{U} \right.}\left(v \left\vert x, \overline{u}\right.\right) \cdot \left(\dfrac{ u - \overline{u} + \overline{\epsilon}}{\overline{\epsilon}}\right),
	$$
	which are valid cumulative distribution functions because a convex combination of cumulative distribution functions is a cumulative distribution function.
	
	For any $u \in \left(\overline{u}, \overline{u} + \overline{\epsilon} \right)$, I define
	$$
	F_{\tilde{V}\left\vert X, \tilde{U} \right.}\left(v \left\vert x, u \right.\right) = F_{\tilde{V}\left\vert X, \tilde{U} \right.}\left(v \left\vert x, \overline{u}\right.\right) \cdot \left(\dfrac{\overline{u} + \overline{\epsilon} - u}{\overline{\epsilon}}\right) + F_{\tilde{V}\left\vert X, \tilde{U} \right.}\left(v \left\vert x, \overline{u} + \overline{\epsilon} \right.\right) \cdot \left(\dfrac{u - \overline{u}}{\overline{\epsilon}}\right),
	$$
	which are valid cumulative distribution functions because a convex combination of cumulative distribution functions is a cumulative distribution function.
	
	Note that $F_{\tilde{V}\left\vert X, \tilde{U} \right.}$ is a continuous function of the value of $\tilde{U}$, i.e., it satisfies restriction \eqref{smoothV}.
		
	\item[Step 11.] I write $F_{\tilde{Y}_{0}^{*}, \tilde{Y}_{1}^{*}\left\vert X, \tilde{U}, \tilde{V} \right.}\left(y_{0}, y_{1}\left\vert x, \overline{u}, v\right.\right) = F_{\tilde{Y}_{0}^{*}\left\vert X, \tilde{U}, \tilde{V} \right.}\left(y_{0}\left\vert x, \overline{u}, v\right.\right) \cdot F_{ \tilde{Y}_{1}^{*}\left\vert X, \tilde{U}, \tilde{V} \right.}\left(y_{1}\left\vert x, \overline{u}, v\right.\right)$, implying that I can separately define $F_{\tilde{Y}_{0}^{*}\left\vert X, \tilde{U}, \tilde{V} \right.}\left(y_{0}\left\vert x, \overline{u}, v\right.\right)$ and $F_{ \tilde{Y}_{1}^{*}\left\vert X, \tilde{U}, \tilde{V} \right.}\left(y_{1}\left\vert x, \overline{u}, v\right.\right)$.
	
	\item[Step 12.] When $\mathcal{Y}^{*}$ is a bounded interval (sub-case (a) in Assumption \ref{bounded}.3), I define
	$$
	F_{\tilde{Y}_{0}^{*}\left\vert X, \tilde{U}, \tilde{V} \right.}\left(y_{0}\left\vert x, \overline{u}, v\right.\right) = \left\lbrace
	\begin{array}{cl}
	\mathbf{1}\left\lbrace y_{0} \geq \dfrac{m_{0}^{Y}\left(\overline{x}, \overline{u}\right)}{m_{0}^{S}\left(\overline{x}, \overline{u}\right)} \right\rbrace & \text{if } v \leq Q\left(0, x\right) \\
	---------- & ------- \\
	\mathbf{1}\left\lbrace y_{0} \geq \dfrac{\underline{y}^{*} + \overline{y}^{*}}{2} \right\rbrace & \text{if } Q\left(0, x\right) < v
	\end{array}
	\right..
	$$
	
	When $\overline{y}^{*} = \max \left\lbrace y \in \mathcal{Y}^{*} \right\rbrace$ and $\underline{y}^{*} = \min \left\lbrace y \in \mathcal{Y}^{*} \right\rbrace$ (sub-case (b) in Assumption \ref{bounded}.3), I define
	$$
	F_{\tilde{Y}_{0}^{*}\left\vert X, \tilde{U}, \tilde{V} \right.}\left(y_{0}\left\vert x, \overline{u}, v\right.\right) = \left\lbrace
	\begin{array}{cl}
	0 & \text{if } y_{0} < \underline{y}^{*} \text{ and } v \leq Q\left(0, x\right) \\
	& \\
	1 - \dfrac{\dfrac{m_{0}^{Y}\left(\overline{x}, \overline{u}\right)}{m_{0}^{S}\left(\overline{x}, \overline{u}\right)} - \underline{y}^{*}}{\overline{y}^{*} - \underline{y}^{*}} & \text{if } \underline{y}^{*} \leq y_{0} < \overline{y}^{*} \text{ and } v \leq Q\left(0, x\right) \\
	& \\
	1 & \text{if } \overline{y}^{*} \leq y_{0} \text{ and } v \leq Q\left(0, x\right) \\
	---------- & -------------- \\
	\mathbf{1}\left\lbrace y_{0} \geq \overline{y}^{*} \right\rbrace & \text{if } Q\left(0, x\right) < v
	\end{array}
	\right..
	$$
	which are valid cumulative distribution functions because $\dfrac{m_{0}^{Y}\left(\overline{x}, \overline{u}\right)}{m_{0}^{S}\left(\overline{x}, \overline{u}\right)} \in \left[\underline{y}^{*}, \overline{y}^{*}\right]$.
		
	\item[Step 13.] When $\mathcal{Y}^{*}$ is a bounded interval (sub-case (a) in Assumption \ref{bounded}.3), I define
	$$
	F_{\tilde{Y}_{1}^{*}\left\vert X, \tilde{U}, \tilde{V} \right.}\left(y_{1}\left\vert x, \overline{u}, v\right.\right) = \left\lbrace
	\begin{array}{cl}
	\mathbf{1}\left\lbrace y_{1} \geq \alpha\left(\overline{x}, \overline{u}\right) \right\rbrace & \text{if } v \leq Q\left(0, x\right) \\
	-------- & ----------- \\
	\mathbf{1}\left\lbrace y_{1} \geq \gamma\left(\overline{x}, \overline{u}\right) \right\rbrace & \text{if } Q\left(0, x\right) < v \leq Q\left(1, x\right) \\
	-------- & ----------- \\
	\mathbf{1}\left\lbrace y_{1} \geq \dfrac{\underline{y}^{*} + \overline{y}^{*}}{2} \right\rbrace & \text{if } Q\left(1, x\right) < v
	\end{array}
	\right..
	$$
	
	When $\overline{y}^{*} = \max \left\lbrace y \in \mathcal{Y}^{*} \right\rbrace$ and $\underline{y}^{*} = \min \left\lbrace y \in \mathcal{Y}^{*} \right\rbrace$ (sub-case (b) in Assumption \ref{bounded}.3), I define
	$$
	F_{\tilde{Y}_{1}^{*}\left\vert X, \tilde{U}, \tilde{V} \right.}\left(y_{1}\left\vert x, \overline{u}, v\right.\right) = \left\lbrace
	\begin{array}{cl}
	0 & \text{if } y_{1} < \underline{y}^{*} \text{ and } v \leq Q\left(0, x\right) \\
	& \\
	1 - \dfrac{\alpha\left(\overline{x}, \overline{u}\right) - \underline{y}^{*}}{\overline{y}^{*} - \underline{y}^{*}} & \text{if } \underline{y}^{*} \leq y_{1} < \overline{y}^{*} \text{ and } v \leq Q\left(0, x\right) \\
	& \\
	1 & \text{if } \overline{y}^{*} \leq y_{1} \text{ and } v \leq Q\left(0, x\right) \\
	-------- & ------------------ \\
	0 & \text{if } y_{1} < \underline{y}^{*} \text{ and } Q\left(0, x\right) < v \leq Q\left(1, x\right) \\
	& \\
	1 - \dfrac{\gamma\left(\overline{x}, \overline{u}\right) - \underline{y}^{*}}{\overline{y}^{*} - \underline{y}^{*}} & \text{if } \underline{y}^{*} \leq y_{1} < \overline{y}^{*} \text{ and } Q\left(0, x\right) < v \leq Q\left(1, x\right) \\
	& \\
	1 & \text{if } \overline{y}^{*} \leq y_{1} \text{ and } Q\left(0, x\right) < v \leq Q\left(1, x\right) \\
	-------- & ------------------ \\
	\mathbf{1}\left\lbrace y_{1} \geq \overline{y}^{*} \right\rbrace & \text{if } Q\left(1, x\right) < v
	\end{array}
	\right..
	$$
	which are valid cumulative distribution functions because of equations \eqref{sanity1S} and \eqref{sanity2S}.
	
	\item[Step 14.] For any $u \in \left(\overline{u} - \overline{\epsilon}, \overline{u} \right)$, I define
	\begin{align*}
	F_{\tilde{Y}_{0}^{*}, \tilde{Y}_{1}^{*}\left\vert X, \tilde{U}, \tilde{V} \right.}\left(y_{0}, y_{1}\left\vert x, u, v\right.\right) & = F_{\tilde{Y}_{0}^{*}, \tilde{Y}_{1}^{*}\left\vert X, \tilde{U}, \tilde{V} \right.}\left(y_{0}, y_{1}\left\vert x, \overline{u} - \overline{\epsilon}, v\right.\right)  \cdot \left(\dfrac{\overline{u} - u}{\overline{\epsilon}}\right) \\
	& \hspace{30pt} + F_{\tilde{Y}_{0}^{*}, \tilde{Y}_{1}^{*}\left\vert X, \tilde{U}, \tilde{V} \right.}\left(y_{0}, y_{1}\left\vert x, \overline{u}, v\right.\right) \cdot \left(\dfrac{ u - \overline{u} + \overline{\epsilon}}{\overline{\epsilon}}\right),
	\end{align*}
	which are valid cumulative distribution functions because a convex combination of cumulative distribution functions is a cumulative distribution function.
	
	For any $u \in \left(\overline{u}, \overline{u} + \overline{\epsilon} \right)$, I define
	\begin{align*}
	F_{\tilde{Y}_{0}^{*}, \tilde{Y}_{1}^{*}\left\vert X, \tilde{U}, \tilde{V} \right.}\left(y_{0}, y_{1}\left\vert x, u, v\right.\right) & = F_{\tilde{Y}_{0}^{*}, \tilde{Y}_{1}^{*}\left\vert X, \tilde{U}, \tilde{V} \right.}\left(y_{0}, y_{1}\left\vert x, \overline{u}, v\right.\right) \cdot \left(\dfrac{\overline{u} + \overline{\epsilon} - u}{\overline{\epsilon}}\right) \\
	& \hspace{30pt} + F_{\tilde{Y}_{0}^{*}, \tilde{Y}_{1}^{*}\left\vert X, \tilde{U}, \tilde{V} \right.}\left(y_{0}, y_{1}\left\vert x, \overline{u} + \overline{\epsilon}, v\right.\right) \left(\dfrac{u - \overline{u}}{\overline{\epsilon}}\right),
	\end{align*}
	which are valid cumulative distribution functions because a convex combination of cumulative distribution functions is a cumulative distribution function.
	
	Note that $	F_{\tilde{Y}_{0}^{*}, \tilde{Y}_{1}^{*}\left\vert X, \tilde{U}, \tilde{V} \right.}$ is a continuous function of the value of $\tilde{U}$, i.e., it satisfies restriction \eqref{smoothY}.
\end{enumerate}

Having defined the joint cumulative distribution function $F_{\tilde{Y}_{0}^{*}, \tilde{Y}_{1}^{*}, \tilde{U}, \tilde{V}, Z, X}$, note that equations \eqref{sanity1S} and \eqref{sanity2S}, $\dfrac{m_{0}^{Y}\left(\overline{x}, \overline{u}\right)}{m_{0}^{S}\left(\overline{x}, \overline{u}\right)} \in \left[\underline{y}^{*}, \overline{y}^{*}\right]$ and steps 7-14 ensure that equation \eqref{correctsupportS} holds.

Now, I show, in three steps, that equation \eqref{faketargetS} holds.
\begin{enumerate}	
	\item[Step 15.] Observe that
	\begin{align}
	& \mathbb{E}\left[\tilde{Y}_{1}^{*} \left\vert X = \overline{x}, \tilde{U} = \overline{u},  \tilde{S}_{0} = 1, \tilde{S}_{1} = 1 \right.\right] \nonumber \\
	& \hspace{30pt} = \mathbb{E}\left[\tilde{Y}_{1}^{*} \left\vert X = \overline{x}, \tilde{U} = \overline{u},  Q\left(0, \overline{x}\right) \geq \tilde{V} \right.\right] \nonumber \\
	& \hspace{30pt} = \dfrac{\mathbb{E}\left[\mathbf{1}\left\lbrace Q\left(0, \overline{x}\right) \geq \tilde{V} \right\rbrace \cdot \tilde{Y}_{1}^{*} \left\vert X = \overline{x}, \tilde{U} = \overline{u} \right.\right]}{\mathbb{P}\left[Q\left(0, \overline{x}\right) \geq \tilde{V} \left\vert X = \overline{x}, \tilde{U} = \overline{u} \right.\right]} \nonumber \\
	& \hspace{30pt} = \dfrac{\mathbb{E}\left[\mathbf{1}\left\lbrace Q\left(0, \overline{x}\right) \geq \tilde{V} \right\rbrace \cdot \mathbb{E}\left[\tilde{Y}_{1}^{*} \left\vert X = \overline{x}, \tilde{U} = \overline{u}, \tilde{V} \right. \right] \left\vert X = \overline{x}, \tilde{U} = \overline{u} \right.\right]}{\mathbb{P}\left[Q\left(0, \overline{x}\right) \geq \tilde{V} \left\vert X = \overline{x}, \tilde{U} = \overline{u} \right.\right]} \nonumber \\
	& \hspace{30pt} = \dfrac{\mathop{\mathlarger{\int\limits_{0}^{Q\left(0, \overline{x}\right)}}} \mathbb{E}\left[\tilde{Y}_{1}^{*} \left\vert X = \overline{x}, \tilde{U} = \overline{u}, \tilde{V} = v \right. \right] \, \text{d}F_{\tilde{V} \left\vert X, \tilde{U} \right.}\left(v \left\vert x, \overline{u} \right.\right)}{\mathbb{P}\left[Q\left(0, \overline{x}\right) \geq \tilde{V} \left\vert X = \overline{x}, \tilde{U} = \overline{u} \right.\right]} \nonumber \\
	& \hspace{30pt} = \dfrac{\mathop{\mathlarger{\int\limits_{0}^{Q\left(0, \overline{x}\right)}}} \alpha\left(\overline{x}, \overline{u}\right) \, \text{d}F_{\tilde{V} \left\vert X, \tilde{U} \right.}\left(v \left\vert x, \overline{u} \right.\right)}{\mathbb{P}\left[Q\left(0, \overline{x}\right) \geq \tilde{V} \left\vert X = \overline{x}, \tilde{U} = \overline{u} \right.\right]} \nonumber \\
	& \hspace{50pt} \text{by step 13} \nonumber \\
	& \label{step14aS} \hspace{30pt} = \alpha\left(\overline{x}, \overline{u}\right).
	\end{align}
	
	\item[Step 16.] Notice that
	\begin{align}
	& \mathbb{E}\left[\tilde{Y}_{0}^{*} \left\vert X = \overline{x}, \tilde{U} = \overline{u},  \tilde{S}_{0} = 1, \tilde{S}_{1} = 1 \right.\right] \nonumber \\
	& \hspace{30pt} = \mathbb{E}\left[\tilde{Y}_{0}^{*} \left\vert X = \overline{x}, \tilde{U} = \overline{u},  Q\left(0, \overline{x}\right) \geq \tilde{V} \right.\right] \nonumber \\
	& \hspace{30pt} = \dfrac{\mathbb{E}\left[\mathbf{1}\left\lbrace Q\left(0, \overline{x}\right) \geq \tilde{V} \right\rbrace \cdot \tilde{Y}_{0}^{*} \left\vert X = \overline{x}, \tilde{U} = \overline{u} \right.\right]}{\mathbb{P}\left[Q\left(0, \overline{x}\right) \geq \tilde{V} \left\vert X = \overline{x}, \tilde{U} = \overline{u} \right.\right]} \nonumber \\
	& \hspace{30pt} = \dfrac{\mathbb{E}\left[\mathbf{1}\left\lbrace Q\left(0, \overline{x}\right) \geq \tilde{V} \right\rbrace \cdot \mathbb{E}\left[\tilde{Y}_{0}^{*} \left\vert X = \overline{x}, \tilde{U} = \overline{u}, \tilde{V} \right. \right] \left\vert X = \overline{x}, \tilde{U} = \overline{u} \right.\right]}{\mathbb{P}\left[Q\left(0, \overline{x}\right) \geq \tilde{V} \left\vert X = \overline{x}, \tilde{U} = \overline{u} \right.\right]} \nonumber \\
	& \hspace{30pt} = \dfrac{\mathop{\mathlarger{\int\limits_{0}^{Q\left(0, \overline{x}\right)}}} \mathbb{E}\left[\tilde{Y}_{0}^{*} \left\vert X = \overline{x}, \tilde{U} = \overline{u}, \tilde{V} = v \right. \right] \, \text{d}F_{\tilde{V} \left\vert X, \tilde{U} \right.}\left(v \left\vert x, \overline{u} \right.\right)}{\mathbb{P}\left[Q\left(0, \overline{x}\right) \geq \tilde{V} \left\vert X = \overline{x}, \tilde{U} = \overline{u} \right.\right]} \nonumber \\
	& \hspace{30pt} = \dfrac{ \mathop{\mathlarger{\mathlarger{\mathop{\mathlarger{\int\limits_{0}^{Q\left(0, \overline{x}\right)}}}}}} \dfrac{m_{0}^{Y}\left(\overline{x}, \overline{u}\right)}{m_{0}^{S}\left(\overline{x}, \overline{u}\right)} \, \text{d}F_{\tilde{V} \left\vert X, \tilde{U} \right.}\left(v \left\vert x, \overline{u} \right.\right)}{\mathbb{P}\left[Q\left(0, \overline{x}\right) \geq \tilde{V} \left\vert X = \overline{x}, \tilde{U} = \overline{u} \right.\right]} \nonumber \\
	& \hspace{50pt} \text{by step 12} \nonumber \\
	& \label{step15aS} \hspace{30pt} = \dfrac{m_{0}^{Y}\left(\overline{x}, \overline{u}\right)}{m_{0}^{S}\left(\overline{x}, \overline{u}\right)}.
	\end{align}
	
	\item[Step 17.] Note that
	\begin{align*}
	\Delta_{\tilde{Y}^{*}}^{OO}\left(\overline{x}, \overline{u}\right) & \coloneqq \mathbb{E}\left[\tilde{Y}_{1}^{*} - \tilde{Y}_{0}^{*} \left\vert X = \overline{x}, \tilde{U} = \overline{u}, \tilde{S}_{0} = 1, \tilde{S}_{1} = 1 \right.\right] \\
	& = \mathbb{E}\left[\tilde{Y}_{1}^{*}\left\vert X = \overline{x}, \tilde{U} = \overline{u}, \tilde{S}_{0} = 1, \tilde{S}_{1} = 1 \right.\right] \\
	& \hspace{20pt} - \mathbb{E}\left[\tilde{Y}_{0}^{*}\left\vert X = \overline{x}, \tilde{U} = \overline{u}, \tilde{S}_{0} = 1, \tilde{S}_{1} = 1 \right.\right] \\
	& = \alpha\left(\overline{x}, \overline{u}\right) - \dfrac{m_{0}^{Y}\left(\overline{x}, \overline{u}\right)}{m_{0}^{S}\left(\overline{x}, \overline{u}\right)} \\
	& \hspace{20pt} \text{by equations } \eqref{step14aS} \text{ and } \eqref{step15aS} \\
	& = \delta\left(\overline{x}, \overline{u}\right) \\
	& \hspace{20pt} \text{by the definition of } \alpha\left(\overline{x}, \overline{u}\right),
	\end{align*}
	ensuring that equation \eqref{faketargetS} holds.
\end{enumerate}

Finally, I show, in four steps, that equation \eqref{DataRestrictionS} holds.
\begin{enumerate}
	\item[Step 18.] Fix $\left(y, d, s, z\right) \in \mathbb{R}^{4}$ arbitrarily and observe that expression \eqref{DataRestrictionS} can be simplified to:
	\begin{align}
	& \left\vert F_{\tilde{Y}, \tilde{D}, \tilde{S}, Z, X}\left(y, d, s, z, \overline{x} \right) - F_{Y, D, S, Z, X} \left(y, d, s, z, \overline{x}\right) \right\vert \leq \epsilon \nonumber \\
	\Leftrightarrow & \left\vert F_{\tilde{Y}, \tilde{D}, \tilde{S}, Z \left\vert X \right.}\left(y, d, s, z \left\vert \overline{x} \right.\right) \cdot F_{X}\left(\overline{x}\right) - F_{Y, D, S, Z \left\vert X \right.}\left(y, d, s, z \left\vert \overline{x} \right.\right) \cdot F_{X}\left(\overline{x}\right) \right\vert \leq \epsilon \nonumber \\
	\Leftrightarrow & \left\vert F_{\tilde{Y}, \tilde{D}, \tilde{S}, Z \left\vert X \right.}\left(y, d, s, z \left\vert \overline{x} \right.\right) - F_{Y, D, S, Z \left\vert X \right.}\left(y, d, s, z \left\vert \overline{x} \right.\right) \right\vert \leq \dfrac{\epsilon}{F_{X}\left(\overline{x}\right)} \nonumber \\
	\Leftrightarrow & \label{DataRestrictionSimplifiedS} \left\vert F_{\tilde{Y}, \tilde{D}, \tilde{S}, Z \left\vert X \right.}\left(y, d, s, z \left\vert \overline{x} \right.\right) - F_{Y, D, S, Z \left\vert X \right.}\left(y, d, s, z \left\vert \overline{x} \right.\right) \right\vert \leq 2 \cdot \overline{\epsilon} \\
	& \text{by the definition of } \overline{\epsilon}. \nonumber
	\end{align}
	
	\item[Step 19.] Notice that
	\begin{small}
	\begin{align}
	& F_{\tilde{Y}, \tilde{D}, \tilde{S}, Z \left\vert X \right.}\left(y, d, s, z \left\vert \overline{x} \right.\right) - F_{Y, D, S, Z \left\vert X \right.}\left(y, d, s, z \left\vert \overline{x} \right.\right) \nonumber \\
	& \hspace{30pt} = \mathbb{E}\left[\left.\mathbf{1}\left\lbrace \left(\tilde{Y}, \tilde{D}, \tilde{S}, Z\right) \leq \left(y, d, s, z\right) \right\rbrace \right\vert X = \overline{x} \right] - \mathbb{E}\left[\left.\mathbf{1}\left\lbrace \left(Y, D, S, Z\right) \leq \left(y, d, s, z\right) \right\rbrace \right\vert X = \overline{x} \right] \nonumber \\
	& \hspace{30pt} = \int \mathbf{1}\left\lbrace \left(\tilde{Y}, \tilde{D}, \tilde{S}, Z\right) \leq \left(y, d, s, z\right) \right\rbrace \, \text{d} F_{\tilde{Y}_{0}^{*}, \tilde{Y}_{1}^{*}, \tilde{U}, \tilde{V}, Z \left\vert X \right.}\left(y_{0}, y_{1}, u, v, z \left\vert \overline{x} \right.\right) \nonumber \\
	& \hspace{50pt} - \int \mathbf{1}\left\lbrace \left(Y, D, S, Z\right) \leq \left(y, d, s, z\right) \right\rbrace \, \text{d} F_{Y_{0}^{*}, Y_{1}^{*}, U, V, Z \left\vert X \right.}\left(y_{0}, y_{1}, u, v, z \left\vert \overline{x} \right.\right) \nonumber \\
	& \hspace{30pt} = \int \left[\mathbf{1}\left\lbrace \left(\tilde{Y}, \tilde{D}, \tilde{S}, Z\right) \leq \left(y, d, s, z\right) \right\rbrace \cdot \mathbf{1}\left\lbrace u \notin \left(\overline{u} - \overline{\epsilon}, \overline{u} + \overline{\epsilon} \right) \right\rbrace\right] \, \text{d} F_{\tilde{Y}_{0}^{*}, \tilde{Y}_{1}^{*}, \tilde{U}, \tilde{V}, Z \left\vert X \right.}\left(y_{0}, y_{1}, u, v, z \left\vert \overline{x} \right.\right) \nonumber \\
	& \hspace{50pt} + \int \left[\mathbf{1}\left\lbrace \left(\tilde{Y}, \tilde{D}, \tilde{S}, Z\right) \leq \left(y, d, s, z\right) \right\rbrace \cdot \mathbf{1}\left\lbrace u \in \left(\overline{u} - \overline{\epsilon}, \overline{u} + \overline{\epsilon} \right) \right\rbrace\right] \, \text{d} F_{\tilde{Y}_{0}^{*}, \tilde{Y}_{1}^{*}, \tilde{U}, \tilde{V}, Z \left\vert X \right.}\left(y_{0}, y_{1}, u, v, z \left\vert \overline{x} \right.\right) \nonumber \\
	& \hspace{50pt} - \int \left[\mathbf{1}\left\lbrace \left(Y, D, S, Z\right) \leq \left(y, d, s, z\right) \right\rbrace \cdot \mathbf{1}\left\lbrace u \notin \left(\overline{u} - \overline{\epsilon}, \overline{u} + \overline{\epsilon} \right) \right\rbrace\right] \, \text{d} F_{Y_{0}^{*}, Y_{1}^{*}, U, V, Z \left\vert X \right.}\left(y_{0}, y_{1}, u, v, z \left\vert \overline{x} \right.\right) \nonumber \\
	& \hspace{50pt} - \int \left[\mathbf{1}\left\lbrace \left(Y, D, S, Z\right) \leq \left(y, d, s, z\right) \right\rbrace \cdot \mathbf{1}\left\lbrace u \in \left(\overline{u} - \overline{\epsilon}, \overline{u} + \overline{\epsilon} \right) \right\rbrace\right] \, \text{d} F_{Y_{0}^{*}, Y_{1}^{*}, U, V, Z \left\vert X \right.}\left(y_{0}, y_{1}, u, v, z \left\vert \overline{x} \right.\right) \nonumber \\
	& \hspace{50pt} \text{by linearity of the Lebesgue Integral} \nonumber \\
	& \hspace{30pt} = \int \left[\mathbf{1}\left\lbrace \left(Y, D, S, Z\right) \leq \left(y, d, s, z\right) \right\rbrace \cdot \mathbf{1}\left\lbrace u \notin \left(\overline{u} - \overline{\epsilon}, \overline{u} + \overline{\epsilon} \right) \right\rbrace\right] \, \text{d} F_{Y_{0}^{*}, Y_{1}^{*}, U, V, Z \left\vert X \right.}\left(y_{0}, y_{1}, u, v, z \left\vert \overline{x} \right.\right) \nonumber \\
	& \hspace{50pt} + \int \left[\mathbf{1}\left\lbrace \left(\tilde{Y}, \tilde{D}, \tilde{S}, Z\right) \leq \left(y, d, s, z\right) \right\rbrace \cdot \mathbf{1}\left\lbrace u \in \left(\overline{u} - \overline{\epsilon}, \overline{u} + \overline{\epsilon} \right) \right\rbrace\right] \, \text{d} F_{\tilde{Y}_{0}^{*}, \tilde{Y}_{1}^{*}, \tilde{U}, \tilde{V}, Z \left\vert X \right.}\left(y_{0}, y_{1}, u, v, z \left\vert \overline{x} \right.\right) \nonumber \\
	& \hspace{50pt} - \int \left[\mathbf{1}\left\lbrace \left(Y, D, S, Z\right) \leq \left(y, d, s, z\right) \right\rbrace \cdot \mathbf{1}\left\lbrace u \notin \left(\overline{u} - \overline{\epsilon}, \overline{u} + \overline{\epsilon} \right) \right\rbrace\right] \, \text{d} F_{Y_{0}^{*}, Y_{1}^{*}, U, V, Z \left\vert X \right.}\left(y_{0}, y_{1}, u, v, z \left\vert \overline{x} \right.\right) \nonumber \\
	& \hspace{50pt} - \int \left[\mathbf{1}\left\lbrace \left(Y, D, S, Z\right) \leq \left(y, d, s, z\right) \right\rbrace \cdot \mathbf{1}\left\lbrace u \in \left(\overline{u} - \overline{\epsilon}, \overline{u} + \overline{\epsilon} \right) \right\rbrace\right] \, \text{d} F_{Y_{0}^{*}, Y_{1}^{*}, U, V, Z \left\vert X \right.}\left(y_{0}, y_{1}, u, v, z \left\vert \overline{x} \right.\right) \nonumber \\
	& \hspace{50pt} \text{by steps 2-6} \nonumber \\
	& \hspace{30pt} = \int \left[\mathbf{1}\left\lbrace \left(\tilde{Y}, \tilde{D}, \tilde{S}, Z\right) \leq \left(y, d, s, z\right) \right\rbrace \cdot \mathbf{1}\left\lbrace u \in \left(\overline{u} - \overline{\epsilon}, \overline{u} + \overline{\epsilon} \right) \right\rbrace\right] \, \text{d} F_{\tilde{Y}_{0}^{*}, \tilde{Y}_{1}^{*}, \tilde{U}, \tilde{V}, Z \left\vert X \right.}\left(y_{0}, y_{1}, u, v, z \left\vert \overline{x} \right.\right) \nonumber \\
	& \hspace{50pt} - \int \left[\mathbf{1}\left\lbrace \left(Y, D, S, Z\right) \leq \left(y, d, s, z\right) \right\rbrace \cdot \mathbf{1}\left\lbrace u \in \left(\overline{u} - \overline{\epsilon}, \overline{u} + \overline{\epsilon} \right) \right\rbrace\right] \, \text{d} F_{Y_{0}^{*}, Y_{1}^{*}, U, V, Z \left\vert X \right.}\left(y_{0}, y_{1}, u, v, z \left\vert \overline{x} \right.\right) \nonumber \\
	& \hspace{30pt} \leq \int \mathbf{1}\left\lbrace u \in \left(\overline{u} - \overline{\epsilon}, \overline{u} + \overline{\epsilon} \right) \right\rbrace \, \text{d} F_{\tilde{Y}_{0}^{*}, \tilde{Y}_{1}^{*}, \tilde{U}, \tilde{V}, Z \left\vert X \right.}\left(y_{0}, y_{1}, u, v, z \left\vert \overline{x} \right.\right) \nonumber \\
	& \hspace{30pt} = \int \mathbf{1}\left\lbrace u \in \left(\overline{u} - \overline{\epsilon}, \overline{u} + \overline{\epsilon} \right) \right\rbrace \, \text{d} F_{\tilde{U} \left\vert X \right.}\left(u \left\vert \overline{x} \right.\right) \nonumber \\
	& \hspace{30pt} = 2 \cdot \overline{\epsilon} \nonumber \\
	& \hspace{50pt} \text{by step 5.} \nonumber
	\end{align}
	\end{small}

	\item[Step 20.] Following the same procedure of step 19, I have that: 
	\begin{small}
		\begin{align}
		& F_{\tilde{Y}, \tilde{D}, \tilde{S}, Z \left\vert X \right.}\left(y, d, s, z \left\vert \overline{x} \right.\right) - F_{Y, D, S, Z \left\vert X \right.}\left(y, d, s, z \left\vert \overline{x} \right.\right) \nonumber \\
		& \hspace{30pt} = \int \left[\mathbf{1}\left\lbrace \left(\tilde{Y}, \tilde{D}, \tilde{S}, Z\right) \leq \left(y, d, s, z\right) \right\rbrace \cdot \mathbf{1}\left\lbrace u \in \left(\overline{u} - \overline{\epsilon}, \overline{u} + \overline{\epsilon} \right) \right\rbrace\right] \, \text{d} F_{\tilde{Y}_{0}^{*}, \tilde{Y}_{1}^{*}, \tilde{U}, \tilde{V}, Z \left\vert X \right.}\left(y_{0}, y_{1}, u, v, z \left\vert \overline{x} \right.\right) \nonumber \\
		& \hspace{50pt} - \int \left[\mathbf{1}\left\lbrace \left(Y, D, S, Z\right) \leq \left(y, d, s, z\right) \right\rbrace \cdot \mathbf{1}\left\lbrace u \in \left(\overline{u} - \overline{\epsilon}, \overline{u} + \overline{\epsilon} \right) \right\rbrace\right] \, \text{d} F_{Y_{0}^{*}, Y_{1}^{*}, U, V, Z \left\vert X \right.}\left(y_{0}, y_{1}, u, v, z \left\vert \overline{x} \right.\right) \nonumber \\
		& \hspace{30pt} \geq - \int \mathbf{1}\left\lbrace u \in \left(\overline{u} - \overline{\epsilon}, \overline{u} + \overline{\epsilon} \right) \right\rbrace \, \text{d} F_{Y_{0}^{*}, Y_{1}^{*}, U, V, Z \left\vert X \right.}\left(y_{0}, y_{1}, u, v, z \left\vert \overline{x} \right.\right) \nonumber \\
		& \hspace{30pt} = - \int \mathbf{1}\left\lbrace u \in \left(\overline{u} - \overline{\epsilon}, \overline{u} + \overline{\epsilon} \right) \right\rbrace \, \text{d} F_{U \left\vert X \right.}\left(u \left\vert \overline{x} \right.\right) \nonumber \\
		& \hspace{30pt} = - 2 \cdot \overline{\epsilon} \nonumber
		\end{align}
	\end{small}
	
	\item[Step 21.] Combining steps 19 and 20, I find that
	$$
	\left\vert F_{\tilde{Y}, \tilde{D}, \tilde{S}, Z \left\vert X \right.}\left(y, d, s, z \left\vert \overline{x} \right.\right) - F_{Y, D, S, Z \left\vert X \right.}\left(y, d, s, z \left\vert \overline{x} \right.\right) \right\vert \leq 2 \cdot \overline{\epsilon},
	$$
	implying equation \eqref{DataRestrictionS} according to equation \eqref{DataRestrictionSimplifiedS}.
\end{enumerate}

I can then conclude that Proposition \ref{sharpboundsS} is true.
\end{proof}

\begin{proof}[Proof of Proposition \ref{partialnecessaryS}]
This proof is essentially the same proof of Proposition \ref{sharpboundsS} under Assumption \ref{bounded}.3.(a). Fix any $\overline{u} \in \left[0, 1\right]$, any $\overline{x} \in \mathcal{X}$, any $\delta\left(\overline{x}, \overline{u}\right) \in \left(\underline{\Delta_{Y^{*}}^{OO}}\left(\overline{x}, \overline{u}\right), \overline{\Delta_{Y^{*}}^{OO}}\left(\overline{x}, \overline{u}\right)\right)$ and any $\epsilon \in \mathbb{R}_{++}$ such that $\min \left\lbrace \overline{u} - \dfrac{\epsilon}{2 \cdot F_{X}\left(\overline{x}\right)}, 1 - \left(\overline{u} - \dfrac{\epsilon}{2 \cdot F_{X}\left(\overline{x}\right)}\right) \right\rbrace > 0$. For brevity, define $\alpha\left(\overline{x}, \overline{u}\right) \coloneqq \delta\left(\overline{x}, \overline{u}\right) + \dfrac{m_{0}^{Y}\left(\overline{x}, \overline{u}\right)}{m_{0}^{S}\left(\overline{x}, \overline{u}\right)}$, $\gamma\left(\overline{x}, \overline{u}\right) \coloneqq \dfrac{m_{1}^{Y}\left(\overline{x}, \overline{u}\right) - \alpha\left(\overline{x}, \overline{u}\right) \cdot m_{0}^{S}\left(\overline{x}, \overline{u}\right)}{\Delta_{S}\left(\overline{x}, \overline{u}\right)}$ and $\overline{\epsilon} \coloneqq \dfrac{\epsilon}{2 \cdot F_{X}\left(\overline{x}\right)}$. Note that $\alpha\left(\overline{x}, \overline{u}\right) \in \mathbb{R} = \mathcal{Y}^{*}$ and $\gamma\left(\overline{x}, \overline{u}\right) \in \mathbb{R} = \mathcal{Y}^{*}$.

I define the random variables $\left(\tilde{Y}_{0}^{*}, \tilde{Y}_{1}^{*}, \tilde{U}, \tilde{V}\right)$ using the joint cumulative distribution function $F_{\tilde{Y}_{0}^{*}, \tilde{Y}_{1}^{*}, \tilde{U}, \tilde{V}, Z, X}$ described by steps 1-14 in the proof of Proposition \ref{sharpboundsS} for the case of convex support $\mathcal{Y}^{*}$. Note that equation \eqref{correctsupportPS} is trivially true when $\mathcal{Y}^{*} = \mathbb{R}$. Moreover, equations \eqref{faketargetPS} and \eqref{DataRestrictionPS} are valid by the argument described in steps 15-21 in the previous proof.

I can then conclude that Proposition \ref{partialnecessaryS} is true.
\end{proof}

\pagebreak

\setcounter{table}{0}
\renewcommand\thetable{H.\arabic{table}}

\setcounter{figure}{0}
\renewcommand\thefigure{H.\arabic{figure}}

\setcounter{equation}{0}
\renewcommand\theequation{H.\arabic{equation}}

\setcounter{theorem}{0}
\renewcommand\thetheorem{H.\arabic{theorem}}

\section{Monte Carlo Simulations}\label{montecarlo}

My empirical analysis uses two new tools in order partially identify the marginal treatment effects on wages for the always-employed population ($MTE^{OO}$): the sharp bounds (Section \ref{bounds}) and the restricted version of the parametric estimation strategy proposed \cite{Brinch2017} (Subsection \ref{parametric}). Given the novelty of these methods, it is useful to implement a Monte Carlo Simulation in order to check whether the above methods work reasonably well in finite samples. In particular, I design six data-generating processes (DGPs) that capture important features of the Job Corp Training Program (JCTP) dataset and, using 1,000 simulations, estimate the coverage rate of the confidence intervals used to analyze the wage effect of the JCTP (Section \ref{empiricalresults}.) The first three DGPs satisfy the linearity assumptions imposed by the parametric estimation method, while the last three DGPs have non-linear marginal treatment response functions for employment and hourly labor earnings. The latter are useful to understand how my partial identification strategy behaves under model mis-specification.

In Subsection \ref{DGP}, I describe each one of the six DGPs used in this Monte Carlo exercise, while, in subsection \ref{MCresults}, I describe the results from my simulations.

\subsection{Data Generating Processes}\label{DGP}

All six data-generating processes have 7,531 observations, the same number as in the Non-Hispanic subsample of the JCTP. The dummy variable $Z$ indicates treatment assignment and is equal to $1$ with probability $0.605$, the same probability of a Non-Hispanic person being assigned to the treatment in my empirical application. To create the dummy variable $D$ that indicates treatment take-up, I use a random variable $U \sim Uniform\left[0,1\right]$ and the propensity score function (see Equation \eqref{treatment}) as $P\left(0\right) = 0.047$ and $P\left(0\right) = 0.737$, the same values of Table \ref{PreliminaryEffects}. Although potential employment status $S_{0}$ and $S_{1}$ and potential wages $Y_{0}^{*}$ and $Y_{1}^{*}$ follow different distributions in each DGP, employment and wages are always independent after conditioning on the latent heterogeneity in this Monte Carlo study, i.e., $\left. \left(S_{0}, S_{1}\right) \independent \left(Y_{0}^{*}, Y_{1}^{*}\right) \right\vert U$ for any DGP. I impose this restrictive condition so that I can easily write the marginal treatment response ($MTR$) function of hourly labor earnings as the product between the $MTR$ functions of employment and wages, i.e., $m_{d}^{Y}\left(u\right) = m_{d}^{S}\left(u\right) \cdot m_{d}^{Y^{*}}\left(u\right)$ for any $u \in \left[0, 1\right]$ and $d \in \left\lbrace 0, 1 \right\rbrace$. Moreover, the Mean Dominance Assumption \ref{meandominanceG} holds with equality in all DGPs. Finally, there are no covariates in this simulation study since they are not used in my empirical application.

\subsubsection{Design 1}

Potential employment status $\left(S_{0}, S_{1}\right)$ are generated following equation \eqref{selection} with $V \sim Uniform\left[0,1\right]$, $V \independent U$, $Q\left(0\right) = 0.564$ and $Q\left(1\right) = 0.613$, where $Q\left(z\right)$ is equal to the employment probability of a Non-Hispanic person being employed conditioning on treatment assignment $z \in \left\lbrace 0, 1 \right\rbrace$ in the JCTP sample. Consequently, the $MTR$ functions for employment are constant.

Potential wages $\left(Y_{0}^{*}, Y_{1}^{*}\right)$ are generated by $Y_{0}^{*} = 7.72 + \eta$ and $Y_{1}^{*} = Y_{0}^{*} + 0.61$, where $\eta \sim Uniform\left[-2, 2\right]$, $7.72$ is the average observed hourly wage of the Non-Hispanics assigned to the control group in the JCTP sample, and $0.61$ is the estimated lower bound on the $ATE^{OO}$ (Table \ref{ATEW}). Consequently, the $MTR$ functions for hourly wages are constant.

Since the $MTR$ functions for employment and hourly wages are constant, the $MTR$ function for hourly labor earnings is also constant. 

\subsubsection{Design 2}

Potential employment status $\left(S_{0}, S_{1}\right)$ are generated based on Design 1.

Potential untreated wage $Y_{0}^{*}$ is generated based on Design 1, while potential treated wage $Y_{1}^{*}$ is generated by $Y_{1}^{*} = Y_{0}^{*} + 2 \cdot 0.61 \cdot U$. Consequently, the $MTR$ function for treated hourly wages is linear.

Since the $MTR$ functions for employment are constant and the $MTR$ function for treated hourly wages is linear, the $MTR$ function for treated hourly labor earnings is linear.

\subsubsection{Design 3}

Potential employment status $\left(S_{0}, S_{1}\right)$ are generated to ensure that $S_{1} \geq S_{0}$ and that the true $MTR$ functions are equal to the estimated $MTR$ function in the JCTP sutdy (Table \ref{bmw2017}). Consequently, the $MTR$ functions for employment are linear.

Potential wages $\left(Y_{0}^{*}, Y_{1}^{*}\right)$ are generated based on Design 1.

Since the $MTR$ functions for employment are linear and the $MTR$ functions for hourly wages are constant, the $MTR$ functions for hourly labor earnings are linear.

\subsubsection{Design 4}

Potential employment status $\left(S_{0}, S_{1}\right)$ are generated based on Design 3.

Potential wages $\left(Y_{0}^{*}, Y_{1}^{*}\right)$ are generated based on Design 2.

Since the $MTR$ functions for employment are linear and the $MTR$ function for treated hourly wages is linear, the $MTR$ function for treated hourly labor earnings is quadratic.

\subsubsection{Design 5}

Potential employment status $\left(S_{0}, S_{1}\right)$ are generated following equation \eqref{selection} with $Q\left(0\right) = 0.706481$, $Q\left(1\right) = 0.873880$ and $\left. V \right\vert U \sim Beta\left[0.000468 + 1.079615 \cdot U, 0.873059 \cdot U\right]$, where the parameters of the Beta distribution and the values $Q\left(d\right)$ for any $d \in \left\lbrace 0, 1 \right\rbrace$ are chosen so that the true $MTR$ functions on employment match the estimated $MTR$ functions on employment (Table \ref{bmw2017}) when the latent heterogeneity variable is equal to the propensity score values. Note that the true $MTR$ functions for employment are non-linear.

Potential wages $\left(Y_{0}^{*}, Y_{1}^{*}\right)$ are generated based on Design 1.

Since the $MTR$ functions for employment are non-linear, the $MTR$ functions for hourly labor earnings are non-linear.

\subsubsection{Design 6}

Potential employment status $\left(S_{0}, S_{1}\right)$ are generated based on Design 5.

Potential wages $\left(Y_{0}^{*}, Y_{1}^{*}\right)$ are generated based on Design 2.

Since the $MTR$ functions for employment are non-linear, the $MTR$ functions for hourly labor earnings are non-linear.

\subsection{Monte Carlo Results}\label{MCresults}

The focus of this subsection is whether the two types of confidence intervals used in the empirical application (Subsection \ref{empiricalresults}) contain the true marginal treatment effect on wages for the always-employed population. To analyze this question, I report the pointwise coverage rate using 1,000 Monte Carlo simulations: while Figure \ref{bootstrap} reports the pointwise coverage rate of Bootstrap 90\%-Confidence Intervals for each data-generating process, Figure \ref{IMCI} reports the pointwise coverage rate of 90\%-Confidence Intervals based on \cite{Imbens2004} for each data-generating process. The solid lines are associated with bounds that do not impose the Mean Dominance Assumption \ref{meandominanceG} (Corollary \ref{MTEbounds}), while the dashed lines are associated with  bounds that impose the Mean Dominance Assumption \ref{meandominanceG} (Corollary \ref{boundmeandomG}). Since the results for the Bootstrap 90\%-Confidence Intervals are very similar to the results for the 90\%-Confidence Intervals based on \cite{Imbens2004}, I focus on the latter. Moreover, since the bounds that impose the Mean Dominance Assumption \ref{meandominanceG} are tighter than the ones that do not impose this assumption, I only discuss the results associated with Corollary \ref{boundmeandomG}. 

\begin{figure}[!htbp] 
	\caption{Coverage Rate: Bootstrap 90\%-Confidence Intervals} \label{bootstrap}
	
	\begin{center}
	\subfloat[Design 1\label{CI01}]{\includegraphics[width = .3 \columnwidth]{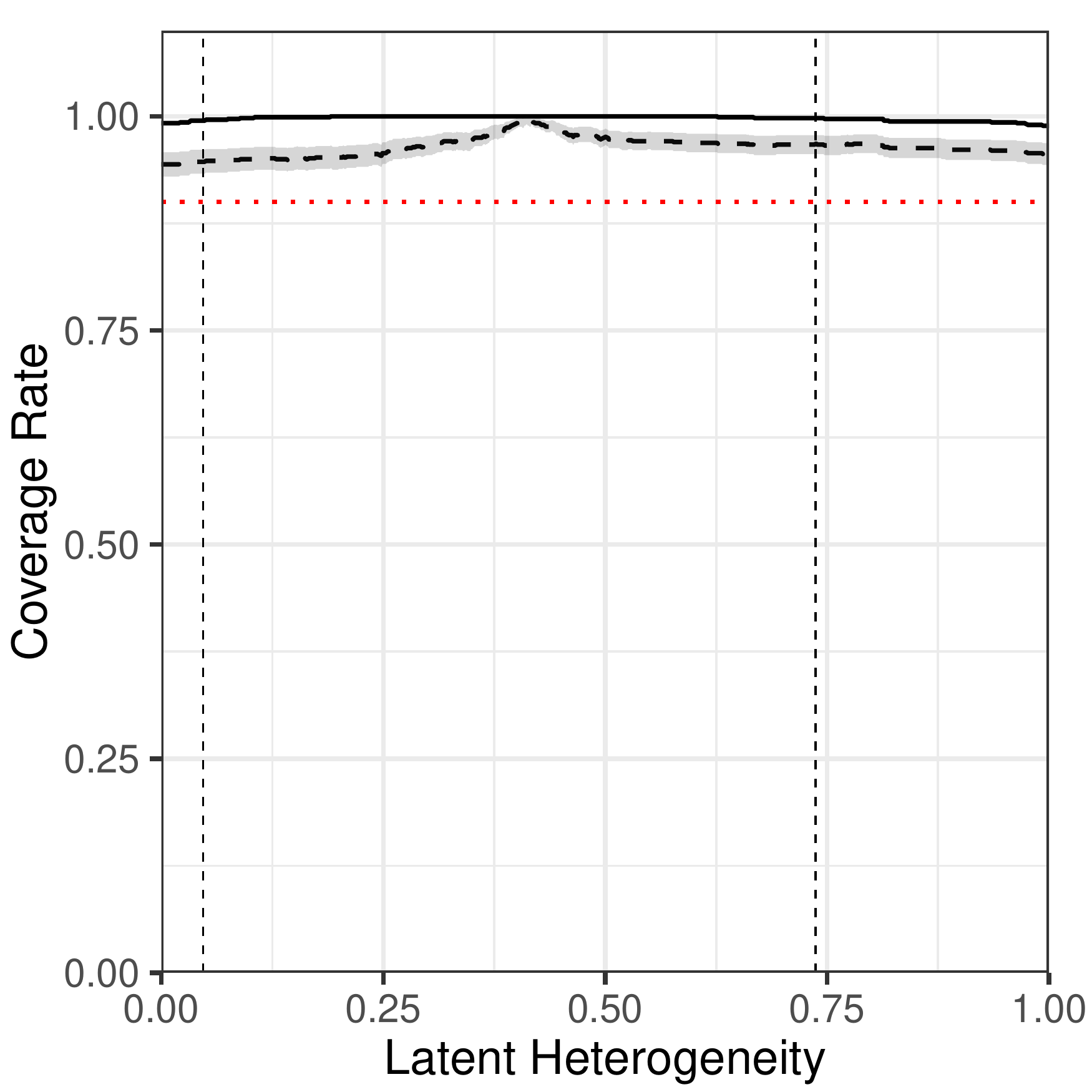}} \quad	
	\subfloat[Design 3\label{CI03}]{\includegraphics[width = .3 \columnwidth]{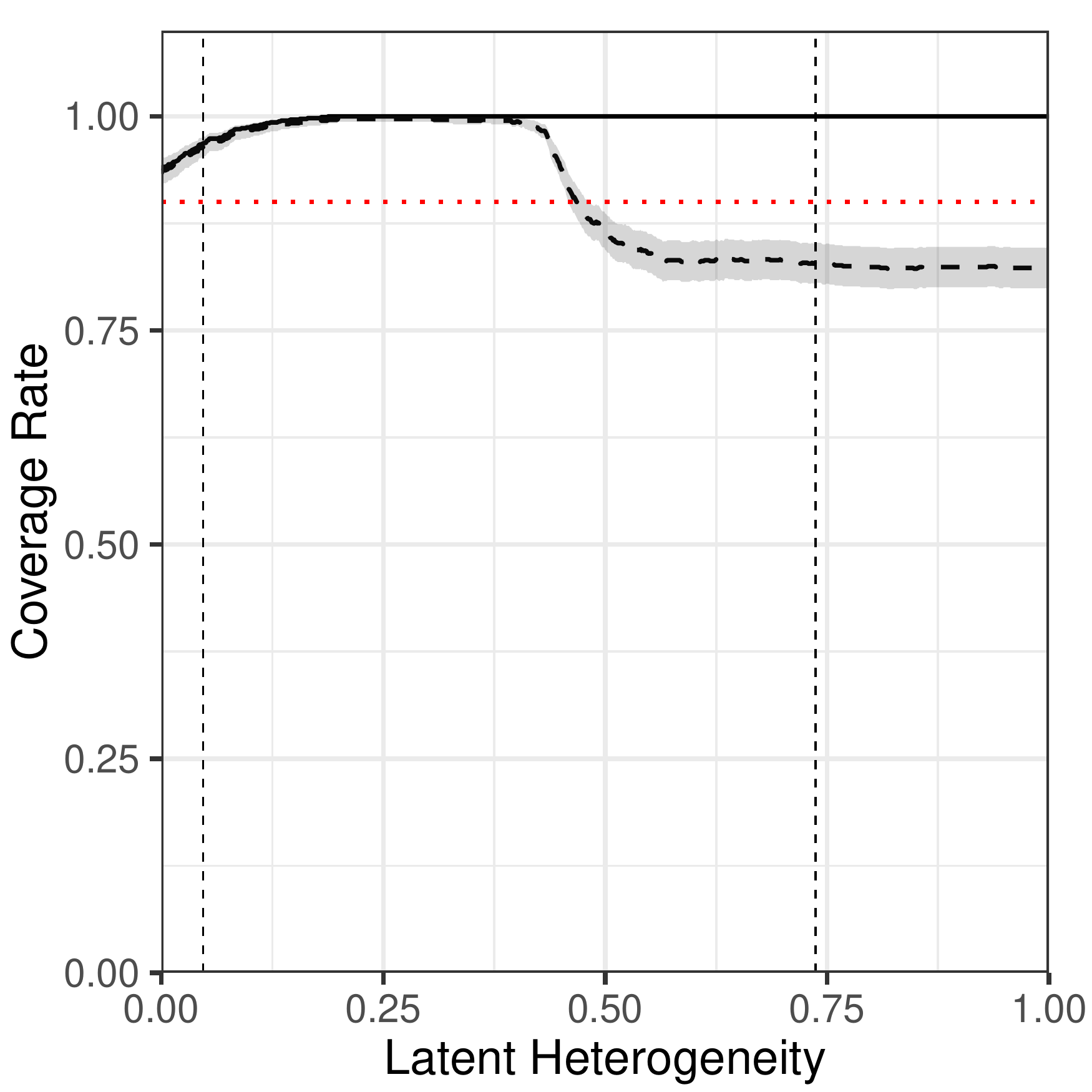}} \quad
	\subfloat[Design 5\label{CI05}]{\includegraphics[width = .3 \columnwidth]{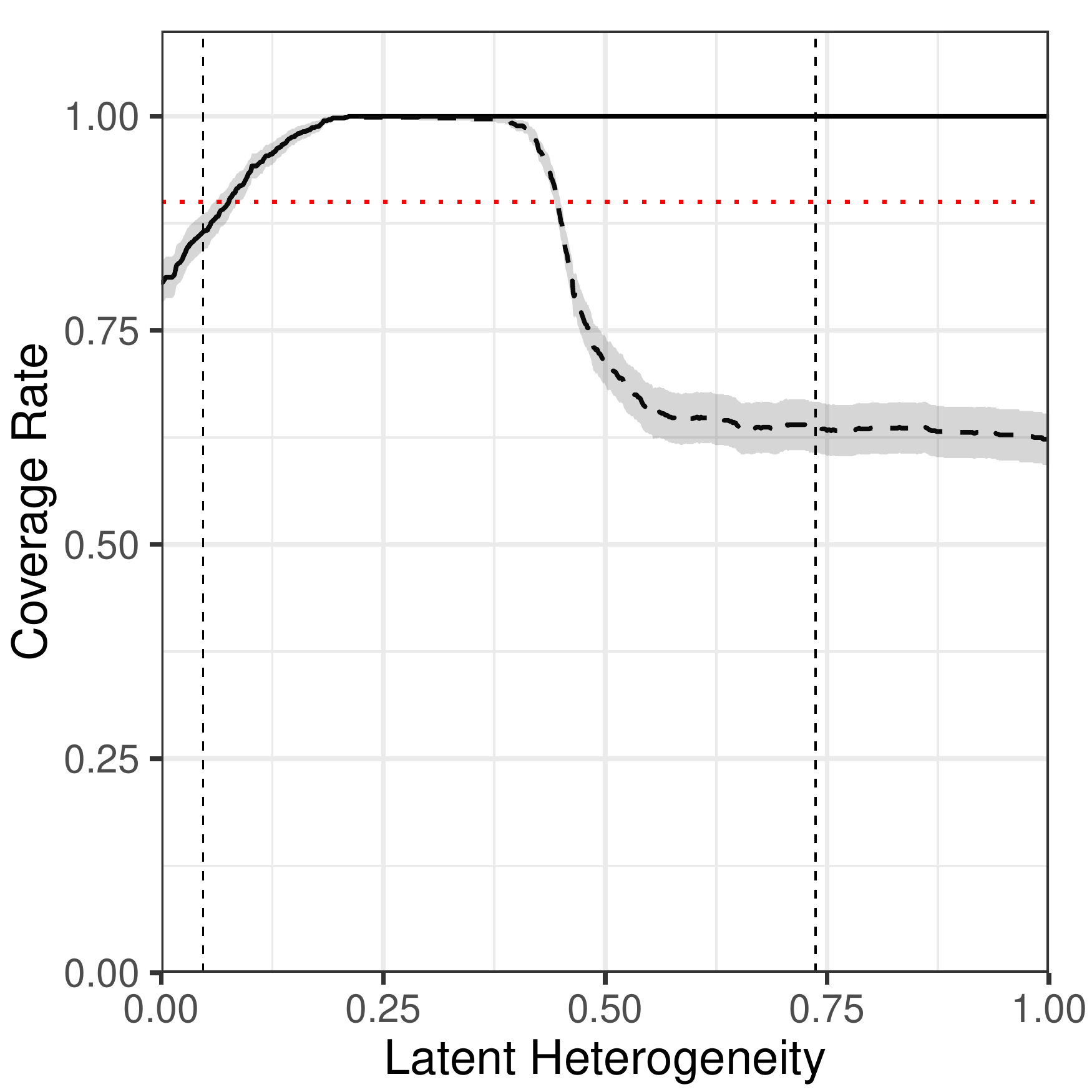}} \\
	\subfloat[Design 2\label{CI02}]{\includegraphics[width = .3 \columnwidth]{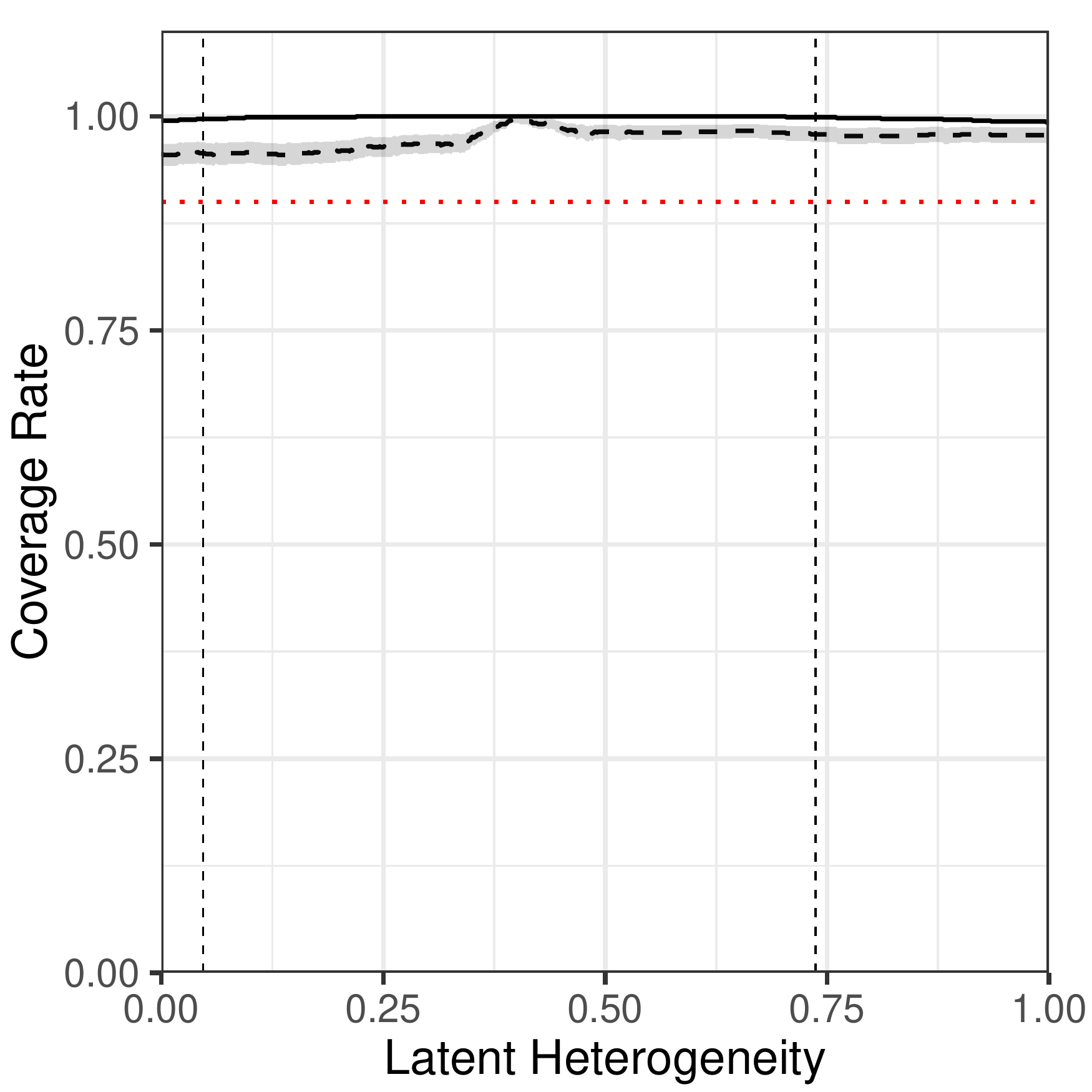}} \quad	
	\subfloat[Design 4\label{CI04}]{\includegraphics[width = .3 \columnwidth]{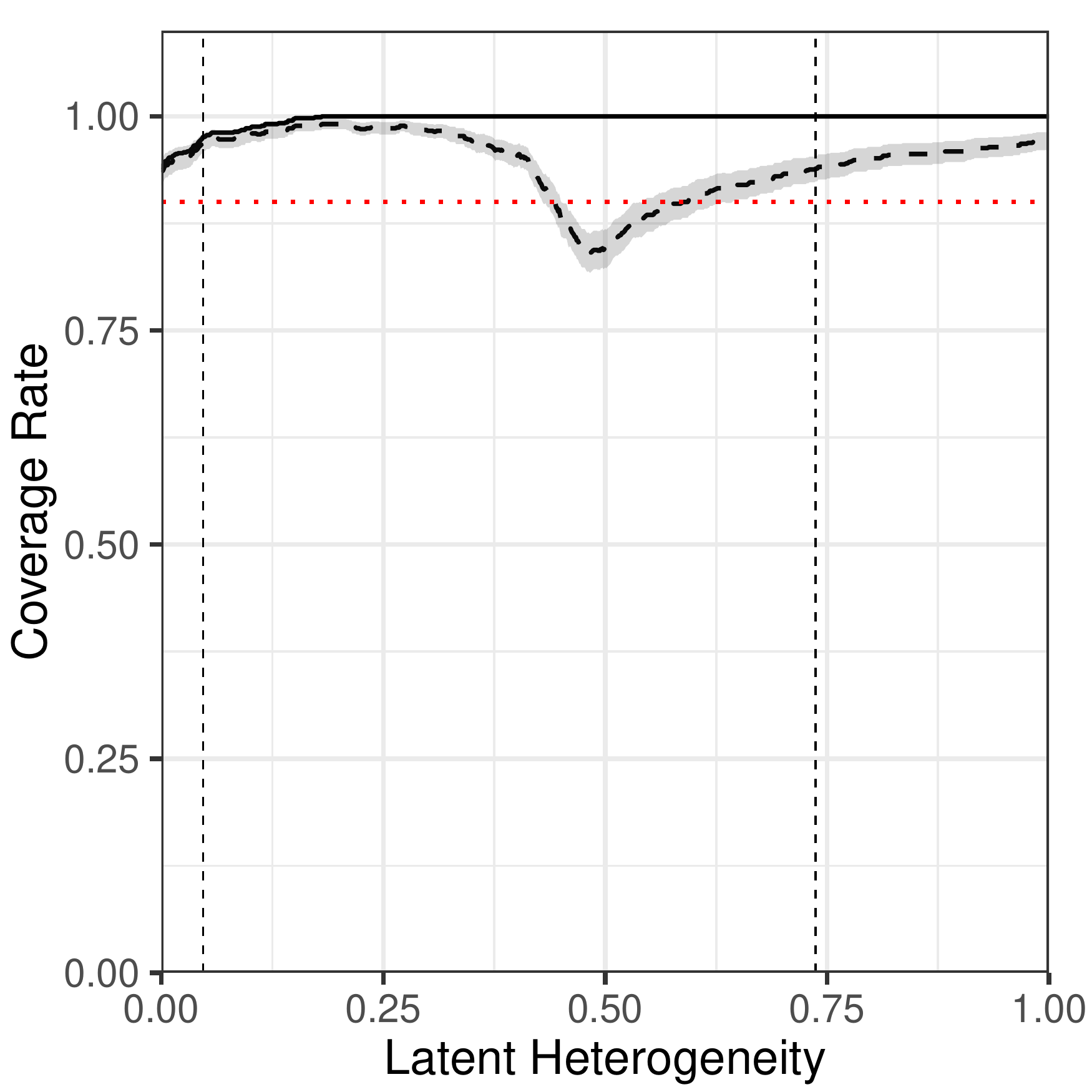}} \quad
	\subfloat[Design 6\label{CI06}]{\includegraphics[width = .3 \columnwidth]{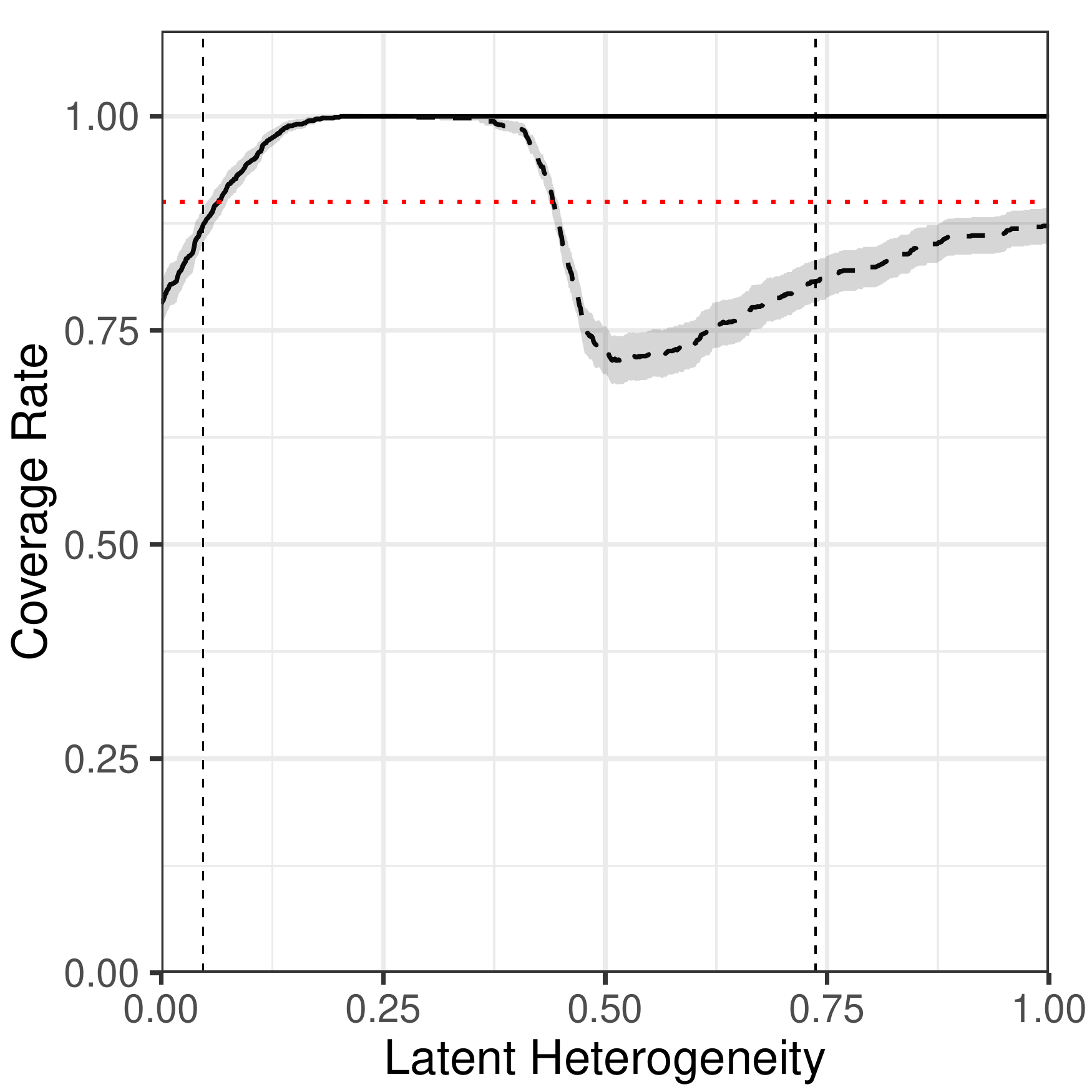}} \\
	\end{center}
	\footnotesize{Notes: The solid lines are the share of bootstrapped pointwise confidence intervals that contain the true parameter when the Mean Dominance Assumption \ref{meandominanceG} is not imposed. The dashed lines are the share of bootstrapped pointwise confidence intervals that contain the true parameter when the Mean Dominance Assumption \ref{meandominanceG} is imposed. Bootstrapped confidence intervals are based in 5,000 repetitions and the Monte Carlo results are based on 1,000 simulated datasets. The gray areas are pointwise 95\%-confidence intervals around the coverage rate when the Mean Dominance Assumption \ref{meandominanceG} is imposed and they measure simulation uncertainty. To make the figures easier to visualize, such confidence intervals are not shown when the Mean Dominance Assumption \ref{meandominanceG} is not imposed. The vertical dotted lines represent the population values of the propensity score $P\left[\left. D = 1 \right \vert Z = z\right] \text{ for any } z \in \left\lbrace 0, 1 \right\rbrace$. The red dotted lines denote the nominal coverage rate of 90\%.}
\end{figure}

For Designs 1 and 2 (which satisfy the linearity assumptions of the parametric estimation procedure detailed in Subsection \ref{parametric}), the coverage rate for the confidence interval proposed by \cite{Imbens2004} is above the nominal confidence level. This finding is not surprising in light of Proposition 1 by \cite{Stoye2009}, who shows that such confidence intervals have an asymptotic coverage rate that is at least the nominal confidence level.

For Design 3, I find a surprising negative result. Even though the $MTR$ functions are linear for this DGP, the coverage rate is below the nominal confidence level for many values of the latent heterogeneity. A even more surprising but positive result is the coverage rate for Design 4. Although the $MTR$ function for treated hourly labor earnings is quadratic for this DGP, the coverage rate is above the nominal confidence level for most values of the latent heterogeneity.

Finally, for Designs 5 and 6, I find that the 90\%-Confidence Intervals based on \cite{Imbens2004} severely under-cover the true $MTE$ function for most values of the latent heterogeneity. This negative result is not surprising because the $MTR$ functions of those DGPs are not linear.

\begin{figure}[!htbp] 
	\caption{Coverage Rate: 90\%-Confidence Intervals based on \cite{Imbens2004}} \label{IMCI}
	
	\begin{center}
		\subfloat[Design 1\label{IMCI01}]{\includegraphics[width = .3 \columnwidth]{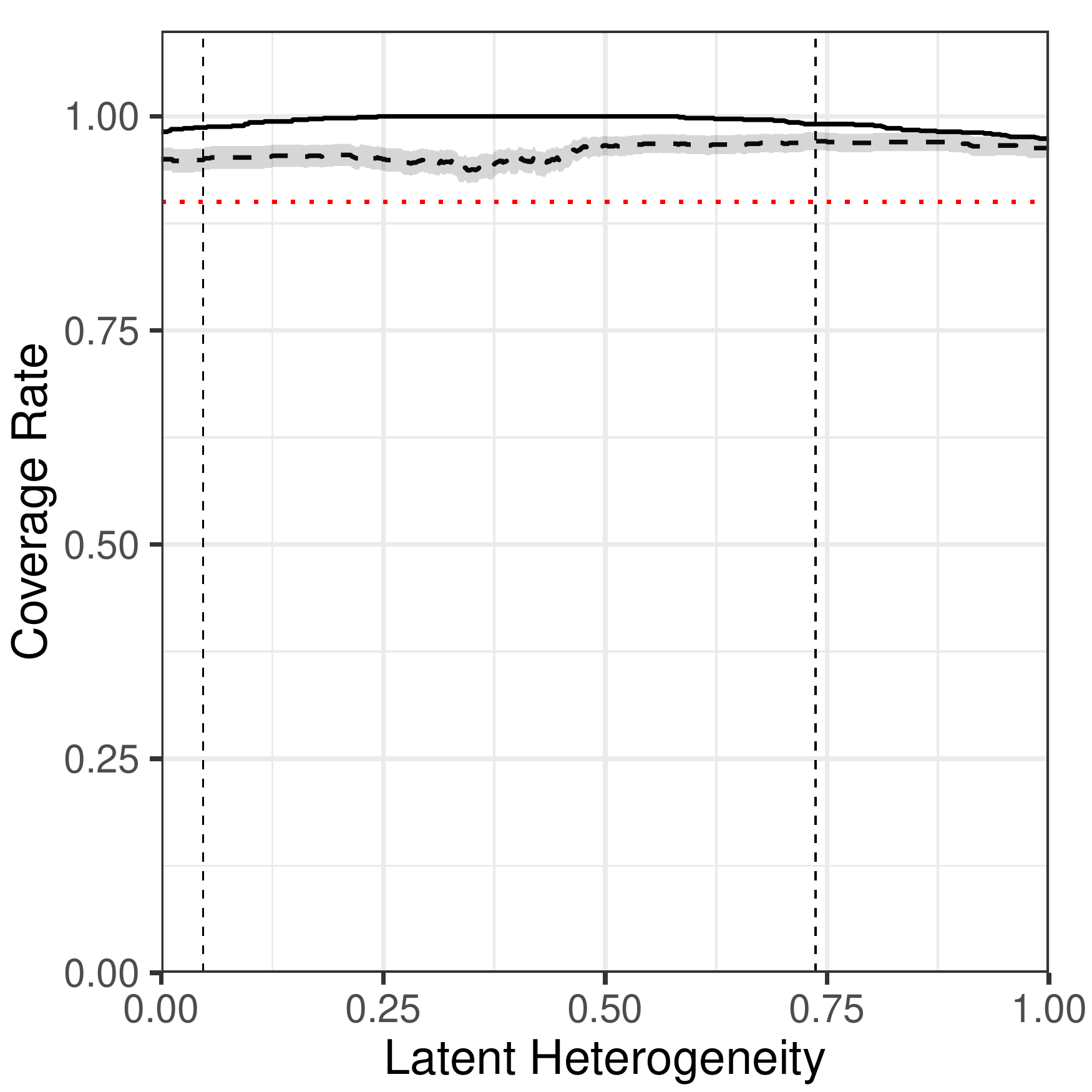}} \quad	
		\subfloat[Design 3\label{IMCI03}]{\includegraphics[width = .3 \columnwidth]{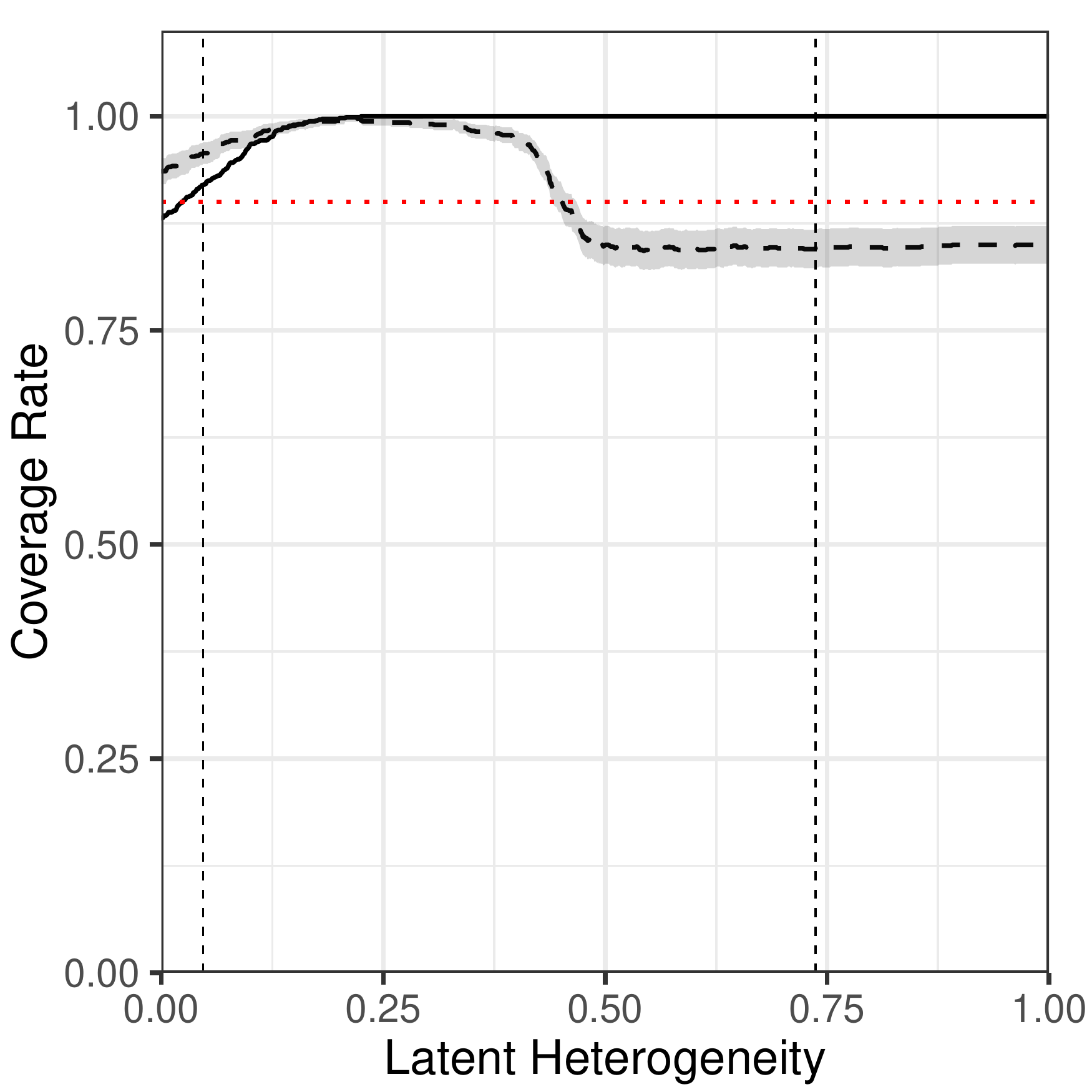}} \quad
		\subfloat[Design 5\label{IMCI05}]{\includegraphics[width = .3 \columnwidth]{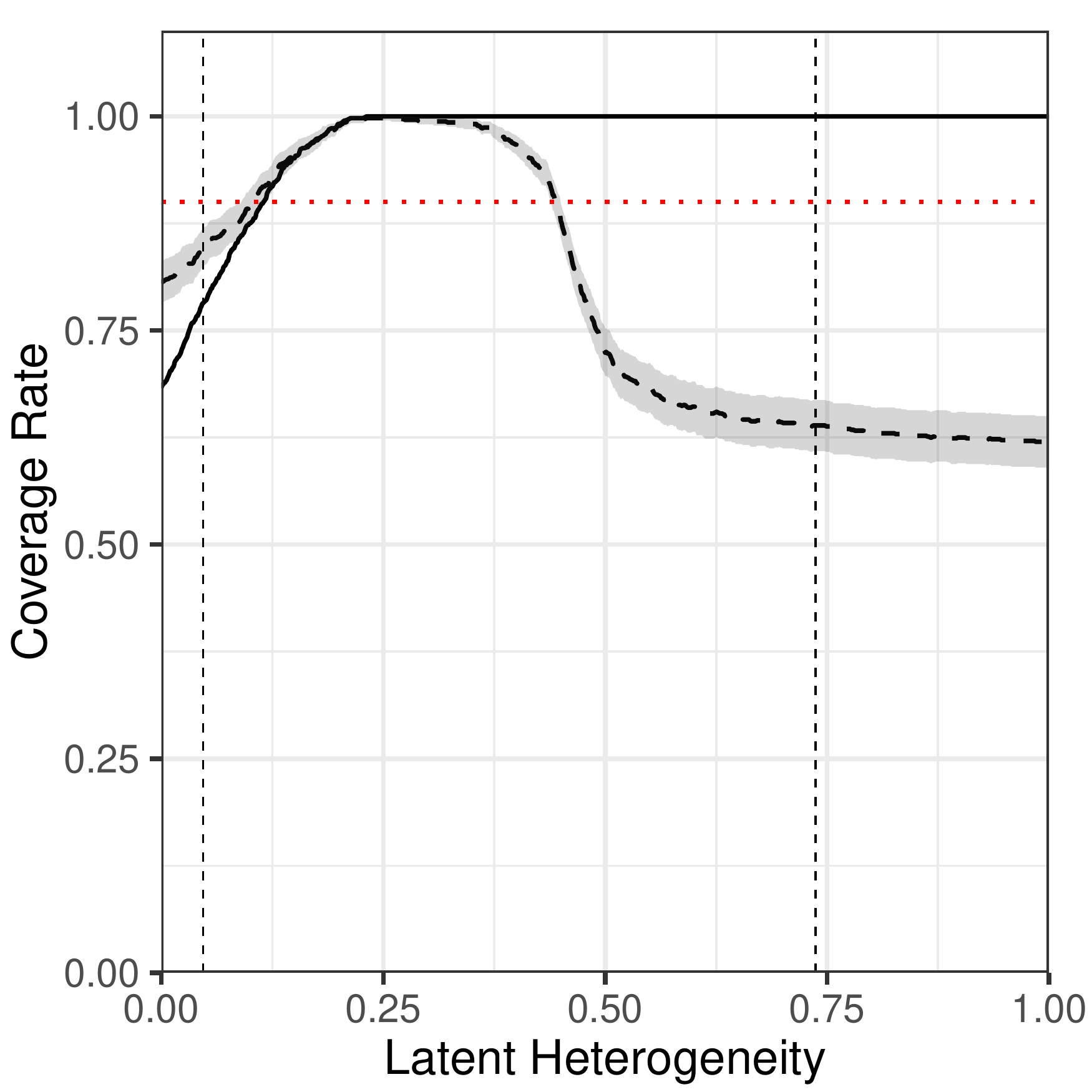}} \\
		\subfloat[Design 2\label{IMCI02}]{\includegraphics[width = .3 \columnwidth]{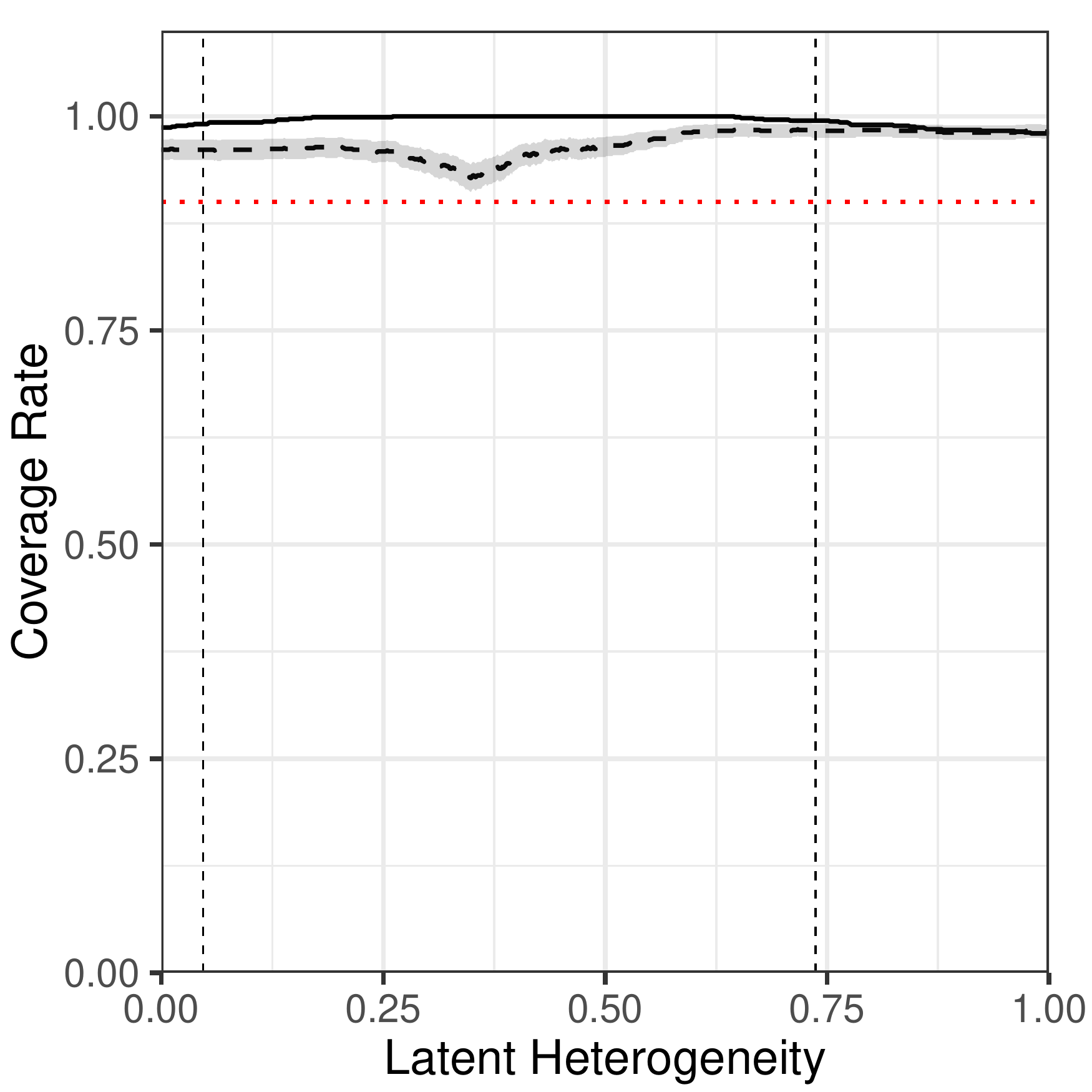}} \quad	
		\subfloat[Design 4\label{IMCI04}]{\includegraphics[width = .3 \columnwidth]{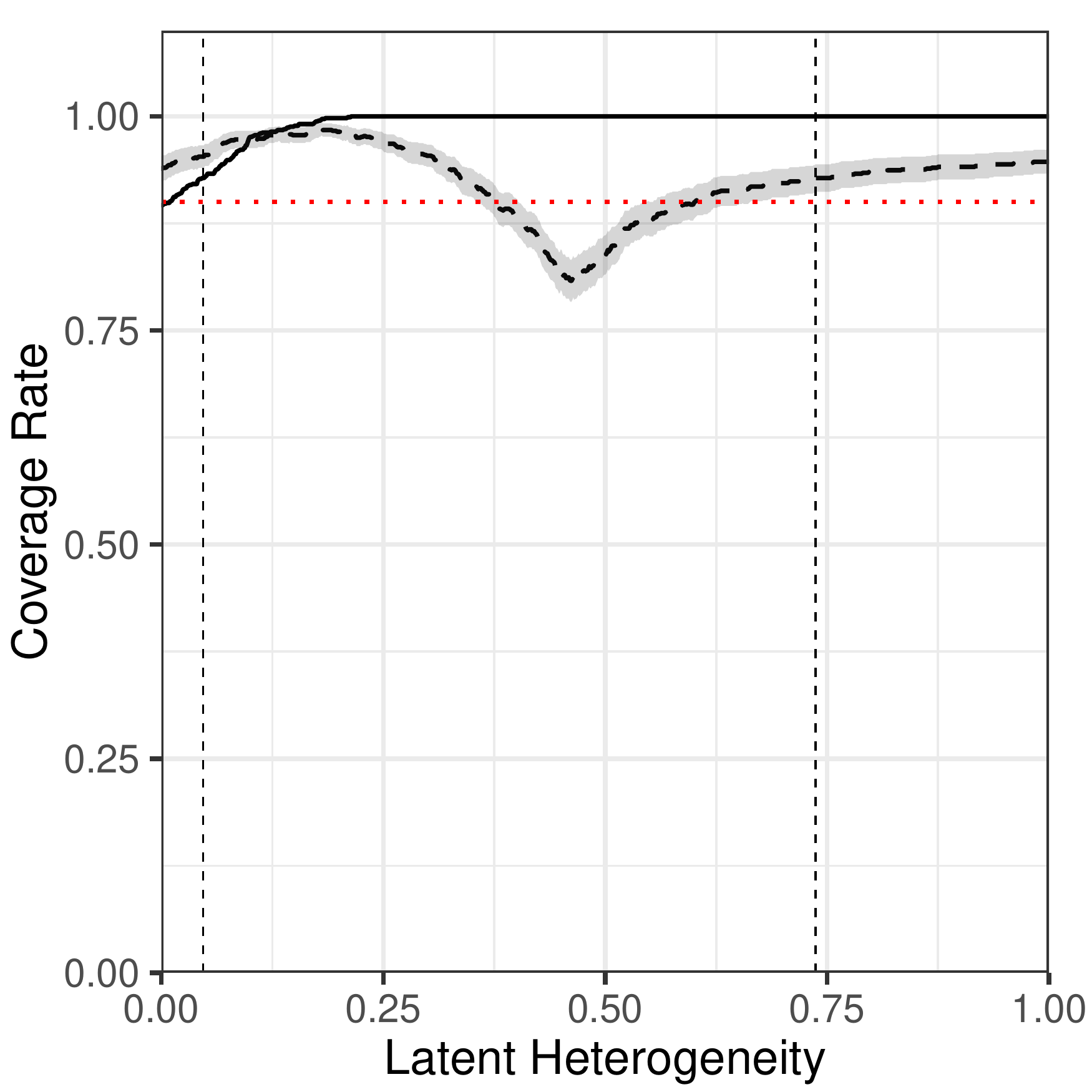}} \quad
		\subfloat[Design 6\label{IMCI06}]{\includegraphics[width = .3 \columnwidth]{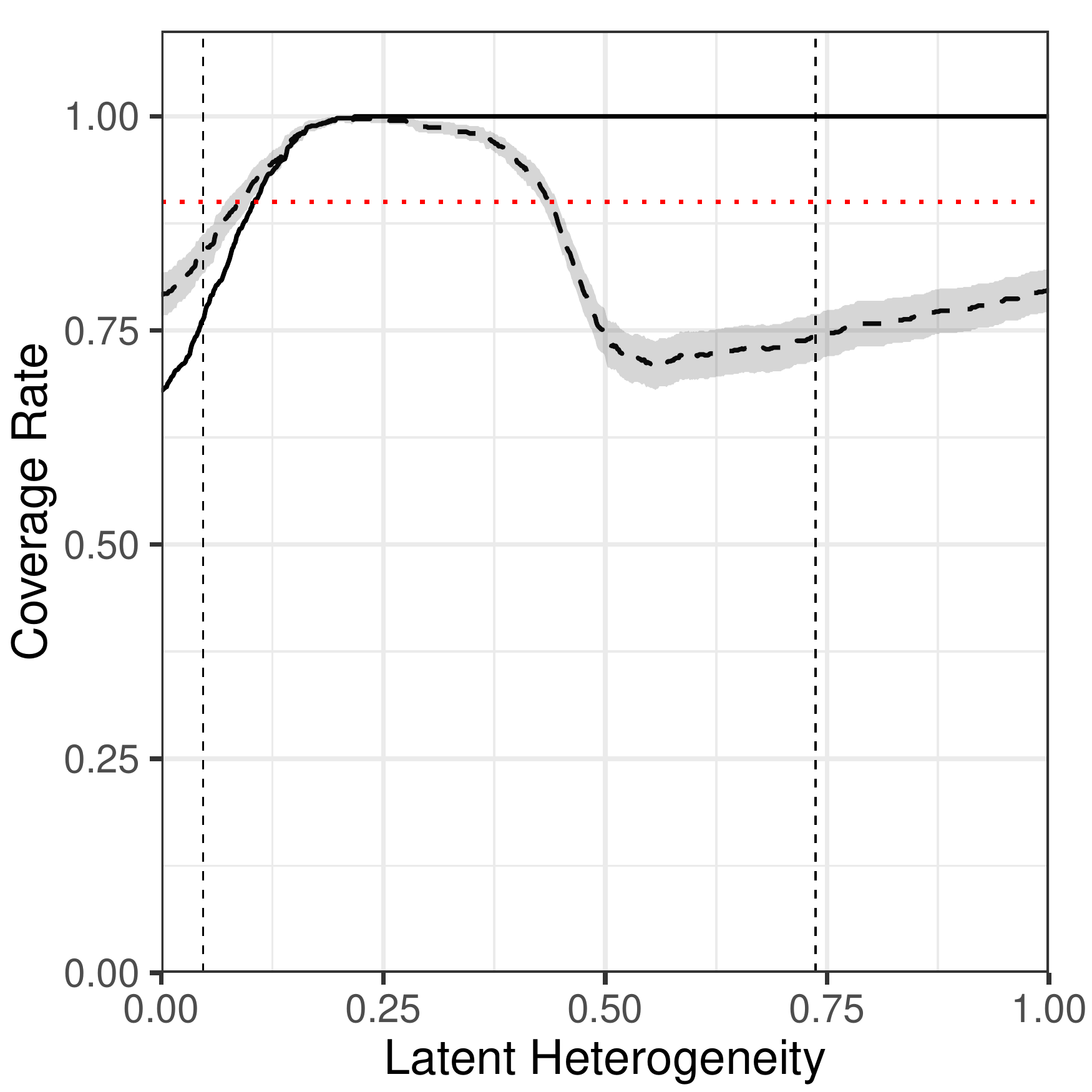}} \\
	\end{center}
	\footnotesize{Notes: The solid lines are the share of pointwise confidence intervals based on \cite{Imbens2004} that contain the true parameter when the Mean Dominance Assumption \ref{meandominanceG} is not imposed. The dashed lines are the share of pointwise confidence intervals based on \cite{Imbens2004} that contain the true parameter when the Mean Dominance Assumption \ref{meandominanceG} is imposed. Confidence intervals based on \cite{Imbens2004} are computed using 5,000 bootstrap repetitions and the Monte Carlo results are based on 1,000 simulated datasets. The gray areas are pointwise 95\%-confidence intervals around the coverage rate when the Mean Dominance Assumption \ref{meandominanceG} is imposed and they measure simulation uncertainty. To make the figures easier to visualize, such confidence intervals are not shown when the Mean Dominance Assumption \ref{meandominanceG} is not imposed. The vertical dotted lines represent the population values of the propensity score $P\left[\left. D = 1 \right \vert Z = z\right] \text{ for any } z \in \left\lbrace 0, 1 \right\rbrace$. The red dotted lines denote the nominal coverage rate of 90\%.}
\end{figure}

\end{document}